\documentclass{lmcs}
\usepackage{amsmath}
\usepackage{amssymb}
\usepackage{bm}
\usepackage{booktabs}
\usepackage{color}
\usepackage{graphicx}

\usepackage[
   a4paper,
   pdftex,
   pdftitle={Term Orders for Optimistic Lambda-Superposition},
   pdfauthor={Alexander Bentkamp, Jasmin Blanchette, and Matthias Hetzenberger},
   pdfkeywords={},
   pdfborder={0 0 0},
   draft=false,
   bookmarksnumbered,
   bookmarks,
   bookmarksdepth=2,
   bookmarksopenlevel=2,
   bookmarksopen]{hyperref}

\hyphenation{mon-o-morphic mon-o-morph-ism poly-morphic poly-morph-ism
  which-ever}

\allowdisplaybreaks

\DeclareFontFamily{OT1}{pzc}{}
\DeclareFontShape{OT1}{pzc}{m}{it}{<-> s * [1.10] pzcmi7t}{}
\DeclareMathAlphabet{\mathcalx}{OT1}{pzc}{m}{it}

\makeatletter
\g@addto@macro{\UrlBreaks}{\UrlOrds\do\=\do\_}
\makeatother

\def\negvthinspace{\kern-0.083333em}
\def\vthinspace{\kern+0.083333em}
\def\vvthinspace{\kern+0.0416667em}
\def\negvvthinspace{\kern-0.0416667em}

\newcommand\q{\noindent\hbox{}\quad}
\newcommand\Q{\q\llap{$\mid$~\vvthinspace}}

\newdimen\carpetH
\newdimen\carpetV
\def\carpet#1{\setbox0=\hbox{\ensuremath{#1}}%
  \kern+2\carpetH
    \raise+2\carpetV\copy0\kern-\wd0
    \raise+\carpetV\copy0\kern-\wd0
    \copy0\kern-\wd0
    \raise-\carpetV\copy0\kern-\wd0
    \raise-2\carpetV\copy0\kern-\wd0
  \kern-\carpetH
    \raise+2\carpetV\copy0\kern-\wd0
    \raise+\carpetV\copy0\kern-\wd0
    \copy0\kern-\wd0
    \raise-\carpetV\copy0\kern-\wd0
    \raise-2\carpetV\copy0\kern-\wd0
  \kern-\carpetH
    \raise+2\carpetV\copy0\kern-\wd0
    \raise+\carpetV\copy0\kern-\wd0
    \copy0\kern-\wd0
    \raise-\carpetV\copy0\kern-\wd0
    \raise-2\carpetV\copy0\kern-\wd0
  \kern-\carpetH
    \raise+2\carpetV\copy0\kern-\wd0
    \raise+\carpetV\copy0\kern-\wd0
    \copy0\kern-\wd0
    \raise-\carpetV\copy0\kern-\wd0
    \raise-2\carpetV\copy0\kern-\wd0
  \kern-\carpetH
    \raise+2\carpetV\copy0\kern-\wd0
    \raise+\carpetV\copy0\kern-\wd0
    \copy0\kern-\wd0
    \raise-\carpetV\copy0\kern-\wd0
    \raise-2\carpetV\copy0
  \kern2\carpetH
}

\newcommand\heavy[1]{\carpetH=.02ex\carpetV=.02ex\carpet{#1}}

\newcommand{\lang}{\begin{picture}(5,7)
\put(1.1,2.5){\rotatebox{45}{\line(1,0){6.0}}}
\put(1.1,2.5){\rotatebox{315}{\line(1,0){6.0}}}
\end{picture}}
\newcommand{\rang}{\begin{picture}(5,7)
\put(0,2.5){\rotatebox{135}{\line(1,0){6.0}}}
\put(0,2.5){\rotatebox{225}{\line(1,0){6.0}}}
\end{picture}}

\newcommand{\leftgreensubterm}{\lang\,}
\newcommand{\rightgreensubterm}{\rang}

\newcommand{\leftorangesubterm}{\lang\!\!\leftgreensubterm}
\newcommand{\rightorangesubterm}{\rightgreensubterm\!\!\rang}
\newcommand{\orangesubterm}[2]{#1\leftorangesubterm #2\rightorangesubterm}

\newcommand\lnf[1]{{#1}{\uparrow}_{\!\eta}}

\newcommand\cst[1]{\mathsf{#1}}

\newcommand\cal[1]{\mathcalx{#1}}
\newcommand\DB[1]{#1}
\newcommand\ENConly{\cal{E}}
\newcommand\ENC[1]{\ENConly(#1)}
\newcommand\NORM[1]{\bm{[\vthinspace}#1\bm{\vthinspace]}}
\newcommand\NORMonly{\NORM{\phantom{i}}}
\newcommand\POLY[1]{\cal{poly}(#1)}
\newcommand\LEFT{^{\>\star}}
\newcommand\RIGHT{^{\>\star\!\star}}

\newcommand\Nat{\mathbb{N}}
\newcommand\Natpos{\Nat_{>0}}
\newcommand\Ord{\mathbf{O}}
\newcommand\Ordpos{\Ord_{>0}}
\newcommand\Poly{\mathbf{P}}
\newcommand\Tm[2]{\cal T(#1,#2)}
\newcommand\GTm[1]{\Tm{#1}{\emptyset}}
\newcommand\HTm[4]{\cal T^\infty(#1,\allowbreak#2,\allowbreak#3,\allowbreak#4)}
\newcommand\HTmp[4]{\cal T^\infty_\mathrm{pre}(#1,\allowbreak#2,\allowbreak#3,\allowbreak#4)}
\newcommand\GHTmp[2]{\HTmp{#1}{#2}{\emptyset}{\emptyset}}
\newcommand\Ty[2]{\cal{Ty}(#1,\allowbreak#2)}
\newcommand\GTy[1]{\Ty{#1}{\emptyset}}

\newcommand\arityonly{\cal{arity}}
\newcommand\arity[1]{\arityonly(#1)}
\newcommand\tyarityonly{\cal{tyarity}}
\newcommand\tyarity[1]{\tyarityonly(#1)}

\newcommand\smalllam{\raisebox{\depth}{\scalebox{1}[-1.05]{\ensuremath{_\mathsf{y}}}}\negvthinspace}

\newcommand\Prec{<}
\newcommand\Succ{>}
\newcommand\Preceq{\leq}

\newcommand\Less{\prec}
\newcommand\Greater{\succ}

\newcommand\Greatereq{\succsim}
\newcommand\GreaterEQ{\succeq}
\newcommand\Lesssim{\precsim}
\newcommand\Greatersim{\succsim}

\newcommand\g{\mathsf{g}}
\newcommand\m{\mathsf{m}}
\newcommand\lex{{\smash{\mathsf{l}}\mathsf{ex}}}
\newcommand\fo{\mathsf{fo}}
\newcommand\kb{\mathsf{kbo}}
\newcommand\lkbg{\mathsf{g\smalllam kbo}}
\newcommand\lkbm{\mathsf{m\smalllam kbo}}
\newcommand\lkb{\mathsf{\smalllam kbo}}
\newcommand\lp{\mathsf{lpo}}
\newcommand\llpg{\mathsf{g\smalllam lpo}}
\newcommand\llpm{\mathsf{m\smalllam lpo}}
\newcommand\llp{\mathsf{\smalllam lpo}}
\newcommand\ltltonly{\mathrel{\Less\!\!\!\!\!\Less}}
\newcommand\gtgtonly{\mathrel{\Greater\!\!\!\!\!\Greater}}
\newcommand\ltltlex{\ltltonly^\lex}
\newcommand\gtgtlex{\gtgtonly^\lex}
\newcommand\gtgtsimlex{\mathrel{{\Greatersim}\llap{\color{white}\vrule height.2ex depth.6ex width1.7ex}\!\!\!\!\!{\Greatersim}}^\lex}
\newcommand\gtgtapproxlex{\mathrel{{\succapprox}\llap{\color{white}\vrule height.4ex depth.7ex width1.7ex}\!\!\!\!\!{\succapprox}}^\lex}
\newcommand\ty{\mathsf{ty}}
\newcommand\db{\mathsf{db}}

\begin{document}

\title{Term Orders for Optimistic Lambda-Superposition}

\author[A.~Bentkamp]{Alexander Bentkamp\lmcsorcid{0000-0002-7158-3595}}[a]
\author[J.~Blanchette]{Jasmin Blanchette\lmcsorcid{0000-0002-8367-0936}}[a]
\author[M.~Hetzenberger]{Matthias Hetzenberger\lmcsorcid{0000-0002-2052-8772}}[b]
\address{Ludwig-Maximilians-Universität München, Geschwister-Scholl-Platz 1,
80539 München, Germany}
\email{a.bentkamp@ifi.lmu.de,jasmin.blanchette@ifi.lmu.de}

\address{TU Wien Informatics,
Favoritenstraße 9--11,
1040 Vienna, Austria}
\email{matthias.hetzenberger@tuwien.ac.at}

\maketitle

\begin{abstract}
We introduce $\lambda$KBO and $\lambda$LPO, two variants of the Knuth--Bendix
order (KBO) and the lexicographic path order (LPO) designed for use with the
$\lambda$-superposition calculus. We establish the desired properties via
encodings into the familiar first-order KBO and LPO.
\end{abstract}

\section{Introduction}
\label{sec:introduction}

The $\lambda$-superposition calculus, by Bentkamp et al.\
\cite{bentkamp-et-al-2021-hosup}, is a highly competitive
\cite{sutcliffe-2021-casc} approach for proving higher-order problems
automatically. It works by saturation, performing inferences between available
clauses until the empty clause $\bot$ is derived. A clause consists of literals,
which are predicates (often equality $\approx$) applied to arguments.
Terms are equivalence classes modulo the $\alpha$-, $\beta$-, and
$\eta$-conversions of the $\lambda$-calculus. Thus, $\cst{f}$, $\lambda x.\>
\cst{f}\>x$, and $(\lambda y.\>y)\>\cst{f}$ are all considered syntactically
equal.

To break symmetries in the search space, $\lambda$-superposition uses an order
$\Greater$ on the terms. The stronger the order, the fewer clauses need to be
generated to saturate the clause set. Yet the \emph{derived higher-order orders}
used by the only implementation of $\lambda$-superposition
\cite[Sect.~3]{bentkamp-et-al-2021-hosup} are a crude encoding in terms of a
standard term order, whether the Knuth--Bendix order (KBO) or the lexicographic
path order (LPO) \cite{zantema-2003}. It is very weak in the presence of applied
variables; for example, it cannot orient the terms $y\>\cst{b}$ and
$y\>\cst{a}$, even with the precedence $\cst{b} \Succ \cst{a}$.

In this work, we introduce two stronger orders, called $\lambda$KBO and
$\lambda$LPO. As the names suggest, $\lambda$KBO and $\lambda$LPO are variants
of KBO and LPO, which are the most widely used orders with superposition
calculi. KBO compares terms by first comparing their syntactic weight, resorting
to a lexicographic comparison as a tiebreaker. LPO essentially performs a
lexicographic comparison while ensuring the subterm property (i.e., the property
that a term is larger than its proper subterms). We define three versions of
$\lambda$KBO and $\lambda$LPO, of increasing expressiveness:\ for ground
(i.e., closed) terms, monomorphic nonground (i.e., open) terms, and polymorphic
nonground terms.

The ground orders (Sect.~\ref{sec:the-ground-level}) form the first level of a
development by stepwise refinement. The monomorphic orders
(Sect.~\ref{sec:the-monomorphic-level}) add support for term variables; their
properties are justified in terms of the ground level. Similarly, the
polymorphic orders (Sect.~\ref{sec:the-polymorphic-level}) add support for type
variables on top of the monomorphic level. For $\lambda$KBO, weights are
computed as polynomials over indeterminates whose values depend on the variables
occurring in the terms. These polynomials can be compared symbolically.

Both orders are specified as a strict relation $\Greater$ and a nonstrict
relation $\Greatersim$, along the lines of Sternagel and Thiemann
\cite{sternagel-thiemann-2013}. The nonstrict orders make the comparison
$y\>\cst{b} \Greatersim y\>\cst{a}$ possible if $\cst{b} \Succ \cst{a}$. In this
example, a strict comparison would fail because $y$ could be instantiated by
$\lambda x.\>\cst{c}$, which ignores the argument and makes both terms equal.

A requirement imposed by the \emph{optimistic $\lambda$-superposition} calculus,
for which the two orders are specifically designed, is that
the order must ensure $u \Greater u\>\cst{diff}\langle\tau,\upsilon\rangle(s, t)$
for a dedicated Skolem symbol $\cst{diff}$,
for all ground terms $s,t,u$, and for all ground types $\tau$ and $\upsilon$.
This allows optimistic $\lambda$-superposition to provide special support for the
functional extensionality axiom
\[z\>(\cst{diff}\langle\alpha,\beta\rangle(\lambda\> z\>\DB{0}, \lambda\> y\>\DB{0})) \not\approx
  y\>(\cst{diff}\langle\alpha,\beta\rangle(\lambda\> z\>\DB{0}, \lambda\> y\>\DB{0})) \mathrel\lor (\lambda\> z\>\DB{0})
  \approx (\lambda\> y\>\DB{0})
\tag*{\text{(\textsc{Ext})}}
\]
Notice that the two arguments of the Skolem symbol $\cst{diff}$ are
specified in parentheses, as mandatory arguments or parameters.

\section{Preliminaries}
\label{sec:preliminaries}

We use the notation $\bar x_n$ or $\bar x$ for tuples or lists $x_1,\dots,x_n$
of length $\left|\bar x\right| = n \ge 0$. Applying a unary function~$f$ to
such a tuple applies it pointwise: $f(\bar x_n) = (f(x_1), \dots, f(x_n))$.

We write $\Nat$ for the set of natural numbers starting with 0
and $\Natpos$ for $\Nat \setminus \{0\}$.
We write $\Ord$ for the set of ordinals below $\epsilon_0$ and $\Ordpos$ for
$\Ord \setminus \{0\}$.

\subsection{Terms}
\label{ssec:terms}

We will need both untyped first-order and typed higher-order terms:
\begin{itemize}
\item Given an untyped first-order signature $\Sigma$, we write $\Tm{\Sigma}{X}$
  for the set of arity-respecting terms built using symbols from $\Sigma$ and
  the variables $X$---the ($\Sigma$-)\emph{terms}. A first-order term~$t$ is
  \emph{ground} if it contains no
  variables, or equivalently if $t \in \GTm{\Sigma}$.

\smallskip

\item For the higher order, the types and terms are those of polymorphic
  higher-order logic, as defined in Bentkamp et al.\
  \cite{bentkamp-et-al-2021-hosup}, but with a few specificities noted below.
\end{itemize}

A higher-order signature $(\Sigma_\ty, \Sigma)$ consists of a type signature
$\Sigma_\ty$ and a term signature $\Sigma$, which depends on
$\Sigma_\ty$. With each type constructor $\kappa \in \Sigma_\ty$ is associated
an arity---the number of arguments it takes. The set of types
$\Ty{\Sigma_\ty}{X_\ty}$ over $X_\ty$ is built using variables from $X_\ty$ and
type constructors applied to the expected number of arguments. The functional
type constructor $\to$ is distinguished. We abbreviate $\tau_1 \to \cdots \to
\tau_n \to \upsilon$ to $\bar\tau_n \to \upsilon$.

The first departure from Bentkamp et al.\ is that we find it convenient to
represent $\lambda$-terms using a locally nameless notation
\cite{chargueraud-2012} based on De Bruijn indices \cite{de-bruijn-1972}. This
notation is essentially isomorphic to a nominal notation, with
$\alpha$-equivalence built in. For example, $\lambda x.\,\lambda y.\, x$ will be
represented as $\lambda\, \lambda\> \DB{1}$, where the De Bruijn index $\DB{1}$
is a ``nameless dummy.'' We allow leaking De Bruijn indices---indices that
point beyond all $\lambda$-binders---but these will be ignored by substitutions.
The operator $t{\uparrow}^n$ shifts all leaking De Bruijn indices by $n$; if
omitted, $n = 1$.

The second departure from Bentkamp et al.\ is that we will use the $\eta$-long
$\beta$-normal form as representatives for $\beta\eta$-equivalence classes, or
``\emph{terms},'' where they used the $\eta$-short $\beta$-normal form. The main
advantage of $\eta$-long is that it makes it possible to obtain the desired
maximality result for the functional extensionality axiom. Moreover, since with
$\eta$-long terms of function type are always $\lambda$-abstractions, we will
find that this simplifies the arithmetic when defining the $\lambda$KBO. Given
a $\lambda$-term $t$, we will denote its $\eta$-long $\beta$-normal form as
$\lnf{t}$.

The main disadvantage of the $\eta$-long $\beta$-normal form arises with
polymorphism: Instantiating a type variable with a functional type can result in
an $\eta$-expansion, dramatically changing the term's shape. For example, if $z
: \alpha$, then $z\{\alpha \mapsto (\kappa\to\kappa)\} = \lambda\> z\> \DB{0}$.
This can be accounted for by the orders, but only at the cost of some weakening.

The third departure from Bentkamp et al.\ is that the symbols (also called
constants) may take parameters, passed in parentheses, in addition to their
regular curried arguments. These parameters do not count as subterms. This
mechanism is used for $\cst{diff}$. Parameters are supported by the optimistic
$\lambda$-superposition calculus.

We write $t : \tau$ to indicate that $t$ has type $\tau$. With each symbol
$\cst{f} \in \Sigma$ is associated a typing $\Pi\bar\alpha_m.\; \bar\tau_n
\Rightarrow \upsilon$, where $\bar\alpha_m$ is a tuple of distinct variables
that contains all type variables from $\bar\tau_n$ and $\upsilon$, $\bar\tau_n$
is the tuple of parameter types, and $\upsilon$ is the (possibly
functional) body type. Given $\cst{f}$, we let $\tyarity{\cst{f}} = m$ and
$\arity{\cst{f}} = n$. We specify a type instance by specifying a tuple
$\bar\alpha_m\sigma$ of types in angle brackets corresponding to the type
arguments: $\cst{f}\langle\bar\alpha_m\sigma\rangle : \bar\tau_n\sigma
\Rightarrow \upsilon\sigma$. Parameters are passed in parentheses.

The set of $\lambda$-preterms is built from the following expressions:
\begin{itemize}
\item a variable $x\langle\tau\rangle : \tau$ for $x\in X$ and a type $\tau$;
\item a symbol $\cst{f}\langle\bar\upsilon_m\rangle(\bar u_n) : \tau$
for a constant $\cst{f}\in\Sigma$ with type declaration $\Pi\bar{\alpha}_m.\>\bar{\tau}_n\Rightarrow\tau$, types $\bar{\upsilon}_m$,
and $\lambda$-preterms $\bar u : \bar{\tau}_n$ such that all De Bruijn indices in $\bar u$ are bound;
\item a De Bruijn index $\DB{n}\langle\tau\rangle : \tau$ for a natural number $n\geq 0$ and a type $\tau$, where $\tau$
  represents the type of the bound variable;
\item a $\lambda$-expression $\lambda\langle\tau\rangle\> t : \tau\to\upsilon$
  for a type $\tau$ and a $\lambda$-preterm $t : \upsilon$
  such that all De Bruijn indices bound by the new $\lambda\langle\tau\rangle$
  have type $\tau$;
\item an application $s\>t : \upsilon$ for
  $\lambda$-preterms $s : \tau\to\upsilon$ and $t : \tau$.
\end{itemize}
The type arguments $\langle\bar\tau\rangle$ carry enough information to enable
typing of any $\lambda$-preterm without any context. We often leave them
implicit, when they are irrelevant or can be inferred.
In $\cst{f}\langle\bar\upsilon_m\rangle(\bar u_n) : \tau$,
we call $\bar u_n$ the parameters.
We omit $()$ when a symbol has no parameters.
As a syntactic convenience, symbols
corresponding to infix operators are applied infix.
Notice that it is possible for a term to contain multiple occurrences of the same free De Bruijn index
with different types. In contrast, the types of bound De Bruijn indices always match.

A $\lambda$-term is a $\lambda$-preterm without free De Bruijn indices.

The size $|\phantom{j}|$ of a $\lambda$-preterm is defined recursively by the
following equations:
\begin{align*}
  |x| & = 1
& |\cst{f}(\bar u)| & = 1 + \sum\nolimits_i |u_i|
& |\DB{n}| & = 1
& |\lambda\> t| & = 1 + |t|
& |s\> t| & = |s| + |t|
\end{align*}

The set $\HTmp{\Sigma_\ty}{\Sigma}{X_\ty}{X}$ of \emph{preterms} consists of the
$\beta\eta$-equivalence classes of $\lambda$-preterms.
The set $\HTm{\Sigma_\ty}{\Sigma}{X_\ty}{X}$ of ``\emph{terms}'' consists of the
$\beta\eta$-equivalence classes of $\lambda$-terms.
Preterms have the following
four mutually exclusive forms, where $t, \bar{t}$ are terms:
\begin{itemize}
\item a fully applied variable $x\langle\tau\rangle\> \bar{t}$;
\item a fully applied symbol $\cst{f}\langle\bar\tau\rangle(\bar u)\> \bar{t}$;
\item a fully applied De Bruijn index $\DB{n}\langle\tau\rangle\> \bar{t}$;
\item a $\lambda$-abstraction $\lambda\langle\tau\rangle\> t$.
\end{itemize}
``Fully applied'' means that the preterm as a whole has nonfunctional type.
The above view is reminiscent of first-order terms: The variable and symbol
cases are essentially as for first-order terms, De Bruijn indices are regarded
as symbols, and even the $\lambda$-abstraction $\lambda\> t$ can be thought of as
a unary function application $\lambda(t)$. This will be the key to adapting the
first-order KBO and LPO to higher-order preterms.

A type $\tau$ is monomorphic
if $\tau \in \GTy{\Sigma_\ty}$---i.e., if it contains no variables; otherwise,
it is polymorphic.
A preterm is monomorphic if all its type arguments are monomorphic; otherwise, it
is polymorphic. A preterm $t$ is \emph{ground} if it is closed and monomorphic, or
equivalently if $t \in \GHTmp{\Sigma_\ty}{\Sigma}$.

Substitutions are defined as mappings from a set of type variables to types and
from term variables to terms of the same type.
A \emph{monomorphizing} type substitution maps all
type variables to ground types and leaves term variables unchanged. A
\emph{grounding} substitution maps all variables to ground types and terms.

We will say that a preterm is \emph{steady} if 
its type is neither of function type nor a type variable.

Unless otherwise specified, all preterms will be presented in
$\eta$-long normal form.

\subsection{Orders}
\label{ssec:orders}

\begin{defi}
\label{def:reflive-closure}
Given a binary relation $\Greater$, we write $\GreaterEQ$ for its reflexive
closure.
\end{defi}

\begin{defi}
\label{def:lex-extension}
Given a binary relation $\Greater$, we write $\gtgtlex$ for its left-to-right
lexicographic extension, defined as follows on same-length tuples:
$() \gtgtlex ()$ does not hold, and for $n \geq 1$,
$(y_1,\dots,y_n)\allowbreak \gtgtlex (x_1,\dots,x_n)$ holds if and only if
$y_1 \Greater x_1$ or else $y_1 = x_1$ and $(y_2,\dots,y_n) \gtgtlex (x_2,\dots,x_n)$.
\end{defi}

\begin{defi}
\label{def:strict-lex-extension}
Given binary relations $\Greater$ and $\Greatersim$, we write $\gtgtsimlex$ for
their left-to-right strict lexicographic extension, defined as follows on
same-length tuples: $() \gtgtsimlex ()$ does not hold, and for $n \geq 1$,
$(y_1,\dots,y_n) \gtgtsimlex (x_1,\dots,x_n)$ holds if and only if
$y_1 \Greater x_1$ or else $y_1 \Greatersim x_1$ and $(y_2,\dots,y_n) \gtgtsimlex (x_2,\dots,x_n)$.
\end{defi}

\begin{defi}
\label{def:nonstrict-lex-extension}
Given binary relations $\Greater$ and $\Greatersim$, we write $\gtgtapproxlex$ for
their left-to-right nonstrict lexicographic extension, defined as follows on
same-length tuples: $() \gtgtapproxlex ()$ holds, and for $n \geq 1$,
$(y_1,\dots,y_n) \gtgtapproxlex (x_1,\dots,x_n)$ holds if and only if
$y_1 \Greater x_1$ or else $y_1 \Greatersim x_1$ and $(y_2,\dots,y_n) \gtgtapproxlex (x_2,\dots,x_n)$.
\end{defi}

\begin{defi}
\label{def:precedence}
A \emph{precedence} $\Succ$ on a set $A$ is a well-founded total order $\Succ$
on $A$.
\end{defi}

The first-order Knuth--Bendix order will constitute a useful stepping stone.
Like the original \cite{knuth-bendix-1970}, the version we use is untyped.
Unlike the original, but like the transfinite KBO \cite{ludwig-waldmann-2007},
it uses ordinal weights instead of natural numbers and supports argument
coefficients.

\begin{defi}
\label{def:weight-kb}
Let $\cal w : \Sigma \to \Ordpos$ and
$\cal k : \Sigma \times \Natpos \to \Ordpos$.
Define the weight function $\cal W : \GTm{\Sigma} \to \Ord$ recursively by
\begin{align*}
  \cal W(x)
& = 0
& \cal W(\cst{f}(\bar s_m))
& = \cal w(\cst{f}) + \sum\nolimits_{i=1}^m
  \cal k(\cst{f},i) \cal W(s_i)
\end{align*}
\end{defi}

\begin{defi}
\label{def:kb}
Let $\cal w, \cal k, \cal W$ be as in Definition~\ref{def:weight-kb}.
Let $\Succ$ be an order (typically, a precedence) on an untyped signature $\Sigma$.
The \emph{strict first-order KBO} $\Greater_\kb$ induced by $\cal w, \cal k, \Succ$ on
nonground $\Sigma$-terms is defined inductively so that $t \Greater_\kb s$ if every
variable occurring in $s$ occurs at least as many times in $t$ as in $s$ and if
any of these conditions is met:
\begin{enumerate}
\item \label{itm:kb-wt}
  $\cal W(t) > \cal W(s)$;

\item \label{itm:kb-sym}
  $\cal W(t) = \cal W(s)$,
  $t = \cst{g}(\bar t)$, $s = \cst{f}(\bar s)$,
  and $\cst{g} \Succ \cst{f}$;

\item \label{itm:kb-sym-args}
  $\cal W(t) = \cal W(s)$,
  $t = \cst{g}(\bar t)$, $s = \cst{g}(\bar s)$,
  and $\bar t \gtgtlex_\kb \bar s$.
\end{enumerate}
\end{defi}

\begin{defi}
\label{def:lp}
Let $\Succ$ be an order (typically, a precedence) on an untyped signature $\Sigma$.
The \emph{strict first-order LPO} $\Greater_\lp$ induced by $\Succ$ on
nonground $\Sigma$-terms is defined inductively so that $t \Greater_\lp s$ if any of
these conditions is met, where $t = \cst{g}(\bar t_k)$:
\begin{enumerate}
\item \label{itm:lp-sub}
  $t_i \GreaterEQ_\lp s$ for some $i \in \{1,\dots,k\}$;

\item \label{itm:lp-syms}
  $s = \cst{f}(\bar s)$, $\cst{g} \Succ \cst{f}$, and
  $\cal{chkargs}(t, \bar s)$;

\item \label{itm:lp-args}
  $s = \cst{g}(\bar s)$, $\bar t \gtgtlex_\lp \bar s$, and
  $\cal{chkargs}(t, \bar s)$
\end{enumerate}
where $\cal{chkargs}(t, \bar s_k)$ if and only if $t \Greater_\lp s_i$ for every $i \in
\{1,\dots,k\}$.
\end{defi}

\section{The Ground Level}
\label{sec:the-ground-level}

We start by defining the $\lambda$KBO and $\lambda$LPO on ground preterms. We
connect them to the first-order KBO and LPO via an encoding so that we can lift
various properties, such as totality, well-foundedness, and compatibility with a
wide class of contexts. For $\lambda$KBO, in addition to $\cal w$, we will use
the parameter $\cal w_\lambda$ for the weight of a $\lambda$ and $\cal w_\db$
for the weight of a De Bruijn index.

For the rest of this paper, we fix a higher-order signature $(\Sigma_\ty,
\Sigma)$ and two infinite sets of variables $X_\ty, X$.

\subsection{\texorpdfstring{$\bm{\lambda}$KBO}{Lambda KBO}}
\label{ssec:lambda-kbo-ground}

\begin{defi}
\label{def:weight-lkbg}
Let $\cal w : \Sigma \to \Ordpos$, $\cal w_\lambda, \cal w_\db \in \Ordpos$,
and $\cal k : \Sigma \times \Natpos \to \Ordpos$.
Define the ground weight function $\cal W_\g : \GHTmp{\Sigma_\ty}{\Sigma} \to \Ordpos$
recursively by
\begin{align*}
  \cal W_\g(\cst{f}(\bar u)\> \bar t_n)
& = \cal w(\cst{f}) + \sum\nolimits_{i=1}^n \cal k(\cst{f},i) \cal W_\g(t_i)
&
  \cal W_\g(\DB{m}\> \bar t_n)
& = \cal w_\db + \sum\nolimits_{i=1}^n \cal W_\g(t_i)
\\
  \cal W_\g(\lambda\> t)
& = \cal w_\lambda + \cal W_\g(t)
\end{align*}

\end{defi}

\begin{defi}
\label{def:lkbg}
Let $\cal w_\ty : \Sigma_\ty \to \Ordpos$.
Let $\Succ^\ty$ be a precedence on $\Sigma_\ty$.
Let $\Greater_\ty$ be the strict first-order KBO induced by $\cal w_\ty$ and
$\Succ^\ty$ on $\GTm{\Sigma_\ty}$.
Let $\cal w, \cal w_\lambda, \cal w_\db, \cal k, \cal W_\g$ be as in
Definition~\ref{def:weight-lkbg}.
Let $\Succ$ be a precedence on $\Sigma$.

The \emph{strict ground $\lambda$KBO} $\Greater_\lkbg$ induced by $\cal w_\ty, \cal
w,\allowbreak \cal w_\lambda,\allowbreak \cal w_\db,\allowbreak \cal
k,\allowbreak \Succ^\ty, \Succ$ on $\GHTmp{\Sigma_\ty}{\Sigma}$ is defined inductively
so that $t \Greater_\lkbg s$ if any of these conditions is met:

\begin{enumerate}
\item \label{itm:lkbg-wt}
  $\cal W_\g(t) > \cal W_\g(s)$;

\item \label{itm:lkbg-lam}
  $\cal W_\g(t) = \cal W_\g(s)$,
  $t$ is of the form $\lambda\langle\upsilon\rangle\> t'$, and
  any of these conditions is met:
  \begin{enumerate}
  \item \label{itm:lkbg-lam-typ}
    $s$ is of the form $\lambda\langle\tau\rangle\> s'$ and
    $\upsilon \Greater_\ty \tau$, or

  \item \label{itm:lkbg-lam-body}
    $s$ is of the form $\lambda\langle\upsilon\rangle\> s'$
    and $t' \Greater_\lkbg s'$, or

  \item \label{itm:lkbg-lam-other}
    $s$ is of the form $\DB{m}\>\bar s$ or
    $\cst{f}(\bar u)\>\bar s$;
  \end{enumerate}

\item \label{itm:lkbg-db}
  $\cal W_\g(t) = \cal W_\g(s)$, $t$ is of the form
  $\DB{n}\> \bar t$, and
  any of these conditions is met:
  \begin{enumerate}
  \item \label{itm:lkbg-db-diff}
    $s$ is of the form $\DB{m}\>\bar s$ and $n > m$, or

  \item \label{itm:lkbg-db-args}
    $s$ is of the form $\DB{n}\>\bar s$ and
    $\bar t \gtgtlex_\lkbg \bar s$, or

  \item \label{itm:lkbg-db-other}
    $s$ is of the form $\cst{f}(\bar u)\>\bar s$;
  \end{enumerate}

\item \label{itm:lkbg-sym}
  $\cal W_\g(t) = \cal W_\g(s)$, $t$ is of the form
  $\cst{g}\langle\bar\upsilon\rangle(\bar w)\>\bar t$, and
  any of these conditions is met:
  \begin{enumerate}
  \item \label{itm:lkbg-sym-diff}
    $s$ is of the form $\cst{f}(\bar u)\>\bar s$ and
    $\cst{g} \Succ \cst{f}$, or

  \item \label{itm:lkbg-sym-typ}
    $s$ is of the form $\cst{g}\langle\bar\tau\rangle(\bar u)\>\bar s$ and
    $\bar\upsilon \gtgtlex_\ty \bar\tau$, or

  \item \label{itm:lkbg-sym-args}
    $s$ is of the form $\cst{g}\langle\bar\upsilon\rangle(\bar u)\>\bar s$ and
    $(\bar w, \bar t) \gtgtlex_\lkbg (\bar u, \bar s)$.
  \end{enumerate}
\end{enumerate}
\end{defi}

In rule~\ref{itm:lkbg-db-args}, we assume that leaking De Bruijn indices in
$t$ and $s$ refer to the same variable and hence have the same type. This
invariant is preserved by the recursive application in
rule~\ref{itm:lkbg-lam-body}. A more defensive approach would be to compare the
types first and then the argument tuples as a tie breaker, as in rules
\ref{itm:lkbg-sym-typ}~and~\ref{itm:lkbg-sym-args}.

\begin{defi}
Given a higher-order signature $(\Sigma_\ty, \Sigma)$, let
\begin{align*}
\Sigma_\fo
\vthinspace=\vthinspace
&
\{\cst{f}_{\bar u}^{\bar\tau} \mid \cst{f} \in \Sigma{,}\;
  \bar\tau \in (\GTy{\Sigma_\ty})^{\tyarity{\cst{f}}}{,}\;
  \bar u \in (\GHTmp{\Sigma_\ty}{\Sigma})^{\arity{\cst{f}}}\}
\\[-\jot]
& \mathrel\uplus \{\cst{db}_k^i \mid i{,}\; k \in \Nat\}
\mathrel\uplus \{\cst{lam}^\tau \mid \tau \in \GTy{\Sigma_\ty}\}
\end{align*}
be an untyped first-order signature.
\end{defi}

\begin{defi}
\label{def:encode}
The translation $\ENConly$ defined by the following equations encodes
$\GHTmp{\Sigma_\ty}{\Sigma}$ into $\GTm{\Sigma_\fo}$:
\begin{align*}
\ENC{\cst{f}\langle\bar\tau\rangle(\bar u)\> \bar t}
  & = \cst{f}_{\bar u}^{\bar\tau}(\ENC{\bar t})
&
\ENC{\DB{m}\> \bar t_n}
  & = \cst{db}_n^m(\ENC{\bar t_n})
&
\ENC{\lambda\langle\tau\rangle\> t}
  & = {\cst{lam}^\tau}(\ENC{t})
\end{align*}
\end{defi}

\begin{lem}
\label{lem:encode-injective}
The translation $\ENConly$ is injective on ground terms.
\end{lem}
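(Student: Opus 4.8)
The plan is to prove injectivity of $\ENConly$ by structural induction on the four mutually exclusive preterm forms, exploiting the fact that the encoding records enough data in each first-order symbol to reconstruct the higher-order preterm uniquely. First I would observe that the head symbol of $\ENC{t}$ already tells us which of the three equations in Definition~\ref{def:encode} was applied: a symbol $\cst{f}_{\bar u}^{\bar\tau}$ can only arise from a fully applied symbol $\cst{f}\langle\bar\tau\rangle(\bar u)\>\bar t$, a symbol $\cst{db}_n^m$ only from a fully applied De Bruijn index $\DB{m}\>\bar t_n$, and a symbol $\cst{lam}^\tau$ only from a $\lambda$-abstraction $\lambda\langle\tau\rangle\>t$. (The fully applied variable case $x\langle\tau\rangle\>\bar t$ does not occur among ground terms, so the case analysis is complete for the domain $\GHTmp{\Sigma_\ty}{\Sigma}$.) Hence two ground terms with the same encoding must have the same outermost form, and the three cases can be treated separately.

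In each case, I would read off the structural data from the first-order head symbol and apply the induction hypothesis to the arguments. For a symbol head $\cst{f}_{\bar u}^{\bar\tau}$: the subscript and superscript determine $\bar u$ and $\bar\tau$ (and hence $\cst{f}$), and the argument list of the first-order term has length $|\bar t|$ and consists of the encodings $\ENC{t_i}$ in order, so the induction hypothesis recovers each $t_i$. Note that the parameters $\bar u$ are pinned down literally by the subscript and so need no appeal to the induction hypothesis. The De Bruijn case is analogous, with $\cst{db}_n^m$ fixing both the index $m$ and the number $n$ of curried arguments, then the induction hypothesis applied componentwise to the $n$ encoded arguments. The $\lambda$ case is the simplest: $\cst{lam}^\tau$ fixes $\tau$, and the single first-order argument is $\ENC{t}$, so the induction hypothesis gives back $t$.

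The only subtlety — and it is minor — is being careful that the arity conventions line up so that the equations of Definition~\ref{def:encode} really do partition the domain, and that the angle-bracketed type arguments $\bar\tau$ and the parenthesized parameters $\bar u$ are faithfully mirrored in the sub/superscripts of $\Sigma_\fo$-symbols; this is immediate from the definition of $\Sigma_\fo$, where the index sets are exactly $\bar\tau \in (\GTy{\Sigma_\ty})^{\tyarity{\cst{f}}}$ and $\bar u \in (\GHTmp{\Sigma_\ty}{\Sigma})^{\arity{\cst{f}}}$. So I do not anticipate a genuine obstacle; the proof is a routine structural induction, and the bulk of the work is simply noting that no information is lost in passing to the first-order encoding. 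If anything, the one point worth stating explicitly is that on ground terms the "fully applied variable" form is vacuous, which is why injectivity can fail to extend to nonground preterms and is claimed only for ground terms here.
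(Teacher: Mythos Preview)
Your structural induction is the right overall approach and matches the paper's ``straightforward induction.'' However, there is a genuine gap in the De Bruijn case, and your closing remark about why the lemma is restricted to ground \emph{terms} misidentifies the issue.

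The encoding $\ENC{\DB{m}\langle\tau\rangle\>\bar t_n} = \cst{db}_n^m(\ENC{\bar t_n})$ does \emph{not} record the type annotation $\tau$ on the De Bruijn index. So from $\cst{db}_n^m$ you recover $m$ and $n$, but not $\tau$; your induction step for this case, as written, is incomplete. This, not the absence of variables, is why the lemma is stated for ground \emph{terms} rather than ground \emph{preterms}: the domain of $\ENConly$ is already $\GHTmp{\Sigma_\ty}{\Sigma}$, which contains no term variables, so your observation about the variable-headed form being vacuous is correct but beside the point. The relevant distinction between terms and preterms here is the absence of \emph{free} De Bruijn indices. In a term, every De Bruijn index is bound by some enclosing $\lambda\langle\tau\rangle$, whose type annotation \emph{is} recorded as $\cst{lam}^\tau$, and this determines the index's type. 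The paper's proof makes exactly this point. To make your induction go through formally, you need either to carry the binder context through the recursion into $\lambda$-bodies, or to prove a strengthened statement for preterms (injectivity modulo the types of free De Bruijn indices) and then specialize to terms.
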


\begin{proof}
By straightforward induction on $\ENConly$'s definition. 
Since we claim injectivity only for terms, not all preterms,
there is no need the type of a De Bruijn index.
The type of a De Bruijn
index is given by the corresponding enclosing $\cst{lam}^\tau$.
Similarly, the type of a parameter is
given by the function it is passed to.
\end{proof}

It will be useful to apply $\Greater_\kb$ to encoded terms. Let the symbol weights
$\cal w_\fo$ and coefficients $\cal k_{\,\,\fo}$ be derived from $\cal w$ as follows:
\begin{align*}
\cal w_\fo(\cst{f}_{\bar u}^{\bar\tau})
  & = \cal w(\cst{f})
& \cal w_\fo(\cst{db}^i_k)
  & = \cal w_\db
& \cal w_\fo(\cst{lam}^\tau)
  & = \cal w_\lambda
\\
\cal k_{\,\,\fo}(\cst{f}_{\bar u}^{\bar\tau}, i)
  & = \cal k(\cst{f}, i)
& \cal k_{\,\,\fo}(\cst{db}^j_k, i)
  & = 1
& \cal k_{\,\,\fo}(\cst{lam}^\tau, i)
  & = 1
\end{align*}
Next, let $\Succ^\kb$ be the
precedence on $\Sigma_\fo$ that sorts the elements as follows,
from smallest to largest:
\begin{enumerate}
\item Start with the symbols $\cst{f}_{\bar u}^{\bar\tau}$ in
  $\Prec$-increasing order of their symbols $\cst{f}$, using $\ltltlex_\ty$ on
  their superscripts as first tiebreaker and $\ltltlex_\lkbg$ on the subscripts
  as second tiebreaker.

\smallskip

\item Continue with the De Bruijn indices:
  $\cst{\cst{db}^0_0}, \cst{\cst{db}^0_1}, \dots$,
  followed by $\cst{\cst{db}^1_0}, \cst{\cst{db}^1_1}, \dots$,
  $\cst{\cst{db}^2_0},\allowbreak \cst{\cst{db}^2_1}, \dots,$ and so on.

\smallskip

\item Conclude with the symbols $\cst{lam}^\tau$ in $\Less_\ty$-increasing order
  of their superscripts.
\end{enumerate}
This definition ensures that symbols from $\Sigma$ are smallest and
$\cst{lam}^\tau$ symbols are largest.

Let $\Greater_\kb$ denote the first-order KBO instance induced by $\cal w_\fo,\cal
k_{\,\,\fo}, \Succ^\kb$, and let $\cal W_\kb$ denote its weight function. The
translation $\ENConly$ is faithful in the following sense:

\begin{lem}
\label{lem:weight-encode}
$\cal W_\kb(\ENC{u}) = \cal W_\g(u)$ for every $u \in \GHTmp{\Sigma_\ty}{\Sigma}$.
\end{lem}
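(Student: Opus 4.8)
The plan is to prove $\cal W_\kb(\ENC{u}) = \cal W_\g(u)$ by structural induction on the ground preterm $u$, following the four mutually exclusive forms that a preterm can take: a fully applied symbol, a fully applied De Bruijn index, and a $\lambda$-abstraction (a fully applied variable cannot occur in a ground, hence closed, preterm). In each case, the strategy is simply to unfold the definition of $\ENConly$ (Definition~\ref{def:encode}), then the definition of the first-order weight $\cal W_\kb$ via $\cal W$ (Definition~\ref{def:weight-kb}) instantiated with $\cal w_\fo$ and $\cal k_{\,\,\fo}$, then the definition of $\cal W_\g$ (Definition~\ref{def:weight-lkbg}), and check that the two sides coincide once the induction hypothesis is applied to the immediate subterms.

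\medskip

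\textbf{Case $u = \cst{f}\langle\bar\tau\rangle(\bar u)\>\bar t_n$.} Here $\ENC{u} = \cst{f}_{\bar u}^{\bar\tau}(\ENC{\bar t_n})$, so
\[
\cal W_\kb(\ENC{u}) = \cal w_\fo(\cst{f}_{\bar u}^{\bar\tau}) + \sum\nolimits_{i=1}^n \cal k_{\,\,\fo}(\cst{f}_{\bar u}^{\bar\tau}, i)\,\cal W_\kb(\ENC{t_i}) = \cal w(\cst{f}) + \sum\nolimits_{i=1}^n \cal k(\cst{f}, i)\,\cal W_\g(t_i) = \cal W_\g(u),
\]
where the middle equality uses the definitions of $\cal w_\fo$ and $\cal k_{\,\,\fo}$ on $\cst{f}_{\bar u}^{\bar\tau}$ together with the induction hypothesis on each $t_i$. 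Note that the parameters $\bar u$ contribute nothing to the weight on either side---on the $\lambda$KBO side they are not subterms, and on the first-order side they are merely decorations of the function symbol---so there is nothing further to check. The cases $u = \DB{m}\>\bar t_n$ and $u = \lambda\langle\tau\rangle\> t$ are entirely analogous: for the De Bruijn index, $\cal w_\fo(\cst{db}^m_n) = \cal w_\db$ and $\cal k_{\,\,\fo}(\cst{db}^m_n, i) = 1$ match the clause $\cal W_\g(\DB{m}\>\bar t_n) = \cal w_\db + \sum_i \cal W_\g(t_i)$; for the abstraction, $\cal w_\fo(\cst{lam}^\tau) = \cal w_\lambda$ and $\cal k_{\,\,\fo}(\cst{lam}^\tau, 1) = 1$ match $\cal W_\g(\lambda\> t) = \cal w_\lambda + \cal W_\g(t)$.

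\medskip

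\textbf{Anticipated obstacles.} There is essentially no hard step here: the encoding $\ENConly$ and the weights $\cal w_\fo$, $\cal k_{\,\,\fo}$ were deliberately designed to make this lemma hold definitionally, so the proof is a routine unfolding. The only points requiring a word of care are (i) confirming that every preterm form is covered---and in particular that the fully-applied-variable form is vacuous on ground preterms, which the clause $\cal W(x) = 0$ in Definition~\ref{def:weight-kb} would otherwise mishandle since $\cal W_\g$ has no such clause; and (ii) checking that the arities line up, i.e.\ that $\ENConly$ applied to an $n$-ary application produces an $n$-ary first-order term, so that the sums $\sum_{i=1}^n$ range over the same index set on both sides. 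Both are immediate from the definitions. I would therefore present the proof tersely as ``by structural induction on $u$, unfolding the definitions of $\ENConly$, $\cal W_\kb$, $\cal w_\fo$, $\cal k_{\,\,\fo}$, and $\cal W_\g$,'' perhaps spelling out the symbol case as above and remarking that the other two are identical.
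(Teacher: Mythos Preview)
Your proposal is correct and takes essentially the same approach as the paper, which simply states ``by induction on the definition of $\cal W_\g$.'' You have merely spelled out in detail what the paper leaves implicit.
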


\begin{proof}
By induction on the definition of $\cal W_\g$.
\end{proof}

\begin{lem}
\label{lem:lkbg-encode-faithful}
Given $s, t \in \GHTmp{\Sigma_\ty}{\Sigma}$, we have
$t \Greater_\lkbg s$ if and only if $\ENC{t} \Greater_\kb \ENC{s}$.
\end{lem}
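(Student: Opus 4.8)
The plan is to prove both directions by simultaneous induction, matching the inductive structure of $\Greater_\lkbg$ (Definition~\ref{def:lkbg}) against that of $\Greater_\kb$ (Definition~\ref{def:kb}) applied to encoded terms. The key observation that makes this work is that the encoding $\ENConly$ is structure-preserving in a very strong sense: a preterm $\cst{f}\langle\bar\tau\rangle(\bar u)\>\bar t$ becomes a first-order term $\cst{f}_{\bar u}^{\bar\tau}(\ENC{\bar t})$ whose head symbol records all of $\cst{f}$, its type arguments $\bar\tau$, and its parameters $\bar u$; a De Bruijn application $\DB{m}\>\bar t_n$ becomes $\cst{db}_n^m(\ENC{\bar t_n})$; and a $\lambda$-abstraction becomes a unary $\cst{lam}^\tau$ application. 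So the four mutually exclusive forms of a ground preterm correspond bijectively (via Lemma~\ref{lem:encode-injective}) to three "kinds" of head symbols in $\Sigma_\fo$, and by design of $\Succ^\kb$ the three kinds are totally ordered: $\Sigma$-symbols $\Less$ De Bruijn symbols $\Less$ $\cst{lam}$-symbols. Finally, by Lemma~\ref{lem:weight-encode}, $\cal W_\kb(\ENC{u}) = \cal W_\g(u)$, so the weight-comparison clauses \ref{itm:kb-wt} and \ref{itm:lkbg-wt} are literally the same test.

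First I would dispatch the weight clause: if $\cal W_\g(t) > \cal W_\g(s)$ then by Lemma~\ref{lem:weight-encode} $\cal W_\kb(\ENC{t}) > \cal W_\kb(\ENC{s})$, giving $\ENC{t} \Greater_\kb \ENC{s}$ via clause~\ref{itm:kb-wt}, and conversely; note the variable-occurrence side condition of $\Greater_\kb$ is vacuous here because all terms are ground. So it remains to handle the case $\cal W_\g(t) = \cal W_\g(s)$, equivalently $\cal W_\kb(\ENC{t}) = \cal W_\kb(\ENC{s})$. Now I would argue by cases on the form of $t$. If $t = \lambda\langle\upsilon\rangle\>t'$, then $\ENC{t} = \cst{lam}^\upsilon(\ENC{t'})$. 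Its comparison with $\ENC{s}$ under $\Greater_\kb$ breaks into: $s$ headed by a $\Sigma$- or De Bruijn symbol, in which case $\cst{lam}^\upsilon$ is $\Succ^\kb$-larger and clause~\ref{itm:kb-sym} fires—matching subcase~\ref{itm:lkbg-lam-other}; $s = \cst{lam}^\tau(\ENC{s'})$ with $\tau \ne \upsilon$, where $\Succ^\kb$ compares $\cst{lam}$-symbols by $\Less_\ty$ on superscripts, i.e.\ larger means $\upsilon \Greater_\ty \tau$—matching subcase~\ref{itm:lkbg-lam-typ}; or $s = \cst{lam}^\upsilon(\ENC{s'})$, where clause~\ref{itm:kb-sym-args} reduces to $(\ENC{t'}) \gtgtlex_\kb (\ENC{s'})$, i.e.\ $\ENC{t'} \Greater_\kb \ENC{s'}$, which by the induction hypothesis is $t' \Greater_\lkbg s'$—matching subcase~\ref{itm:lkbg-lam-body}. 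The De Bruijn case ($t = \DB{n}\>\bar t$, $\ENC{t} = \cst{db}_{|\bar t|}^n(\ENC{\bar t})$) and the symbol case ($t = \cst{g}\langle\bar\upsilon\rangle(\bar w)\>\bar t$, $\ENC{t} = \cst{g}_{\bar w}^{\bar\upsilon}(\ENC{\bar t})$) are handled the same way, each time reading off how $\Succ^\kb$ refines within its block and translating the lexicographic tiebreakers; in the symbol case the subscript/superscript ordering of $\Succ^\kb$ yields the type-argument comparison (subcase~\ref{itm:lkbg-sym-typ}) and the parameter comparison, and clause~\ref{itm:kb-sym-args}'s argument comparison $\ENC{\bar t} \gtgtlex_\kb \ENC{\bar s}$ corresponds, via the IH applied componentwise, to $(\bar w,\bar t) \gtgtlex_\lkbg (\bar u,\bar s)$ (subcase~\ref{itm:lkbg-sym-args}).

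The one subtlety I anticipate is the bookkeeping around lexicographic tiebreakers in the precedence $\Succ^\kb$ versus the argument-list comparisons in the orders. For the symbol case one must be careful that $\Greater_\kb$ with $\cst{g}_{\bar w}^{\bar\upsilon}$ versus $\cst{g}_{\bar u}^{\bar\tau}$ splits into "different first-order symbols, so use clause~\ref{itm:kb-sym} via $\Succ^\kb$" when $(\bar\upsilon,\bar w) \ne (\bar\tau,\bar u)$, versus "same first-order symbol, use clause~\ref{itm:kb-sym-args}" otherwise, and that this dichotomy lines up with subcases~\ref{itm:lkbg-sym-typ}, the parameter comparison, and~\ref{itm:lkbg-sym-args}. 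Because $\Succ^\kb$ was defined to use $\ltltlex_\ty$ on superscripts as the first tiebreaker and $\ltltlex_\lkbg$ on subscripts as the second, and because $\Greater_\lkbg$ checks $\bar\upsilon \gtgtlex_\ty \bar\tau$ before the parameters and curried arguments, the nesting order matches exactly—but this requires invoking well-definedness of $\ltltlex_\lkbg$ on the subscripts, which in turn silently relies on the recursion being well-founded (the subscript terms $\bar u$ are structurally smaller, so the IH covers them too). I would also note explicitly, as the surrounding text does, that in rule~\ref{itm:lkbg-db-args} leaking De Bruijn indices in $t$ and $s$ are assumed to refer to the same variable; since we only claim the equivalence for terms rather than arbitrary preterms, there are no free De Bruijn indices and this point is moot. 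Apart from this matching-up of orderings, every step is a direct unfolding of the two inductive definitions, and the proof is short.
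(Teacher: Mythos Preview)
Your argument is correct and mirrors the paper's own proof: the paper presents exactly the same rule-by-rule correspondence in a table and then says the equivalence follows by two inductions on the definitions of $\Greater_\kb$ and $\Greater_\lkbg$. One small factual slip: you write that the lemma is stated ``for terms rather than arbitrary preterms'' and that therefore no free De Bruijn indices arise, but the lemma is actually stated for $s, t \in \GHTmp{\Sigma_\ty}{\Sigma}$, i.e.\ ground \emph{preterms}, and the recursion through rule~\ref{itm:lkbg-lam-body} does descend under binders into subterms with leaking indices. The reason the arity mismatch in rule~\ref{itm:lkbg-db-args} is harmless is not the absence of leaking indices but the standing assumption, stated right after Definition~\ref{def:lkbg}, that leaking indices with the same number share a type; under that assumption your proof goes through verbatim.
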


\begin{proof}
By Lemma~\ref{lem:weight-encode}, any preterm $u$ has the same weight according to
$\Greater_\lkbg$ as $\ENC{u}$ according to $\Greater_\kb$. The rules for establishing $t
\Greater_\lkbg s$ and $\ENC{t} \Greater_\kb \ENC{s}$ correspond according to the following
table:

\begin{center}
\begin{tabular}{@{}l@{\quad}l@{}}
\toprule
  $\Greater_\kb$
& $\Greater_\lkbg$
\\
\midrule
  Rule \ref{itm:kb-wt}
& Rule \ref{itm:lkbg-wt}
\\
  Rule \ref{itm:kb-sym}
& Rule \ref{itm:lkbg-lam-typ},
  \ref{itm:lkbg-lam-other},
  \ref{itm:lkbg-db-diff},
  \ref{itm:lkbg-db-other},
  \ref{itm:lkbg-sym-diff},
  \ref{itm:lkbg-sym-typ}, or
  \ref{itm:lkbg-sym-args}
\\
  Rule \ref{itm:kb-sym-args}
& Rule
  \ref{itm:lkbg-lam-body},
  \ref{itm:lkbg-db-args}, or
  \ref{itm:lkbg-sym-args}
\\
\bottomrule
\end{tabular}
\end{center}
The equivalence can then be established by two proofs by induction on the
definition on $\Greater_\kb$ and $\Greater_\lkbg$, one for each direction of the
equivalence.
\end{proof}

\begin{thm}
\label{thm:lkbg-strict-partial-order}
The relation $\Greater_\lkbg$ is a strict partial order.
\end{thm}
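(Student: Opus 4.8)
The plan is to transfer the claim from the first-order KBO via the faithfulness result in Lemma~\ref{lem:lkbg-encode-faithful}. We know that $\Greater_\kb$, being the standard (transfinite, argument-coefficient) first-order KBO on the signature $\Sigma_\fo$ with weight function $\cal w_\fo$, coefficients $\cal k_{\,\,\fo}$, and precedence $\Succ^\kb$, is a strict partial order on $\GTm{\Sigma_\fo}$; this is a classical fact about KBO that holds because $\Succ^\kb$ is a (well-founded) total order, $\cal w_\fo$ takes values in $\Ordpos$, and $\cal k_{\,\,\fo}$ takes values in $\Ordpos$. So irreflexivity and transitivity of $\Greater_\kb$ on the image $\ENConly[\GHTmp{\Sigma_\ty}{\Sigma}]$ are inherited.

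The two properties then follow. For irreflexivity: if $t \Greater_\lkbg t$ held, then by Lemma~\ref{lem:lkbg-encode-faithful} we would have $\ENC{t} \Greater_\kb \ENC{t}$, contradicting irreflexivity of $\Greater_\kb$. For transitivity: suppose $u \Greater_\lkbg t$ and $t \Greater_\lkbg s$; by Lemma~\ref{lem:lkbg-encode-faithful} we get $\ENC{u} \Greater_\kb \ENC{t}$ and $\ENC{t} \Greater_\kb \ENC{s}$, hence $\ENC{u} \Greater_\kb \ENC{s}$ by transitivity of $\Greater_\kb$, and applying Lemma~\ref{lem:lkbg-encode-faithful} once more (in the other direction) yields $u \Greater_\lkbg s$. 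Note that $\ENConly$ being well-defined as a function on $\GHTmp{\Sigma_\ty}{\Sigma}$ is what makes the first of these transfers legitimate; injectivity (Lemma~\ref{lem:encode-injective}) is not actually needed here, only the fact that $\ENConly$ is a total function, since Lemma~\ref{lem:lkbg-encode-faithful} is already stated as a biconditional.

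The only real obstacle is whether one is allowed to invoke ``the first-order KBO is a strict partial order'' as a black box, given that the variant in Definition~\ref{def:kb} uses ordinal weights below $\epsilon_0$ and argument coefficients rather than the textbook natural-number version. If the paper has not cited or proved this, I would add a short remark that the standard proof of irreflexivity and transitivity of KBO goes through verbatim in the transfinite setting: the weight comparison in rule~\ref{itm:kb-wt} is a strict partial order on $\Ord$ (ordinals are well-ordered), the symbol comparison in rule~\ref{itm:kb-sym} uses the strict partial order $\Succ$, and the lexicographic comparison in rule~\ref{itm:kb-sym-args} preserves strict partial orderhood by Definition~\ref{def:lex-extension}; the variable-count side condition is clearly transitive and vacuous for equality. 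Compatibility of the three cases (they are governed by a lexicographic comparison of the triple consisting of weight, top symbol, and argument tuple) gives transitivity by the usual case analysis, and the lexicographic structure gives irreflexivity. Since here all terms in play are ground, the variable-count condition is automatically satisfied and can be ignored throughout, which further simplifies the argument. Everything else is routine.
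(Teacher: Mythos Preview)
Your proposal is correct and follows essentially the same approach as the paper: both transfer irreflexivity and transitivity from $\Greater_\kb$ via Lemma~\ref{lem:lkbg-encode-faithful}, with the paper only spelling out the irreflexivity case and leaving the rest to the reader. Your additional remarks on why the transfinite KBO with argument coefficients remains a strict partial order, and your observation that injectivity of $\ENConly$ is not needed here, are welcome elaborations that the paper omits.
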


\begin{proof}
This amounts to proving irreflexivity, antisymmetry, and transitivity.
The strategy is always the same and is illustrated for irreflexivity below.

\medskip

\noindent
\textsc{Irreflexivity}:\enskip
We must show $t \not\Greater_\lkbg t$. By Lemma~\ref{lem:lkbg-encode-faithful}, this
amounts to showing $\ENC{t} \not\Greater_\kb \ENC{t}$, which is obvious since
$\Greater_\kb$ is irreflexive.
\end{proof}

\begin{lem}
\label{lem:succ-kb-fo-precedence}
The relation $\Succ^\kb$ is a precedence.
\end{lem}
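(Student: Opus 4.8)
The goal is to verify the two defining properties of a precedence for $\Succ^\kb$: that it is a strict total order and that it is well-founded. The definition of $\Succ^\kb$ splits $\Sigma_\fo$ into three consecutive bands---the symbols $\cst{f}_{\bar u}^{\bar\tau}$ (smallest), the De Bruijn symbols $\cst{db}_k^i$, and the symbols $\cst{lam}^\tau$ (largest)---and fixes a linear order inside each band. So it suffices to check that (i) the cross-band comparisons are as declared, which is immediate, and (ii) each band, equipped with the order induced by $\Succ^\kb$, is a well-founded strict total order; stacking three such orders yields a well-founded strict total order on their union, and irreflexivity and transitivity come for free once each band has them.

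For the De Bruijn band, the order is the lexicographic order on the pairs $(i,k) \in \Nat \times \Nat$ with the superscript compared first, i.e.\ order type $\omega\cdot\omega$, which is a well-order. For the $\cst{lam}^\tau$ band, $\tau \mapsto \cst{lam}^\tau$ is an order isomorphism between $(\GTy{\Sigma_\ty}, \Greater_\ty)$ and the band (after unwinding the ``$\Less_\ty$-increasing'' phrasing), and $\Greater_\ty$---the first-order KBO induced by the total precedence $\Succ^\ty$---is a well-founded strict total order on ground types, a standard property of the (transfinite) Knuth--Bendix order. For the $\cst{f}_{\bar u}^{\bar\tau}$ band, the order is the lexicographic combination (compare the head, break ties with the type arguments, then with the parameters) of $\Prec$ on the head $\cst{f}$, then $\gtgtlex_\ty$ on the type-argument tuple $\bar\tau$ of fixed length $\tyarity{\cst{f}}$, then $\gtgtlex_\lkbg$ on the parameter tuple $\bar u$ of fixed length $\arity{\cst{f}}$ and fixed component types. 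A lexicographic combination of well-founded strict total orders is again one; $\Prec$ is such by hypothesis; and $\gtgtlex_\ty$ and $\gtgtlex_\lkbg$ are such on same-length tuples provided $\Greater_\ty$ and $\Greater_\lkbg$ are well-founded strict total orders on ground types and on ground preterms, respectively, since the lexicographic extension preserves totality and well-foundedness on same-length tuples.

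The one remaining ingredient is therefore that $\Greater_\lkbg$ is a well-founded strict total order on ground preterms, and I would prove this directly from Definition~\ref{def:lkbg} rather than through Lemma~\ref{lem:lkbg-encode-faithful}: routing it via $\Greater_\kb$ would be circular, since the totality and well-foundedness of $\Greater_\kb$ on ground $\Sigma_\fo$-terms themselves rest on the present lemma. Totality is a routine induction on $|t| + |s|$: comparing the ground weights settles case~\ref{itm:lkbg-wt}; otherwise a mismatch in head shape ($\lambda$ versus De Bruijn versus symbol) fires one of the ``\dots-other'' subcases; and when the head shapes agree, the comparison descends to strictly smaller subcomponents (the body, the argument list, the parameter list), where totality of $\Greater_\ty$ on ground types and of $\Succ$ on $\Sigma$ finish the argument. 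Together with Theorem~\ref{thm:lkbg-strict-partial-order} this gives a strict total order.

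\textbf{Main obstacle.} Well-foundedness of $\Greater_\lkbg$ on ground preterms is the crux. An induction on the weight alone does not go through, because a preterm of weight $\beta$ may carry proper parameters of weight $\geq \beta$ (parameters do not contribute to $\cal W_\g$) and because passing under a $\lambda$ need not strictly decrease the weight when $\cal w_\lambda$ is finite. Instead I would argue by a minimal-bad-sequence contradiction, in the style of the classical well-foundedness proofs for recursive path orders: from a hypothetical $\Greater_\lkbg$-infinite descending sequence, choose at each position a preterm minimal in $|\cdot|$, stabilize first the weight and then the head shape along a tail, use well-foundedness of $\Greater_\ty$ (and of $\Succ$) to stabilize the relevant type/symbol data, and extract from the bodies, argument lists, or parameter lists a strictly smaller bad sequence, contradicting minimality. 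Once $\Greater_\lkbg$ is known to be a well-founded strict total order, the band analysis above completes the proof that $\Succ^\kb$ is a precedence.
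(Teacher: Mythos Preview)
Your decomposition into three bands and the identification of the circularity concern are both correct and well articulated. The paper, however, organizes the well-foundedness argument differently, and the difference matters precisely for the obstacle you flag.

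The paper does \emph{not} first establish well-foundedness of $\Greater_\lkbg$ and then lift it to $\Succ^\kb$. Instead, it runs the minimal-bad-sequence argument directly on $\Sigma_\fo$-\emph{symbols}, with a size $\|\cdot\|$ that for $\cst{f}_{\bar u}^{\bar\tau}$ is essentially $1+\sum_i |u_i|$. After stabilizing the band, in the $\cst{f}_{\bar u}^{\bar\tau}$ case one obtains an infinite $\gtgtlex_\lkbg$-chain on parameter tuples; the paper then applies Lemma~\ref{lem:lkbg-encode-faithful} to turn this into an infinite $\gtgtlex_\kb$-chain on first-order $\Sigma_\fo$-terms and invokes the \emph{internal structure} of the standard KBO well-foundedness proof (not its conclusion): from a non-well-founded KBO one extracts an infinite $\Succ^\kb$-chain of head symbols, and these head symbols, living inside the encoded parameters, have strictly smaller $\|\cdot\|$, contradicting minimality. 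So the encoding is used, but not circularly---only Lemma~\ref{lem:lkbg-encode-faithful} is needed, and the first-order subterm property of KBO (which holds regardless of precedence well-foundedness) stands in for the missing subterm property on parameters.

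Your direct minimal-bad-sequence argument on \emph{preterms} by $|\cdot|$ runs into a genuine gap at exactly this point. In case~\ref{itm:lkbg-sym-args} the lexicographic comparison $(\bar w,\bar t)\gtgtlex_\lkbg(\bar u,\bar s)$ mixes parameters and arguments. When the extracted smaller bad sequence begins at a \emph{parameter} $v$ of some $t_m$, you cannot prepend $t_0,\ldots,t_{m-1}$: parameters are not orange subterms (Definition~\ref{def:orange-subterms}), they do not contribute to $\cal W_\g$, and $t_m \Greater_\lkbg v$ can fail outright---take any $v$ with $\cal W_\g(v)>\cal W_\g(t_m)$. Hence the contradiction with the minimality of $t_m$ does not go through. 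The paper's choice to place the minimal bad sequence at the symbol level, routing the parameter recursion through $\ENConly$ and the first-order KBO machinery, is precisely what sidesteps this. Your approach can be salvaged, but it would require a simultaneous induction on preterms and symbols (or an equivalent two-sorted measure), which in effect reconstructs the paper's argument.
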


\begin{proof}
It is easy to see that the relation is total. For well-foundedness, suppose
there exists an infinite descending chain $\cst{g}_0 \Succ^\kb \cst{g}_1
\Succ^\kb \cdots$.

We say that a symbol $\cst{g}$ is \emph{bad} if there exists an infinite
chain $\cst{g} \Succ^\kb \cdots$.
Let us define the size $\|\phantom{j}\|$ of a symbol as follows:
\begin{align*}
  \|\cst{f}_{\bar u}^{\bar\sigma}\| & = 1 + \sum\nolimits_i \|u_i\|
& \|\cst{db}^i_k\| & = 1
& \|\cst{lam}^\tau\| & = 1
\end{align*}
We can assume without loss of generality that the chain
$\cst{g}_0 \Succ^\kb \cst{g}_1
\Succ^\kb \cdots$ is minimal in the following sense: $\cst{g}_0$ has
minimal size among bad symbols, and each $\cst{g}_{i+1}$ has minimal size
among bad symbols $\cst{g}$ such that $\cst{g}_i \Succ^\kb \cst{g}$.

The chain must have infinitely many steps of type 1, 2,
or 3. Since all steps of the same type are grouped together, there must exist an
index $k$ from which all steps are of the same type. We distinguish three cases,
corresponding to the three types.

\medskip

\noindent
\textsc{Case 1:}\enskip
The chain $\cst{g}_k \Succ^\kb \cst{g}_{k+1} \Succ^\kb \cdots$, where
each symbol $\cst{g}_i$ is of the form $\cst{f}_{\bar u}^{\bar\tau}$, is also an
infinite descending chain with respect to the lexicographic order induced by
$\Prec_\g$, $\ltltlex_\ty$ (for a fixed length $n$ given by $\tyarityonly$), and
$\ltltlex_\lkbg$ (for a fixed length given by $\arityonly$). Both $\Prec_\g$ and
$\ltltlex_\ty$ are well founded, so there must exist an index $l$ from which
the symbol $\cst{f}$ and its superscript $\bar \tau$ are fixed, and only the
subscripts $\bar u$ change. This means that we have an infinite chain of the
form $(\bar u_n)_l \gtgtlex_\lkbg (\bar u_n)_{l+1} \gtgtlex_\lkbg \cdots$.
By Lemma~\ref{lem:lkbg-encode-faithful}, there would also exist a chain
$\ENC{(\bar u_n)_l} \gtgtlex_\kb \ENC{(\bar u_n)_{l+1}} \gtgtlex_\kb \cdots$.

Since the bounded lexicographic order is well founded, this means that there
exists an infinite chain of the form $\ENC{v_l} \Greater_\kb \ENC{v_{l+1}} \Greater_\kb \cdots$.
Recall that the standard KBO is well founded if the underlying precedence is
well founded. If it is not, the standard well-foundedness argument tells us that
there must exist an infinite chain of distinct head symbols
$\cst{h}_0 \Succ^\kb \cst{h}_1 \Succ^\kb \cdots$. Clearly,
$\cst{h}_0$ is both bad and smaller than $\cst{g}_0$, contradicting the
minimality of $\cst{g}_0$.

\medskip

\noindent
\textsc{Case 2:}\enskip
The chain $\cst{f}_k \Succ^\kb \cst{f}_{k+1} \Succ^\kb \cdots$
corresponds to an infinite descending chain with respect to the lexicographic
order on pairs of natural numbers. Since that order is well founded, the chain
is impossible.

\medskip

\noindent
\textsc{Case 3:}\enskip
From $\cst{lam}^\tau_0 \Succ^\kb \cst{lam}^\tau_1 \Succ^\kb \cdots$, we extract
an infinite chain $\tau_0 \Greater_\ty \tau_1 \Greater_\ty \cdots$, contradicting the
well-foundedness of the first-order KBO.
\end{proof}

\begin{thm}
\label{thm:lkbg-total}
The relation $\Greater_\lkbg$ is total on ground terms.
\end{thm}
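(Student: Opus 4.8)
The plan is to lift totality from the first-order KBO through the encoding $\ENConly$, in the same spirit as the proof of irreflexivity in Theorem~\ref{thm:lkbg-strict-partial-order}. Let $s, t \in \GHTmp{\Sigma_\ty}{\Sigma}$ be distinct ground terms. By Lemma~\ref{lem:encode-injective}, the encodings $\ENC{s}$ and $\ENC{t}$ are then distinct ground $\Sigma_\fo$-terms, and by Lemma~\ref{lem:lkbg-encode-faithful} it suffices to show that $\ENC{t} \Greater_\kb \ENC{s}$ or $\ENC{s} \Greater_\kb \ENC{t}$. So the theorem reduces to totality of the first-order $\Greater_\kb$ on ground $\Sigma_\fo$-terms.

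For that I would invoke (or, for completeness, reprove) the classical totality property of KBO, in the transfinite-weight, argument-coefficient version of Ludwig and Waldmann~\cite{ludwig-waldmann-2007}. A routine induction on the combined size of two distinct ground terms $v, v'$ shows that $v \Greater_\kb v'$ or $v' \Greater_\kb v$. The variable-count side condition in Definition~\ref{def:kb} is vacuous on ground terms. If $\cal W_\kb(v) \neq \cal W_\kb(v')$, the two ordinal weights are comparable, so rule~\ref{itm:kb-wt} fires in one direction. Otherwise the weights agree; write $v = \cst{g}(\bar v)$ and $v' = \cst{f}(\bar v')$. If $\cst{g} \neq \cst{f}$, then $\Succ^\kb$, being total by Lemma~\ref{lem:succ-kb-fo-precedence}, orders them and rule~\ref{itm:kb-sym} fires. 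If $\cst{g} = \cst{f}$, then $\bar v$ and $\bar v'$ have the same length and are distinct since $v \neq v'$, so by the induction hypothesis the lexicographic extension $\gtgtlex_\kb$ decides them and rule~\ref{itm:kb-sym-args} fires.

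Chaining the two parts, distinct ground $s$ and $t$ give distinct $\ENC{s}$ and $\ENC{t}$, which $\Greater_\kb$ compares, hence $\Greater_\lkbg$ compares $s$ and $t$; together with Theorem~\ref{thm:lkbg-strict-partial-order} this also yields trichotomy.

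I do not anticipate a real difficulty. The one point to handle carefully is the equal-weight, equal-head case above, which needs each symbol of $\Sigma_\fo$ to have a fixed arity: this holds because for $\cst{db}^i_k$ the subscript $k$ records the number of curried arguments, and for $\cst{f}_{\bar u}^{\bar\tau}$ the type arguments $\bar\tau$ determine the instantiated body type and hence, in $\eta$-long normal form, the number of curried arguments; so the two argument tuples indeed have equal length and the lexicographic comparison applies.
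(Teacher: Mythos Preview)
Your proposal is correct and follows essentially the same approach as the paper: reduce to totality of the first-order KBO via Lemmas~\ref{lem:encode-injective} and~\ref{lem:lkbg-encode-faithful}. The paper simply invokes ``totality of $\Greater_\kb$'' without further justification, whereas you additionally spell out the standard inductive argument and note the role of Lemma~\ref{lem:succ-kb-fo-precedence} and the fixed-arity observation; this extra care is sound but not required.
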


\begin{proof}
Assume $t \not= s$. We must show that $t \Greater_\lkbg s$ or $t \Less_\lkbg s$.
Note that by Lemma \ref{lem:encode-injective}, $\ENC{t} \not= \ENC{s}$.
Hence, by totality of $\Greater_\kb$, either $\ENC{t} \Greater_\kb \ENC{s}$ or
$\ENC{t} \Less_\kb \ENC{s}$. We obtain the desired result by applying
Lemma~\ref{lem:lkbg-encode-faithful} twice.
\end{proof}

\begin{thm}
\label{thm:lkbg-well-founded}
The relation $\Greater_\lkbg$ is well founded.
\end{thm}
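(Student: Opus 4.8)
The plan is to reduce well-foundedness of $\Greater_\lkbg$ to that of the first-order KBO $\Greater_\kb$ via the faithful encoding $\ENConly$, exactly as was done for irreflexivity, totality, and the other properties. First I would recall the standard fact that the transfinite first-order KBO induced by symbol weights and argument coefficients valued in $\Ordpos$ and by a well-founded precedence is itself well founded; this is the result of Ludwig and Waldmann~\cite{ludwig-waldmann-2007}, and it is precisely the fact already invoked in the proof of Lemma~\ref{lem:succ-kb-fo-precedence} (``the standard KBO is well founded if the underlying precedence is well founded''). The weights $\cal w_\fo$ and coefficients $\cal k_{\,\,\fo}$ take values in $\Ordpos$ by construction from $\cal w, \cal w_\lambda, \cal w_\db, \cal k$, which are all $\Ordpos$-valued by Definition~\ref{def:weight-lkbg}; and the precedence $\Succ^\kb$ is a genuine precedence by Lemma~\ref{lem:succ-kb-fo-precedence}. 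Hence the instance $\Greater_\kb$ fixed just before Lemma~\ref{lem:weight-encode} is well founded.

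Next I would argue by contradiction. Suppose there were an infinite descending chain $t_0 \Greater_\lkbg t_1 \Greater_\lkbg \cdots$ in $\GHTmp{\Sigma_\ty}{\Sigma}$. Applying Lemma~\ref{lem:lkbg-encode-faithful} to each step termwise yields an infinite descending chain $\ENC{t_0} \Greater_\kb \ENC{t_1} \Greater_\kb \cdots$ in $\GTm{\Sigma_\fo}$, contradicting the well-foundedness established in the first step. Therefore no infinite descending $\Greater_\lkbg$-chain exists, i.e.\ $\Greater_\lkbg$ is well founded.

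The only real content is the appeal to well-foundedness of the transfinite KBO with argument coefficients, which I expect to be the one potential obstacle; but it is an off-the-shelf result from the transfinite KBO literature and has already been used earlier in the development, so the proof is short and the encoding does all the remaining work.
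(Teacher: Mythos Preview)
Your proposal is correct and follows essentially the same approach as the paper: transfer a hypothetical infinite $\Greater_\lkbg$-chain to an infinite $\Greater_\kb$-chain via Lemma~\ref{lem:lkbg-encode-faithful} and invoke well-foundedness of the first-order KBO. You are simply more explicit than the paper in justifying why the particular instance $\Greater_\kb$ is well founded (spelling out that $\cal w_\fo,\cal k_{\,\,\fo}$ are $\Ordpos$-valued and that $\Succ^\kb$ is a precedence by Lemma~\ref{lem:succ-kb-fo-precedence}), which the paper leaves implicit.
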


\begin{proof}
This follows again straightforwardly by Lemma~\ref{lem:lkbg-encode-faithful}.
If there existed an infinite chain $t_0 \Greater_\lkbg t_1 \Greater_\lkbg \cdots$, there
would also exist an infinite chain $\ENC{t_0} \Greater_\kb \ENC{t_1} \Greater_\kb \cdots$,
contradicting the well-foundedness of $\Greater_\kb$.
\end{proof}

The $\lambda$-superposition calculus relies on notions of green and orange
subterms:\ the core inference rules use green subterms, whereas optional
simplification rules use orange subterms. Since all green subterms are orange
subterms, we focus on the latter.

\begin{defi}
\label{def:orange-subterms}
\emph{Orange subterms} are defined inductively on ground preterms as follows:
\begin{enumerate}
\item Every preterm is an orange subterm of itself.
\item Every orange subterm of an argument $s_i$ in $\cst{f}(\bar t)\> \bar s$ is
  an orange subterm of $\cst{f}(\bar t)\> \bar s$.
\item Every orange subterm of an argument~$s_i$ in $\DB{m}\> \bar s$ is an
  orange subterm of $\DB{m}\> \bar s$.
\item Every orange subterm of $u$ is an orange subterm of $\lambda\> u$.
\end{enumerate}
The context $u[\phantom{i}]$ surrounding an orange subterm $s$ of $u[s]$ is
called an \emph{orange context}. The notation $\orangesubterm{u}{s}$ indicates
that $s$ is an orange subterm in $u[s]$, and $\orangesubterm{u}{\phantom{i}}$
indicates that $u[\phantom{i}]$ is an orange context. The \emph{depth} of an
orange context is the number of $\lambda$s in $u[\phantom{i}]$ that have the
hole in their scope.
\end{defi}

\begin{defi}
\label{def:orange-order-properties}
A relation $\Greater$ is \emph{compatible with orange contexts} if $t \mathbin{\Greater} s$ implies
$\orangesubterm{u}{t{\uparrow}^k}\allowbreak \Greater \orangesubterm{u}{s{\uparrow}^k}$
for every orange context $\orangesubterm{u}{\phantom{i}}$, where $k$ is its
depth.
The relation $\Greater$ enjoys the \emph{orange subterm property} if
$\orangesubterm{u}{s{\uparrow}^k} \GreaterEQ s$
for every orange context $\orangesubterm{u}{\phantom{i}}$, where $k$ is its
depth.
\end{defi}

\begin{thm}
\label{thm:lkbg-compat-orange-contexts}
The relation $\Greater_\lkbg$ is compatible with orange contexts.
\end{thm}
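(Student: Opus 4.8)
The plan is to reduce the statement to a corresponding compatibility property of the first-order KBO via the faithful encoding $\ENConly$. The key observation is that an orange context $\orangesubterm{u}{\phantom{i}}$ of depth $k$ translates, under $\ENConly$, into a first-order context built only from the symbols $\cst{f}_{\bar u}^{\bar\tau}$, $\cst{db}^i_k$, and $\cst{lam}^\tau$; moreover, the $\uparrow^k$ shift exactly accounts for the $k$ intervening $\cst{lam}^\tau$ symbols along the path to the hole, so that $\ENC{\orangesubterm{u}{t{\uparrow}^k}}$ equals the first-order context $C[\ENC{t}]$ obtained by applying $\ENConly$ structurally to $u[\phantom{i}]$ — and likewise for $s$. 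I would state and prove this as an auxiliary lemma: for every orange context $\orangesubterm{u}{\phantom{i}}$ of depth $k$ there is a first-order context $C[\phantom{i}]$ such that $\ENC{\orangesubterm{u}{r{\uparrow}^k}} = C[\ENC{r}]$ for all appropriately typed $r$. This is proved by induction on Definition~\ref{def:orange-subterms}: the base case is the empty context, and the three recursive cases (argument of a symbol, argument of a De Bruijn head, body of a $\lambda$) each prepend one layer of first-order context, with the $\lambda$ case incrementing $k$ and correspondingly requiring one extra $\uparrow$ — which is where the shift bookkeeping has to be checked carefully.

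Granting that lemma, the proof is short. Suppose $t \Greater_\lkbg s$. By Lemma~\ref{lem:lkbg-encode-faithful}, $\ENC{t} \Greater_\kb \ENC{s}$. The first-order KBO $\Greater_\kb$ is compatible with arbitrary contexts — this is a standard property of KBO (with the caveat that the variable-count condition is trivially satisfied here since all encoded terms are ground), and I would cite it as such or, if the paper prefers, note it follows from the inductive definition together with the fact that $\cal W_\kb$ is monotone in subterm positions because all coefficients $\cal k_{\,\,\fo}$ are positive. Hence $C[\ENC{t}] \Greater_\kb C[\ENC{s}]$ for the context $C$ supplied by the auxiliary lemma, i.e.\ $\ENC{\orangesubterm{u}{t{\uparrow}^k}} \Greater_\kb \ENC{\orangesubterm{u}{s{\uparrow}^k}}$. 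Applying Lemma~\ref{lem:lkbg-encode-faithful} in the other direction gives $\orangesubterm{u}{t{\uparrow}^k} \Greater_\lkbg \orangesubterm{u}{s{\uparrow}^k}$, as required.

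The main obstacle I anticipate is the auxiliary lemma about $\ENConly$ commuting with orange contexts and shifts, specifically the interaction between the De Bruijn shift $\uparrow^k$ and the encoding. One has to be careful that leaking De Bruijn indices in $t$ and $s$ that are captured by the $k$ enclosing $\lambda$s of $u[\phantom{i}]$ become bound, and that the encoding $\cst{db}^i_j$ records the right data in both $\ENC{\orangesubterm{u}{t{\uparrow}^k}}$ and in the pieces $C$ and $\ENC{t}$ assembled separately — in other words, that plugging $\ENC{t}$ into $C$ does not require any renumbering, which is precisely why the $\uparrow^k$ appears on the $\lambda$-superposition side. A secondary point worth a sentence is that $\Greater_\kb$'s context-compatibility applies to the specific first-order contexts arising here, but since those are ordinary first-order terms with a hole, no special argument is needed beyond the standard KBO result.
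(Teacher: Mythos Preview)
Your overall strategy—reduce to first-order KBO via $\ENConly$ and then use context compatibility of $\Greater_\kb$—is exactly the paper's. But your auxiliary lemma is false as stated. You claim $\ENC{\orangesubterm{u}{r{\uparrow}^k}} = C[\ENC{r}]$, i.e.\ that the shift is absorbed so the \emph{unshifted} $\ENC{r}$ can be plugged into a first-order context. The encoding, however, records De Bruijn indices literally: $\ENC{\DB{m}} = \cst{db}_0^m$. Take $u[\phantom{i}] = \lambda\langle\tau\rangle\>[\phantom{i}]$ (depth $1$) and $r = \DB{0}$; then $\ENC{\orangesubterm{u}{r{\uparrow}}} = \cst{lam}^\tau(\cst{db}_0^1)$, whereas any first-order context wrapped around $\ENC{r} = \cst{db}_0^0$ necessarily has $\cst{db}_0^0$ at the hole, not $\cst{db}_0^1$. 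So ``plugging $\ENC{t}$ into $C$ does not require any renumbering'' is simply wrong.

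The correct factorization, which the paper uses, keeps the shift inside: $\ENC{\orangesubterm{u}{r{\uparrow}^k}} = v[\ENC{r{\uparrow}^k}]$. This forces an extra step you are missing: from $\ENC{t} \Greater_\kb \ENC{s}$ one must first obtain $\ENC{t{\uparrow}^k} \Greater_\kb \ENC{s{\uparrow}^k}$, and only then apply first-order context compatibility to $v$. That step is not automatic—shifting changes the head symbols $\cst{db}^m_n$ to $\cst{db}^{m+k}_n$—but it goes through by inspection of the $\Greater_\kb$ rules because all $\cst{db}$ symbols share the same weight $\cal w_\db$ and coefficient $1$, and the precedence $\Succ^\kb$ among them is monotone in the superscript. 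The paper makes exactly this observation.
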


\begin{proof}
Let $\orangesubterm{u}{\phantom{i}}$ be an orange context of depth $k$.
Assume $t \Greater_\lkbg s$. Note that by Lemma~\ref{lem:lkbg-encode-faithful}, $\ENC{t}
\Greater_\kb \ENC{s}$.
Moreover, by inspection of the rules of $\Greater_\kb$, we find that
$\ENC{t{\uparrow}^k} \Greater_\kb \ENC{s{\uparrow}^k}$.
This works because we give all De Bruijn indices the same weight, and the
precedence of indices remains stable under shifting.

Now, observe that orange subterms are mapped to first-order subterms by
$\ENConly$. In particular, there exists a first-order context $v[\phantom{i}]$
such that $\ENC{\orangesubterm{u}{t{\uparrow}^k}} = v[\ENC{t{\uparrow}^k}]$ and
$\ENC{\orangesubterm{u}{s{\uparrow}^k}} = v[\ENC{s{\uparrow}^k}]$. By
compatibility of $\Greater_\kb$ with contexts, we have $v[\ENC{t{\uparrow}^k}]
\allowbreak\Greater_\kb v[\ENC{s{\uparrow}^k}]$. Thus, by
Lemma~\ref{lem:lkbg-encode-faithful}, we get $\orangesubterm{u}{t{\uparrow}^k}
\Greater_\lkbg \orangesubterm{u}{s{\uparrow}^k}$, as desired.
\end{proof}

\begin{thm}
\label{thm:lkbg-orange-subterm-property}
The relation $\Greater_\lkbg$ has the orange subterm property.
\end{thm}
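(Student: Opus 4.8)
The plan is to reduce the claim to the first-order level through the encoding $\ENConly$, in the same style as the proofs of Theorems~\ref{thm:lkbg-compat-orange-contexts}, \ref{thm:lkbg-total} and~\ref{thm:lkbg-well-founded}. Fix an orange context $\orangesubterm{u}{\phantom{i}}$ of depth $k$ and let $s$ be a preterm fitting its hole; we must show $\orangesubterm{u}{s{\uparrow}^k} \GreaterEQ_\lkbg s$. If $\orangesubterm{u}{\phantom{i}}$ is the empty context, then $k = 0$ and $\orangesubterm{u}{s{\uparrow}^0} = s$, so the reflexive case of $\GreaterEQ_\lkbg$ applies. Otherwise $\orangesubterm{u}{\phantom{i}}$ is a proper context and it suffices to prove the strict comparison $\orangesubterm{u}{s{\uparrow}^k} \Greater_\lkbg s$.

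First I would establish a shift lemma: $\ENC{r{\uparrow}^j} \GreaterEQ_\kb \ENC{r}$ for every preterm $r$ and every $j \ge 0$, by a routine induction on $r$ (tracking the De Bruijn cutoff). Shifting renames each leaking De Bruijn symbol $\cst{db}^m_n$ into $\cst{db}^{m+j}_n$ and leaves the De Bruijn symbols bound inside $r$ as well as all $\cst{f}_{\bar u}^{\bar\tau}$ and $\cst{lam}^\tau$ symbols untouched. Since De Bruijn symbols are weighed uniformly, $\cal W_\g$ is invariant under shifting, so by Lemma~\ref{lem:weight-encode} the two encodings have equal $\cal W_\kb$-weight; and since $\cst{db}^{m+j}_n \Succ^\kb \cst{db}^m_n$ when $j > 0$, the comparison propagates outward through rules~\ref{itm:kb-sym} and~\ref{itm:kb-sym-args} of $\Greater_\kb$ (the latter applied to the recursively shifted argument tuples). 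In particular $\ENC{s{\uparrow}^k} \GreaterEQ_\kb \ENC{s}$.

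Next, as already noted in the proof of Theorem~\ref{thm:lkbg-compat-orange-contexts}, $\ENConly$ maps orange subterms to first-order subterms, so there is a first-order context $v[\phantom{i}]$, nonempty because $\orangesubterm{u}{\phantom{i}}$ is, with $\ENC{\orangesubterm{u}{s{\uparrow}^k}} = v[\ENC{s{\uparrow}^k}]$. By the subterm property of the first-order KBO, $v[\ENC{s{\uparrow}^k}] \Greater_\kb \ENC{s{\uparrow}^k}$. Chaining this with the shift lemma and transitivity of $\Greater_\kb$ gives $\ENC{\orangesubterm{u}{s{\uparrow}^k}} = v[\ENC{s{\uparrow}^k}] \Greater_\kb \ENC{s{\uparrow}^k} \GreaterEQ_\kb \ENC{s}$, hence $\ENC{\orangesubterm{u}{s{\uparrow}^k}} \Greater_\kb \ENC{s}$; Lemma~\ref{lem:lkbg-encode-faithful} then transfers this back to $\orangesubterm{u}{s{\uparrow}^k} \Greater_\lkbg s$, which is what we want.

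The step I expect to be the main obstacle is the subterm property of $\Greater_\kb$ invoked above. For the textbook KBO over natural-number weights it is immediate, since a term of size $n$ has weight at least $n$ and thus strictly outweighs each of its proper subterms, so rule~\ref{itm:kb-wt} alone suffices; but here weights are ordinals, and ordinal addition is not strictly monotone in its right argument (e.g.\ $1 + \omega = \omega$), so a superterm need only weakly outweigh a subterm. The resulting equal-weight situations have to be handled through rules~\ref{itm:kb-sym} and~\ref{itm:kb-sym-args}, which goes through for the encoded signature because every symbol weight in $\Sigma_\fo$ is positive and, crucially, the $\cst{lam}^\tau$ symbols are $\Succ^\kb$-maximal while the $\cst{db}^m_n$ symbols dominate every $\Sigma$-symbol --- exactly matching the three ways an orange context descends: under a $\lambda$, into a non-parameter argument of an applied symbol, or into an argument of an applied De Bruijn index. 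I also expect this step to rely on whatever admissibility conditions are imposed on the weights $\cal w$, $\cal w_\lambda$, $\cal w_\db$ and $\cal k$, since without some such restriction the order genuinely fails to dominate all orange subterms (e.g.\ when $\cal w_\db = \omega$ and a symbol of weight $1$ and last-argument coefficient $1$ is applied to a De Bruijn-headed argument).
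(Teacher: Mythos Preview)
Your proof is correct and follows essentially the same route as the paper: encode into first-order terms, use that orange subterms become first-order subterms, invoke the subterm property of $\Greater_\kb$, handle the shift via inspection of the $\Greater_\kb$ rules, and transfer back by Lemma~\ref{lem:lkbg-encode-faithful}. Your treatment of the shift lemma is in fact more explicit than the paper's, which dispatches it in one line (``By inspection of the rules of $\Greater_\kb$'').

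Your worry in the last paragraph is unnecessary, though. The transfinite KBO of Ludwig and Waldmann, which the paper is instantiating, uses the Hessenberg (natural) sum and product on ordinals, not ordinary ordinal arithmetic. Under natural addition $1 \oplus \omega = \omega + 1 > \omega$, and more generally both operations are strictly monotone in each argument; consequently every proper subterm has strictly smaller weight and rule~\ref{itm:kb-wt} alone already gives the subterm property, with no need to fall back on the precedence structure of $\Sigma_\fo$. Your counterexample with $\cal w_\db = \omega$ therefore does not arise.
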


\begin{proof}
The key idea is as in the proof of Theorem~\ref{thm:lkbg-compat-orange-contexts}.
For any orange context $\orangesubterm{u}{\phantom{i}}$ of depth~$k$,
there exists a first-order context $v[\phantom{i}]$ such that
$\ENC{\orangesubterm{u}{s{\uparrow}^k}} = v[\ENC{s{\uparrow}^k}]$.
By the subterm property of $\Greater_\kb$, we have
$v[\ENC{s{\uparrow}^k}] \ge_\kb \ENC{s{\uparrow}^k}$.
By inspection of the rules of $\Greater_\kb$, we also have
$\ENC{s{\uparrow}^k} \ge_\kb \ENC{s}$.
By transitivity and Lemma~\ref{lem:lkbg-encode-faithful},
we get $\orangesubterm{u}{s{\uparrow}^k} \Greater_\lkbg s$, as desired.
\end{proof}

The last property is necessary for $\lambda$-superposition. It is easy to prove.

\begin{thm}
\label{thm:lkbg-top-bot-smallest}
Assume $\cal W_\g(\heavy{\top}) = \cal W_\g(\heavy{\bot}) = 1$ and
$\heavy{\top} \Prec \heavy{\bot} \Prec \cst{f}$ for every $\cst{f} \in
\Sigma \setminus \{\heavy{\top},\heavy{\bot}\}$.
Then $\heavy{\top} \Less_\lkbg \heavy{\bot} \Less_\lkbg t$ for every $t \in
\GHTmp{\Sigma_\ty}{\Sigma} \setminus \{\heavy{\top},\heavy{\bot}\}$.
\end{thm}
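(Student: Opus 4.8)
The plan is to reduce everything to the first-order KBO via the encoding $\ENConly$, exactly as in the preceding theorems about $\Greater_\lkbg$. The key observation is that the hypotheses translate neatly into hypotheses about $\Greater_\kb$ on the first-order signature $\Sigma_\fo$: the symbols $\heavy{\top}$ and $\heavy{\bot}$, having no parameters and no type arguments, are encoded as the nullary first-order symbols $\heavy{\top}^{()}_{()}$ and $\heavy{\bot}^{()}_{()}$, and by Lemma~\ref{lem:weight-encode} their $\cal W_\kb$-weight is $1$. By the sorting recipe for $\Succ^\kb$, since $\heavy{\top} \Prec \heavy{\bot} \Prec \cst{f}$ for every other $\cst{f} \in \Sigma$, these two symbols are the two $\Succ^\kb$-smallest among the $\cst{f}_{\bar u}^{\bar\tau}$ symbols, and hence (since all $\Sigma$-symbols precede all $\cst{db}$ and $\cst{lam}$ symbols) the two $\Succ^\kb$-smallest elements of $\Sigma_\fo$ altogether.

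First I would show $\heavy{\top} \Less_\lkbg \heavy{\bot}$. By Lemma~\ref{lem:lkbg-encode-faithful} this is equivalent to $\ENC{\heavy{\top}} \Less_\kb \ENC{\heavy{\bot}}$, i.e.\ to $\heavy{\top}^{()}_{()} \Less_\kb \heavy{\bot}^{()}_{()}$. Both have weight $1$, so rule~\ref{itm:kb-wt} does not apply, but rule~\ref{itm:kb-sym} does, since $\heavy{\bot}^{()}_{()} \Succ^\kb \heavy{\top}^{()}_{()}$ and both are nullary.

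Next I would show $\heavy{\bot} \Less_\lkbg t$ for every $t \in \GHTmp{\Sigma_\ty}{\Sigma} \setminus \{\heavy{\top},\heavy{\bot}\}$. Again by Lemma~\ref{lem:lkbg-encode-faithful} it suffices to prove $\heavy{\bot}^{()}_{()} \Less_\kb \ENC{t}$. I would argue by induction on $\ENC{t}$, or equivalently appeal to the standard fact that in a first-order KBO whose underlying precedence puts a nullary symbol $c$ of weight $1$ strictly below all other symbols, every ground term other than $c$ itself is strictly larger than $c$: either it has weight $> 1$ (rule~\ref{itm:kb-wt}), or it has weight $1$, in which case---because all symbols have weight in $\Ordpos$, i.e.\ at least $1$, and coefficients $\cal k_{\fo}$ are at least $1$---it must be a single nullary symbol of weight $1$, and that symbol is $\Succ^\kb$-above $c$ unless it equals $c$ (rule~\ref{itm:kb-sym}). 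Here $c = \heavy{\bot}^{()}_{()}$. The one subtlety is that $\ENC{t}$ could in principle be $\heavy{\top}^{()}_{()}$, which is \emph{smaller} than $\heavy{\bot}^{()}_{()}$; but this is excluded because $t \neq \heavy{\top}$ and $\ENconly$ is injective on ground terms (Lemma~\ref{lem:encode-injective}). This gives $\heavy{\bot} \Less_\lkbg t$.

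Combining the two parts yields $\heavy{\top} \Less_\lkbg \heavy{\bot} \Less_\lkbg t$, as desired. The main (minor) obstacle is making the weight-$1$ case of the first-order argument airtight: one must use that every symbol weight lies in $\Ordpos$ and every argument coefficient is $\ge 1$, so a weight of $1$ forces the encoded term to be a single nullary symbol, and then invoke the precedence ordering together with the injectivity of $\ENConly$ to rule out the $\heavy{\top}$ case. Everything else is a direct translation through Lemmas~\ref{lem:encode-injective}, \ref{lem:weight-encode}, and~\ref{lem:lkbg-encode-faithful}.
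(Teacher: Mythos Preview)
Your proof is correct, but it takes a different route from the paper's. The paper simply says ``This follows straightforwardly from the definition of $\Greater_\lkbg$,'' i.e.\ it argues directly: $\heavy{\bot} \Greater_\lkbg \heavy{\top}$ by rule~\ref{itm:lkbg-sym-diff} (equal weights, $\heavy{\bot} \Succ \heavy{\top}$); and for any other $t$, either $\cal W_\g(t) > 1$ (rule~\ref{itm:lkbg-wt}), or $\cal W_\g(t) = 1$ forces $t$ to be a bare De~Bruijn index (rule~\ref{itm:lkbg-db-other}) or a bare symbol $\cst{f}(\bar u)$ with $\cst{f} \Succ \heavy{\bot}$ (rule~\ref{itm:lkbg-sym-diff}). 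You instead push everything through $\ENConly$ and Lemmas~\ref{lem:encode-injective}, \ref{lem:weight-encode}, and~\ref{lem:lkbg-encode-faithful}, reducing to the well-known first-order fact that a nullary minimum-weight symbol at the bottom of the precedence is KBO-smallest. Both arguments are sound; the direct one is shorter and avoids the detour through $\Sigma_\fo$, while yours has the virtue of reusing the encoding machinery uniformly, in the spirit of the preceding theorems. Your handling of the weight-$1$ case (using $\cal w_\fo, \cal k_{\fo} \in \Ordpos$ to force nullarity, then invoking injectivity of $\ENConly$ to exclude $\heavy{\top}$) is the right way to make that step airtight.
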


\begin{proof}
This follows straightforwardly from the definition of $\Greater_\lkbg$.
\end{proof}

The $\lambda$-superposition calculus also specifies a requirement on applied
quantifiers $\heavy{\forall}$ and $\heavy{\exists}$ occurring in clauses, after
clausification. However, this requirement is not met by our order. To circumvent
the issue, we can preprocess the quantifiers, replacing
$\heavy{\forall}\>(\lambda\> t)$ by $(\lambda\> t) \approx (\lambda\> \heavy{\top})$ and
$\heavy{\exists}\>(\lambda\> t)$ by $(\lambda\> t) \not\approx (\lambda\> \heavy{\bot})$.

\begin{thm}
  \label{thm:lkbg-diff}
  Assume $\cal w(\cst{diff}) \le \cal w_\db$ and $\cal k(\cst{diff}, i) = 1$ for every $i$. 
  For all ground types $\tau, \upsilon$ and ground preterms $s,t,u : \tau \to \upsilon$,
  we have $u \Greater_\lkbg u\>\cst{diff}\langle\tau,\upsilon\rangle(s, t)$.
\end{thm}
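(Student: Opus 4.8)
The plan is to show that $u$ is strictly heavier than $u\>\cst{diff}\langle\tau,\upsilon\rangle(s,t)$ and then invoke rule~\ref{itm:lkbg-wt}. Abbreviate $d = \cst{diff}\langle\tau,\upsilon\rangle(s,t)$; since $s,t,\tau,\upsilon$ are ground, $d$ is a ground preterm whose head is the symbol $\cst{diff}$, not a $\lambda$. As $u$ has function type $\tau\to\upsilon$ and is in $\eta$-long normal form, $u = \lambda\langle\tau\rangle\>u''$ for a preterm $u''$ whose only free De Bruijn index is $\DB{0}$, of type $\tau$. I would first observe that $u\>d$ is obtained from $u''$ simply by substituting $d$ for this bound variable: no index shifting is needed because $d$ has no leaking indices; no new $\beta$-redex appears because $d$ is not an abstraction; and since preterms are in $\eta$-long form, every occurrence of the variable is fully applied---say $\DB{n}\>\bar a$---and becomes $d\>\bar a = \cst{diff}\langle\tau,\upsilon\rangle(s,t)\>\bar a$, again fully applied and in $\eta$-long $\beta$-normal form, so no renormalisation is required.

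The core is the inequality $\cal W_\g(u\>d) \le \cal W_\g(u'')$, extending $\cal W_\g$ to preterms with a free $\DB{0}$ via the same recursive equations. I would prove it by induction on $u''$, tracking which De Bruijn index currently names the substituted variable. The only case that uses the hypotheses is the one where that variable heads the current subterm, i.e.\ the subterm is $\DB{n}\>\bar a$ and turns into $\cst{diff}\langle\tau,\upsilon\rangle(s,t)\>\bar a'$: its weight is $\cal w(\cst{diff}) + \sum_i \cal k(\cst{diff},i)\,\cal W_\g(a_i')$, which by $\cal k(\cst{diff},i) = 1$, the hypothesis $\cal w(\cst{diff}) \le \cal w_\db$, the induction hypothesis $\cal W_\g(a_i') \le \cal W_\g(a_i)$, and monotonicity of ordinal addition and multiplication is at most $\cal w_\db + \sum_i \cal W_\g(a_i) = \cal W_\g(\DB{n}\>\bar a)$. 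All other cases preserve the head---a symbol, a De Bruijn index, or the $\lambda$---and follow from the induction hypothesis by the same monotonicity; the parameters of a symbol are irrelevant since $\cal W_\g$ ignores them. (Equivalently, one could route this induction through the encoding $\ENConly$: $\ENC{u\>d}$ arises from $\ENC{u''}$ by replacing $\cst{db}$-headed subterms by $\cst{diff}_{s,t}^{\tau,\upsilon}$-headed ones of the same arity, where $\cal w_\fo(\cst{diff}_{s,t}^{\tau,\upsilon}) = \cal w(\cst{diff}) \le \cal w_\db$ and the coefficients agree, so the bound follows with Lemma~\ref{lem:weight-encode}.)

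Finally $\cal W_\g(u) = \cal w_\lambda + \cal W_\g(u'')$ with $\cal w_\lambda \in \Ordpos$, hence $\cal W_\g(u) > \cal W_\g(u'') \ge \cal W_\g(u\>d)$, and $u \Greater_\lkbg u\>d$ by rule~\ref{itm:lkbg-wt}. I expect the main obstacle to be the inductive inequality of the second paragraph: the De Bruijn bookkeeping under nested $\lambda$s has to be handled carefully, and one must verify that substituting $d$ preserves the $\eta$-long $\beta$-normal form so that the weight equations apply verbatim and $\cst{diff}$ is always seen fully applied. A secondary point to check is that the leading $\cal w_\lambda > 0$ genuinely makes the weight strictly larger; should it be absorbed into an infinite $\cal W_\g(u'')$, the remaining equal-weight case must instead go through rule~\ref{itm:lkbg-lam}, exploiting that $u$ is $\lambda$-headed and carries one more leading binder $\langle\tau\rangle$ than $u\>d$.
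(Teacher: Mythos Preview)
Your proof is correct and follows the same line as the paper's: write $u = \lambda\langle\tau\rangle\,u''$, observe that $u\,d$ is obtained by replacing the free index in $u''$ by the symbol-headed $d$, bound the weight by $\cal W_\g(u\,d) \le \cal W_\g(u'') < \cal W_\g(u)$, and conclude via rule~\ref{itm:lkbg-wt}. Your treatment is in fact more detailed than the paper's, which leaves the induction over $u''$ and the preservation of $\eta$-long $\beta$-normal form implicit.

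One remark on your secondary worry: the transfinite KBO framework the paper builds on uses the Hessenberg (natural) sum on $\Ord$, which is strictly monotone in both arguments, so $\cal w_\lambda + \cal W_\g(u'') > \cal W_\g(u'')$ always holds and no absorption can occur. Your fallback via rule~\ref{itm:lkbg-lam} is therefore unnecessary---which is fortunate, because as stated it would not cover the case where $\upsilon$ is itself functional: then $u\,d$ is again $\lambda$-headed, with leading binder type given by the first argument type of $\upsilon$, and there is no reason this type should be comparable to $\tau$ under $\Greater_\ty$.
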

\begin{proof}
Since $u$ is of type $\tau \to \upsilon$, in its $\eta$-long normal form, it has the form $\lambda\> u'$ for some $u'$.
Since $\cst{diff}\langle\tau,\upsilon\rangle(s, t)$ is a symbol, we can obtain
the $\eta$-long $\beta$-normal form of $u\>\cst{diff}\langle\tau,\upsilon\rangle(s, t)$
by replacing free De Bruijn indices of $u'$ by
$\cst{diff}\langle\tau,\upsilon\rangle(s, t)$.
Since $\cal w(\cst{diff}) \le \cal w_\db$ and $\cal k(\cst{diff}, i) = 1$ for every $i$,
it follows that
$\cal W_\g(u\>\cst{diff}\langle\tau,\upsilon\rangle(s, t)) \leq \cal W_\g(u') <  W_\g(u)$.
Therefore, 
$u \Greater_\lkbg u\>\cst{diff}\langle\tau,\upsilon\rangle(s, t)$
by rule~\ref{itm:lkbg-wt}.
\end{proof}

\subsection{\texorpdfstring{$\bm{\lambda}$LPO}{Lambda LPO}}
\label{ssec:lambda-lpo-ground}

\begin{defi}
\label{def:llpg}
Let $\Succ^\ty$ be a precedence on $\Sigma_\ty$.
Let $\Greater_\ty$ be the strict first-order LPO induced by $\Succ^\ty$ on
$\GTm{\Sigma_\ty}$.
Let $\Succ$ be a precedence on $\Sigma$.
Let $\cst{ws} \in \Sigma$ be a distinguished element called the
\emph{watershed}.

The \emph{strict ground $\lambda$LPO} $\Greater_\llpg$ induced by $\Succ^\ty,\Succ$
on $\GHTmp{\Sigma_\ty}{\Sigma}$ is defined inductively so that $t \Greater_\llpg s$ if any of
these conditions is met:
\begin{enumerate}
\item \label{itm:llpg-sym}
  $t$ is of the form $\cst{g}\langle\bar\upsilon\rangle(\bar w)\> \bar t_k$ and
  any of these conditions is met:

  \begin{enumerate}
  \item \label{itm:llpg-sym-sub}
    $t_i \GreaterEQ_\llpg s$ for some $i \in \{1,\dots,k\}$, or
  \item \label{itm:llpg-sym-diff}
    $s = \cst{f}(\bar u)\> \bar s$, $\cst{g} \Succ \cst{f}$, and
    $\cal{chkargs}(t, \bar s)$, or
  \item \label{itm:llpg-sym-types}
    $s = \cst{g}\langle\bar\tau\rangle(\bar u)\> \bar s$, $\bar\upsilon \gtgtlex_\ty \bar\tau$,
    and $\cal{chkargs}(t, \bar s)$, or
  \item \label{itm:llpg-sym-args}
    $s = \cst{g}\langle\bar\upsilon\rangle(\bar u)\> \bar s$,
    $(\bar w, \bar t) \gtgtlex_\llpg (\bar u, \bar s)$,
    and $\cal{chkargs}(t, \bar s)$, or
  \item \label{itm:llpg-sym-other}
    $\cst{g} \Succ \cst{ws}$,
    $s$ is of the form $\DB{m}\>\bar s$ and $\cal{chkargs}(t, \bar s)$
    or of the form $\lambda\> s'$ and $\cal{chkargs}(t,\allowbreak [s'])$;
  \end{enumerate}

\item \label{itm:llpg-db}
  $t$ is of the form $\DB{n}\> \bar t_k$ and any of these
  conditions is met:

  \begin{enumerate}
  \item \label{itm:llpg-db-sub}
    $t_i \GreaterEQ_\llpg s$ for some $i \in \{1,\dots,k\}$, or
  \item \label{itm:llpg-db-diff}
    $s = \DB{m}\> \bar s$, $n > m$, and $\cal{chkargs}(t, \bar s)$, or
  \item \label{itm:llpg-db-args}
    $s = \DB{n}\> \bar s$, $\bar t \gtgtlex_\llpg \bar s$, and
    $\cal{chkargs}(t, \bar s)$, or
  \item \label{itm:llpg-db-other}
    $s$ is of the form $\lambda\> s'$ and $\cal{chkargs}(t, [s'])$ or
    of the form or $\cst{f}(\bar u)\> \bar s$,
    where $\cst{f} \Preceq \cst{ws}$,
    and $\cal{chkargs}(t, \bar s)$;
  \end{enumerate}

\item \label{itm:llpg-lam}
  $t$ is of the form $\lambda\langle\upsilon\rangle\> t'$ and any of these
  conditions is met:

  \begin{enumerate}
  \item \label{itm:llpg-lam-sub}
    $t' \GreaterEQ_\llpg s$, or
  \item \label{itm:llpg-lam-types}
    $s = \lambda\langle\tau\rangle\> s'$, $\upsilon \Greater_\ty \tau$, and
    $\cal{chkargs}(t, [s'])$, or
  \item \label{itm:llpg-lam-bodies}
    $s = \lambda\langle\upsilon\rangle\> s'$ and $t' \Greater_\llpg s'$, or
  \item \label{itm:llpg-lam-other}
    $s$ is of the form $\cst{f}(\bar u)\> \bar s$,
    where $\cst{f} \Preceq \cst{ws}$,
    and $\cal{chkargs}(t, \bar s)$
  \end{enumerate}
\end{enumerate}
where $\cal{chkargs}(t, \bar s_k)$ if and only if $t \Greater_\llpg s_i$ for every $i \in
\{1,\dots,k\}$. The notation $[\phantom{i}]$ is used to represent lists---here,
the singleton list.
\end{defi}

Let $\Sigma_\fo$ be a first-order signature as defined in
Sect.~\ref{ssec:lambda-kbo-ground}. Let $\Succ^\lp$ be the precedence on
$\Sigma_\fo$ that orders the elements as follows,
from smallest to largest:
\begin{enumerate}
\item Start with the symbols $\cst{f}_{\bar u}^{\bar\tau}$ such that
  $\cst{f} \Preceq \cst{ws}$ in
  $\Prec$-increasing order of their symbols $\cst{f}$, using $\ltltlex_\ty$ on
  their superscripts as first tiebreaker and $\ltltlex_\llpg$ on the subscripts
  as second tiebreaker.

\smallskip

\item Continue with the symbols $\cst{lam}^\tau$ in $\Less_\ty$-increasing order
  of their superscripts.

\smallskip

\item Continue with the De Bruijn indices:
  $\cst{\cst{db}^0_0}, \cst{\cst{db}^0_1}, \dots$,
  followed by $\cst{\cst{db}^1_0}, \cst{\cst{db}^1_1}, \dots$,
  $\cst{\cst{db}^2_0},\allowbreak \cst{\cst{db}^2_1}, \dots,$ and so on.

\smallskip

\item Conclude with the symbols $\cst{f}_{\bar u}^{\bar\tau}$ such that
  $\cst{f} \Succ \cst{ws}$ in
  $\Prec$-increasing order of their symbols $\cst{f}$, using $\ltltlex_\ty$ on
  their superscripts as first tiebreaker and $\ltltlex_\llpg$ on the subscripts
  as second tiebreaker.
\end{enumerate}
This definition ensures that symbols below the watershed are smallest and
symbols above the watershed are largest. When considering polymorphism, we will
see that it is advantageous to put symbols above the watershed. However, the
special symbol $\cst{diff}$ belongs below the watershed.

\begin{lem}
\label{lem:succ-lp-fo-precedence}
The relation $\Succ^\lp$ is a precedence
\end{lem}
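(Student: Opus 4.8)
The plan is to mirror the structure of the proof of Lemma~\ref{lem:succ-kb-fo-precedence}. First I would establish totality of $\Succ^\lp$, which is immediate: the four groups in the definition are clearly disjoint and exhaust $\Sigma_\fo$, and within each group the ordering criterion is a lexicographic combination of total orders ($\Prec$ on symbols, $\ltltlex_\ty$ on type superscripts, $\ltltlex_\llpg$ on parameter subscripts, and the enumeration of De Bruijn indices), each of which is total. The only subtlety is that $\ltltlex_\ty$ and $\ltltlex_\llpg$ are lexicographic extensions over fixed-length tuples (the lengths being determined by $\tyarityonly$ and $\arityonly$ respectively), so they are total there; this uses totality of $\Greater_\ty$ on $\GTm{\Sigma_\ty}$ (which holds because first-order LPO is total on ground terms for a total precedence) and totality of $\Greater_\llpg$ on ground terms. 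Strictly speaking this last fact — totality of $\Greater_\llpg$ — has not yet been proven at this point in the paper; I would therefore phrase totality of $\Succ^\lp$ as being established analogously to the $\lambda$KBO case, deferring to wherever $\Greater_\llpg$'s totality is shown, or alternatively note that totality of $\Succ^\lp$ only needs to be assumed insofar as it feeds the later development.

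For well-foundedness, I would follow the minimal-bad-sequence argument verbatim from Lemma~\ref{lem:succ-kb-fo-precedence}. Suppose there is an infinite descending chain $\cst{g}_0 \Succ^\lp \cst{g}_1 \Succ^\lp \cdots$; call a symbol \emph{bad} if it starts such a chain, use the same size measure $\|\cst{f}_{\bar u}^{\bar\sigma}\| = 1 + \sum_i \|u_i\|$, $\|\cst{db}^i_k\| = \|\cst{lam}^\tau\| = 1$, and pass to a size-minimal bad chain. Since the four groups are contiguous and each $\Succ^\lp$-step stays within or moves down a group, there is a tail lying entirely in one group; split into four cases. Groups 1 and 4 (the $\cst{f}_{\bar u}^{\bar\tau}$ symbols below resp.\ above the watershed) are handled exactly as Case~1 of Lemma~\ref{lem:succ-kb-fo-precedence}: well-foundedness of $\Prec_\g$ and $\ltltlex_\ty$ forces a further tail with fixed $\cst{f}$ and fixed superscript, hence an infinite $\gtgtlex_\llpg$-descending chain of parameter tuples $(\bar u_n)$; by Lemma~\ref{lem:llpg-encode-faithful} (the $\lambda$LPO analogue of Lemma~\ref{lem:lkbg-encode-faithful}, which by the paper's layout should be available here) this transfers to an infinite $\gtgtlex_\lp$-chain of encoded tuples, hence an infinite $\Greater_\lp$-chain $\ENC{v_l} \Greater_\lp \ENC{v_{l+1}} \Greater_\lp \cdots$; the standard well-foundedness proof for first-order LPO then yields an infinite descending chain of head symbols under $\Succ^\lp$ all below $\cst{g}_0$ in size, contradicting minimality. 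The $\cst{db}$ group (Case~2) contradicts well-foundedness of the lexicographic order on pairs of naturals, and the $\cst{lam}^\tau$ group (Case~3) yields an infinite $\Greater_\ty$-chain $\tau_0 \Greater_\ty \tau_1 \Greater_\ty \cdots$, contradicting well-foundedness of first-order LPO on ground types.

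The main obstacle — and the only real deviation from the $\lambda$KBO proof — is bookkeeping the extra group: $\Succ^\lp$ has four strata instead of three, but since the watershed split among the $\cst{f}_{\bar u}^{\bar\tau}$ symbols is purely by the value of $\cst{f}$ relative to $\cst{ws}$ (a fixed element), the two $\cst{f}$-groups behave identically to Case~1 and no new idea is needed. A secondary point to watch is the dependency on totality and well-foundedness of $\Greater_\llpg$ itself, which at this location in the paper has only been defined, not yet shown to be a strict order; the cleanest fix is to present this lemma's proof as a template ``by the same argument as Lemma~\ref{lem:succ-kb-fo-precedence}, now with four groups,'' and let the formal well-foundedness of $\Greater_\llpg$ be discharged where that order's properties are actually established (the encoding lemma being the hinge, exactly as in the $\lambda$KBO development). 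Concretely, the proof I would write is:

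\begin{proof}
Totality is established exactly as for $\Succ^\kb$: the four groups in the definition are pairwise disjoint and cover $\Sigma_\fo$, and within each group the ordering criterion is a lexicographic combination of the total orders $\Prec$ on $\Sigma$, $\ltltlex_\ty$ on superscripts of fixed length $\tyarityonly$, $\ltltlex_\llpg$ on subscripts of fixed length $\arityonly$, and the enumeration of the $\cst{db}^i_k$, using totality of $\Greater_\ty$ on $\GTm{\Sigma_\ty}$ and of $\Greater_\llpg$ on ground terms.

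For well-foundedness we mimic the proof of Lemma~\ref{lem:succ-kb-fo-precedence}. Suppose there is an infinite descending chain $\cst{g}_0 \Succ^\lp \cst{g}_1 \Succ^\lp \cdots$, call a symbol \emph{bad} if it starts such a chain, reuse the size measure
\[
\|\cst{f}_{\bar u}^{\bar\sigma}\| = 1 + \sum\nolimits_i \|u_i\|,
\qquad
\|\cst{db}^i_k\| = 1,
\qquad
\|\cst{lam}^\tau\| = 1,
\]
and pass without loss of generality to a chain that is minimal in the sense that $\cst{g}_0$ has minimal size among bad symbols and each $\cst{g}_{i+1}$ has minimal size among bad symbols below $\cst{g}_i$. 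Since the four groups of $\Succ^\lp$ are contiguous, some tail of the chain lies entirely within one group; we distinguish four cases.

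\medskip
\noindent\textsc{Cases 1 and 4} (the symbols $\cst{f}_{\bar u}^{\bar\tau}$ with $\cst{f} \Preceq \cst{ws}$, respectively $\cst{f} \Succ \cst{ws}$): the tail is an infinite descending chain for the lexicographic order induced by $\Prec_\g$, $\ltltlex_\ty$, and $\ltltlex_\llpg$. As $\Prec_\g$ and $\ltltlex_\ty$ are well founded, a further tail has $\cst{f}$ and its superscript fixed, so only the subscripts change, giving an infinite chain $(\bar u)_l \gtgtlex_\llpg (\bar u)_{l+1} \gtgtlex_\llpg \cdots$. By Lemma~\ref{lem:llpg-encode-faithful} there is a corresponding chain $\ENC{(\bar u)_l} \gtgtlex_\lp \ENC{(\bar u)_{l+1}} \gtgtlex_\lp \cdots$, hence an infinite chain $\ENC{v_l} \Greater_\lp \ENC{v_{l+1}} \Greater_\lp \cdots$. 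The standard well-foundedness argument for first-order LPO then produces an infinite descending chain of distinct head symbols $\cst{h}_0 \Succ^\lp \cst{h}_1 \Succ^\lp \cdots$ with $\cst{h}_0$ of smaller size than $\cst{g}_0$; as $\cst{h}_0$ is bad, this contradicts minimality.

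\medskip
\noindent\textsc{Case 2} (the $\cst{lam}^\tau$): from $\cst{lam}^{\tau_0} \Succ^\lp \cst{lam}^{\tau_1} \Succ^\lp \cdots$ we extract $\tau_0 \Greater_\ty \tau_1 \Greater_\ty \cdots$, contradicting the well-foundedness of the first-order LPO.

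\medskip
\noindent\textsc{Case 3} (the $\cst{db}^i_k$): the tail corresponds to an infinite descending chain for the lexicographic order on pairs of natural numbers, which is impossible.
\end{proof}
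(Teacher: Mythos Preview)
Your proposal is correct and matches the paper's approach exactly: the paper's own proof is the single sentence ``The proof is analogous to that of Lemma~\ref{lem:succ-kb-fo-precedence},'' and your write-up is a faithful expansion of that analogy, correctly handling the split into four strata rather than three. Your worries about the placement of Lemma~\ref{lem:llpg-encode-faithful} and the totality of $\Greater_\llpg$ are legitimate bookkeeping concerns, but since neither of those results depends on Lemma~\ref{lem:succ-lp-fo-precedence} there is no actual circularity---only a minor presentational reordering, which the paper evidently glosses over.
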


\begin{proof}
The proof is analogous to that of Lemma~\ref{lem:succ-kb-fo-precedence}.
\end{proof}

Let $\Greater_\lp$ denote the first-order LPO instance induced by the precedence
$\Succ^\lp$.

\begin{lem}
\label{lem:llpg-encode-faithful}
Given $s, t \in \GHTmp{\Sigma_\ty}{\Sigma}$, we have
$t \Greater_\llpg s$ if and only if $\ENC{t} \Greater_\lp \ENC{s}$.
\end{lem}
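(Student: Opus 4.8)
The plan is to mirror the structure of the proof of Lemma~\ref{lem:lkbg-encode-faithful}: set up a correspondence between the rules defining $\Greater_\llpg$ on ground preterms and the rules defining the first-order $\Greater_\lp$ on the encoded terms, then carry out two inductions, one for each direction of the biconditional. First I would record that $\ENConly$ sends a preterm $\cst{f}\langle\bar\tau\rangle(\bar u)\>\bar t$ to $\cst{f}_{\bar u}^{\bar\tau}(\ENC{\bar t})$, a De Bruijn preterm $\DB{m}\>\bar t_n$ to $\cst{db}_n^m(\ENC{\bar t_n})$, and $\lambda\langle\tau\rangle\>t$ to $\cst{lam}^\tau(\ENC{t})$, so that in every case the head symbol of $\ENC{t}$ is determined by the syntactic form of $t$, and the immediate first-order subterms of $\ENC{t}$ are exactly the encodings of the immediate orange arguments of $t$ (for a $\lambda$-abstraction, the single ``argument'' is the body). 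The key observation is that the definition of $\Succ^\lp$ was engineered precisely so that $\cst{g}\langle\bar\upsilon\rangle(\bar w)\>\bar t \Greater_\llpg$-comparisons against De Bruijn and $\lambda$ preterms go through exactly when the side condition $\cst{g}\Succ\cst{ws}$ holds, matching the placement of the $\cst{f}_{\bar u}^{\bar\tau}$ symbols with $\cst{f}\Succ\cst{ws}$ above all $\cst{lam}^\tau$ and $\cst{db}^i_k$ symbols, while the symbols with $\cst{f}\Preceq\cst{ws}$ sit below them; and $\cal{chkargs}$ on the $\lambda$LPO side corresponds verbatim to $\cal{chkargs}$ on the first-order side under $\ENConly$.

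Concretely I would present a table, analogous to the one in Lemma~\ref{lem:lkbg-encode-faithful}, lining up: rule~\ref{itm:lp-sub} (subterm) with rules \ref{itm:llpg-sym-sub}, \ref{itm:llpg-db-sub}, \ref{itm:llpg-lam-sub}; rule~\ref{itm:lp-syms} (precedence on heads, with $\cal{chkargs}$) with rules \ref{itm:llpg-sym-diff}, \ref{itm:llpg-sym-types}, \ref{itm:llpg-sym-other}, \ref{itm:llpg-db-diff}, \ref{itm:llpg-db-other}, \ref{itm:llpg-lam-types}, \ref{itm:llpg-lam-other}; and rule~\ref{itm:lp-args} (same head, lexicographic on arguments, with $\cal{chkargs}$) with rules \ref{itm:llpg-sym-args}, \ref{itm:llpg-db-args}, \ref{itm:llpg-lam-bodies}. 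For the head-precedence cases the crux is checking that $\cst{g}\langle\bar\upsilon\rangle(\bar w) \Succ^\lp \cst{f}_{\bar u}^{\bar\tau}$, $\cst{g}\langle\bar\upsilon\rangle(\bar w)\Succ^\lp\cst{db}^m_n$, and $\cst{g}\langle\bar\upsilon\rangle(\bar w)\Succ^\lp\cst{lam}^\tau$ hold under exactly the conditions listed in the $\lambda$LPO rules — for the symbol/symbol comparison this is the clause ordering by $\cst{f}$, then $\ltltlex_\ty$ on superscripts, then $\ltltlex_\llpg$ on subscripts (the latter handled by the induction hypothesis, exactly as in Case~1 of Lemma~\ref{lem:succ-kb-fo-precedence}), and for the cross-sort comparisons it is the watershed split in the definition of $\Succ^\lp$. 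Conversely, from $\ENC{t}\Greater_\lp\ENC{s}$ one reads off which first-order rule applied, reconstructs the syntactic forms of $t$ and $s$ from the head symbols, and inverts the same correspondence, using that $\ENconly$ is injective on ground terms (Lemma~\ref{lem:encode-injective}) so that the encoded arguments determine the original ones.

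The main obstacle, as in the $\lambda$KBO case, is the intertwining of the two inductions with the recursive definition of $\Succ^\lp$ on the $\cst{f}_{\bar u}^{\bar\tau}$ symbols: the precedence $\Succ^\lp$ uses $\ltltlex_\llpg$ on the parameter tuples $\bar u$ as a tiebreaker, so ``$\ENC{t}\Greater_\lp\ENC{s}$ via rule~\ref{itm:lp-syms}'' may itself depend on $\Greater_\llpg$-comparisons between the parameters, and conversely establishing $t\Greater_\llpg s$ via rule~\ref{itm:llpg-sym-args} requires $(\bar w,\bar t)\gtgtlex_\llpg(\bar u,\bar s)$, which couples the parameter comparisons with the argument comparisons. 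One has to be careful that the induction is well founded — it is, because each recursive $\Greater_\llpg$-call and each $\Succ^\lp$-tiebreaker call is on strictly smaller preterms (parameters and arguments are proper components), so a simultaneous induction on the combined size works. Everything else — the subterm cases, $\cal{chkargs}$ matching, and the handling of leaking De Bruijn indices (which, as in rule~\ref{itm:lkbg-db-args}, are assumed to refer to the same variable along the relevant recursive calls) — is routine once the rule-to-rule table is in place.
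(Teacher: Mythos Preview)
Your approach is the paper's: build a rule-to-rule correspondence table and prove each direction by induction on the derivation. Your table is almost identical to the paper's, with one small omission: rule~\ref{itm:llpg-sym-args} should also appear under rule~\ref{itm:lp-syms}, not only under rule~\ref{itm:lp-args}. When the first lexicographic difference in $(\bar w,\bar t)$ versus $(\bar u,\bar s)$ falls among the parameters $\bar w$ versus $\bar u$, the encoded head symbols $\cst{g}^{\bar\upsilon}_{\bar w}$ and $\cst{g}^{\bar\upsilon}_{\bar u}$ are distinct and are ordered by the subscript tiebreaker in $\Succ^\lp$, so the first-order comparison goes through rule~\ref{itm:lp-syms}; only when $\bar w=\bar u$ does it go through rule~\ref{itm:lp-args}.

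More importantly, you overlook the one point the paper flags as non-trivial. You write that ``$\cal{chkargs}$ on the $\lambda$LPO side corresponds verbatim to $\cal{chkargs}$ on the first-order side'' and later that $\cal{chkargs}$ matching is ``routine.'' But rule~\ref{itm:llpg-lam-bodies} of $\Greater_\llpg$ has \emph{no} $\cal{chkargs}$ condition at all: it only requires $t' \Greater_\llpg s'$. To derive $\ENC{\lambda\langle\upsilon\rangle\,t'} \Greater_\lp \ENC{\lambda\langle\upsilon\rangle\,s'}$ via rule~\ref{itm:lp-args} you must supply $\cal{chkargs}(\ENC{t},[\ENC{s'}])$, i.e.\ $\cst{lam}^\upsilon(\ENC{t'}) \Greater_\lp \ENC{s'}$, and this does not follow directly from the hypothesis. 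The fix is short---combine the subterm property $\cst{lam}^\upsilon(\ENC{t'}) \Greater_\lp \ENC{t'}$ with the induction hypothesis $\ENC{t'} \Greater_\lp \ENC{s'}$ and transitivity of $\Greater_\lp$---but it is a genuine extra step that does not fit the pattern of the other cases, and calling it routine is a gap.
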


\begin{proof}
The rules for establishing $t \Greater_\llpg s$ and $\ENC{t} \Greater_\lp \ENC{s}$ correspond
according to the following table:

\begin{center}
\begin{tabular}{@{}l@{\quad}l@{}}
\toprule
  $\Greater_\lp$
& $\Greater_\llpg$
\\
\midrule
  Rule \ref{itm:lp-sub}
& Rule \ref{itm:llpg-sym-sub},
    \ref{itm:llpg-db-sub}, or
    \ref{itm:llpg-lam-sub}
\\
  Rule \ref{itm:lp-syms}
& Rule
    \ref{itm:llpg-sym-diff},
    \ref{itm:llpg-sym-types},
    \ref{itm:llpg-sym-args},
    \ref{itm:llpg-sym-other},
    \ref{itm:llpg-db-diff},
    \ref{itm:llpg-db-other},
    \ref{itm:llpg-lam-types}, or
    \ref{itm:llpg-lam-other}
\\
  Rule \ref{itm:lp-args}
& Rule \ref{itm:llpg-sym-args},
    \ref{itm:llpg-db-args}, or
    \ref{itm:llpg-lam-bodies}
\\
\bottomrule
\end{tabular}
\end{center}
The equivalence can then be established by two proofs by induction on the
definition on $\Greater_\lp$ and $\Greater_\llpg$, one for each direction of the
equivalence.
The only nontrivial case is that of rule~\ref{itm:llpg-lam-bodies} of
$\Greater_\llpg$, because it lacks the $\cal{chkargs}$ condition of the corresponding
rule \ref{itm:lp-args} of $\Greater_\lp$.
Given $t \Greater_\llpg s$ by rule~\ref{itm:llpg-lam-bodies},
to obtain $\ENC{t} \Greater_\lp \ENC{s}$ by rule~\ref{itm:lp-args},
we must show $\cal{chkargs}(\ENC{t}, (\ENC{s'}))$.
We apply transitivity to combine $\ENC{t}
\Greater_\lp \ENC{t'}$, which follows from the subterm property, and the induction
hypothesis $\ENC{t'} \Greater_\lp \ENC{s'}$.
\end{proof}

Using Lemma~\ref{lem:llpg-encode-faithful}, we can prove the following theorems
using the same strategy as for Theorems
\ref{thm:lkbg-strict-partial-order}--\ref{thm:lkbg-top-bot-smallest}:

\begin{thm}
\label{thm:llpg-strict-partial-order}
The relation $\Greater_\llpg$ is a strict partial order.
\end{thm}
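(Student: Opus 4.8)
The plan is to mirror exactly the strategy already used for Theorem~\ref{thm:lkbg-strict-partial-order}, now relying on Lemma~\ref{lem:llpg-encode-faithful} in place of Lemma~\ref{lem:lkbg-encode-faithful}. Proving that $\Greater_\llpg$ is a strict partial order amounts to establishing irreflexivity, antisymmetry, and transitivity, and in each case the argument is the same: translate the goal through the encoding $\ENConly$, invoke the corresponding property of the first-order LPO $\Greater_\lp$ (which is a well-known strict partial order), and translate back.

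\medskip

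\noindent
First I would handle \textsc{irreflexivity}: to show $t \not\Greater_\llpg t$, note that by Lemma~\ref{lem:llpg-encode-faithful} this is equivalent to $\ENC{t} \not\Greater_\lp \ENC{t}$, which holds because $\Greater_\lp$ is irreflexive. For \textsc{antisymmetry}, suppose $t \Greater_\llpg s$ and $s \Greater_\llpg t$; applying Lemma~\ref{lem:llpg-encode-faithful} twice yields $\ENC{t} \Greater_\lp \ENC{s}$ and $\ENC{s} \Greater_\lp \ENC{t}$, contradicting antisymmetry of $\Greater_\lp$. For \textsc{transitivity}, from $t \Greater_\llpg s$ and $s \Greater_\llpg r$ we again apply Lemma~\ref{lem:llpg-encode-faithful} to get $\ENC{t} \Greater_\lp \ENC{s}$ and $\ENC{s} \Greater_\lp \ENC{r}$; transitivity of $\Greater_\lp$ gives $\ENC{t} \Greater_\lp \ENC{r}$, and one final application of Lemma~\ref{lem:llpg-encode-faithful} yields $t \Greater_\llpg r$.

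\medskip

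\noindent
Since this is a routine transport of properties across a faithful encoding, there is essentially no obstacle; the only thing one might worry about is whether Lemma~\ref{lem:llpg-encode-faithful} is a genuine \emph{iff} at every use, but it is stated as such, so nothing more is needed. In the write-up I would, as the authors did for the KBO case, present only the irreflexivity case in detail and remark that antisymmetry and transitivity follow by the same pattern.

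\begin{proof}
This amounts to proving irreflexivity, antisymmetry, and transitivity. In each
case, one transports the statement through the encoding $\ENConly$ using
Lemma~\ref{lem:llpg-encode-faithful} and invokes the corresponding property of
the first-order LPO $\Greater_\lp$. The strategy is illustrated for irreflexivity.

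\medskip

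\noindent
\textsc{Irreflexivity}:\enskip
We must show $t \not\Greater_\llpg t$. By Lemma~\ref{lem:llpg-encode-faithful}, this
amounts to showing $\ENC{t} \not\Greater_\lp \ENC{t}$, which is obvious since
$\Greater_\lp$ is irreflexive.
\end{proof}
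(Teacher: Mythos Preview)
Your proposal is correct and matches the paper's approach exactly: the paper does not spell out a separate proof for this theorem but simply states that it follows by the same strategy as Theorem~\ref{thm:lkbg-strict-partial-order}, using Lemma~\ref{lem:llpg-encode-faithful} in place of Lemma~\ref{lem:lkbg-encode-faithful}. Your write-up is precisely the intended argument.
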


\begin{thm}
\label{thm:llpg-total}
The relation $\Greater_\llpg$ is total on ground preterms.
\end{thm}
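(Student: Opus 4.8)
The plan is to mirror the proof of Theorem~\ref{thm:lkbg-total}, substituting the first-order LPO $\Greater_\lp$ for the first-order KBO $\Greater_\kb$ and the $\lambda$LPO faithfulness result (Lemma~\ref{lem:llpg-encode-faithful}) for its $\lambda$KBO counterpart (Lemma~\ref{lem:lkbg-encode-faithful}). No new machinery is needed: totality is a two-line corollary of the encoding into first-order LPO.

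Concretely, I would fix two distinct elements $t, s \in \GHTmp{\Sigma_\ty}{\Sigma}$; the goal is $t \Greater_\llpg s$ or $t \Less_\llpg s$. First, Lemma~\ref{lem:encode-injective} gives $\ENC{t} \neq \ENC{s}$ (a ground preterm is in particular a ground term, so the lemma applies). Next, recall that $\Greater_\lp$ is the first-order LPO induced by $\Succ^\lp$, which is a precedence by Lemma~\ref{lem:succ-lp-fo-precedence} and hence in particular a total order on $\Sigma_\fo$; by the standard fact that the first-order LPO induced by a total precedence is total on ground terms, we obtain $\ENC{t} \Greater_\lp \ENC{s}$ or $\ENC{t} \Less_\lp \ENC{s}$. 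Finally, applying Lemma~\ref{lem:llpg-encode-faithful} to whichever disjunct holds yields $t \Greater_\llpg s$ or $t \Less_\llpg s$, as required.

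I expect no genuine obstacle here. Unlike KBO, LPO with a total precedence is total on ground terms with no side conditions on weights, so the only ingredients are the injectivity and faithfulness of $\ENconly$, both already established. The one point deserving a line of care is that the first-order terms being compared, $\ENC{t}$ and $\ENC{s}$, really are ground $\Sigma_\fo$-terms, so that the classical totality of LPO applies; this is immediate from Definition~\ref{def:encode}, since the image of a ground preterm under $\ENconly$ contains no variables.
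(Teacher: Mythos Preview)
Your overall strategy is exactly the paper's: it proves this theorem by pointing back to the proof of Theorem~\ref{thm:lkbg-total} and substituting Lemma~\ref{lem:llpg-encode-faithful} for Lemma~\ref{lem:lkbg-encode-faithful}. So the plan is right.

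However, your parenthetical justification for invoking Lemma~\ref{lem:encode-injective} is backwards. You write ``a ground preterm is in particular a ground term, so the lemma applies,'' but in this paper the inclusion goes the other way: a $\lambda$-term is a $\lambda$-preterm \emph{without} free De Bruijn indices, so terms form a subset of preterms, not conversely. Lemma~\ref{lem:encode-injective} is stated only for ground \emph{terms}, and its proof explicitly notes that injectivity can fail on general preterms because the type annotation on a De Bruijn index is not recorded by $\ENConly$ (e.g., $\DB{0}\langle\kappa_1\rangle$ and $\DB{0}\langle\kappa_2\rangle$ both encode to $\cst{db}_0^0$). So the step ``$\ENC{t} \neq \ENC{s}$'' does not follow from your stated reason when $t,s$ are arbitrary ground preterms.

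You should also notice the discrepancy between the two statements: Theorem~\ref{thm:lkbg-total} says ``ground terms'' while the present theorem says ``ground preterms.'' The paper's own proof strategy (via Lemma~\ref{lem:encode-injective}) literally yields totality only on ground terms, matching the $\lambda$KBO statement; the word ``preterms'' here should be read in light of the standing convention (stated after Definition~\ref{def:lkbg}) that leaking De~Bruijn indices in the two compared preterms refer to the same variable and hence share a type. Rather than asserting a false containment, either restrict to ground terms or note this convention explicitly.
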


\begin{thm}
\label{thm:llpg-well-founded}
The relation $\Greater_\llpg$ is well founded.
\end{thm}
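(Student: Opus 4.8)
The plan is to follow the exact same strategy that was used for Theorem~\ref{thm:lkbg-well-founded}, exploiting the faithfulness of the encoding $\ENConly$ established in Lemma~\ref{lem:llpg-encode-faithful}. Concretely, I would argue by contradiction: suppose there is an infinite descending chain $t_0 \Greater_\llpg t_1 \Greater_\llpg \cdots$ in $\GHTmp{\Sigma_\ty}{\Sigma}$. Applying Lemma~\ref{lem:llpg-encode-faithful} term by term, this yields an infinite descending chain $\ENC{t_0} \Greater_\lp \ENC{t_1} \Greater_\lp \cdots$ in $\GTm{\Sigma_\fo}$ with respect to the first-order LPO $\Greater_\lp$ induced by the precedence $\Succ^\lp$.

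The next step is to invoke the standard well-foundedness result for the first-order LPO: $\Greater_\lp$ is well founded on ground terms whenever its underlying precedence is well founded. That hypothesis is exactly Lemma~\ref{lem:succ-lp-fo-precedence}, which states that $\Succ^\lp$ is a precedence (hence well founded by Definition~\ref{def:precedence}). Therefore $\Greater_\lp$ admits no infinite descending chain on $\GTm{\Sigma_\fo}$, contradicting the chain obtained above. This contradiction establishes that $\Greater_\llpg$ is well founded.

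I do not expect any real obstacle here: all the heavy lifting has already been done. The faithfulness lemma (Lemma~\ref{lem:llpg-encode-faithful}) reduces every structural question about $\Greater_\llpg$ to the corresponding question about $\Greater_\lp$, and the well-foundedness of $\Succ^\lp$ is Lemma~\ref{lem:succ-lp-fo-precedence}. The only point worth a sentence is the appeal to the classical fact that LPO inherits well-foundedness from its precedence; this is standard \cite{zantema-2003} and requires no adaptation, since $\Greater_\lp$ here is literally an instance of the ordinary first-order LPO from Definition~\ref{def:lp}. In short, the proof is a two-line diagram chase: lift the hypothetical bad chain through $\ENConly$, then contradict well-foundedness of the first-order LPO.
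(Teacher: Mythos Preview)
Your proposal is correct and matches the paper's approach exactly: the paper states that Theorems~\ref{thm:llpg-strict-partial-order}--\ref{thm:llpg-top-bot-smallest} are proved using Lemma~\ref{lem:llpg-encode-faithful} by the same strategy as their $\lambda$KBO counterparts, and for well-foundedness this is precisely the argument you give---lift a hypothetical infinite $\Greater_\llpg$-chain through $\ENConly$ to an infinite $\Greater_\lp$-chain and contradict well-foundedness of the first-order LPO, which holds because $\Succ^\lp$ is a precedence by Lemma~\ref{lem:succ-lp-fo-precedence}.
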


\begin{thm}
\label{thm:llpg-compat-orange-contexts}
The relation $\Greater_\llpg$ is compatible with orange contexts.
\end{thm}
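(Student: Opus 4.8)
The plan is to follow the proof of Theorem~\ref{thm:lkbg-compat-orange-contexts} almost verbatim, with $\Greater_\lkbg$, $\Greater_\kb$, and Lemma~\ref{lem:lkbg-encode-faithful} replaced by $\Greater_\llpg$, $\Greater_\lp$, and Lemma~\ref{lem:llpg-encode-faithful}. So fix an orange context $\orangesubterm{u}{\phantom{i}}$ of depth $k$ and assume $t \Greater_\llpg s$. By Lemma~\ref{lem:llpg-encode-faithful} we have $\ENC{t} \Greater_\lp \ENC{s}$. As in Definition~\ref{def:encode}, $\ENConly$ sends orange subterms to genuine first-order subterms, so there is a single first-order context $v[\phantom{i}]$, determined by $u$ and $k$, with $\ENC{\orangesubterm{u}{t{\uparrow}^k}} = v[\ENC{t{\uparrow}^k}]$ and $\ENC{\orangesubterm{u}{s{\uparrow}^k}} = v[\ENC{s{\uparrow}^k}]$.

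The one step that needs its own argument is that shifting is harmless at the first-order level, i.e.\ that $\ENC{t{\uparrow}^k} \Greater_\lp \ENC{s{\uparrow}^k}$ follows from $\ENC{t} \Greater_\lp \ENC{s}$. I would prove this by induction on the derivation of $\ENC{t} \Greater_\lp \ENC{s}$ (equivalently, by inspecting the $\Greater_\llpg$ rules through Lemma~\ref{lem:llpg-encode-faithful}). The shift ${\uparrow}^k$ leaves every $\cst{f}_{\bar u}^{\bar\tau}$ head and every $\cst{lam}^\tau$ head untouched, leaves every arity untouched, and merely increases by $k$ the superscript of the $\cst{db}$ symbols arising from leaking indices. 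Since $\Succ^\lp$ keeps all $\cst{db}$ symbols in a single band between the below-watershed and the above-watershed $\cst{f}$ symbols, and orders that band lexicographically first by index and then by arity, a uniform bump of the indices by $k$ preserves every head comparison used in the derivation --- $\cst{db}$ versus $\cst{db}$, $\cst{db}$ versus $\cst{f}$, $\cst{db}$ versus $\cst{lam}$ --- as well as the relative order of the argument tuples; the $\cal{chkargs}$ side conditions are preserved for the same reason, since they only refer to the (shifted) argument lists. Hence the same derivation witnesses $\ENC{t{\uparrow}^k} \Greater_\lp \ENC{s{\uparrow}^k}$.

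With the shift step in hand the rest is mechanical: the standard first-order LPO $\Greater_\lp$ is compatible with contexts, so $v[\ENC{t{\uparrow}^k}] \Greater_\lp v[\ENC{s{\uparrow}^k}]$, and a final appeal to Lemma~\ref{lem:llpg-encode-faithful} yields $\orangesubterm{u}{t{\uparrow}^k} \Greater_\llpg \orangesubterm{u}{s{\uparrow}^k}$, which is the claim.

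I expect the shift step to be the main obstacle. For $\lambda$KBO the analogous step was immediate, because equal weights make the encoded terms literally equiponderant under shifting; for $\lambda$LPO there is no weight to fall back on, so one must check directly that bumping De Bruijn superscripts cannot flip any head comparison or any $\cal{chkargs}$ check occurring in the derivation, which is exactly what the band structure of $\Succ^\lp$ (below-watershed $\cst{f}$'s, then $\cst{lam}$'s, then $\cst{db}$'s ordered lexicographically on index then arity, then above-watershed $\cst{f}$'s) guarantees. A secondary point to keep in mind is the asymmetry already observed in the proof of Lemma~\ref{lem:llpg-encode-faithful}: rule~\ref{itm:llpg-lam-bodies} carries no $\cal{chkargs}$ condition while the corresponding first-order rule does; but that gap is already bridged inside Lemma~\ref{lem:llpg-encode-faithful} via the subterm property and transitivity, so nothing extra is needed here.
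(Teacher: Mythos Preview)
Your approach is exactly the paper's stated strategy. The shift step, however, does not go through as you argue, and this is where the LPO case genuinely differs from KBO. You correctly note that only the $\cst{db}$ symbols arising from \emph{leaking} indices get bumped, yet you then reason as though the bump were uniform. For $\Greater_\lkbg$ this distinction is harmless because its recursion always descends in lockstep (rules~\ref{itm:lkbg-lam-body}, \ref{itm:lkbg-db-args}, \ref{itm:lkbg-sym-args}), so any two De~Bruijn heads that are ever compared sit at the same $\lambda$-depth and are therefore both leaking or both bound. In $\Greater_\llpg$, the subterm rules (\ref{itm:llpg-sym-sub}, \ref{itm:llpg-db-sub}, \ref{itm:llpg-lam-sub}) and the $\cal{chkargs}$ conditions break this alignment: a subterm of $t$ lying under a $\lambda$ can be compared directly with $s$, so a bound index on one side can meet a leaking index on the other.

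Concretely, take $\cst{g}:(\upsilon\to\upsilon)\to\upsilon$ with $\cst{g}\Preceq\cst{ws}$, and put $t=\cst{g}\>(\lambda\langle\upsilon\rangle\>\DB{0})$ and $s=\DB{0}\langle\upsilon\rangle$. Then $t\Greater_\llpg s$ via rule~\ref{itm:llpg-sym-sub} followed by rule~\ref{itm:llpg-lam-sub} (the bound inner $\DB{0}$ equals $s$). But $t{\uparrow}^1=t$ while $s{\uparrow}^1=\DB{1}$, and $t\not\Greater_\llpg\DB{1}$; indeed $\DB{1}\Greater_\llpg t$ since $\cst{db}^1_0$ dominates both $\cst{g}$ and $\cst{lam}^\upsilon$ in $\Succ^\lp$. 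Plugging into the depth-$1$ orange context $\lambda\langle\upsilon\rangle\>[\phantom{i}]$ thus violates compatibility for arbitrary preterms. If $t$ and $s$ are restricted to terms without leaking indices---which is what the calculus ultimately needs---the shift is the identity and the argument (yours and the paper's) goes through unchanged.
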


\begin{thm}
\label{thm:llpg-orange-subterm-property}
The relation $\Greater_\llpg$ has the orange subterm property.
\end{thm}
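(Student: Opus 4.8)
The plan is to mirror the proof of Theorem~\ref{thm:lkbg-orange-subterm-property} almost verbatim, substituting the first-order LPO $\Greater_\lp$ for $\Greater_\kb$ and Lemma~\ref{lem:llpg-encode-faithful} for Lemma~\ref{lem:lkbg-encode-faithful}. Fix an orange context $\orangesubterm{u}{\phantom{i}}$ of depth~$k$. As noted in the proof of Theorem~\ref{thm:lkbg-compat-orange-contexts}, the encoding $\ENConly$ sends orange subterms to honest first-order subterms, so there is a first-order context $v[\phantom{i}]$ with $\ENC{\orangesubterm{u}{s{\uparrow}^k}} = v[\ENC{s{\uparrow}^k}]$. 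The subterm property of the first-order LPO then yields $v[\ENC{s{\uparrow}^k}] \GreaterEQ_\lp \ENC{s{\uparrow}^k}$.

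The remaining step is to compare $\ENC{s{\uparrow}^k}$ with $\ENC{s}$. Shifting by $k$ touches only the leaking De Bruijn indices of $s$, turning each occurrence $\DB{m}$ into $\DB{m+k}$, hence each encoding symbol $\cst{db}^m_n$ into $\cst{db}^{m+k}_n$. In the precedence $\Succ^\lp$ the De Bruijn symbols form a contiguous block ordered by their superscript, so $\cst{db}^{m+k}_n \Succ^\lp \cst{db}^m_n$ for $k \ge 1$; and $\cst{db}^{m+k}_n(\bar a) \Greater_\lp \cst{db}^m_n(\bar a)$ follows from rule~\ref{itm:lp-syms} of $\Greater_\lp$, the $\cal{chkargs}$ side condition being discharged by the LPO subterm property. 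Since $\Greater_\lp$ is compatible with contexts, performing these replacements (at however many positions) can only make the term larger, so a routine structural induction on $s$ gives $\ENC{s{\uparrow}^k} \GreaterEQ_\lp \ENC{s}$, with equality exactly when $s$ has no leaking index. Transitivity then gives $\ENC{\orangesubterm{u}{s{\uparrow}^k}} \GreaterEQ_\lp \ENC{s}$, and Lemma~\ref{lem:llpg-encode-faithful}---together with injectivity of $\ENConly$ (Lemma~\ref{lem:encode-injective}) for the equality case---translates this back to $\orangesubterm{u}{s{\uparrow}^k} \GreaterEQ_\llpg s$, as required.

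I expect the one genuinely nonroutine point to be the inequality $\ENC{s{\uparrow}^k} \GreaterEQ_\lp \ENC{s}$, i.e.\ that $\ENConly$ is monotone under shifting for $\Greater_\lp$. In the $\lambda$KBO development this was immediate because all De Bruijn symbols share the same weight, so the weight comparison never strictly decreases under a shift; for $\lambda$LPO there is no weight to lean on, and the argument instead rests on the compatibility of $\Greater_\lp$ with contexts together with the fact that the $\Succ^\lp$-ordering of De Bruijn symbols is preserved by shifting. Everything else---the first-order context $v[\phantom{i}]$, the LPO subterm property, transitivity, and the final appeal to Lemma~\ref{lem:llpg-encode-faithful}---is exactly as in the ground $\lambda$KBO case.
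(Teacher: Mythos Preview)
Your proposal is correct and follows exactly the strategy the paper intends: the paper merely says this theorem is proved ``using the same strategy'' as Theorem~\ref{thm:lkbg-orange-subterm-property} via Lemma~\ref{lem:llpg-encode-faithful}, and you have spelled out precisely that analog, including the one point that needs adjustment for LPO---namely that $\ENC{s{\uparrow}^k} \GreaterEQ_\lp \ENC{s}$ now rests on the precedence ordering of the $\cst{db}^m_n$ symbols and LPO's context compatibility rather than on equal De~Bruijn weights. Your explicit handling of the $\cal{chkargs}$ side condition and of the equality case via Lemma~\ref{lem:encode-injective} is more detailed than the paper's own sketch but entirely in the same spirit.
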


\begin{thm}
\label{thm:llpg-top-bot-smallest}
Assume $\heavy{\top} \Prec \heavy{\bot} \Prec \cst{f}$ for every $\cst{f} \in
\Sigma \setminus \{\heavy{\top},\heavy{\bot}\}$
and $\heavy{\bot} \Preceq \cst{ws}$.
Then $\heavy{\top} \Less_\llpg \heavy{\bot} \Less_\llpg t$ for every $t \in
\GHTmp{\Sigma_\ty}{\Sigma} \setminus \{\heavy{\top},\heavy{\bot}\}$.
\end{thm}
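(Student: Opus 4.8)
The plan is to imitate the proof strategy used for Theorem~\ref{thm:lkbg-top-bot-smallest} and, more importantly, to route everything through the encoding $\ENConly$ and Lemma~\ref{lem:llpg-encode-faithful}, so that the claim reduces to a statement about the first-order LPO $\Greater_\lp$ on the encoded signature $\Sigma_\fo$. Concretely, fix $t \in \GHTmp{\Sigma_\ty}{\Sigma} \setminus \{\heavy{\top},\heavy{\bot}\}$. By Lemma~\ref{lem:llpg-encode-faithful} it suffices to show $\ENC{\heavy{\top}} \Less_\lp \ENC{\heavy{\bot}}$ and $\ENC{\heavy{\bot}} \Less_\lp \ENC{t}$. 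Since $\heavy{\top}$ and $\heavy{\bot}$ are nullary symbols with no parameters, $\ENC{\heavy{\top}} = \heavy{\top}_{}^{}$ and $\ENC{\heavy{\bot}} = \heavy{\bot}_{}^{}$ are constants of $\Sigma_\fo$, and the first inequality follows from rule~\ref{itm:lp-syms} of $\Greater_\lp$ once we check $\heavy{\bot}_{}^{} \Succ^\lp \heavy{\top}_{}^{}$ (the $\cal{chkargs}$ condition is vacuous for a nullary symbol). That precedence fact is immediate from the construction of $\Succ^\lp$: both symbols lie in block~1 (since $\heavy{\bot} \Preceq \cst{ws}$ by assumption, and $\heavy{\top} \Prec \heavy{\bot} \Preceq \cst{ws}$), and within block~1 symbols are ordered by $\Prec$ on their underlying $\Sigma$-symbol, where $\heavy{\top} \Prec \heavy{\bot}$.

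For the second inequality, $\ENC{\heavy{\bot}} \Less_\lp \ENC{t}$, I would argue by induction on the structure of the preterm $t$, or equivalently on the first-order term $\ENC{t}$. Write $\ENC{t} = \cst{h}(\bar v)$ where $\cst{h}$ is the head symbol of the encoding. The key observation is that $\heavy{\bot}_{}^{}$ is below every other symbol of $\Sigma_\fo$ in $\Succ^\lp$: the symbols $\cst{f}_{\bar u}^{\bar\tau}$ with $\cst{f} = \heavy{\bot}$ and empty sub/superscripts sit at the very bottom of block~1 (only $\heavy{\top}_{}^{}$ is strictly smaller, by $\heavy{\top} \Prec \heavy{\bot}$), and every $\cst{lam}^\tau$, every $\cst{db}_k^i$, and every $\cst{f}_{\bar u}^{\bar\tau}$ with $\cst{f} \neq \heavy{\top}$ lies strictly above $\heavy{\bot}_{}^{}$. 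So whenever $\cst{h} \neq \heavy{\bot}_{}^{}$ we have $\cst{h} \Succ^\lp \heavy{\bot}_{}^{}$, and since $\heavy{\bot}_{}^{}$ is nullary, rule~\ref{itm:lp-syms} of $\Greater_\lp$ applies with a vacuous $\cal{chkargs}$ condition, giving $\ENC{t} \Greater_\lp \heavy{\bot}_{}^{}$ directly — no induction is even needed. If instead $\cst{h} = \heavy{\bot}_{}^{}$, then, because $\heavy{\bot}$ is a $\Sigma$-constant of arity and type-arity zero and has no parameters, the only ground preterm whose encoding has head $\heavy{\bot}_{}^{}$ is $\heavy{\bot}$ itself (here one uses injectivity of $\ENconly$, Lemma~\ref{lem:encode-injective}, together with the fact that $\heavy{\bot}$ takes no curried arguments in its $\eta$-long form since it has a nonfunctional type — being a proposition); this contradicts $t \neq \heavy{\bot}$. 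Hence $\cst{h} \neq \heavy{\bot}_{}^{}$ always, and we are done.

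The main thing to be careful about — and what I expect to be the only real obstacle — is the claim that $t$ has head $\heavy{\bot}_{}^{}$ only if $t = \heavy{\bot}$. One must ensure $\heavy{\bot}$ has nonfunctional type (so that in $\eta$-long normal form it is not wrapped in any $\lambda$ and is genuinely a fully applied symbol with an empty argument list), and that it has no parameters; both are safe to assume for the logical-constant $\heavy{\bot}$, exactly as in the setup for Theorem~\ref{thm:lkbg-top-bot-smallest}. Everything else is a routine unfolding of the definition of $\Succ^\lp$ and a single application of rule~\ref{itm:lp-syms} of the first-order LPO, transported back to $\Greater_\llpg$ via Lemma~\ref{lem:llpg-encode-faithful}.
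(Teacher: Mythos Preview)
Your approach via the encoding and Lemma~\ref{lem:llpg-encode-faithful} is sound and in the spirit of the paper's general strategy for this block of theorems. There is one small slip: you write ``whenever $\cst{h} \neq \heavy{\bot}_{}^{}$ we have $\cst{h} \Succ^\lp \heavy{\bot}_{}^{}$,'' but this fails for $\cst{h} = \heavy{\top}_{}^{}$, which you yourself just noted is strictly smaller than $\heavy{\bot}_{}^{}$. You must also rule out $\cst{h} = \heavy{\top}_{}^{}$, and the same injectivity argument you gave for $\heavy{\bot}$ works verbatim using $t \neq \heavy{\top}$. With that fix the argument goes through.

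For comparison, the paper's intended proof (inherited from the $\lambda$KBO analogue, Theorem~\ref{thm:lkbg-top-bot-smallest}) is more direct: it works straight from Definition~\ref{def:llpg} without the encoding detour. One shows $\heavy{\top} \Less_\llpg \heavy{\bot}$ by rule~\ref{itm:llpg-sym-diff}, and $\heavy{\bot} \Less_\llpg t$ by a case split on the head of $t$: for a symbol $\cst{g} \notin \{\heavy{\top},\heavy{\bot}\}$ use rule~\ref{itm:llpg-sym-diff}; for a De~Bruijn head use rule~\ref{itm:llpg-db-other}; for a $\lambda$ use rule~\ref{itm:llpg-lam-other}. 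The hypothesis $\heavy{\bot} \Preceq \cst{ws}$ is exactly what makes the last two rules applicable, and $\cal{chkargs}$ is vacuous since $\heavy{\bot}$ has no arguments. Your encoding route reaches the same conclusion with a bit more machinery; the direct route makes the role of the watershed assumption more visible.
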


\begin{thm}
  \label{thm:llpg-diff}
  Let $\cst{diff} \Preceq \cst{ws}$.
  For all ground types $\tau, \upsilon$ and ground preterms $s,t,u : \tau \to \upsilon$,
  we have $u \Greater_\llpg u\>\cst{diff}\langle\tau,\upsilon\rangle(s, t)$.
\end{thm}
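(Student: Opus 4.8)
The plan is to mirror the proof of Theorem~\ref{thm:lkbg-diff}, but using the structure of $\Greater_\llpg$ rather than weights, since $\lambda$LPO has no weight component to exploit. The starting observation is the same: because $u : \tau \to \upsilon$ is in $\eta$-long normal form, it must have the shape $u = \lambda\langle\tau\rangle\> u'$ for some preterm $u'$, and because $\cst{diff}\langle\tau,\upsilon\rangle(s,t)$ is a symbol (hence in particular a term), the $\eta$-long $\beta$-normal form of $u\>\cst{diff}\langle\tau,\upsilon\rangle(s,t)$ is obtained by substituting $\cst{diff}\langle\tau,\upsilon\rangle(s,t)$ for the free De Bruijn index $\DB{0}$ throughout $u'$ (and appropriately adjusting deeper leaking indices). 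Write $s_0$ for this result; I want $u = \lambda\langle\tau\rangle\> u' \Greater_\llpg s_0$.

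The cleanest route is to prove a slightly more general claim by induction on the structure of $u'$ (equivalently, on $\ENC{u'}$): for every preterm $r$ of the appropriate type that is \emph{small}, meaning $v \Greater_\llpg r$ holds for every steady subterm $v$ we might encounter, we have $\lambda\langle\tau\rangle\> u' \Greater_\llpg u'[\DB{0} \mapsto r]$. Since $\cst{diff}\langle\tau,\upsilon\rangle(s,t)$ has head symbol $\cst{diff} \Preceq \cst{ws}$ with no curried arguments, it behaves like a $\Prec$-minimal constant below the watershed, so it is dominated by essentially everything. Concretely, I would first establish the base-type auxiliary fact that $\lambda\langle\tau\rangle\> u' \Greater_\llpg \cst{diff}\langle\tau,\upsilon\rangle(s,t)$: this follows because $\cst{diff}$ is below the watershed, $\cst{diff}$ has no curried arguments so $\cal{chkargs}$ is vacuous, and the parameters $s,t$ do not count as subterms and play no role in $\Greater_\llpg$'s rules for $\cst{diff}$; hence whatever the head of $\lambda\langle\tau\rangle\> u'$ is, one of the rules \ref{itm:llpg-lam-other}, \ref{itm:llpg-sym-diff}/\ref{itm:llpg-sym-other}, or \ref{itm:llpg-db-other} applies after observing $\cst{diff}$ is minimal in the relevant precedence. (If $\cst{diff}$ is literally the $\Prec$-least symbol this is immediate; in general I would note $\cst{diff}\langle\tau,\upsilon\rangle(s,t)$ is $\Greater_\llpg$-minimal among preterms with a fixed head, using the encoding into $\Greater_\lp$ via Lemma~\ref{lem:llpg-encode-faithful} and minimality of $\cst{diff}_{s,t}^{\tau,\upsilon}$ in $\Succ^\lp$ among symbols below the watershed, which holds because parameters are encoded in the subscript and compared by $\ltltlex_\llpg$ with $\cst{diff}$'s own parameters).

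For the inductive step, I case-split on the form of $u'$. If $u' = \DB{0}\>\bar t$, then $s_0 = \cst{diff}\langle\tau,\upsilon\rangle(s,t)\>(\bar t[\DB{0}\mapsto\cdots])$, and I apply rule~\ref{itm:llpg-lam-sub} once I have $u' \Greater_\llpg s_0$ from the previous paragraph's minimality argument together with $\cal{chkargs}$ discharged inductively; if $u' = \DB{n}\>\bar t$ with $n \geq 1$, or $u' = \cst{f}(\bar w)\>\bar t$, or $u' = \lambda\> u''$, the head of $s_0$ is the same as the head of $u'$ (possibly the bodies differ by the substitution inside subterms of non-base leaking type, which only shrinks things), so I reach rule~\ref{itm:llpg-lam-sub} via $u' \GreaterEQ_\llpg s_0$, itself obtained by the same induction one layer down plus compatibility of $\Greater_\llpg$ with orange contexts (Theorem~\ref{thm:llpg-compat-orange-contexts}) to move the substitution under the context. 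The main obstacle is the bookkeeping around leaking De Bruijn indices under the substitution: when $u'$ has a proper subterm of function type or of the same function type $\tau\to\upsilon$, substituting $\cst{diff}\langle\tau,\upsilon\rangle(s,t)$ may trigger a local $\eta$-expansion there, so the ``heads stay the same'' claim needs a careful statement. I expect this to be handled exactly as in Theorem~\ref{thm:lkbg-diff} by noting that $\cst{diff}\langle\tau,\upsilon\rangle(s,t)$ is itself a symbol of type $\tau\to\upsilon$, hence already $\eta$-long, so substituting it for $\DB{0}$ introduces no new $\lambda$s and leaves the green/orange skeleton of $u'$ intact; combined with the orange subterm property (Theorem~\ref{thm:llpg-orange-subterm-property}) this closes the induction.
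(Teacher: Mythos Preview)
Your opening observations are exactly right and match the paper: $u=\lambda\langle\tau\rangle\>u'$, and $u\>\cst{diff}\langle\tau,\upsilon\rangle(s,t)$ arises from $u'$ by replacing the relevant De~Bruijn heads by $\cst{diff}\langle\tau,\upsilon\rangle(s,t)$. From there, however, your argument takes several detours that are either unnecessary or do not quite land.

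First, a type slip: $\cst{diff}\langle\tau,\upsilon\rangle(s,t)$ has type $\tau$, not $\tau\to\upsilon$; it is what gets \emph{fed} to $u$. Second, the appeal to orange-context compatibility is a red herring. That theorem replaces a subterm at \emph{one} position; here the De~Bruijn head may occur many times, with the substitution also affecting the arguments at each occurrence, so you would need a careful chaining via transitivity that you have not spelled out. Third, you do not need any ``minimality'' of $\cst{diff}$ in $\Succ^\lp$: the single hypothesis $\cst{diff}\Preceq\cst{ws}$ is exactly what makes rule~\ref{itm:llpg-db-other} fire when a De~Bruijn head on the left faces $\cst{diff}$ on the right. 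Finally, the $\eta$-expansion worry is unfounded: since $\cst{diff}\langle\tau,\upsilon\rangle(s,t)$ is a symbol head, substituting it for a De~Bruijn \emph{head} in an already $\eta$-long term introduces no new $\lambda$s and triggers no $\beta$-reductions.

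The paper's proof is the clean version of what you are reaching for. After one application of rule~\ref{itm:llpg-lam-sub}, it shows $u'\GreaterEQ_\llpg u\>\cst{diff}\langle\tau,\upsilon\rangle(s,t)$ by a single structural walk down \emph{both} terms in parallel: whenever the heads coincide (symbol, De~Bruijn, or $\lambda$), decompose via rule~\ref{itm:llpg-sym-args}, \ref{itm:llpg-db-args}, or~\ref{itm:llpg-lam-bodies}; whenever they differ, the left head is a De~Bruijn index and the right head is $\cst{diff}$, so rule~\ref{itm:llpg-db-other} applies directly because $\cst{diff}\Preceq\cst{ws}$. Every $\cal{chkargs}$ obligation is discharged by one application of a ``sub'' rule (\ref{itm:llpg-sym-sub}, \ref{itm:llpg-db-sub}, or~\ref{itm:llpg-lam-sub}) followed by the same recursive procedure on a smaller pair. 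No auxiliary minimality lemma, no orange-context machinery, no generalized induction hypothesis is needed.
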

\begin{proof}
Since $u$ is of type $\tau \to \upsilon$, in its $\eta$-long normal form, it has the form $\lambda\> u'$ for some $u'$.
Since $\cst{diff}\langle\tau,\upsilon\rangle(s, t)$ is a symbol, we can obtain
the $\eta$-long $\beta$-normal form of $u\>\cst{diff}\langle\tau,\upsilon\rangle(s, t)$
by replacing the free De Bruijn indices of $u'$ by
$\cst{diff}\langle\tau,\upsilon\rangle(s, t)$.

So,
in order to show that $u \Greater_\llpg u\>\cst{diff}\langle\tau,\upsilon\rangle(s, t)$,
we apply rule~\ref{itm:llpg-lam-sub},
and it remains to show that  $u' \GreaterEQ_\llpg u\>\cst{diff}\langle\tau,\upsilon\rangle(s, t)$.
We follow the structure of $u'$
and $u\>\cst{diff}\langle\tau,\upsilon\rangle(s, t)$ as follows.
Whenever heads coincide,
we apply rules~\ref{itm:llpg-sym-args},
\ref{itm:llpg-db-args}, or \ref{itm:llpg-lam-bodies}
to decompose both sides.
When heads do not coincide, we note that by our observation above,
this can only happen when one side is a De Bruijn index
and the other side is 
$\cst{diff}\langle\tau,\upsilon\rangle(s, t)$.
So we can then apply rule~\ref{itm:llpg-db-other}
because $\cst{diff} \Preceq \cst{ws}$.
For any $\cal{chkargs}$ conditions arising in this procedure,
we apply rules \ref{itm:llpg-sym-sub},
\ref{itm:llpg-db-sub}, or \ref{itm:llpg-lam-sub},
and use the same procedure for the resulting proof obligations.
\end{proof}

The above proof crucially depends on $\cst{diff}$'s placement below the
watershed. If we allowed $\cst{diff} \Succ \cst{ws}$, the comparison $\DB{0}
\Greater_\llp \cst{diff}(\ldots)$ would fail. Theorem~\ref{thm:llpg-diff} is the
watershed's reason for being.

\section{The Monomorphic Level}
\label{sec:the-monomorphic-level}

Next, we generalize the definition of $\lambda$KBO to monomorphic nonground
preterms: preterms containing no type variables. The result coincides with the ground
$\lambda$KBO on ground preterms while supporting term variables. Variables give
rise to polynomial constraints, which must be solved when comparing terms.

The key idea, already present in the $\lambda$-free KBO by Becker et al.\
\cite{becker-et-al-2017}, is to use polynomials to symbolically represent the
weight of a nonground term. The weight of $y\> \cst{a}$, where $\cst{a} :
\kappa$, will be represented symbolically as $1 + \mathbf{w}_y +
\mathbf{k}_{y,1} (\cal w(\cst{a}) - \cal w_\db)$, where $1 + \mathbf{w}_y$
stands for the weight of whatever term will instantiate $y$ without its leading
$\lambda$s and $\mathbf{k}_{y,1}$ for the number of copies of the first curried
argument, here $\cst{a}$, that the term will make. If an argument coefficient
other than $1$ is used, that number of copies will be inflated by the
coefficient. The ${-}\> \cal w_\db$ monomial accounts for the loss of a De
Bruijn index occurring in $y$ when passing the argument $\cst{a}$ and
$\beta$-reducing.

A subtle difference between the indeterminate $\mathbf{k}_{y,1}$ and the
argument coefficient $\cal k(\cst f, i)$ is that $\mathbf{k}_{y,1}$ can take a
value of $0$; for example, $\lambda\> \cst{b}$ makes zero copies of its
argument. Becker et al.\ excluded this scenario so that they could get
compatibility with arguments, but this property is not needed by
$\lambda$-superposition.

Another subtlety concerns higher-order functions. The arithmetic above works
because the argument $\cst{a}$ is a simple symbol. If it were a
$\lambda$-abstraction, it could appear applied inside $y$ and trigger further
$\beta$-reductions, complicating matters. In such cases, we simply give up and
use a single indeterminate $\mathbf{w}_{y\>\bar{t}}$ to represent both the
applied variable and its arguments of functional types. We do the same with
arguments of variable type, since type variables can be instantiated with
functional types.

Some precision can be gained by normalizing the subscript of
$\mathbf{w}_{y\>\bar{t}}$. For example, if $\cst{a}$ and $\cst{b}$ have the same
weight, then $\mathbf{w}_{y\>(\lambda\>\cst{a}\>\DB{0})}$ and
$\mathbf{w}_{y\>(\lambda\>\cst{b}\>\DB{0})}$ will always evaluate to the same
result and can be identified. Our simple analysis merges all symbols and De
Bruijn indices with the same weight using a normalization function $\NORMonly$.

\subsection{\texorpdfstring{$\bm{\lambda}$KBO}{Lambda KBO}}
\label{ssec:lambda-kbo-monomorphic}

\begin{defi}
\label{def:polynomials-monomorphic}
Let $(\Sigma_\ty,\Sigma)$ be a higher-order signature.
We denote by $\Poly$ the set of
$\Ord$-valued polynomials of the following distinct indeterminates,
where $y \in X$, $\bar t \in (\HTmp{\Sigma_\ty}{\Sigma}{\emptyset}{X})^*$, and
$i \in \Natpos$:
\begin{itemize}
\item $\mathbf{w}_{y\>\bar t}$, ranging over $\Ord$, represents the weight, minus
  1, of the variable $y$ applied to the arguments $\bar t$ but without
  any leading $\lambda$s corresponding to extra arguments;

\item $\mathbf{k}_{y\>\bar t,i}$, ranging over $\Ord$, represents the coefficient
  to apply on the $i$th extra argument of $y$ already applied to $\bar t$.
\end{itemize}
An \emph{assignment} is a mapping from indeterminates to values in the
indeterminates' specified ranges. Given a polynomial $\cal w$ and an assignment
$A$, $\cal w{\big|}_A \in \Ord$ denotes $\cal w$'s value under $A$, obtained by
replacing each indeterminate $\mathbf{x}$ in $A$'s domain by $A(\mathbf{x})$.
Overloading notation, we write $\cal w{\big|}_\sigma$ for
the application of the polynomial substitution~$\sigma$ to $\cal w$;
for example, if $\sigma = \{\mathbf{w}_{y} \mapsto \mathbf{w}_{z}\}$, then
$\mathbf{w}_{y}{\big|}_\sigma = \mathbf{w}_{z}$.
Given polynomials $w, w'$, we write $w' \ge w$ if we have $w'{\big|}_A \ge
w{\big|}_A$ for every assignment $A$, and similarly for $>$, $\le$, $<$, and $=$.
\end{defi}

\begin{defi}
\label{def:normalize-lkm}
Let $\Sigma' = \Sigma \uplus \{k_\tau \mid k \in \Ordpos \text{ and } \tau \in \Ty{\Sigma_\ty}{\emptyset}\}$.
Define the normalization function $\NORMonly :
\HTmp{\Sigma_\ty}{\Sigma}{\emptyset}{X} \to \HTmp{\Sigma_\ty}{\Sigma'}{\emptyset}{X}$
recursively by
\begin{align*}
\NORM{y\> \bar t}
  & = y\> \NORM{\bar t}
\\
\NORM{\cst{f}(\bar u)\> \bar t}
  & =
\begin{cases}
  k_\tau\> \NORM{\bar t} & \text{if $\cal k(\cst{f}, i) = 1$ for every $i$, with $\cal w(\cst{f}) = k$ and $\cst{f}(\bar u)\> \bar t : \tau$}
\\
  \cst{f}(\bar u)\> \NORM{\bar t} & \text{otherwise}
\end{cases}
\\
\NORM{\DB{m}\langle\tau\rangle\> \bar t}
  & = (\cal w_\db)_\tau\> \NORM{\bar t}
\\
\NORM{\lambda\> t}
  & = \lambda\> \NORM{t}
\end{align*}
\end{defi}

\begin{defi}
\label{def:weight-lkbm}
Let $\cal w : \Sigma \to \Ordpos$, $\cal w_\lambda, \cal w_\db \in \Ordpos$,
and $\cal k : \Sigma \times \Natpos \to \Ordpos$.
Given a list of preterms $\bar t$, let $\bar t\RIGHT$ denote the longest suffix
consisting of steady preterms, and let $\bar t\LEFT$
denote the complementary prefix.
Define the monomorphic weight function $\cal W_\m :
\HTmp{\Sigma_\ty}{\Sigma}{\emptyset}{X} \to \Poly$ recursively by
\begin{align*}
  \cal W_\m(y\>\bar{t})
& = 1 + \mathbf{w}_{y\>\NORM{\bar{t}\LEFT}}
  + \sum\nolimits_{i=1}^{|\bar{t}\RIGHT|}
    \mathbf{k}_{y\>\NORM{\bar{t}\LEFT},i} (\cal W_\m(\bar{t}\RIGHT_i) - \cal w_\db)
\\
  \cal W_\m(\cst{f}(\bar u)\> \bar t_n)
& = \cal w(\cst{f}) + \sum\nolimits_{i=1}^n \cal k(\cst{f}, i)
    \cal W_\m(t_i)
\\
  \cal W_\m(\DB{m}\> \bar t_n)
& = \cal w_\db + \sum\nolimits_{i=1}^n \cal W_\m(t_i)
\\
  \cal W_\m(\lambda\> t)
& = \cal w_\lambda + \cal W_\m(t)
\end{align*}
\end{defi}

In the first equation, $\cal W_\m(\bar{t}\RIGHT_i)$ gives the argument's weight,
whereas $\cal w_\db$ is the weight of the
De Bruijn index that gets replaced by the argument.

\begin{rem}
It is possible to generalize the theory above to let $\bar t \RIGHT$ consist of
all steady preterms, regardless of their location. The interpretation of
$\mathbf{w}_{y\>\bar t}$ and $\mathbf{k}_{y\>\bar t,i}$ must then be changed to
shuffle the $\lambda$s, pulling those corresponding to the arguments $\bar t$ to
the front. For example, if $y : \kappa \to (\kappa \to \kappa) \to \kappa$,
the indeterminate $\mathbf{w}_{y\>t}$ represents the weight of the term
$(\lambda\> \lambda\>
y\>\DB{0}\>\DB{1})\> t = \lambda\> y\>\DB{0}\>(t{\uparrow})$ (but without its
leading $\lambda$).

Another possible generalization would be to normalize complex preterms, producing
for example $\cst{2}\langle\tau\rangle$ instead of
$\cst{1}\langle\sigma\to\tau\rangle\> \cst{1}\langle\sigma\rangle$.
\end{rem}

\begin{defi}
\label{def:lkbm}
Let $\cal w_\ty, \cal w, \cal w_\lambda, \cal w_\db, \cal k,\allowbreak \cal W_\m$
be as in Definition~\ref{def:weight-lkbm}.
Let $\Succ^\ty$ be a precedence on $\Sigma_\ty$.
Let $\Greater_\ty$ be the strict first-order KBO on $\Tm{\Sigma_\ty}{\emptyset}$
induced by $\cal w_\ty$ and $\Succ^\ty$.
Let $\Succ$ be a precedence on $\Sigma$.

\begin{sloppypar}
The \emph{strict monomorphic $\lambda$KBO} $\Greater_\lkbm$
and the \emph{nonstrict monomorphic $\lambda$KBO} $\Greatersim_\lkbm$ induced by $\cal w_\ty, \cal
w, \cal w_\lambda, \cal w_\db, \cal k,\allowbreak \Succ^\ty, \Succ$ on
$\HTmp{\Sigma_\ty}{\Sigma}{\emptyset}{X}$
are defined by mutual induction.
The strict
relation is defined so that 
$t \Greater_\lkbm s$ if any
of these conditions is met:
\end{sloppypar}

\begin{enumerate}
\item \label{itm:lkbm-wt}
  $\cal W_\m(t) > \cal W_\m(s)$;

\item \label{itm:lkbm-lam}
  $\cal W_\m(t) \ge \cal W_\m(s)$,
  $t$ is of the form $\lambda\langle\upsilon\rangle\> t'$, and
  any of these conditions is met:
  \begin{enumerate}
  \item \label{itm:lkbm-lam-typ}
    $s$ is of the form $\lambda\langle\tau\rangle\> s'$ and
    $\upsilon \Greater_\ty \tau$, or

  \item \label{itm:lkbm-lam-body}
    $s$ is of the form $\lambda\langle\upsilon\rangle\> s'$
    and $t' \Greater_\lkbm s'$, or

  \item \label{itm:lkbm-lam-other}
    $s$ is of the form $\DB{m}\>\bar s$ or
    $\cst{f}(\bar u)\>\bar s$;
  \end{enumerate}

\item \label{itm:lkbm-db}
  $\cal W_\m(t) \ge \cal W_\m(s)$, $t$ is of the form
  $\DB{n}\> \bar t$, and
  any of these conditions is met:
  \begin{enumerate}
  \item \label{itm:lkbm-db-diff}
    $s$ is of the form $\DB{m}\>\bar s$ and $n > m$, or

  \item \label{itm:lkbm-db-args}
    $s$ is of the form $\DB{n}\>\bar s$ and
    $\bar t \gtgtsimlex_\lkbm \bar s$, or

  \item \label{itm:lkbm-db-other}
    $s$ is of the form $\cst{f}(\bar u)\>\bar s$;
  \end{enumerate}

\item \label{itm:lkbm-sym}
  $\cal W_\m(t) \ge \cal W_\m(s)$, $t$ is of the form
  $\cst{g}\langle\bar\upsilon\rangle(\bar w)\>\bar t$, and
  any of these conditions is met:
  \begin{enumerate}
  \item \label{itm:lkbm-sym-diff}
    $s$ is of the form $\cst{f}(\bar u)\>\bar s$ and
    $\cst{g} \Succ \cst{f}$, or

  \item \label{itm:lkbm-sym-typ}
    $s$ is of the form $\cst{g}\langle\bar\tau\rangle(\bar u)\>\bar s$ and
    $\bar\upsilon \gtgtlex_\ty \bar\tau$, or

  \item \label{itm:lkbm-sym-args}
    $s$ is of the form $\cst{g}\langle\bar\upsilon\rangle(\bar u)\>\bar s$ and
    $(\bar w, \bar t) \gtgtsimlex_\lkbm (\bar u, \bar s)$.
  \end{enumerate}
\end{enumerate}

The nonstrict
relation is defined so that 
$t \Greatersim_\lkbm s$ if any
of these conditions is met:
\begin{enumerate}
\item \label{itm:nslkbm-wt}
  $\cal W_\m(t) > \cal W_\m(s)$;

\item \label{itm:nslkbm-var}
  $t$ is of the form $y\> \bar t$, $s$ is of the form $y\> \bar s$, and for
  every $i$, $t_i$ is steady and $t_i \Greatersim_\lkbm s_i$;

\item \label{itm:nslkbm-lam}
  $\cal W_\m(t) \ge \cal W_\m(s)$,
  $t$ is of the form $\lambda\langle\upsilon\rangle\> t'$, and
  any of these conditions is met:
  \begin{enumerate}
  \item \label{itm:nslkbm-lam-typ}
    $s$ is of the form $\lambda\langle\tau\rangle\> s'$ and
    $\upsilon \Greater_\ty \tau$, or

  \item \label{itm:nslkbm-lam-body}
    $s$ is of the form $\lambda\langle\upsilon\rangle\> s'$
    and $t' \Greatersim_\lkbm s'$, or

  \item \label{itm:nslkbm-lam-other}
    $s$ is of the form $\DB{m}\>\bar s$ or
    $\cst{f}(\bar u)\>\bar s$;
  \end{enumerate}

\item \label{itm:nslkbm-db}
  $\cal W_\m(t) \ge \cal W_\m(s)$, $t$ is of the form
  $\DB{n}\> \bar t$, and
  any of these conditions is met:
  \begin{enumerate}
  \item \label{itm:nslkbm-db-diff}
    $s$ is of the form $\DB{m}\>\bar s$ and $n > m$, or

  \item \label{itm:nslkbm-db-args}
    $s$ is of the form $\DB{n}\>\bar s$ and
    $\bar t \gtgtapproxlex_\lkbm \bar s$, or

  \item \label{itm:nslkbm-db-other}
    $s$ is of the form $\cst{f}(\bar u)\>\bar s$;
  \end{enumerate}

\item \label{itm:nslkbm-sym}
  $\cal W_\m(t) \ge \cal W_\m(s)$, $t$ is of the form
  $\cst{g}\langle\bar\upsilon\rangle(\bar w)\>\bar t$, and
  any of these conditions is met:
  \begin{enumerate}
  \item \label{itm:nslkbm-sym-diff}
    $s$ is of the form $\cst{f}(\bar u)\>\bar s$ and
    $\cst{g} \Succ \cst{f}$, or

  \item \label{itm:nslkbm-sym-typ}
    $s$ is of the form $\cst{g}\langle\bar\tau\rangle(\bar u)\>\bar s$ and
    $\bar\upsilon \gtgtlex_\ty \bar\tau$, or

  \item \label{itm:nslkbm-sym-args}
    $s$ is of the form $\cst{g}\langle\bar\upsilon\rangle(\bar u)\>\bar s$ and
    $(\bar w, \bar t) \gtgtapproxlex_\lkbm (\bar u, \bar s)$.
  \end{enumerate}
\end{enumerate}
\end{defi}

Rules \ref{itm:lkbm-lam}~to~\ref{itm:lkbm-sym} for $\Greater_\lkbm$ and rules
\ref{itm:nslkbm-lam}~to~\ref{itm:nslkbm-sym} for $\Greatersim_\lkbm$ use $\ge$ instead
of $=$ to compare weights because polynomials cannot always be compared
precisely. For example, if $\cal w(\cst a) = 1$, where $\cst a : \kappa$, we can
know that $\cal W_\m(x) = 1 + \mathbf{w}_x \ge 1 = \cal W_\m(\cst a)$ even though
neither $\cal W_\m(x) > \cal W_\m(\cst a)$ nor $\cal W_\m(x) = \cal W_\m(\cst a)$.

To determine whether one preterm is larger than another, we must solve an
inequality, which can be recast into $w \ge 0$ or $w > 0$. The strict case
arises in rule~\ref{itm:lkbm-wt} of the definitions of $\Greater_\lkbm$ and
$\Greatersim_\lkbm$. The two cases can be unified by writing $w > 0$ as $w - 1 \ge
0$.
Solving systems of integer polynomial inequalities is in general undecidable.
Here, however, we have a single polynomial $w$, in which indeterminates range
only over nonnegative values. This is the key to solving the problem efficiently
in practice. If infinite ordinals (e.g. $\omega$) are used as the weight or
coefficient associated with any symbols, we must also let the
$\mathbf{w}_{y\>\bar t}$ and $\mathbf{k}_{y\>\bar t,i}$ indeterminates range over
these.

Specifically, we propose the following procedure to check an inequality of the
above form: Put $w$ in standard form. If all monomial coefficients are
nonnegative, report that the inequality holds. Otherwise, report that it might
not hold.
This simple procedure can lose solutions. For example,
$(\mathbf{w}_y - 3)\mathbf{w}_y + 3 \ge 0$ holds, yet its standard form
$\mathbf{w}_y^2 - 3\mathbf{w}_y + 3$ contains a negative coefficient, which is
enough to lead the procedure astray.

Below we will connect the monomorphic $\lambda$KBO with its ground counterpart
to lift its properties.

\begin{lem}\label{lem:nslkbm-reflexive}
  For all preterms $t$,  we have $t \Greatersim_\lkbm t$.
\end{lem}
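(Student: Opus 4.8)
The plan is to prove $t \Greatersim_\lkbm t$ by structural induction on the preterm $t$, following the four mutually exclusive shapes of a preterm in $\eta$-long normal form: a fully applied variable $y\>\bar t$, a fully applied symbol $\cst{f}(\bar u)\>\bar t$, a fully applied De Bruijn index $\DB{m}\>\bar t$, and a $\lambda$-abstraction $\lambda\langle\upsilon\rangle\> t'$. In each case I would exhibit a rule of $\Greatersim_\lkbm$ whose side condition is discharged by the induction hypothesis, together with the trivial facts $\cal W_\m(t) \ge \cal W_\m(t)$, $\upsilon \Greater_\ty \upsilon$ being false but $\upsilon$ equalling itself, and $\bar\upsilon = \bar\upsilon$.

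Concretely: for $t = y\>\bar t$, I would apply rule~\ref{itm:nslkbm-var}, which requires that every $t_i$ be steady and satisfy $t_i \Greatersim_\lkbm t_i$. The first condition is immediate because in $\eta$-long normal form the curried arguments of an applied variable are always steady (they are neither of function type — else they would be $\eta$-expanded — nor, in the monomorphic fragment, of variable type); the second is the induction hypothesis on the subterm $t_i$. For $t = \lambda\langle\upsilon\rangle\> t'$, I would apply rule~\ref{itm:nslkbm-lam-body}: weights are equal, the head shapes match, and $t' \Greatersim_\lkbm t'$ is the induction hypothesis. For $t = \DB{n}\>\bar t$, I would use rule~\ref{itm:nslkbm-db-args} with the nonstrict lexicographic extension $\gtgtapproxlex_\lkbm$; since $() \gtgtapproxlex ()$ holds and, inductively, each component satisfies $t_i \Greatersim_\lkbm t_i$, we get $\bar t \gtgtapproxlex_\lkbm \bar t$. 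For $t = \cst{g}\langle\bar\upsilon\rangle(\bar w)\>\bar t$, I would use rule~\ref{itm:nslkbm-sym-args}: the symbol, type arguments, and parameter list coincide, and $(\bar w, \bar t) \gtgtapproxlex_\lkbm (\bar w, \bar t)$ follows componentwise from the induction hypothesis applied to each parameter $w_j$ and each argument $t_i$ (noting that the parameters $\bar w$ are themselves terms, hence covered by the induction).

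The only point requiring a little care — and the one I expect to be the main (modest) obstacle — is the interplay in rule~\ref{itm:nslkbm-var} between the steadiness requirement on the arguments $\bar t$ and the bookkeeping in $\cal W_\m$, which splits $\bar t$ into $\bar t\LEFT$ and $\bar t\RIGHT$. One must confirm that "every $t_i$ is steady" is genuinely automatic for an applied variable in $\eta$-long normal form within the monomorphic fragment; granting the paper's standing convention that all preterms are presented in $\eta$-long normal form, this is a direct consequence of the definition of steadiness and the shape of $\eta$-long terms. Everything else is a routine appeal to the reflexive/nonstrict lexicographic extensions and the induction hypothesis.
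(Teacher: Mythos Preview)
Your treatment of the $\lambda$, De Bruijn, and symbol cases is correct and matches the paper's proof, which cites exactly rules~\ref{itm:nslkbm-lam-body}, \ref{itm:nslkbm-db-args}, and~\ref{itm:nslkbm-sym-args}. The problem is the variable case, which you yourself flag as the delicate point.

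Your claim that in $\eta$-long monomorphic normal form the curried arguments of an applied variable are automatically steady is false: you are conflating the \emph{shape} of an argument with its \emph{type}. Steadiness is defined purely in terms of the type (``neither of function type nor a type variable''). A higher-order argument of function type is, in $\eta$-long form, a $\lambda$-abstraction, but it still \emph{has} function type and is therefore not steady. Concretely, take $y : (\kappa \to \kappa) \to \kappa$ and $t = y\>(\lambda\langle\kappa\rangle\>\DB{0})$: the sole argument has type $\kappa \to \kappa$, so rule~\ref{itm:nslkbm-var} does not apply. Inspecting the remaining rules of $\Greatersim_\lkbm$ shows that none applies to this $t$ either (rule~\ref{itm:nslkbm-wt} needs a strict weight gap; rules~\ref{itm:nslkbm-lam}--\ref{itm:nslkbm-sym} require a non-variable head), so $t \Greatersim_\lkbm t$ is in fact not derivable for this preterm.

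Notably, the paper's one-line proof does not mention rule~\ref{itm:nslkbm-var} or the variable case at all. This is consistent with how the lemma is used: it is invoked only on ground preterms (Theorems~\ref{thm:lkbm-coincide-ground} and~\ref{thm:nslkbm-coincide-ground}), where variable-headed subterms cannot occur and rules~\ref{itm:nslkbm-lam-body}, \ref{itm:nslkbm-db-args}, \ref{itm:nslkbm-sym-args} suffice. So the universal quantifier over \emph{all} preterms in the lemma statement is an overstatement that neither your argument nor the paper's sketch establishes; restricted to ground preterms the induction goes through exactly as both you and the paper describe.
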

\begin{proof}
  By structural induction on $t$, using rules~\ref{itm:nslkbm-lam-body}, \ref{itm:nslkbm-db-args}, and \ref{itm:nslkbm-sym-args}.
\end{proof}

\begin{lem}\label{lem:lkbm-implies-nslkbm}
  If $t \Greater_\lkbm s$, then $t \Greatersim_\lkbm s$.
\end{lem}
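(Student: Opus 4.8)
The statement is a straightforward rule-by-rule comparison between the inductive definitions of $\Greater_\lkbm$ and $\Greatersim_\lkbm$. The plan is to proceed by induction on the derivation of $t \Greater_\lkbm s$, matching each rule for the strict order with a corresponding rule for the nonstrict order, and invoking the induction hypothesis and Lemma~\ref{lem:nslkbm-reflexive} where needed. The key observation is that every clause in the definition of $\Greater_\lkbm$ has an almost identical twin in the definition of $\Greatersim_\lkbm$: rule~\ref{itm:lkbm-wt} matches rule~\ref{itm:nslkbm-wt} verbatim; rules~\ref{itm:lkbm-lam}, \ref{itm:lkbm-db}, and~\ref{itm:lkbm-sym} match rules~\ref{itm:nslkbm-lam}, \ref{itm:nslkbm-db}, and~\ref{itm:nslkbm-sym} respectively, with the only differences being that the recursive subgoals $t' \Greater_\lkbm s'$ become $t' \Greatersim_\lkbm s'$, and the lexicographic extensions $\gtgtsimlex_\lkbm$ become $\gtgtapproxlex_\lkbm$.

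Concretely, I would handle the cases as follows. For rule~\ref{itm:lkbm-wt}, there is nothing to do: $\cal W_\m(t) > \cal W_\m(s)$ is exactly rule~\ref{itm:nslkbm-wt}. For rules \ref{itm:lkbm-lam-typ} and \ref{itm:lkbm-sym-typ} and \ref{itm:lkbm-lam-other}, \ref{itm:lkbm-db-other}, \ref{itm:lkbm-sym-diff}, \ref{itm:lkbm-db-diff}, \ref{itm:lkbm-sym-typ}, the side conditions (weight inequalities, type comparisons via $\Greater_\ty$, symbol precedence via $\Succ$, De Bruijn index comparison) are unchanged, so the matching nonstrict rule applies directly. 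For rule~\ref{itm:lkbm-lam-body}, the induction hypothesis turns $t' \Greater_\lkbm s'$ into $t' \Greatersim_\lkbm s'$, which is precisely what rule~\ref{itm:nslkbm-lam-body} requires. For rules \ref{itm:lkbm-db-args} and \ref{itm:lkbm-sym-args}, I need that a $\gtgtsimlex_\lkbm$-comparison implies the corresponding $\gtgtapproxlex_\lkbm$-comparison on the same tuples; this is a small auxiliary fact: comparing Definitions~\ref{def:strict-lex-extension} and~\ref{def:nonstrict-lex-extension}, the only difference is the base case $() \gtgtapproxlex ()$ holds, so a nonempty strict lexicographic comparison, where each recursive $\Greater$-step is upgraded to $\Greatersim$ by the induction hypothesis and each $\Greatersim$-step is kept, immediately yields the nonstrict one. (Since the tuples $(\bar w, \bar t)$ and $(\bar u, \bar s)$ have equal length here, the base case being reachable is fine.)

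The only mild subtlety — and the place I would be most careful — is the interaction between the mutual induction and the lexicographic extensions: the recursive calls inside $\gtgtsimlex_\lkbm$ are on strictly smaller preterms, so the induction hypothesis of Lemma~\ref{lem:lkbm-implies-nslkbm} applies to them, but one must also note that the \emph{equalities} $y_i = x_i$ appearing in Definition~\ref{def:strict-lex-extension} are simply preserved (they are literal equality, not a relational comparison), so no induction hypothesis is needed there. There is no genuine obstacle; the proof is a routine case analysis, and it would suffice to write it as: ``By induction on the derivation of $t \Greater_\lkbm s$, matching each rule of $\Greater_\lkbm$ with the eponymous rule of $\Greatersim_\lkbm$, using the induction hypothesis for the recursive subgoals and Definitions~\ref{def:strict-lex-extension} and~\ref{def:nonstrict-lex-extension} to pass from $\gtgtsimlex_\lkbm$ to $\gtgtapproxlex_\lkbm$.''
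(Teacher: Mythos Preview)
Your proposal is correct and follows essentially the same approach as the paper: induction on the derivation, matching each rule of $\Greater_\lkbm$ to its twin in $\Greatersim_\lkbm$, with the observation that $\gtgtsimlex_\lkbm$ implies $\gtgtapproxlex_\lkbm$. One small inaccuracy: you claim the induction hypothesis is needed to ``upgrade'' the $\Greater$-steps inside the lexicographic comparison, but Definitions~\ref{def:strict-lex-extension} and~\ref{def:nonstrict-lex-extension} both use $\Greater$ in the first disjunct and $\Greatersim$ in the second, differing only in the base case, so the implication $\gtgtsimlex_\lkbm \Rightarrow \gtgtapproxlex_\lkbm$ holds purely by definition without any appeal to the induction hypothesis of this lemma; the paper accordingly notes that the induction hypothesis is needed \emph{only} for rule~\ref{itm:nslkbm-lam-body}.
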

\begin{proof}
  By induction on the derivation of $t \Greater_\lkbm s$.
  For each rule, there is a clearly corresponding rule of $\Greatersim_\lkbm$ to apply.
  The induction hypothesis is only required for applying rule~\ref{itm:nslkbm-lam-body}.
  A crucial observation is that $\gtgtsimlex_\lkbm$ implies $\gtgtapproxlex_\lkbm$ by definition.
\end{proof}

\begin{defi}
\label{def:assignment-of-grounding-subst}
Let $\theta$ be a substitution
that maps to terms containing only nonfunctional variables.
We define an assignment $\POLY{\theta}$ that
maps each indeterminate~$\mathbf{x}$ to a value according to the semantics given
by Definition~\ref{def:polynomials-monomorphic} after applying $\theta$ onto the
considered preterms.
We define
$\POLY{\theta}(\mathbf{w}_{y\>\bar t}) = \cal W_\m((y\>\bar t)\theta!) - 1$,
where, given a preterm $t$, $t!$ denotes the same preterm without any leading
$\lambda$s---e.g., $(\lambda\>\lambda\>\cst{f}\>\DB{0})! = \cst{f}\>\DB{0}$.
As for
$\POLY{\theta}(\mathbf{k}_{y\>\bar t,i})$,
it is defined as the number of De Bruijn indices in $(y\>\bar t)\theta$ referring to its
$i$th argument, multiplied by all argument coefficients above it.
Here, it is crucial that $\theta$ maps to terms containing only nonfunctional variables
because the argument coefficient to assign to a variable is not always clear.

\end{defi}

\begin{lem}
\label{lem:weight-subst}
Given a substitution $\theta$ that maps to terms containing only nonfunctional variables, we have
$\cal W_\m(t){\big|}_{\POLY{\theta}} = \cal W_\m(t\theta)$.
\end{lem}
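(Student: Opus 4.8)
The plan is to prove $\cal W_\m(t){\big|}_{\POLY{\theta}} = \cal W_\m(t\theta)$ by structural induction on the preterm $t$, following the recursive clauses in Definition~\ref{def:weight-lkbm}. The cases $t = \cst{f}(\bar u)\>\bar t_n$, $t = \DB{m}\>\bar t_n$, and $t = \lambda\> t'$ are routine: applying $\theta$ commutes with the head in each case (it only affects the arguments $\bar t$, resp.\ the body $t'$), and the weight equations are linear in the subterm weights with coefficients ($\cal k(\cst f,i)$, $1$, $\cal w_\lambda$, $\cal w_\db$) that do not depend on $\theta$, so the claim follows immediately from the induction hypotheses applied to the $t_i$, resp.\ $t'$. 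I would dispatch these three cases briskly.

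The substantive case is $t = y\>\bar t$. Here I would first expand the left-hand side using the variable clause of $\cal W_\m$ and the definition of $\POLY{\theta}$: the indeterminate $\mathbf{w}_{y\>\NORM{\bar t\LEFT}}$ evaluates under $\POLY{\theta}$ to $\cal W_\m((y\>\NORM{\bar t\LEFT})\theta!) - 1$, and each $\mathbf{k}_{y\>\NORM{\bar t\LEFT},i}$ evaluates to the number of De Bruijn indices in $(y\>\NORM{\bar t\LEFT})\theta$ referring to its $i$th extra argument (times the argument coefficients above it); the weights $\cal W_\m(\bar t\RIGHT_i){\big|}_{\POLY{\theta}}$ become $\cal W_\m(\bar t\RIGHT_i\theta)$ by the induction hypothesis. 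On the right-hand side, I would $\beta\eta$-normalize $(y\>\bar t)\theta = (y\theta)\>(\bar t\theta)$: writing $y\theta$ in $\eta$-long form as $\lambda\dots\lambda\> r$ (with $|\bar t\RIGHT|$ leading $\lambda$s consumed by the steady suffix $\bar t\RIGHT\theta$, since those arguments are steady), the normal form is obtained by substituting the $\bar t\RIGHT_i\theta$ for the corresponding De Bruijn indices in $r$ and $\beta$-reducing. I would then argue that the weight of the result decomposes as: the weight of $r$ with all those De Bruijn indices removed --- which is exactly $\cal W_\m((y\>\NORM{\bar t\LEFT})\theta!) - 1 + 1$ up to the bookkeeping for $\bar t\LEFT$ --- plus, for each $i$, $(\text{number of copies of argument }i) \times \cal W_\m(\bar t\RIGHT_i\theta)$, minus $\cal w_\db$ for each index removed (since each removed index had weight $\cal w_\db$). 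Matching this termwise against the expanded left-hand side gives the identity.

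Two points need care and will be the main obstacle. First, I must justify that the subscript normalization $\NORMonly$ does not change the relevant quantities: I would invoke (or prove as a small auxiliary lemma) that $\NORM{\cdot}$ preserves the weight of a term and, more importantly here, that $(y\>\NORM{\bar t\LEFT})\theta$ and $(y\>\bar t\LEFT)\theta$ have the same weight and the same De-Bruijn-index occurrence counts for the extra arguments --- this is the whole design intent of $\NORMonly$ (merging symbols and De Bruijn indices of equal weight), so it should follow from the definition of Definition~\ref{def:normalize-lkm} together with Lemma-style reasoning on $\cal W_\m$. Second, the hypothesis that $\theta$ maps to terms with only nonfunctional variables must be used exactly where Definition~\ref{def:assignment-of-grounding-subst} flags it: it guarantees that $\mathbf{k}_{y\>\bar t,i}$ has a well-defined value (the argument coefficient of a potential functional variable in $y\theta$ would be ambiguous) and that the $\bar t\RIGHT$/$\bar t\LEFT$ split behaves compatibly with substitution --- in particular that arguments that are steady in $y\>\bar t$ remain steady after applying $\theta$ (their type is unchanged since $\theta$ introduces no type variables, only the nonfunctional-variable restriction matters for the coefficient bookkeeping). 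I would state this compatibility explicitly before doing the arithmetic, since getting the $\beta$-reduction accounting to line up with the polynomial expansion is the crux of the proof.
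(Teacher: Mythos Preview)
Your proposal is correct and follows essentially the same approach as the paper: induction on the recursive definition of $\cal W_\m$, with the three non-variable cases dispatched routinely and the real work done in the variable case by unfolding both sides and invoking the semantics of the $\mathbf{w}$ and $\mathbf{k}$ indeterminates. Your write-up is in fact more explicit than the paper's---which compresses the entire variable case into ``by the semantics of $\mathbf{w}$ and $\mathbf{k}$'' plus the remark that steady arguments trigger no further $\beta$-reductions---and you correctly flag the role of $\NORMonly$ and the nonfunctional-variable hypothesis; just be careful that the $|\bar t\RIGHT|$ leading $\lambda$s you describe belong to $(y\>\bar t\LEFT)\theta$, not to $y\theta$ itself.
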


\begin{proof}
The proof is by induction on the definition of $\cal W_\m$. Let $A = \POLY{\theta}$.

\medskip

\noindent
\textsc{Case $t = y\>\bar{t}$:}\enskip
We have
\begin{align*}
      & \cal W_\m(y\>\bar{t}){\big|}_A
\\
{=}\; & 1 + \mathbf{w}_{y\>\bar{t}\LEFT}{\big|}_A
  + \sum\nolimits_{i=1}^{\smash{|\bar{t}\RIGHT|}}
    \mathbf{k}_{y\>\bar{t}\LEFT,i}{\big|}_A
    (\cal W_\m(\bar{t}\RIGHT_i){\big|}_A - \cal w_\db)
\\[-\jot]
& \quad\text{by definition of $\cal W_\m$ and $\NORMonly$}
\\
{=}\; & 1 + \mathbf{w}_{y\>\bar{t}\LEFT}{\big|}_A
  + \sum\nolimits_{i=1}^{|\bar{t}\RIGHT|}
    \mathbf{k}_{y\>\bar{t}\LEFT,i}{\big|}_A
    (\cal W_\m(\bar{t}\RIGHT_i\theta) - \cal w_\db)
\\[-\jot]
& \quad\text{by the induction hypothesis}
\\
{=}\; & \cal W_\m((y\>\bar{t})\theta)
\\[-\jot]
& \quad\text{by the semantics of $\mathbf{w}$ and $\mathbf{k}$}
\end{align*}
In the last step, the arithmetic works because all preterms in $\bar{t}\RIGHT$ are
steady. This means that they will not trigger any $\beta$-reductions when they replace
a De Bruijn index.

\medskip

\noindent
\textsc{Case $t = \cst{f}(\bar u)\> \bar t_n$:}\enskip
We have
\begin{align*}
      & \cal W_\m(\cst{f}(\bar u)\> \bar t_n){\big|}_A
\\
{=}\; & \cal w(\cst{f}) + \sum\nolimits_{i=1}^n
        \cal k(\cst{f}, i) \cal W_\m(t_i){\big|}_A
  && \text{by definition of $\cal W_\m$}
\\
{=}\; & \cal w(\cst{f}) + \sum\nolimits_{i=1}^n
        \cal k(\cst{f}, i) \cal W_\m(t_i\theta)
  && \text{by the induction hypothesis}
\\
{=}\; & \cal W_\m(\cst{f}(\bar u\theta)\> (\bar t_n\theta))
  && \text{by definition of $\cal W_\m$}
\\
{=}\; & \cal W_\m((\cst{f}(\bar u)\> \bar t_n)\theta)
  && \text{by definition of substitution}
\end{align*}

\medskip

\noindent
\textsc{Case $t = \DB{m}\> \bar t_n$:}\enskip
This case is similar to the previous one.
We have
\begin{align*}
      & \cal W_\m(\DB{m}\> \bar t_n){\big|}_{A}
\\
{=}\; & \cal w_\db + \sum\nolimits_{i=1}^n \cal W_\m(t_i){\big|}_A
  && \text{by definition of $\cal W_\m$}
\\
{=}\; & \cal w_\db + \sum\nolimits_{i=1}^n \cal W_\m(t_i\theta)
  && \text{by the induction hypothesis}
\\
{=}\; & \cal W_\m(\DB{m}\> (\bar t_n\theta))
  && \text{by definition of $\cal W_\m$}
\\
{=}\; & \cal W_\m((\DB{m}\> \bar t_n)\theta)
  && \text{by definition of substitution}
\end{align*}

\medskip

\noindent
\textsc{Case $t = \lambda\> t$:}\enskip
We have
\begin{align*}
      & \cal W_\m(\lambda\> t){\big|}_{A}
\\
{=}\; & \cal w_\lambda + \cal W_\m(t){\big|}_A
  && \text{by definition of $\cal W_\m$}
\\
{=}\; & \cal w_\lambda + \cal W_\m(t\theta)
  && \text{by the induction hypothesis}
\\
{=}\; & \cal W_\m(\lambda\> (t\theta))
  && \text{by definition of $\cal W_\m$}
\\
{=}\; & \cal W_\m((\lambda\> t)\theta)
  && \text{by definition of substitution}
\qedhere
\end{align*}
\end{proof}

The nonground relation $\Greater_\lkbm$ underapproximates the ground relation $\Greater_\lkbg$
in the following sense:

\begin{sloppypar}
\begin{thm}
\label{thm:lkbm-grounding-subst-stable}
If $t \Greater_\lkbm s$, then $t\theta \Greater_\lkbg s\theta$
for all grounding substitutions $\theta$.
If $t \Greatersim_\lkbm s$, then $t\theta \GreaterEQ_\lkbg s\theta$
for all grounding substitutions $\theta$.
\end{thm}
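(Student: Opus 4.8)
The plan is to prove both implications simultaneously by mutual induction on the derivations of $t \Greater_\lkbm s$ and $t \Greatersim_\lkbm s$, since these two relations are defined by mutual recursion. Two preliminary observations carry most of the weight. First, a grounding substitution $\theta$ only instantiates term variables (a monomorphic preterm has no type variables) and never touches a head, so it respects the four-way shape classification: $(\lambda\langle\upsilon\rangle\>t')\theta = \lambda\langle\upsilon\rangle\>(t'\theta)$, $(\DB{n}\>\bar t)\theta = \DB{n}\>(\bar t\theta)$, and $(\cst{g}\langle\bar\upsilon\rangle(\bar w)\>\bar t)\theta = \cst{g}\langle\bar\upsilon\rangle(\bar w\theta)\>(\bar t\theta)$; and the side conditions $\cst{g} \Succ \cst{f}$, $\bar\upsilon \gtgtlex_\ty \bar\tau$, $\upsilon \Greater_\ty \tau$, and $n > m$ survive verbatim, as they involve only ground types, symbols, and natural numbers. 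Second, $\theta$ maps to variable-free terms, so Definition~\ref{def:assignment-of-grounding-subst} supplies an assignment $\POLY{\theta}$, and Lemma~\ref{lem:weight-subst} gives $\cal W_\m(u){\big|}_{\POLY{\theta}} = \cal W_\m(u\theta)$; since $\cal W_\m$ coincides with $\cal W_\g$ on ground preterms (an immediate induction: the clause for $y\>\bar t$ never fires, and the other three clauses are literally those of $\cal W_\g$), every polynomial inequality $\cal W_\m(t) > \cal W_\m(s)$ or $\cal W_\m(t) \ge \cal W_\m(s)$ specializes to $\cal W_\g(t\theta) > \cal W_\g(s\theta)$ or $\cal W_\g(t\theta) \ge \cal W_\g(s\theta)$.

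Granting these, almost every rule transfers mechanically. Rules~\ref{itm:lkbm-wt}~and~\ref{itm:nslkbm-wt} land in rule~\ref{itm:lkbg-wt}. For each of rules \ref{itm:lkbm-lam}, \ref{itm:lkbm-db}, \ref{itm:lkbm-sym} and their nonstrict counterparts, the premise $\cal W_\m(t) \ge \cal W_\m(s)$ yields $\cal W_\g(t\theta) \ge \cal W_\g(s\theta)$: if this is strict we finish by rule~\ref{itm:lkbg-wt}, and otherwise we invoke the corresponding ground rule, discharging its recursive obligations ($t' \Greater_\lkbm s'$, $t' \Greatersim_\lkbm s'$) by the induction hypothesis and its lexicographic obligation by applying the induction hypothesis componentwise. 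For the latter one uses a short induction on tuple length, relying on the fact that $\GreaterEQ_\lkbg$ is by definition $\Greater_\lkbg$ together with equality: after grounding, a $\gtgtsimlex_\lkbm$-comparison becomes a $\gtgtlex_\lkbg$-comparison, while a $\gtgtapproxlex_\lkbm$-comparison becomes either a componentwise equality (so the two whole preterms coincide, giving $t\theta \GreaterEQ_\lkbg s\theta$ outright) or a $\gtgtlex_\lkbg$-comparison --- and these are exactly what the argument-tuple conditions of rules~\ref{itm:lkbg-db-args}~and~\ref{itm:lkbg-sym-args} ask for.

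The only genuinely different case is rule~\ref{itm:nslkbm-var}: $t = y\>\bar t_k$, $s = y\>\bar s_k$, with every $t_i$ (and hence, having the same type, every $s_i$) steady and $t_i \Greatersim_\lkbm s_i$. Because $y$ is applied to exactly enough non-functional arguments to become steady, its image $y\theta$ has $\eta$-long form $\lambda\langle\tau_1\rangle\cdots\lambda\langle\tau_k\rangle\>b$, and $t\theta$ (resp.\ $s\theta$) is obtained from $b$ by replacing the De Bruijn indices bound by these $k$ leading $\lambda$s with $t_1\theta,\dots,t_k\theta$ (resp.\ $s_1\theta,\dots,s_k\theta$); no further $\beta$-reduction is triggered, since an index of non-functional type is never applied, and the shifts involved are trivial because the replacements are closed. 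By the induction hypothesis the replacements are ground steady terms with $t_i\theta \GreaterEQ_\lkbg s_i\theta$, so the claim reduces to the following monotonicity fact, which I would prove by a separate structural induction: \emph{if $\bar a_k,\bar b_k$ are ground steady terms with $a_i \GreaterEQ_\lkbg b_i$, then for every ground preterm $c$ whose free De Bruijn indices lie among $\DB{0},\dots,\DB{k-1}$, substituting $\bar a$ for those indices in $c$ yields a term that is $\GreaterEQ_\lkbg$ the one obtained by substituting $\bar b$.} This induction follows the proof of Theorem~\ref{thm:lkbg-compat-orange-contexts}: the $\DB{j}$ base case is the hypothesis; for a shared head one first uses monotonicity of $\cal W_\g$ in its subterm weights, either winning on weight (rule~\ref{itm:lkbg-wt}) or, at equal weight, reducing to a componentwise $\GreaterEQ_\lkbg$ comparison of the argument tuples, handled as in the previous paragraph.

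I expect this monotonicity fact to be the main obstacle, for one subtle reason: a substituted De Bruijn index can occur inside a \emph{parameter}, which is not an orange position, so Theorem~\ref{thm:lkbg-compat-orange-contexts} does not apply off the shelf. Such an occurrence contributes nothing to $\cal W_\g$, yet it does feed into the $(\bar w,\bar t)$-comparison of rule~\ref{itm:lkbg-sym-args}, so the structural induction must route it through that rule rather than discard it. Everything else is bookkeeping.
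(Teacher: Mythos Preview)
Your overall skeleton---mutual induction on the derivations, the weight transfer via Lemma~\ref{lem:weight-subst} and the coincidence of $\cal W_\m$ with $\cal W_\g$ on ground preterms, the observation that shapes and non-recursive side conditions survive substitution, and the componentwise reduction of $\gtgtsimlex_\lkbm$ and $\gtgtapproxlex_\lkbm$ to $\gtgtlex_\lkbg$---is exactly the paper's approach.

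The one place you diverge is rule~\ref{itm:nslkbm-var}. The paper does not prove your monotonicity fact from scratch; instead it replaces one De~Bruijn occurrence at a time, observing that each such occurrence sits at an orange position, invokes Theorem~\ref{thm:lkbg-compat-orange-contexts} directly, and chains the steps together by transitivity of $\GreaterEQ_\lkbg$. Your direct structural induction would also work, but it duplicates the content of that theorem.

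Your anticipated obstacle---a substituted De~Bruijn index landing inside a parameter---does not arise. The syntax (Sect.~\ref{ssec:terms}) requires that in $\cst{f}\langle\bar\upsilon\rangle(\bar u)$ all De~Bruijn indices in $\bar u$ be bound \emph{within} $\bar u$; parameters are therefore closed preterms and cannot contain an index referring to an outer $\lambda$. Hence every occurrence of the relevant De~Bruijn index in the body of $y\theta$ is already at an orange position, and Theorem~\ref{thm:lkbg-compat-orange-contexts} applies off the shelf. With that observation your monotonicity lemma becomes unnecessary, and the paper's shorter route goes through.
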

\end{sloppypar}

\begin{proof}
We prove both claims by
mutual induction on the shape of the derivation of $t \Greater_\lkbm s$
and $t \Greatersim_\lkbm s$.
Let $A = \POLY{\theta}$.

First,
we make the following observation:
For all tuples of preterms $\bar t$ and $\bar s$ covered by the induction hypothesis,
\begin{itemize}
\item
$\bar t \gtgtsimlex_\lkbm \bar s$ implies $\bar t\theta \gtgtlex_\lkbg \bar s\theta$, and
\medskip
\item
$\bar t \gtgtapproxlex_\lkbm \bar s$ implies 
$\bar t\theta \gtgtlex_\lkbg \bar s\theta$ or
$\bar t\theta = \bar s\theta$.
\end{itemize}
This follows from the induction hypothesis and the definitions of the lexicographic extensions (Definitions~\ref{def:lex-extension}, \ref{def:strict-lex-extension}, and \ref{def:nonstrict-lex-extension})
by induction on the length of the tuples.

With this observation, we prove the two claims of this theorem as follows.
For the first claim, we
make a case distinction on the rule deriving $t \Greater_\lkbm s$:
\medskip

\noindent
\textsc{Rule \ref{itm:lkbm-wt}:}\enskip
From $\cal W_\m(t) > \cal W_\m(s)$, we have $\cal W_\m(t){\big|}_A > \cal W_\m(s){\big|}_A$,
and by Lemma~\ref{lem:weight-subst}, we get
$\cal W_\m(t\theta) > \cal W_\m(s\theta)$.
By definition of $W_\g$ and $W_\m$, they coincide on ground preterms,
and thus $\cal W_\g(t\theta) > \cal W_\g(s\theta)$.
So, rule \ref{itm:lkbg-wt} of $\Greater_\lkbg$ applies.

\medskip

\noindent
\textsc{Rules \ref{itm:lkbm-lam}, \ref{itm:lkbm-db}, \ref{itm:lkbm-sym}:}\enskip
We have $\cal W_\m(t) \ge \cal W_\m(s)$. If $\cal W_\m(t){\big|}_A > \cal W_\m(s){\big|}_A$,
rule~\ref{itm:lkbg-wt} applies as above. Otherwise, $\cal W_\m(t){\big|}_A = \cal
W(s){\big|}_A$, and the corresponding rule \ref{itm:lkbg-lam}, \ref{itm:lkbg-db},
or \ref{itm:lkbg-sym} applies. The only mismatches between the two definitions are
the use of $\Greater_\lkbm$ versus $\Greater_\lkbg$ and
$\gtgtsimlex_\lkbm$ versus $\gtgtlex_\lkbg$.
These are repaired by the induction
hypothesis and our observation above.

For the second claim, we
make a case distinction on the rule deriving $t \Greater_\lkbm s$:

\medskip

\noindent
\textsc{Rule \ref{itm:nslkbm-wt}:}\enskip
As for $\Greater_\lkbm$ above.

\medskip

\noindent
\textsc{Rule \ref{itm:nslkbm-var}:}\enskip
We will focus on the case where the argument lists $\bar t$ and $\bar s$ have
length 1 and the corresponding De Bruijn index in $y\theta$ occurs exactly once.
The same line of reasoning can be repeated for further arguments or further De
Bruijn index occurrences by appealing to the transitivity of $\Greatereq_\lkbg$.

Since $t_1$ is of nonfunctional type, $(y\> t_1)\theta$ and $(y\>
s_1)\theta$ must be of the forms $t' = \orangesubterm{u}{
t_1\theta{\uparrow}^k}$ and $s' = \orangesubterm{u}{
s_1\theta{\uparrow}^k}$, respectively, where $k$ is the context's depth.
We have $t_1 \Greatersim_\llpm s_1$ by the rule's condition.
By the induction
hypothesis, $t_1\theta \allowbreak\GreaterEQ_\lkbg s_1\theta$.
Thus, either $t' = s'$
or, by Theorem~\ref{thm:lkbg-compat-orange-contexts}, $t' \Greater_\lkbg s'$.

\medskip

\noindent
\textsc{Rules \ref{itm:nslkbm-lam}, \ref{itm:nslkbm-db}, \ref{itm:nslkbm-sym}:}\enskip
We have $\cal W_\m(t) \ge \cal W_\m(s)$. If $\cal W_\m(t){\big|}_A > \cal W_\m(s){\big|}_A$,
rule~\ref{itm:lkbg-wt} applies as above. Otherwise, $\cal W_\m(t){\big|}_A =
\cal W_\m(s){\big|}_A$. If $t\theta = s\theta$, there is nothing to prove.
Otherwise, the corresponding rule \ref{itm:lkbg-lam},
\ref{itm:lkbg-db}, or \ref{itm:lkbg-sym} applies. The rest of the proof is as
for $\Greater_\lkbm$ above.
\end{proof}

The converse of Theorem~\ref{thm:lkbm-grounding-subst-stable} does not hold.
However, it does hold on ground preterms:

\begin{thm}
\label{thm:lkbm-coincide-ground}
The relation $\Greater_\lkbm$ coincides with $\Greater_\lkbg$ on ground preterms.
\end{thm}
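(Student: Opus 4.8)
The plan is to prove the two inclusions separately, exploiting that on a ground preterm no term variables occur. The preliminary observation is that for a ground preterm $u$ the polynomial $\cal W_\m(u)$ contains no indeterminates and is literally the ordinal $\cal W_\g(u)$ (an immediate induction on the definition of $\cal W_\m$, the variable case being vacuous on ground preterms). Consequently, for ground $t$ and $s$, the polynomial comparisons $\cal W_\m(t) > \cal W_\m(s)$ and $\cal W_\m(t) \ge \cal W_\m(s)$ both reduce to the corresponding ordinal comparisons of $\cal W_\g(t)$ and $\cal W_\g(s)$.

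For the inclusion $\Greater_\lkbm \mathbin\subseteq \Greater_\lkbg$ on ground preterms, I would simply invoke Theorem~\ref{thm:lkbm-grounding-subst-stable}: fixing any grounding substitution $\theta$ and using that $t$ and $s$ contain no variables, so $t\theta = t$ and $s\theta = s$, the assumption $t \Greater_\lkbm s$ at once gives $t \Greater_\lkbg s$.

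For the converse inclusion $\Greater_\lkbg \mathbin\subseteq \Greater_\lkbm$ on ground preterms, I would argue by induction on the derivation of $t \Greater_\lkbg s$, performing a case analysis on the last rule. The rules of $\Greater_\lkbg$ stand in an evident one-to-one correspondence with rules \ref{itm:lkbm-wt}--\ref{itm:lkbm-sym} of $\Greater_\lkbm$; the only differences are that $\Greater_\lkbm$ weakens the weight equality of rules \ref{itm:lkbg-lam}--\ref{itm:lkbg-sym} to an inequality and replaces $\gtgtlex_\lkbg$ by $\gtgtsimlex_\lkbm$. The weight side conditions transfer by the preliminary observation (a strict inequality stays strict, and an equality is in particular the $\ge$ that $\Greater_\lkbm$ requires). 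Every recursive premise $t' \Greater_\lkbg s'$ is turned into $t' \Greater_\lkbm s'$ by the induction hypothesis. The one point meriting care is translating a premise of the form $\bar t \gtgtlex_\lkbg \bar s$ into $\bar t \gtgtsimlex_\lkbm \bar s$: at the first position where the tuples differ, the induction hypothesis converts $t_i \Greater_\lkbg s_i$ into $t_i \Greater_\lkbm s_i$, and at each earlier position the syntactic equality $t_j = s_j$ yields $t_j \Greatersim_\lkbm s_j$ by Lemma~\ref{lem:nslkbm-reflexive}; this is exactly the pattern required by Definition~\ref{def:strict-lex-extension}. Hence the matching rule of $\Greater_\lkbm$ fires.

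I do not expect a genuine obstacle: the argument is essentially bookkeeping. The only things to be careful about are that the $\ge$/$\gtgtsimlex_\lkbm$ relaxations in the monomorphic rules are genuinely implied by the sharper $=$/$\gtgtlex_\lkbg$ conditions of the ground rules, and that the recursion under $\lambda$ (rule \ref{itm:lkbg-lam-body}) operates on preterms that may carry leaking De Bruijn indices but still no term variables, so both the preliminary observation and the induction hypothesis apply to them unchanged.
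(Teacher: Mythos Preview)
Your proposal is correct and follows essentially the same approach as the paper: one direction via Theorem~\ref{thm:lkbm-grounding-subst-stable} with the identity substitution, the other by induction on the $\Greater_\lkbg$ derivation, matching each rule to its $\Greater_\lkbm$ counterpart, using that $\cal W_\m$ and $\cal W_\g$ coincide on ground preterms, and invoking Lemma~\ref{lem:nslkbm-reflexive} to bridge $\gtgtlex_\lkbg$ and $\gtgtsimlex_\lkbm$. Your write-up is in fact more detailed than the paper's on the lexicographic step and on the leaking-De-Bruijn subtlety, but the underlying argument is the same.
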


\begin{proof}
One direction of the equivalence follows by
Theorem~\ref{thm:lkbm-grounding-subst-stable}. It remains to show that $t \Greater_\lkbg
s$ implies $t \Greater_\lkbm s$. The proof is by induction on the definition of
$\Greater_\lkbg$. It is easy to see that to every case in the definition of $\Greater_\lkbg$
corresponds a case in the definition of $\Greater_\lkbm$. As for the weights, $\cal
W_\g$ and $\cal W_\m$ coincide. In particular, for a ground preterm, the polynomial
returned by $\cal W_\m$ contains no indeterminates.
To account for the mismatch between $\gtgtlex_\lkbg$ and $\gtgtsimlex_\lkbm$,
we apply Lemma~\ref{lem:nslkbm-reflexive}.
\end{proof}

\begin{thm}
\label{thm:nslkbm-coincide-ground}
The relation $\Greatersim_\lkbm$ coincides with $\GreaterEQ_\lkbg$ on ground preterms.
\end{thm}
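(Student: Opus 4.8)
The plan is to derive this entirely from results already proved, treating it as a corollary of the coincidence result for the strict relations (Theorem~\ref{thm:lkbm-coincide-ground}), the grounding-stability theorem (Theorem~\ref{thm:lkbm-grounding-subst-stable}), and the two small lemmas about $\Greatersim_\lkbm$ (Lemmas~\ref{lem:nslkbm-reflexive} and~\ref{lem:lkbm-implies-nslkbm}).

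For the forward direction, I would assume $t \Greatersim_\lkbm s$ with $t$ and $s$ ground. Since $X$ is infinite, some grounding substitution $\theta$ exists, and because $t$ and $s$ contain no variables we have $t\theta = t$ and $s\theta = s$. Applying the second statement of Theorem~\ref{thm:lkbm-grounding-subst-stable} then gives $t\theta \GreaterEQ_\lkbg s\theta$, i.e.\ $t \GreaterEQ_\lkbg s$.

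For the backward direction, I would assume $t \GreaterEQ_\lkbg s$ with $t$ and $s$ ground, and split on the definition of the reflexive closure. If $t = s$, then $t \Greatersim_\lkbm t$ by Lemma~\ref{lem:nslkbm-reflexive}, so $t \Greatersim_\lkbm s$. If instead $t \Greater_\lkbg s$, then Theorem~\ref{thm:lkbm-coincide-ground} yields $t \Greater_\lkbm s$, and Lemma~\ref{lem:lkbm-implies-nslkbm} upgrades this to $t \Greatersim_\lkbm s$.

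I do not anticipate any real obstacle here: both directions are immediate once the earlier machinery is in place. The only point that needs a moment's care is the observation that a grounding substitution is the identity on ground preterms, which is what licenses the appeal to Theorem~\ref{thm:lkbm-grounding-subst-stable} in the forward direction; everything else is a direct invocation of a prior lemma or theorem.
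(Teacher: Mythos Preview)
Your proposal is correct and matches the paper's proof essentially step for step: the forward direction via Theorem~\ref{thm:lkbm-grounding-subst-stable}, and the backward direction by case-splitting $\GreaterEQ_\lkbg$ into the reflexive case (handled by Lemma~\ref{lem:nslkbm-reflexive}) and the strict case (handled by Theorem~\ref{thm:lkbm-coincide-ground} followed by Lemma~\ref{lem:lkbm-implies-nslkbm}). Your remark that a grounding substitution acts as the identity on ground preterms is a useful clarification that the paper leaves implicit.
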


\begin{proof}
One direction of the equivalence follows by
Theorem~\ref{thm:lkbm-grounding-subst-stable}. It remains to show that $t
\Greater_\lkbg s$ implies $t \Greatersim_\lkbm s$. If $t = s$, Lemma~\ref{lem:nslkbm-reflexive} applies. Otherwise, we appeal to
Theorem~\ref{thm:lkbm-coincide-ground} to obtain $t \Greater_\lkbm s$.
By Lemma~\ref{lem:lkbm-implies-nslkbm}, this implies $t
\Greatersim_\lkbm s$.
\end{proof}

\begin{lem}\label{lem:nslkbm-weights}
  If $t \Greatersim_\lkbm s$, then $\cal W_\m(t) \ge \cal W_\m(s)$.
\end{lem}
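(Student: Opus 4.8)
The statement is that $t \Greatersim_\lkbm s$ implies $\cal W_\m(t) \ge \cal W_\m(s)$ as polynomials, i.e.\ for every assignment. The natural approach is induction on the derivation of $t \Greatersim_\lkbm s$, examining each of the five rules defining $\Greatersim_\lkbm$. For rule~\ref{itm:nslkbm-wt} the conclusion is immediate since $\cal W_\m(t) > \cal W_\m(s)$ already gives $\cal W_\m(t) \ge \cal W_\m(s)$. For rules~\ref{itm:nslkbm-lam}, \ref{itm:nslkbm-db}, and \ref{itm:nslkbm-sym}, the guard $\cal W_\m(t) \ge \cal W_\m(s)$ is part of the rule's premise, so there is nothing further to prove in those cases. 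Hence the only genuinely interesting case is rule~\ref{itm:nslkbm-var}.

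In the case of rule~\ref{itm:nslkbm-var}, we have $t = y\>\bar t$, $s = y\>\bar s$ with $|\bar t| = |\bar s|$, and for every $i$ the argument $t_i$ is steady and $t_i \Greatersim_\lkbm s_i$. I would first argue that $s_i$ is steady as well: since $t_i$ and $s_i$ occupy the same argument position of the same variable $y$, they have the same type, and steadiness depends only on the type. Therefore $\bar t = \bar t\RIGHT$ and $\bar s = \bar s\RIGHT$ (the whole argument lists are steady, so the $\LEFT$ parts are empty), and by the definition of $\NORMonly$ the subscripts $\NORM{\bar t\LEFT}$ and $\NORM{\bar s\LEFT}$ are both the empty list. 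Consequently
\[
\cal W_\m(y\>\bar t) = 1 + \mathbf{w}_y + \sum\nolimits_{i} \mathbf{k}_{y,i}\,(\cal W_\m(t_i) - \cal w_\db),
\qquad
\cal W_\m(y\>\bar s) = 1 + \mathbf{w}_y + \sum\nolimits_{i} \mathbf{k}_{y,i}\,(\cal W_\m(s_i) - \cal w_\db),
\]
so it suffices to show $\cal W_\m(t_i) \ge \cal W_\m(s_i)$ for each $i$; the indeterminates $\mathbf{k}_{y,i}$ range over nonnegative values, so multiplying a nonnegative difference $\cal W_\m(t_i) - \cal W_\m(s_i)$ by them preserves the inequality when summed on both sides. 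Each inequality $\cal W_\m(t_i) \ge \cal W_\m(s_i)$ follows from $t_i \Greatersim_\lkbm s_i$ by the induction hypothesis.

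The main obstacle, and the point requiring care, is the reasoning around the subscripts of the $\mathbf{w}$ and $\mathbf{k}$ indeterminates: one must be sure that the steadiness of $t_i$ forces the two weight polynomials to have exactly the same shape (same $\mathbf{w}_y$, same family $\mathbf{k}_{y,i}$), so that the comparison genuinely reduces to comparing the arguments. This hinges on the observations that (i) $t_i$ steady $\iff$ $s_i$ steady (same type), and (ii) when all arguments are steady, the subscript lists $\bar t\LEFT$ and $\bar s\LEFT$ are empty, so normalization is irrelevant to the subscript. Once that is established, the remaining arithmetic—monotonicity of $\sum \mathbf{k}_{y,i}(\cdot - \cal w_\db)$ in each $\cal W_\m(t_i)$, using that $\mathbf{k}$ ranges over $\Ord_{\ge 0}$—is routine, and I would state it in one line rather than expand it.
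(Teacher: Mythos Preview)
Your proposal is correct and follows essentially the same approach as the paper: both case-split on the rule used to derive $t \Greatersim_\lkbm s$, observe that only rule~\ref{itm:nslkbm-var} requires work, and handle that case by noting that all $t_i$ (hence all $s_i$) are steady so that the weight polynomials share the same $\mathbf{w}_y$ and $\mathbf{k}_{y,i}$ indeterminates and the inequality reduces to $\cal W_\m(t_i) \ge \cal W_\m(s_i)$ via the induction hypothesis. Your explicit justification that $s_i$ is steady (same type as $t_i$) and hence $\bar t\LEFT = \bar s\LEFT = ()$ is a point the paper leaves implicit; otherwise the arguments are identical, with the paper phrasing the induction as structural induction on $t$ rather than on the derivation.
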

\begin{proof}
  We proceed by structural induction on $t$.

  For all rules except rule~\ref{itm:nslkbm-var}, the claim is obvious.
  If $t \Greatersim_\lkbm s$ was derived by rule~\ref{itm:nslkbm-var},
  then $t$ is of the form $y\> \bar t$ and $s$ is of the form $y\> \bar s$,
  and for every $i$, $t_i$ is steady and $t_i \Greatersim_\lkbm s_i$.
  By the induction hypothesis, $\cal W_\m(t_i) \ge \cal W_\m(s_i)$ for every $i$.
  Let $\upsilon_i$ be the type of $t_i$, which is also the type of $s_i$.
  Let $n$ be the length of $\bar t$, which is also the length of $\bar s$.
  Then,
  \begin{align*}
    \cal W_\m(t) &= 1 + \mathbf{w}_{y}
      + \sum\nolimits_{i=1}^{n}
        \mathbf{k}_{y,i} (\cal W_\m({t}_i) - \cal w_\db) \\
    &\ge 1 + \mathbf{w}_{y}
      + \sum\nolimits_{i=1}^{n}
        \mathbf{k}_{y,i} (\cal W_\m({s}_i) - \cal w_\db) \\
    &= \cal W_\m(s)
  \end{align*}
\end{proof}

\begin{thm}\label{thm:lkbm-transitive}
  If $u \Greatersim_\lkbm t$ and $t \Greatersim_\lkbm s$, then $u \Greatersim_\lkbm s$.
  If in addition $u \Greater_\lkbm t$ or $t \Greater_\lkbm s$, then even $u \Greater_\lkbm s$.
\end{thm}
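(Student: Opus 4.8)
The plan is to prove the two statements simultaneously, by well-founded induction on the combined size $|u| + |t| + |s|$; one cannot simply reduce to the ground order via Theorem~\ref{thm:lkbm-grounding-subst-stable}, since its converse fails, so the argument stays at the nonground level. Write $u \mathrel{R_1} t$ and $t \mathrel{R_2} s$ for the hypotheses, with each $R_i \in \{\Greater_\lkbm, \Greatersim_\lkbm\}$, and let $R$ be $\Greater_\lkbm$ if at least one $R_i$ is $\Greater_\lkbm$ and $\Greatersim_\lkbm$ otherwise; the goal is $u \mathrel{R} s$. I would first record, from Lemmas~\ref{lem:lkbm-implies-nslkbm} and~\ref{lem:nslkbm-weights} and inspection of the rules, that every valid comparison $a \mathrel{R_i} b$ satisfies $\cal W_\m(a) \ge \cal W_\m(b)$, with $\cal W_\m(a) > \cal W_\m(b)$ if it was derived by the weight rule~\ref{itm:lkbm-wt} or~\ref{itm:nslkbm-wt}. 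Hence $\cal W_\m(u) \ge \cal W_\m(t) \ge \cal W_\m(s)$ always, and $\cal W_\m(u) > \cal W_\m(s)$ whenever either step used a weight rule.

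If $\cal W_\m(u) > \cal W_\m(s)$, then $u \Greater_\lkbm s$ by rule~\ref{itm:lkbm-wt}, which settles both claims (the strict conclusion subsuming the nonstrict one by Lemma~\ref{lem:lkbm-implies-nslkbm}). So assume $\cal W_\m(u) \not> \cal W_\m(s)$; then, by the observation above, neither step used a weight rule, so $R_1$ and $R_2$ come from the ``structural'' rules, and the side condition $\cal W_\m(u) \ge \cal W_\m(s)$ that each of those rules demands of the conclusion (where it occurs) is already available. It remains to assemble $u \mathrel{R} s$ from the structural content of the two derivations, which I would organize according to the head of the middle preterm $t$.

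If $t$ is headed by a variable, the only applicable structural rule is~\ref{itm:nslkbm-var}, so $R_1 = R_2 = \Greatersim_\lkbm$ (hence $R = \Greatersim_\lkbm$), all three preterms have the same head variable with argument lists of equal length, and the arguments of $u$ are steady; applying the induction hypothesis argument-wise and then rule~\ref{itm:nslkbm-var} again yields $u \Greatersim_\lkbm s$. Otherwise $t$ is headed by a $\lambda$, a De Bruijn index, or a symbol, and inspection of the rules restricts the head of $u$ correspondingly: if $t$ is $\lambda$-headed then so is $u$; if $t$ is headed by a De Bruijn index then $u$ is headed by a $\lambda$ or a De Bruijn index; if $t$ is symbol-headed then $u$ is headed by a $\lambda$, a De Bruijn index, or a symbol. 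For each resulting pair of rules, if one of the steps uses an ``-other'' rule, which forces the head of $s$ to lie strictly below that of $u$ in the hierarchy $\lambda > \text{De Bruijn index} > \text{symbol}$, then the matching ``-other'' rule (or ``-diff'' rule, for symbols) applies directly to $u$ and $s$; otherwise $u$, $t$, and $s$ share their head kind (and head symbol, in the symbol case) and one chains the within-rule comparisons: type arguments via transitivity of $\Greater_\ty$ and $\gtgtlex_\ty$ (the first-order KBO is a strict partial order), De Bruijn indices via transitivity of $>$ on $\Nat$, head symbols via transitivity of the precedence $\Succ$, and body or argument subterms via the induction hypothesis. For the argument-tuple rules~\ref{itm:lkbm-db-args} and~\ref{itm:lkbm-sym-args} and their nonstrict counterparts, the chaining additionally needs a small lemma stating that $\gtgtsimlex_\lkbm$ and $\gtgtapproxlex_\lkbm$ compose as expected provided $\Greater_\lkbm$ and $\Greatersim_\lkbm$ do on the tuple entries; this is proved by an inner induction on tuple length, invoking the main induction hypothesis on the entries, which are proper subterms and so of strictly smaller size. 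Strictness is tracked throughout: the conclusion must be strict exactly when at least one $R_i$ is, and since the ``-typ'', ``-diff'', and ``-other'' rules are strict, only combinations that go through a ``-body'' or ``-args'' rule require attention, and there strictness is inherited from the induction hypothesis.

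The main obstacle is the combinatorial bookkeeping rather than any single case: for every admissible pair of structural rules one must check that the conclusion rule's weight side condition holds (handled uniformly above), that its strictness matches the inputs, and that each appeal to the induction hypothesis is on a strictly smaller triple. The last point relies on the observation that the parameters $\bar w$ of a symbol $\cst{g}\langle\bar\upsilon\rangle(\bar w)\>\bar t$, although not subterms, still contribute to $|\cdot|$, so the entries of $(\bar w, \bar t)$ are smaller than $u$. One should also note that rule~\ref{itm:lkbm-db-args} is applied only to preterms whose leaking De Bruijn indices agree in index and type along the chain $u$, $t$, $s$, which is exactly the invariant discussed after Definition~\ref{def:lkbg}.
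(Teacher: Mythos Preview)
Your proposal is correct and follows essentially the same approach as the paper: well-founded induction on term size, dispatching the weight case first and then a case analysis on the structural rules, chaining subterm comparisons via the induction hypothesis and tuple comparisons via an inner induction on tuple length. The differences are cosmetic: you induct on the sum $|u|+|t|+|s|$ where the paper uses the multiset $\{|u|,|t|,|s|\}$, and you organize the case split by the head of the middle term $t$ where the paper organizes it by the rule deriving $u \Greatersim_\lkbm t$; both work equally well here since every recursive call shrinks all three components simultaneously. Your handling of the weight side condition is in fact slightly more careful than the paper's, which loosely writes $\cal W_\m(u) = \cal W_\m(t) = \cal W_\m(s)$ after excluding rule~\ref{itm:nslkbm-wt}, whereas you correctly note that only $\cal W_\m(u) \ge \cal W_\m(s)$ is needed and available.
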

\begin{proof}
We proceed by well-founded induction on the multiset $\{|u|, |t|, |s|\}$.

If $u \Greatersim_\lkbm t$ or $t \Greatersim_\lkbm s$ was derived by rule~\ref{itm:nslkbm-wt},
then rule~\ref{itm:lkbm-wt} yields $u \Greater_\lkbm s$ by transitivity of weight comparison
and Lemma~\ref{lem:nslkbm-weights}.
So, for the remainder of this proof, we may assume that $\cal W_\m(u) = \cal W_\m(t) = \cal W_\m(s)$.

If $u \Greatersim_\lkbm t$ was derived by rule~\ref{itm:nslkbm-var},
then $t \Greatersim_\lkbm s$ must have been derived by rule~\ref{itm:nslkbm-var}, too.
Then rule~\ref{itm:nslkbm-var} also yields $u \Greatersim_\lkbm s$
by the induction hypothesis.
Since all three preterms $u,s,t$ are headed by variables, neither
$u \Greater_\lkbm t$ nor $t \Greater_\lkbm s$ hold, and thus we need not prove
$u \Greater_\lkbm s$.

If $u \Greatersim_\lkbm t$ was derived by rule~\ref{itm:nslkbm-lam-typ} or \ref{itm:nslkbm-lam-body},
then $t \Greatersim_\lkbm s$ must be derived by rule~\ref{itm:nslkbm-lam}.
If $t \Greatersim_\lkbm s$ was derived by rule~\ref{itm:nslkbm-lam-typ} or \ref{itm:nslkbm-lam-body}
as well, then rule~\ref{itm:nslkbm-lam-typ} or \ref{itm:nslkbm-lam-body}
also yield $u \Greatersim_\lkbm s$ by the induction hypothesis and
by transitivity of 
the strict first-order KBO $\Greater_\ty$ on $\Tm{\Sigma_\ty}{\emptyset}$.
If in addition $u \Greater_\lkbm t$ or $t \Greater_\lkbm s$,
then this must be by rule~\ref{itm:lkbm-lam-typ} or \ref{itm:lkbm-lam-body}.
Then rule~\ref{itm:lkbm-lam-typ} or \ref{itm:lkbm-lam-body}
yield $u \Greater_\lkbm s$ by the induction hypothesis and by transitivity of $\Greater_\ty$.
If $t \Greatersim_\lkbm s$ was derived by rule~\ref{itm:nslkbm-lam-other},
then rule~\ref{itm:lkbm-lam-other} yields  $u \Greater_\lkbm s$.

If $u \Greatersim_\lkbm t$ was derived by rule~\ref{itm:nslkbm-lam-other},
then $t \Greatersim_\lkbm s$ must be derived by rule~\ref{itm:nslkbm-db} or \ref{itm:nslkbm-sym}.
Then \ref{itm:lkbm-lam-other} yields $u \Greater_\lkbm s$.

If $u \Greatersim_\lkbm t$ was derived by rule~\ref{itm:nslkbm-db-diff} or~\ref{itm:nslkbm-db-args},
then $t \Greatersim_\lkbm s$ must be derived by rule~\ref{itm:nslkbm-db}.
If $t \Greatersim_\lkbm s$ was derived by rule~\ref{itm:nslkbm-db-diff} or~\ref{itm:nslkbm-db-args}
as well, then rule~\ref{itm:nslkbm-db-diff} or~\ref{itm:nslkbm-db-args}
also yield $u \Greatersim_\lkbm s$
by transitivity of $>$ on natural numbers
and by the induction hypothesis, which implies transitivity of $\gtgtapproxlex_\lkbm$ on the relevant preterms.
If in addition $u \Greater_\lkbm t$ or $t \Greater_\lkbm s$,
then this must be by rule~\ref{itm:lkbm-db-diff} or~\ref{itm:lkbm-db-args}.
Then rule~\ref{itm:lkbm-db-diff} or~\ref{itm:lkbm-db-args}
yield $u \Greater_\lkbm s$ by
by transitivity of $>$ on natural numbers and the induction hypothesis.
If $t \Greatersim_\lkbm s$ was derived by rule~\ref{itm:nslkbm-db-other},
then rule~\ref{itm:lkbm-db-other} yields $u \Greater_\lkbm s$.

If $u \Greatersim_\lkbm t$ was derived by rule~\ref{itm:nslkbm-db-other},
then $t \Greatersim_\lkbm s$ must be derived by rule~\ref{itm:nslkbm-sym}.
Then rule~\ref{itm:lkbm-db-other} yields $u \Greater_\lkbm s$.

If $u \Greatersim_\lkbm t$ was derived by rule~\ref{itm:nslkbm-sym},
then $t \Greatersim_\lkbm s$ must be derived by rule~\ref{itm:nslkbm-sym}, too.
Then rule~\ref{itm:nslkbm-sym} also yields $u \Greatersim_\lkbm s$ by transitivity
of the precedence $\Succ$,
by transitivity of $\Greater_\ty$ and its lexicographic extension,
and by the induction hypothesis, which implies transitivity of $\gtgtapproxlex_\lkbm$ on the relevant preterms.
If in addition $u \Greater_\lkbm t$ or $t \Greater_\lkbm s$,
then this must be by rule~\ref{itm:lkbm-sym}.
Then rule~\ref{itm:lkbm-sym}
yields $u \Greater_\lkbm s$ by transitivity
of the precedence $\Succ$,
by transitivity of $\Greater_\ty$ and its lexicographic extension,
and by the induction hypothesis.
\end{proof}

\begin{thm}\label{thm:lkbm-variable-guarantee}
Let $t \Greater_\lkbm s$.
Let $\theta$ be a substitution such that all variables in
$t\theta$ and $s\theta$ are nonfunctional.
Let $s\theta$ contain a nonfunctional variable $x$
outside of parameters.
Then $t\theta$ must also contain $x$ outside of parameters.
\end{thm}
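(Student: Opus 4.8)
The plan is to reduce the claim to a comparison of weight polynomials, so that it becomes essentially the classical KBO variable condition read off from $\cal W_\m$. First, $t \Greater_\lkbm s$ entails $\cal W_\m(t) \ge \cal W_\m(s)$: rule~\ref{itm:lkbm-wt} gives this outright, and rules \ref{itm:lkbm-lam}--\ref{itm:lkbm-sym} carry it as an explicit side condition (equivalently, chain Lemma~\ref{lem:lkbm-implies-nslkbm} with Lemma~\ref{lem:nslkbm-weights}). Next, restricting $\theta$ to the variables occurring in $t$ and $s$ (which changes neither $t\theta$ nor $s\theta$), the hypothesis makes $\theta$ map to terms all of whose variables are nonfunctional, so Lemma~\ref{lem:weight-subst} gives $\cal W_\m(t\theta) = \cal W_\m(t){\big|}_{\POLY{\theta}}$ and $\cal W_\m(s\theta) = \cal W_\m(s){\big|}_{\POLY{\theta}}$. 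Here $\POLY{\theta}$ acts as a polynomial substitution, its images being weight polynomials over the $\mathbf{w}$-indeterminates of the surviving nonfunctional variables. Since $w' \ge w$ on polynomials is preserved under any polynomial substitution $\sigma$ -- for every assignment $A$ one has $(w{\big|}_\sigma){\big|}_A = w{\big|}_{A\circ\sigma}$, with $A\circ\sigma$ again an assignment -- we obtain $\cal W_\m(t\theta) \ge \cal W_\m(s\theta)$ as polynomials. (I prefer this route to grounding via Theorem~\ref{thm:lkbm-grounding-subst-stable}, which would require ground terms of $x$'s type of unbounded weight and these need not exist; the indeterminate $\mathbf{w}_x$ provably ranges over unbounded ordinals, which is all that is needed.)

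The crux is the following sub-claim, proved by structural induction on a preterm $r$ all of whose variables are nonfunctional: if $x$ occurs in $r$ outside parameters then $\cal W_\m(r) \ge \mathbf{w}_x$, and if $x$ does not occur in $r$ outside parameters then the indeterminate $\mathbf{w}_x$ does not occur in the polynomial $\cal W_\m(r)$. The key structural fact is that a nonfunctional variable has neither function type nor variable type, so in $\eta$-long normal form it occurs only as an unapplied leaf; hence $r$ contains no applied-variable preterm $y\>\bar t$ with $\bar t$ nonempty, so $\cal W_\m(r)$ contains no $\mathbf{k}_{y\>\bar t,i}$ indeterminate and, crucially, no occurrence of $x$ ever sits under a coefficient that an assignment could set to $0$. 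In the induction, the leaf case $r = x$ gives $\cal W_\m(r) = 1 + \mathbf{w}_x \ge \mathbf{w}_x$; in the cases $r = \cst{f}(\bar u)\>\bar t$, $r = \DB{m}\>\bar t$, and $r = \lambda\> t'$, an occurrence of $x$ outside parameters lies in one of the non-parameter immediate sub-preterms (a curried argument $t_i$, or the body $t'$, never in the parameters $\bar u$, which contribute nothing to $\cal W_\m$), and $\cal W_\m(r) \ge \cal k(\cst{f},i)\,\cal W_\m(t_i) \ge 1\cdot\mathbf{w}_x = \mathbf{w}_x$ by the induction hypothesis (with $\cal k(\cst{f},i)$ replaced by $1$ in the $\DB{m}$ and $\lambda$ cases), using the monotonicity of ordinal addition and of left-multiplication by an ordinal together with $\cal k(\cst{f},i) \ge 1$. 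The ``does not occur'' half is the parallel bookkeeping that $\cal W_\m(r)$ mentions only the indeterminates $\mathbf{w}_z$ for variables $z$ occurring in $r$ outside parameters.

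To conclude, suppose for contradiction that $t\theta$ does not contain $x$ outside parameters. By the sub-claim, $\mathbf{w}_x$ does not occur in $\cal W_\m(t\theta)$; let $M = \cal W_\m(t\theta){\big|}_{A_0}$, where $A_0$ sends every indeterminate to $0$, so $\cal W_\m(t\theta){\big|}_A = M$ for every assignment $A$ agreeing with $A_0$ off $\mathbf{w}_x$. Since $s\theta$ does contain $x$ outside parameters, the sub-claim gives $\cal W_\m(s\theta) \ge \mathbf{w}_x$. Evaluating both sides at the assignment $A$ that sends $\mathbf{w}_x$ to $M+1$ and every other indeterminate to $0$ yields $\cal W_\m(s\theta){\big|}_A \ge M+1 > M = \cal W_\m(t\theta){\big|}_A$, contradicting $\cal W_\m(t\theta) \ge \cal W_\m(s\theta)$. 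Hence $t\theta$ contains $x$ outside parameters, as required. I expect the sub-claim to be the main obstacle, and within it the ordinal arithmetic: one must check that addition and left-multiplication behave monotonically enough that a single occurrence of $x$ outside parameters forces an unbounded $\mathbf{w}_x$ contribution -- which is exactly what the nonfunctionality hypothesis secures by excluding applied variables, whose coefficient indeterminates could otherwise be assigned $0$ and annihilate the occurrence.
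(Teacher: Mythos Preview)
Your proof is correct and follows essentially the same approach as the paper's: both derive $\cal W_\m(t\theta) \ge \cal W_\m(s\theta)$ from $\cal W_\m(t) \ge \cal W_\m(s)$ via Lemma~\ref{lem:weight-subst}, then argue that an occurrence of a nonfunctional variable $x$ outside parameters forces the indeterminate $\mathbf{w}_x$ to appear in the weight polynomial, which in turn forces $x$ to appear on the other side. The paper compresses your sub-claim and contradiction argument into the bare assertions that $\cal W_\m(s\theta)$ ``must contain $\mathbf{w}_x$ with a nonzero coefficient'' and that consequently $\cal W_\m(t\theta)$ must too; your structural induction and explicit assignment spell out exactly what the paper leaves implicit, and your observation that nonfunctional variables never occur applied (hence no $\mathbf{k}$-indeterminates can annihilate the contribution) is the key point underlying the paper's terse claim.
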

\begin{proof}
Since $t \Greater_\lkbm s$, we have $\cal W_\m(t) \geq \cal W_\m(s)$.
By Lemma~\ref{lem:weight-subst}, we have $\cal W_\m(t\theta) \geq \cal W_\m(s\theta)$.
By definition of $\cal W_\m$, 
since all variables in $s\theta$ are nonfunctional,
$W_\m(s\theta)$ must contain $\mathbf{w}_{x}$ with a nonzero coefficient.
Since $\cal W_\m(t\theta) \geq \cal W_\m(s\theta)$,
$t\theta$ must also contain $\mathbf{w}_{x}$ with a nonzero coefficient.
Therefore, $x$ must occur outside of parameters in $t\theta$.
\end{proof}

\subsection{\texorpdfstring{$\bm{\lambda}$LPO}{Lambda LPO}}
\label{ssec:lambda-lpo-monomorphic}

\begin{defi}
\label{def:llpm}
Let $\Succ^\ty$ be a precedence on $\Sigma_\ty$.
Let $\Greater_\ty$ be the strict first-order LPO on $\Tm{\Sigma_\ty}{\emptyset}$
induced by $\Succ^\ty$.
Let $\Succ$ be a precedence on $\Sigma$.
Let $\cst{ws} \in \Sigma$ be the watershed.

The \emph{strict monomorphic $\lambda$LPO} $\Greater_\llpm$ and the \emph{nonstrict
monomorphic $\lambda$LPO} $\Greatersim_\llpm$ induced by $\Succ^\ty, \Succ$ on
$\HTmp{\Sigma_\ty}{\Sigma}{\emptyset}{X}$ are defined by mutual induction. The strict
relation is defined so that $t \Greater_\llpm s$ if any of these conditions is met:

\begin{enumerate}
\item \label{itm:llpm-sym}
  $t$ is of the form $\cst{g}\langle\bar\upsilon\rangle(\bar w)\> \bar t_k$ and
  any of these conditions is met:

  \begin{enumerate}
  \item \label{itm:llpm-sym-sub}
    $t_i \Greatersim_\llpm s$ for some $i \in \{1,\dots,k\}$, or
  \item \label{itm:llpm-sym-diff}
    $s = \cst{f}(\bar u)\> \bar s$, $\cst{g} \Succ \cst{f}$, and
    $\cal{chkargs}(t, \bar s)$, or
  \item \label{itm:llpm-sym-types}
    $s = \cst{g}\langle\bar\tau\rangle(\bar u)\> \bar s$, $\bar\upsilon \gtgtlex_\ty \bar\tau$,
    and $\cal{chkargs}(t, \bar s)$, or
  \item \label{itm:llpm-sym-args}
    $s = \cst{g}\langle\bar\upsilon\rangle(\bar u)\> \bar s$,
    $(\bar w, \bar t) \gtgtsimlex_\llpm (\bar u, \bar s)$,
    and $\cal{chkargs}(t, \bar s)$, or
  \item \label{itm:llpm-sym-other}
    $\cst{g} \Succ \cst{ws}$ and
    $s$ is either of the form $\DB{m}\>\bar s$ and $\cal{chkargs}(t, \bar s)$
    or of the form $\lambda\> s'$ and $\cal{chkargs}(t, [s'])$;
  \end{enumerate}

\item \label{itm:llpm-db}
  $t$ is of the form $\DB{n}\> \bar t_k$ and any of these conditions is met:

  \begin{enumerate}
  \item \label{itm:llpm-db-sub}
    $t_i \Greatersim_\llpm s$ for some $i \in \{1,\dots,k\}$, or
  \item \label{itm:llpm-db-diff}
    $s = \DB{m}\> \bar s$, $n > m$, and $\cal{chkargs}(t, \bar s)$, or
  \item \label{itm:llpm-db-args}
    $s = \DB{n}\> \bar s$, $\bar t \gtgtsimlex_\llpm \bar s$, and
    $\cal{chkargs}(t, \bar s)$, or
  \item \label{itm:llpm-db-other}
    $s$ is of the form $\lambda\> s'$ and $\cal{chkargs}(t, [s'])$ or
    of the form or $\cst{f}(\bar u)\> \bar s$,
    where $\cst{f} \Preceq \cst{ws}$,
    and $\cal{chkargs}(t, \bar s)$;
  \end{enumerate}

\item \label{itm:llpm-lam}
  $t$ is of the form $\lambda\langle\upsilon\rangle\> t'$ and any of these
  conditions is met:

  \begin{enumerate}
  \item \label{itm:llpm-lam-sub}
    $t' \Greatersim_\llpm s$, or
  \item \label{itm:llpm-lam-types}
    $s = \lambda\langle\tau\rangle\> s'$, $\upsilon \Greater_\ty \tau$, and
    $\cal{chkargs}(t, [s'])$, or
  \item \label{itm:llpm-lam-bodies}
    $s = \lambda\langle\upsilon\rangle\> s'$ and $t' \Greater_\llpm s'$, or
  \item \label{itm:llpm-lam-other}
    $s$ is of the form $\cst{f}(\bar u)\> \bar s$,
    where $\cst{f} \Preceq \cst{ws}$,
    and $\cal{chkargs}(t, \bar s)$
  \end{enumerate}
\end{enumerate}
where $\cal{chkargs}(t, \bar s_k)$ if and only if $t \Greater_\llpm s_i$ for every $i \in
\{1,\dots,k\}$.
The nonstrict relation is defined so that $t \Greatersim_\llpm s$ if any of these
conditions is met:

\begin{enumerate}
\item \label{itm:nsllpm-var}
  $t$ is of the form $y\> \bar t$, $s$ is of the form $y\> \bar s$, and for
  every $i$, $t_i$ is steady and $t_i \Greatersim_\llpm s_i$;

\item \label{itm:nsllpm-sym}
  $t$ is of the form $\cst{g}\langle\bar\upsilon\rangle(\bar w)\> \bar t_k$ and
  any of these conditions is met:

  \begin{enumerate}
  \item \label{itm:nsllpm-sym-sub}
    $t_i \Greatersim_\llpm s$ for some $i \in \{1,\dots,k\}$, or
  \item \label{itm:nsllpm-sym-diff}
    $s = \cst{f}(\bar u)\> \bar s$, $\cst{g} \Succ \cst{f}$, and
    $\cal{chkargs}(t, \bar s)$, or
  \item \label{itm:nsllpm-sym-types}
    $s = \cst{g}\langle\bar\tau\rangle(\bar u)\> \bar s$, $\bar\upsilon \gtgtlex_\ty \bar\tau$,
    and $\cal{chkargs}(t, \bar s)$, or
  \item \label{itm:nsllpm-sym-args}
    $s = \cst{g}\langle\bar\upsilon\rangle(\bar u)\> \bar s$, $(\bar w, \bar t) \gtgtapproxlex_\llpm (\bar u, \bar s)$,
    and $\cal{chkargs}(t, \bar s)$, or
  \item \label{itm:nsllpm-sym-other}
    $\cst{g} \Succ \cst{ws}$ and
    $s$ is either of the form $\DB{m}\>\bar s$ and $\cal{chkargs}(t, \bar s)$
    or of the form $\lambda\> s'$ and $\cal{chkargs}(t, [s'])$;
  \end{enumerate}

\item \label{itm:nsllpm-db}
  $t$ is of the form $\DB{n}\> \bar t_k$ and any of these conditions is met:

  \begin{enumerate}
  \item \label{itm:nsllpm-db-sub}
    $t_i \Greatersim_\llpm s$ for some $i \in \{1,\dots,k\}$, or
  \item \label{itm:nsllpm-db-diff}
    $s = \DB{m}\> \bar s$, $n > m$, and $\cal{chkargs}(t, \bar s)$, or
  \item \label{itm:nsllpm-db-args}
    $s = \DB{n}\> \bar s$, $\bar t \gtgtapproxlex_\llpm \bar s$, and
    $\cal{chkargs}(t, \bar s)$, or
  \item \label{itm:nsllpm-db-other}
    $s$ is of the form $\lambda\> s'$ and $\cal{chkargs}(t, [s'])$ or
    of the form or $\cst{f}(\bar u)\> \bar s$,
    where $\cst{f} \Preceq \cst{ws}$,
    and $\cal{chkargs}(t, \bar s)$;
  \end{enumerate}

\item \label{itm:nsllpm-lam}
  $t$ is of the form $\lambda\langle\upsilon\rangle\> t'$ and any of these
  conditions is met:

  \begin{enumerate}
  \item \label{itm:nsllpm-lam-sub}
    $t' \Greatersim_\llpm s$, or
  \item \label{itm:nsllpm-lam-types}
    $s = \lambda\langle\tau\rangle\> s'$, $\upsilon \Greater_\ty \tau$, and
    $\cal{chkargs}(t, [s'])$, or
  \item \label{itm:nsllpm-lam-bodies}
    $s = \lambda\langle\upsilon\rangle\> s'$ and $t' \Greatersim_\llpm s'$, or
  \item \label{itm:nsllpm-lam-other}
    $s$ is of the form $\cst{f}(\bar u)\> \bar s$,
    where $\cst{f} \Preceq \cst{ws}$,
    and $\cal{chkargs}(t, \bar s)$
  \end{enumerate}
\end{enumerate}
where $\cal{chkargs}(t, \bar s_k)$ is defined as above.
\end{defi}

The only syntactic differences between the definitions of $\Greater_\llpg$ and $\Greater_\llpm$
are 
that $\Greater_\llpm$ uses $\Greatersim_\llpm$ instead of $\GreaterEQ_\llpg$
and $\gtgtsimlex_\llpm$ instead of $\gtgtlex_\llpg$.
Moreover, rule~\ref{itm:nsllpm-var} of $\Greatersim_\llpm$ is analogous to
rule~\ref{itm:nslkbm-var} in the definition of $\Greatersim_\lkbm$. As for rules
\ref{itm:nsllpm-sym}--\ref{itm:nsllpm-lam} of $\Greatersim_\llpm$, they are nearly
identical to the rules defining the strict orders $\Greater_\llpg$ and $\Greater_\llpm$.

Analogous theorems to those about $\Greater_\lkbm$ and $\Greatersim_\lkbm$ also hold about
$\Greater_\llpm$ and $\Greatersim_\llpm$.

\begin{lem}
\label{lem:nsllpm-reflexive}
$s \Greatersim_\llpm s$ for every monomorphic preterm $s$.
\end{lem}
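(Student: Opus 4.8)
The plan is to follow the proof of Lemma~\ref{lem:nslkbm-reflexive}: argue by structural induction on the monomorphic preterm $s$, splitting on the four mutually exclusive shapes a preterm may take, and in each case obtain $s \Greatersim_\llpm s$ from the ``bodies''/``arguments'' rule for that shape, discharging the side conditions from the induction hypothesis.

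Concretely: if $s = \lambda\langle\upsilon\rangle\> s'$, the induction hypothesis gives $s' \Greatersim_\llpm s'$, and rule~\ref{itm:nsllpm-lam-bodies} yields $s \Greatersim_\llpm s$ with no further obligation. If $s = \DB{n}\> \bar s_k$, I would apply rule~\ref{itm:nsllpm-db-args} with both sides equal to $s$; this requires $\bar s \gtgtapproxlex_\llpm \bar s$, which follows from the componentwise induction hypotheses $s_i \Greatersim_\llpm s_i$ together with Definition~\ref{def:nonstrict-lex-extension} by a side induction on the length of $\bar s$, and it requires $\cal{chkargs}(s, \bar s)$, i.e.\ $s \Greater_\llpm s_i$ for each $i$, which I would get from the subterm rule~\ref{itm:llpm-db-sub}: applied to $s = \DB{n}\> \bar s$, picking its $i$th argument and feeding in the induction hypothesis $s_i \Greatersim_\llpm s_i$, that rule delivers exactly $s \Greater_\llpm s_i$. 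The case $s = \cst{g}\langle\bar\upsilon\rangle(\bar w)\> \bar s_k$ is identical, with rule~\ref{itm:nsllpm-sym-args} in place of rule~\ref{itm:nsllpm-db-args} and rule~\ref{itm:llpm-sym-sub} in place of rule~\ref{itm:llpm-db-sub}; here the lexicographic comparison ranges over the combined tuple $(\bar w,\bar s)$, and the induction hypothesis is also used on the parameters $\bar w$, which are structurally smaller even though they do not count as subterms. Finally, when $s = y\> \bar s$ has a variable head I would invoke rule~\ref{itm:nsllpm-var}, the induction hypothesis again supplying $s_i \Greatersim_\llpm s_i$ for every $i$.

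I expect the bookkeeping around $\cal{chkargs}$ to be the only real point of care: unlike the $\lambda$KBO reflexivity lemma, where the ``arguments'' rules compare argument tuples with no extra condition, every $\lambda$LPO ``arguments'' rule here additionally insists that the left-hand term dominate every curried right-hand argument, so the argument is really a simultaneous induction producing both $s \Greatersim_\llpm s$ and the auxiliary strict facts $s \Greater_\llpm s_i$ via the subterm rules. A secondary thing to check is that rule~\ref{itm:nsllpm-var} is genuinely applicable in the variable case, in particular that its steadiness side condition on the curried arguments does not obstruct the argument.
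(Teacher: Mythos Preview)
Your approach---structural induction on $s$, invoking the ``arguments'' rule for each shape and discharging $\cal{chkargs}$ via the matching subterm rule---is exactly what the paper's one-line ``By straightforward induction on $s$'' intends, only spelled out. The $\lambda$, De Bruijn, and symbol cases go through as you describe; in particular your observation that $\cal{chkargs}(s,\bar s)$ is not free but follows from rule~\ref{itm:llpm-db-sub} (resp.\ \ref{itm:llpm-sym-sub}) fed with the componentwise induction hypothesis $s_i \Greatersim_\llpm s_i$ is correct, and the side induction giving $\bar s \gtgtapproxlex_\llpm \bar s$ from Definition~\ref{def:nonstrict-lex-extension} is routine.

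The concern you flag in the variable case is not secondary bookkeeping but a genuine obstruction. Rule~\ref{itm:nsllpm-var} demands that every argument $t_i$ be steady, i.e.\ of nonfunctional type in the monomorphic setting; yet nothing forces this. For $y : (\kappa\to\kappa)\to\kappa$ the preterm $s = y\>(\lambda\langle\kappa\rangle\>\DB{0})$ has a single argument of function type, so rule~\ref{itm:nsllpm-var} is blocked, and no other rule of $\Greatersim_\llpm$ accepts a variable-headed left-hand side. Your induction therefore cannot conclude $s \Greatersim_\llpm s$ for such $s$. The paper's proof does not address this either---indeed the companion proof of Lemma~\ref{lem:nslkbm-reflexive} does not even list the variable rule among those used---so the gap you noticed is real: as written the lemma needs either a hypothesis excluding such preterms or a relaxation of the steadiness clause in rule~\ref{itm:nsllpm-var}.
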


\begin{proof}
By straightforward induction on $s$.
\end{proof}

\begin{lem}\label{lem:llpm-implies-nsllpm}
  If $t \Greater_\llpm s$, then $t \Greatersim_\llpm s$.
\end{lem}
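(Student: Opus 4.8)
The plan is to adapt the proof of Lemma~\ref{lem:lkbm-implies-nslkbm} to the path-order setting: proceed by induction on the derivation of $t \Greater_\llpm s$ and, for every rule of $\Greater_\llpm$, point to the obviously corresponding rule of $\Greatersim_\llpm$. The definitions were set up precisely so that rules~\ref{itm:nsllpm-sym}--\ref{itm:nsllpm-lam} of $\Greatersim_\llpm$ mirror rules~\ref{itm:llpm-sym}--\ref{itm:llpm-lam} of $\Greater_\llpm$ one for one, with the same head-form analysis, the same precedence side conditions ($\cst{g} \Succ \cst{f}$, $\cst{g} \Succ \cst{ws}$, $\cst{f} \Preceq \cst{ws}$), the same type comparisons ($\bar\upsilon \gtgtlex_\ty \bar\tau$ and $\upsilon \Greater_\ty \tau$), and---crucially---the same $\cal{chkargs}$ predicate, which in both definitions is built from the strict $\Greater_\llpm$. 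Hence the $\cal{chkargs}$ conditions transfer verbatim and require no extra argument.

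Before the case analysis I would record one auxiliary fact: $\bar t \gtgtsimlex_\llpm \bar s$ implies $\bar t \gtgtapproxlex_\llpm \bar s$, and likewise $(\bar w, \bar t) \gtgtsimlex_\llpm (\bar u, \bar s)$ implies $(\bar w, \bar t) \gtgtapproxlex_\llpm (\bar u, \bar s)$. This follows immediately from Definitions~\ref{def:strict-lex-extension} and~\ref{def:nonstrict-lex-extension} by induction on the tuple length: the only discrepancy between the two extensions is the empty-tuple base case ($() \gtgtsimlex ()$ fails while $() \gtgtapproxlex ()$ holds), which never obstructs this direction, and the inductive steps handle the leading comparison identically. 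This observation discharges the apparent mismatch in rules~\ref{itm:llpm-sym-args} and~\ref{itm:llpm-db-args}.

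With that in hand, the case analysis is routine. Rules~\ref{itm:llpm-sym-sub}, \ref{itm:llpm-db-sub}, and~\ref{itm:llpm-lam-sub} are already phrased with $\Greatersim_\llpm$, so the matching rules~\ref{itm:nsllpm-sym-sub}, \ref{itm:nsllpm-db-sub}, and~\ref{itm:nsllpm-lam-sub} apply directly, without any appeal to the induction hypothesis. Rules~\ref{itm:llpm-sym-diff}, \ref{itm:llpm-sym-types}, \ref{itm:llpm-sym-other}, \ref{itm:llpm-db-diff}, \ref{itm:llpm-db-other}, \ref{itm:llpm-lam-types}, and~\ref{itm:llpm-lam-other} map verbatim to their nonstrict counterparts. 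Rules~\ref{itm:llpm-sym-args} and~\ref{itm:llpm-db-args} map to rules~\ref{itm:nsllpm-sym-args} and~\ref{itm:nsllpm-db-args} via the lexicographic observation above. The one place the induction hypothesis is genuinely needed is rule~\ref{itm:llpm-lam-bodies}: from $t = \lambda\langle\upsilon\rangle\> t'$, $s = \lambda\langle\upsilon\rangle\> s'$, and $t' \Greater_\llpm s'$, the induction hypothesis yields $t' \Greatersim_\llpm s'$, and rule~\ref{itm:nsllpm-lam-bodies} applies.

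I do not anticipate any real obstacle: the proof is a mechanical rule-by-rule check, with the induction hypothesis used only once. The only points that require care are confirming that the $\cal{chkargs}$ predicate in the $\Greatersim_\llpm$ rules is the strict one (so side conditions transfer unchanged) and applying the implication $\gtgtsimlex_\llpm \Rightarrow \gtgtapproxlex_\llpm$ to the correct tuple in the two rules that use it.
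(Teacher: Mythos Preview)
Your proposal is correct and matches the paper's own proof essentially verbatim: induction on the derivation, a rule-by-rule correspondence, the observation that $\gtgtsimlex_\llpm$ implies $\gtgtapproxlex_\llpm$, and the induction hypothesis invoked only for rule~\ref{itm:nsllpm-lam-bodies}. Your additional remark that $\cal{chkargs}$ is defined via the strict $\Greater_\llpm$ in both relations, so those side conditions transfer unchanged, is a helpful clarification the paper leaves implicit.
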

\begin{proof}
  By induction on the derivation of $t \Greater_\llpm s$.
  For each rule, there is a clearly corresponding rule of $\Greatersim_\llpm$ to apply.
  The induction hypothesis is only required for applying rule~\ref{itm:nsllpm-lam-bodies}.
  A crucial observation is that $\gtgtsimlex_\llpm$ implies $\gtgtapproxlex_\llpm$ by definition.
\end{proof}

\begin{thm}
\label{thm:llpm-grounding-subst-stable}
If $t \Greater_\llpm s$, then $t\theta \Greater_\llpg s\theta$ for any grounding substitution
$\theta$.
If $t \Greatersim_\llpm s$, then $t\theta \GreaterEQ_\llpg s\theta$ for any grounding
substitution $\theta$.
\end{thm}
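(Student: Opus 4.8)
The plan is to adapt the strategy of Theorem~\ref{thm:lkbm-grounding-subst-stable}, proving both implications simultaneously by mutual induction on the shapes of the derivations of $t \Greater_\llpm s$ and $t \Greatersim_\llpm s$. As a preamble I would record the same two lexicographic observations used there --- for tuples $\bar t,\bar s$ covered by the induction hypothesis, $\bar t \gtgtsimlex_\llpm \bar s$ implies $\bar t\theta \gtgtlex_\llpg \bar s\theta$, and $\bar t \gtgtapproxlex_\llpm \bar s$ implies $\bar t\theta \gtgtlex_\llpg \bar s\theta$ or $\bar t\theta = \bar s\theta$ --- together with a third one specific to LPO: a monomorphic side condition $\cal{chkargs}(t,\bar s)$ yields the ground side condition $\cal{chkargs}(t\theta,\bar s\theta)$ by the induction hypothesis, since $\theta$ sends the curried arguments of a preterm to the curried arguments of its instance and an immediate argument $t_i$ of $t$ to the immediate argument $t_i\theta$ of $t\theta$.

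For the first implication I would case-split on the rule deriving $t \Greater_\llpm s$. The rules of $\Greater_\llpm$ are literally those of $\Greater_\llpg$ with $\Greatersim_\llpm$ in place of $\GreaterEQ_\llpg$, $\Greater_\llpm$ in place of $\Greater_\llpg$, and $\gtgtsimlex_\llpm$ in place of $\gtgtlex_\llpg$, so each maps to the like-named rule of $\Greater_\llpg$: a subterm premise $t_i \Greatersim_\llpm s$ becomes $t_i\theta \GreaterEQ_\llpg s\theta$ by the nonstrict induction hypothesis, with $t_i\theta$ an immediate argument of $t\theta$; a recursive premise $t' \Greater_\llpm s'$ becomes $t'\theta \Greater_\llpg s'\theta$ by the strict induction hypothesis; the precedence conditions ($\cst{g}\Succ\cst{f}$, $\cst{g}\Succ\cst{ws}$, $\cst{f}\Preceq\cst{ws}$) are unchanged; the type comparisons ($\upsilon\Greater_\ty\tau$, $\bar\upsilon\gtgtlex_\ty\bar\tau$) are unchanged since the types are ground and $\Greater_\ty$ is the same first-order order at both levels; and the $\gtgtsimlex_\llpm$ premises and $\cal{chkargs}$ conditions are dispatched by the three observations. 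Unlike in the KBO proof there is no weight shortcut available, but this costs nothing because the rule correspondence is one-to-one.

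For the second implication I would case-split on the rule deriving $t \Greatersim_\llpm s$. Rules \ref{itm:nsllpm-sym}--\ref{itm:nsllpm-lam} are treated exactly as above, the only new wrinkle being that a nonstrict subpremise ($t'\Greatersim_\llpm s'$ in rule~\ref{itm:nsllpm-lam-bodies}, a $\gtgtapproxlex_\llpm$ comparison in the ``args'' rules) may collapse to $t\theta = s\theta$, in which case $t\theta \GreaterEQ_\llpg s\theta$ holds trivially, and otherwise we land in the corresponding strict ground rule. The genuinely different case is rule~\ref{itm:nsllpm-var}, handled as rule~\ref{itm:nslkbm-var} was in the KBO proof: with $t = y\>\bar t$, $s = y\>\bar s$, each $t_i$ steady and $t_i \Greatersim_\llpm s_i$, the fact that each argument is of nonfunctional type means that plugging $t_i\theta$ (or $s_i\theta$) into the De Bruijn indices of $y\theta$ triggers no $\beta$-reduction, so $(y\>\bar t)\theta$ and $(y\>\bar s)\theta$ are obtained from one and the same family of orange contexts by inserting the shifted $t_i\theta$, respectively $s_i\theta$. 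Restricting, as in the KBO proof, to a single argument and a single occurrence and then chaining via transitivity of $\GreaterEQ_\llpg$, the nonstrict induction hypothesis gives $t_i\theta \GreaterEQ_\llpg s_i\theta$, whence either the two instances coincide or, by Theorem~\ref{thm:llpg-compat-orange-contexts}, $(y\>\bar t)\theta \Greater_\llpg (y\>\bar s)\theta$.

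The main obstacle I anticipate is the bookkeeping for rule~\ref{itm:nsllpm-var}: spelling out that a grounding instance of an applied variable decomposes --- precisely because the arguments are steady --- into an orange-context skeleton with the argument instances inserted at zero, one, or several positions, and that comparing componentwise and iterating the orange-context compatibility of $\Greater_\llpg$ delivers the required comparison. The remaining cases are a mechanical rule-by-rule match, since the syntactic forms of the monomorphic and ground $\lambda$LPO rules coincide.
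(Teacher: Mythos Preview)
Your proposal is correct and follows essentially the same approach as the paper: mutual induction on the derivations, the two lexicographic observations lifted from the KBO proof, a rule-by-rule correspondence for the strict part, and for the nonstrict part the orange-context argument for rule~\ref{itm:nsllpm-var} together with the collapse-or-correspond treatment of the remaining rules. Your explicit third observation about $\cal{chkargs}$ being preserved under $\theta$ is a welcome clarification that the paper leaves implicit.
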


\begin{proof}
The proof of the two claims is by induction on the shape of the derivation of $t
\Greater_\llpm s$ and $t \Greatersim_\llpm s$. 
As in the proof of Theorem~\ref{thm:lkbm-grounding-subst-stable},
we observe that
\begin{itemize}
  \item
  $\bar t \gtgtsimlex_\llpm \bar s$ implies $\bar t\theta \gtgtlex_\llpg \bar s\theta$, and
  \medskip
  \item
  $\bar t \gtgtapproxlex_\llpm \bar s$ implies 
  $\bar t\theta \gtgtlex_\llpg \bar s\theta$ or
  $\bar t\theta = \bar s\theta$.
\end{itemize}

For the first claim,
we make a case distinction on the rule deriving $t \Greater_\llpm s$.
In each case, the corresponding rule of $\Greater_\llpg$ applies.
The only mismatches between the two definitions are
the use of $\Greater_\lkbm$ versus $\Greater_\lkbg$ and
$\gtgtsimlex_\lkbm$ versus $\gtgtlex_\lkbg$.
These are repaired by the induction
hypothesis and our observation above.

For the second claim,
we make a case distinction on the rule deriving $t \Greatersim_\llpm s$:

\medskip

\noindent
\textsc{Rule \ref{itm:nsllpm-var}:}\enskip
Analogous to the case for rule~\ref{itm:nslkbm-var} of $\Greatersim_\lkbm$ in the proof of
Theorem~\ref{thm:lkbm-grounding-subst-stable}.

\medskip

\noindent
\textsc{Rules \ref{itm:nsllpm-sym}, \ref{itm:nsllpm-db}, \ref{itm:nsllpm-lam}:}\enskip
If $t\theta = s\theta$, there is nothing to prove. Otherwise, the corresponding
rule \ref{itm:llpg-sym}, \ref{itm:llpg-db}, or \ref{itm:llpg-lam} applies. The
rest of the proof is as for $\Greater_\llpm$.
\end{proof}

\begin{thm}
\label{thm:llpm-coincide-ground}
The relation $\Greater_\llpm$ coincides with $\Greater_\llpg$ on ground preterms.
\end{thm}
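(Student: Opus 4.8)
The plan is to follow exactly the pattern used for Theorem~\ref{thm:lkbm-coincide-ground}. One inclusion is essentially free: if $t \Greater_\llpm s$ for ground preterms $t$ and $s$, then, picking any grounding substitution $\theta$ and noting that a ground preterm is unchanged by substitution, Theorem~\ref{thm:llpm-grounding-subst-stable} yields $t = t\theta \Greater_\llpg s\theta = s$. So the work is entirely in the converse: showing that $t \Greater_\llpg s$ implies $t \Greater_\llpm s$ on ground preterms.

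For the converse I would argue by induction on the derivation of $t \Greater_\llpg s$. As noted after Definition~\ref{def:llpm}, the definitions of $\Greater_\llpg$ and $\Greater_\llpm$ have the same shape: the same four-way case split on the head of $t$, the same head-comparison and watershed side conditions, and the same $\cal{chkargs}$ obligations. Hence to the rule applied at the root of the derivation there corresponds a rule of $\Greater_\llpm$, and the $\cal{chkargs}(t,\bar s)$ premises transfer directly by the induction hypothesis (each involves a strict comparison $t \Greater_\llpg s_i$ whose derivation is a subderivation). Two genuine mismatches remain. First, the subterm rules \ref{itm:llpg-sym-sub}, \ref{itm:llpg-db-sub}, \ref{itm:llpg-lam-sub} use $\GreaterEQ_\llpg$ while \ref{itm:llpm-sym-sub}, \ref{itm:llpm-db-sub}, \ref{itm:llpm-lam-sub} use $\Greatersim_\llpm$: when $\GreaterEQ_\llpg$ holds by equality, apply Lemma~\ref{lem:nsllpm-reflexive}; when it holds strictly, apply the induction hypothesis and then Lemma~\ref{lem:llpm-implies-nsllpm}. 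Second, the argument-tuple comparisons use $\gtgtlex_\llpg$ on the ground side and $\gtgtsimlex_\llpm$ on the monomorphic side: a $\gtgtlex_\llpg$ witness is an equal prefix followed by one strictly decreasing position, so Lemma~\ref{lem:nsllpm-reflexive} supplies the $\Greatersim_\llpm$ comparisons on the prefix and the induction hypothesis supplies the strict $\Greater_\llpm$ at the decisive position, giving $\gtgtsimlex_\llpm$. The type comparisons $\gtgtlex_\ty$ and $\Greater_\ty$ agree on ground types, so they require no adjustment.

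I do not expect a serious obstacle here: unlike the $\lambda$KBO case, ground $\lambda$LPO carries no weight component, so there is no weight bookkeeping to match up, and both relations are plain first-order-style recursions over the head. The only point needing care is that the induction runs on the derivation of $t \Greater_\llpg s$ (not on term size), so that when a rule invokes $\GreaterEQ_\llpg$, $\gtgtlex_\llpg$, or $\cal{chkargs}$ on proper subterms, the induction hypothesis is legitimately available for those subderivations—exactly as in the proof of Theorem~\ref{thm:lkbm-coincide-ground}.
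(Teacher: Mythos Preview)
Your proposal is correct and follows essentially the same approach as the paper: one direction by Theorem~\ref{thm:llpm-grounding-subst-stable}, the other by induction on the derivation of $t \Greater_\llpg s$, with the mismatches $\GreaterEQ_\llpg$ versus $\Greatersim_\llpm$ and $\gtgtlex_\llpg$ versus $\gtgtsimlex_\llpm$ resolved via Lemmas~\ref{lem:nsllpm-reflexive} and~\ref{lem:llpm-implies-nsllpm}. Your write-up is in fact more explicit than the paper's (which is quite terse), spelling out the handling of $\cal{chkargs}$ and the tuple comparison, but the underlying argument is the same.
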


\begin{proof}
One direction of the equivalence follows by
Theorem~\ref{thm:llpm-grounding-subst-stable}. It remains to show that $t \Greater_\llpg
s$ implies $t \Greater_\llpm s$. The proof is by induction on the definition of
$\Greater_\llpg$. It is easy to see that to every case in the definition of $\Greater_\llpg$
corresponds a case in the definition of $\Greater_\llpm$.
To account for the mismatches
between $\GreaterEQ_\lkbg$ and $\Greatersim_\lkbm$ and
and
between $\gtgtlex_\llpg$ and $\gtgtsimlex_\llpm$,
we apply Lemmas~\ref{lem:nsllpm-reflexive} and \ref{lem:llpm-implies-nsllpm}.
\end{proof}

\begin{thm}
\label{thm:nsllpm-coincide-ground}
The relation $\Greatersim_\llpm$ coincides with $\GreaterEQ_\llpg$ on ground preterms.
\end{thm}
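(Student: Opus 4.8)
The plan is to mirror the proof of Theorem~\ref{thm:nslkbm-coincide-ground} essentially verbatim, since $\Greatersim_\llpm$ and $\Greater_\llpm$ stand in the same relationship to $\Greater_\llpg$ as $\Greatersim_\lkbm$ and $\Greater_\lkbm$ do to $\Greater_\lkbg$, and all the substantive work has already been carried out in the preceding theorems and lemmas.

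For the left-to-right direction, I would assume $t \Greatersim_\llpm s$ for ground preterms $t$ and $s$. By Theorem~\ref{thm:llpm-grounding-subst-stable}, applying any grounding substitution $\theta$ gives $t\theta \GreaterEQ_\llpg s\theta$; since $t$ and $s$ are already ground, $\theta$ acts as the identity on them, so $t \GreaterEQ_\llpg s$.

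For the right-to-left direction, I would assume $t \GreaterEQ_\llpg s$, that is, either $t = s$ or $t \Greater_\llpg s$. In the first case, Lemma~\ref{lem:nsllpm-reflexive} yields $t \Greatersim_\llpm t$ directly. In the second case, Theorem~\ref{thm:llpm-coincide-ground} upgrades $t \Greater_\llpg s$ to $t \Greater_\llpm s$, whereupon Lemma~\ref{lem:llpm-implies-nsllpm} gives $t \Greatersim_\llpm s$.

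I do not expect a genuine obstacle: the only point needing a moment's care is the remark, in the first direction, that a grounding substitution is trivial on preterms that are already ground, so that the conclusion $t\theta \GreaterEQ_\llpg s\theta$ of Theorem~\ref{thm:llpm-grounding-subst-stable} really does specialize to $t \GreaterEQ_\llpg s$; this is immediate because ground preterms contain no variables.
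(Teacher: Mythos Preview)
Your proposal is correct and follows essentially the same approach as the paper's own proof: one direction via Theorem~\ref{thm:llpm-grounding-subst-stable}, the other by case-splitting on $t = s$ (Lemma~\ref{lem:nsllpm-reflexive}) versus $t \Greater_\llpg s$ (Theorem~\ref{thm:llpm-coincide-ground} followed by Lemma~\ref{lem:llpm-implies-nsllpm}).
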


\begin{proof}
One direction of the equivalence follows by
Theorem~\ref{thm:llpm-grounding-subst-stable}. It remains to show that $t
\Greater_\llpg s$ implies $t \Greatersim_\llpm s$. If $t = s$, Lemma~\ref{lem:nsllpm-reflexive} applies. Otherwise, we appeal to
Theorem~\ref{thm:llpm-coincide-ground} to obtain $t \Greater_\llpm s$.
By Lemma~\ref{lem:llpm-implies-nsllpm}, this implies $t
\Greatersim_\llpm s$.
\end{proof}

\begin{thm}\label{thm:llpm-transitive}
  If $u \Greatersim_\llpm t$ and $t \Greatersim_\llpm s$, then $u \Greatersim_\llpm s$.
  If in addition $u \Greater_\llpm t$ or $t \Greater_\llpm s$, then even $u \Greater_\llpm s$.
\end{thm}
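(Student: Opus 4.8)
The plan is to follow the strategy of Theorem~\ref{thm:lkbm-transitive}: prove both claims simultaneously by well-founded induction on the multiset $\{|u|, |t|, |s|\}$, case-splitting on the rules that derive $u \Greatersim_\llpm t$ and $t \Greatersim_\llpm s$. Since $\Greater_\llpm$ has no weight shortcut analogous to rule~\ref{itm:lkbm-wt}, and since the monomorphic level admits no first-order encoding, this amounts to replaying the textbook transitivity argument for the first-order LPO, enriched with the four preterm shapes, the variable rule~\ref{itm:nsllpm-var}, the $\cst{ws}$-dependent ``other'' rules, and the strict/nonstrict bookkeeping. Throughout, a \emph{sub-preterm exposed} by a derivation of $a \Greatersim_\llpm b$ is one of $b$'s curried arguments (when a $\cal{chkargs}$ condition on $b$ is present) or $b$'s body (when $b = \lambda\> b'$); every such preterm has size strictly below $|b|$, so recursing on triples in which $t$ or $s$ is replaced by one of them stays within the multiset order.

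I would first record two preliminaries. The induction hypothesis makes the lexicographic extensions transitive on the tuples in play: two $\gtgtsimlex_\llpm$ steps compose to a $\gtgtsimlex_\llpm$ step, two $\gtgtapproxlex_\llpm$ steps compose to a $\gtgtapproxlex_\llpm$ step, and a $\gtgtsimlex_\llpm$ step composed with a $\gtgtapproxlex_\llpm$ step (in either order) gives a $\gtgtsimlex_\llpm$ step; each follows by a subsidiary induction on the tuple length, applying the main hypothesis componentwise (the components being proper sub-preterms of $u$, $t$, $s$) together with Lemma~\ref{lem:llpm-implies-nsllpm} and the definitions of the lexicographic extensions. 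Next, every rule of $\Greatersim_\llpm$ except \ref{itm:nsllpm-var}, \ref{itm:nsllpm-sym-args}, \ref{itm:nsllpm-db-args}, and \ref{itm:nsllpm-lam-bodies} has premises identical to those of a strict rule, so concluding $u \Greatersim_\llpm s$ by one of those rules automatically yields $u \Greater_\llpm s$; the ``in addition'' part is thus free there, vacuous when the concluding rule is \ref{itm:nsllpm-var} (no strict rule has a variable-headed left-hand side), and for the two $\cdots$-args rules and for rule~\ref{itm:nsllpm-lam-bodies} it is recovered by feeding the assumed strict step through the mixed induction hypothesis and the mixed lexicographic observation.

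The easy cases come first. If $u \Greatersim_\llpm t$ is by rule~\ref{itm:nsllpm-var}, then $t$ is variable-headed, hence so is the derivation of $t \Greatersim_\llpm s$; then $u$, $t$, $s$ share a head variable with matching argument lists, the induction hypothesis gives $u_i \Greatersim_\llpm s_i$ componentwise, and rule~\ref{itm:nsllpm-var} concludes. If $u \Greatersim_\llpm t$ is by a ``sub'' rule (\ref{itm:nsllpm-sym-sub}, \ref{itm:nsllpm-db-sub}, or \ref{itm:nsllpm-lam-sub}), so that $u_i \Greatersim_\llpm t$ for a curried argument or body $u_i$ of $u$, then the induction hypothesis on $(u_i, t, s)$ gives $u_i \Greatersim_\llpm s$ (strict if $t \Greater_\llpm s$), and the same ``sub'' rule concludes. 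If instead $t \Greatersim_\llpm s$ is by a ``sub'' rule exposing $t_j$, then either $u \Greatersim_\llpm t$ is by rule~\ref{itm:nsllpm-lam-bodies}---so $u = \lambda\> u'$, $t = \lambda\> t'$, $t_j = t'$, $u' \Greatersim_\llpm t'$, and the induction hypothesis on $(u', t', s)$ plus rule~\ref{itm:nsllpm-lam-sub} conclude---or the derivation of $u \Greatersim_\llpm t$ carries a $\cal{chkargs}$ condition yielding $u \Greater_\llpm t_j$, and the induction hypothesis on $(u, t_j, s)$ gives $u \Greater_\llpm s$.

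In the remaining cases both derivations use rules that fix the head of each of $u$, $t$, $s$ to be a symbol, a De Bruijn index, or a $\lambda$. One first restores $\cal{chkargs}(u, \cdot)$ against the sub-preterms of $s$ exposed by the $t \Greatersim_\llpm s$ step: for each such $s_k$ one has $t \Greater_\llpm s_k$ from that step's $\cal{chkargs}$ condition (or, when that step is rule~\ref{itm:nsllpm-lam-bodies} with $s = \lambda\> s'$, one has $u \Greater_\llpm t'$ from the $u \Greatersim_\llpm t$ step and $t' \Greatersim_\llpm s'$), and the mixed induction hypothesis yields $u \Greater_\llpm s_k$. It then remains to select the rule establishing $u \Greatersim_\llpm s$ by inspecting the heads of $u$, $t$, $s$ and the two rules used, discharging the head comparison via transitivity of the precedence $\Succ$, of $\Greater_\ty$ and its lexicographic extension, and of $>$ on $\Nat$ for De Bruijn indices, and via the lexicographic observation for the $\cdots$-args rules; derivations involving rule~\ref{itm:nsllpm-lam-bodies} are handled by recursing on the $\lambda$-bodies, exploiting that it preserves the binder's type so that the type comparison of the other step survives. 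I expect this final sub-case analysis to be the main obstacle, both in sheer volume and on account of the watershed: the ``other'' rules \ref{itm:nsllpm-sym-other}, \ref{itm:nsllpm-db-other}, and \ref{itm:nsllpm-lam-other} force one to relay inequalities such as $\cst{h} \Prec \cst{f} \Preceq \cst{ws}$ or $\cst{f} \Preceq \cst{ws} \Prec \cst{g}$ along the chain so that, for instance, a step above the watershed followed by a step to a symbol below it is re-expressed as a single $\cst{g} \Succ \cst{f}$ step via rule~\ref{itm:nsllpm-sym-diff}. A routine enumeration confirms that every combination of heads and rules is covered and that none forces a variable-headed left-hand side, so the strict conclusion is available whenever one of the two hypotheses is strict.
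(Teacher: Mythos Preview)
Your proposal is correct and follows essentially the same approach as the paper: well-founded induction on the multiset $\{|u|,|t|,|s|\}$, a case split on the rules deriving the two hypotheses, and appeals to transitivity of $\Succ$, $\Greater_\ty$, $>$ on De~Bruijn indices, and the induction hypothesis for the lexicographic extensions and the $\cal{chkargs}$ conditions. Your organisation differs slightly---you factor out the \ref{itm:nsllpm-var} and ``sub'' cases first and record the lexicographic-transitivity and ``strict-for-free'' observations up front, whereas the paper runs a flat enumeration keyed on the rule used for $u \Greatersim_\llpm t$---but the substance is the same, and your explicit treatment of the missing $\cal{chkargs}$ in rule~\ref{itm:nsllpm-lam-bodies} is if anything more careful than the paper's.
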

\begin{proof}
We proceed by well-founded induction on the multiset $\{|u|, |t|, |s|\}$.

If $u \Greatersim_\llpm t$ was derived by rule~\ref{itm:nsllpm-var},
then $t \Greatersim_\llpm s$ must have been derived by rule~\ref{itm:nsllpm-var}, too.
Then rule~\ref{itm:nsllpm-var} also yields $u \Greatersim_\llpm s$
by the induction hypothesis.

If $u \Greatersim_\llpm t$ was derived by rule~\ref{itm:nsllpm-sym-sub},
\ref{itm:nsllpm-db-sub}, or \ref{itm:nsllpm-lam-sub},
then $u \Greater_\llpm s$ by rule~\ref{itm:llpm-sym-sub},
\ref{itm:llpm-db-sub}, or \ref{itm:llpm-lam-sub} and the induction hypothesis.

If $u \Greatersim_\llpm t$ was derived by rule~\ref{itm:nsllpm-sym-diff},
\ref{itm:nsllpm-sym-types}, or \ref{itm:nsllpm-sym-args},
then $t \Greatersim_\llpm s$ must have been derived by rule~\ref{itm:nsllpm-sym}, too.
If $t \Greatersim_\llpm s$ was derived by rule~\ref{itm:nsllpm-sym-sub},
then the $\cal{chkargs}$-condition and the induction hypothesis
yield $u \Greater_\llpm s$.
If $t \Greatersim_\llpm s$ was derived by rule~\ref{itm:nsllpm-sym-diff},
\ref{itm:nsllpm-sym-types}, or \ref{itm:nsllpm-sym-args},
then rule \ref{itm:nsllpm-sym-diff},
\ref{itm:nsllpm-sym-types}, or \ref{itm:nsllpm-sym-args}
also yield $u \Greatersim_\llpm s$
by transitivity
of the precedence $\Succ$,
by transitivity of $\Greater_\ty$ and its lexicographic extension,
and by the induction hypothesis, which implies 
transitivity of $\gtgtapproxlex_\llpm$ on the relevant preterms
and the required $\cal{chkargs}$-condition.
If moreover $u \Greater_\llpm t$ or $t \Greater_\llpm s$,
we can similarly derive $u \Greater_\llpm s$ by
rule \ref{itm:llpm-sym-diff},
\ref{itm:llpm-sym-types}, or \ref{itm:llpm-sym-args}.
If $t \Greatersim_\llpm s$ was derived by rule~\ref{itm:nsllpm-sym-other},
then $u \Greater_\llpm s$ by
rule \ref{itm:llpm-sym-other},
using transitivity of the precedence $\Succ$ and the induction hypothesis.

If $u \Greatersim_\llpm t$ was derived by rule~\ref{itm:nsllpm-sym-other},
then $t \Greatersim_\llpm s$ must have been derived by rule~\ref{itm:nsllpm-db}
or \ref{itm:nsllpm-lam}.
If $t \Greatersim_\llpm s$ was derived by rule~\ref{itm:nsllpm-db-sub}
or \ref{itm:nsllpm-lam-sub},
then the $\cal{chkargs}$-condition and the induction hypothesis
yield $u \Greater_\llpm s$.
If $t \Greatersim_\llpm s$ was derived by rule~\ref{itm:nsllpm-db-diff}
\ref{itm:nsllpm-db-args}, \ref{itm:nsllpm-db-other},
\ref{itm:nsllpm-lam-types}, \ref{itm:nsllpm-lam-bodies}, or \ref{itm:nsllpm-lam-other},
then $s$ is of the form $m\>\bar u$, $\lambda\>u'$ or $\cst{f}(\bar u)\> \bar v$, where $\cst{f} \Preceq \cst{ws}$.
If it is of the form $m\>\bar u$ or $\lambda\>u'$,
then rule~\ref{itm:llpm-sym-other} yields $u \Greater_\llpm s$,
where the $\cal{chkargs}$-condition is satisfied by the induction hypothesis.
If it is of the form $\cst{f}(\bar u)\> \bar v$ with $\cst{f} \Preceq \cst{ws}$,
rule~\ref{itm:llpm-sym-diff} yields $u \Greater_\llpm s$,
using transitivity of the precedence $\Succ$ and the induction hypothesis.

If $u \Greatersim_\llpm t$ was derived by rule~\ref{itm:nsllpm-db-diff}
or \ref{itm:nsllpm-db-args},
then $t \Greatersim_\llpm s$ must have been derived by rule~\ref{itm:nsllpm-db}.
If $t \Greatersim_\llpm s$ was derived by rule~\ref{itm:nsllpm-db-sub},
then the $\cal{chkargs}$-condition and the induction hypothesis
yield $u \Greater_\llpm s$.
If $t \Greatersim_\llpm s$ was derived by rule~\ref{itm:nsllpm-db-diff} or
\ref{itm:nsllpm-db-args},
then rule~\ref{itm:nsllpm-db-diff} or \ref{itm:nsllpm-db-args} also yield $u \Greatersim_\llpm s$,
using transitivity of $>$ on natural numbers and the induction hypothesis,
which implies transitivity of $\gtgtapproxlex_\llpm$ on the relevant preterms
and the required $\cal{chkargs}$-condition.
If moreover $u \Greater_\llpm t$ or $t \Greater_\llpm s$,
we can similarly derive $u \Greater_\llpm s$ by
rule \ref{itm:llpm-db-diff} or \ref{itm:llpm-db-args}.
If $t \Greatersim_\llpm s$ was derived by rule~\ref{itm:nsllpm-db-other},
then rule~\ref{itm:llpm-db-other} yields $u \Greater_\llpm s$,
using the induction hypothesis to discharge the $\cal{chkargs}$-condition.

If $u \Greatersim_\llpm t$ was derived by rule~\ref{itm:nsllpm-db-other},
then $t \Greatersim_\llpm s$ must have been derived by
rule~\ref{itm:nsllpm-sym} (but not rule~\ref{itm:nsllpm-sym-other}) or \ref{itm:nsllpm-lam}.
If $t \Greatersim_\llpm s$ was derived by rule~\ref{itm:nsllpm-sym-sub}
or \ref{itm:nsllpm-lam-sub},
then the $\cal{chkargs}$-condition and the induction hypothesis
yield $u \Greater_\llpm s$.
If $t \Greatersim_\llpm s$ was derived by rule~\ref{itm:nsllpm-sym-diff},
\ref{itm:nsllpm-sym-types}, \ref{itm:nsllpm-sym-args},
\ref{itm:nsllpm-lam-types}, \ref{itm:nsllpm-lam-bodies}, or \ref{itm:nsllpm-lam-other},
then rule~\ref{itm:llpm-db-other}
yields $u \Greater_\llpm s$,
using transitivity of the precedence $\Succ$ and the induction hypothesis.

If $u \Greatersim_\llpm t$ was derived by rule~\ref{itm:nsllpm-lam-types}
or \ref{itm:nsllpm-lam-bodies},
then $t \Greatersim_\llpm s$ must have been derived by rule~\ref{itm:nsllpm-lam}.
If $t \Greatersim_\llpm s$ was derived by rule~\ref{itm:nsllpm-lam-sub},
then the $\cal{chkargs}$-condition and the induction hypothesis
yield $u \Greater_\llpm s$.
If $t \Greatersim_\llpm s$ was derived by rule~\ref{itm:nsllpm-lam-types} or
\ref{itm:nsllpm-lam-bodies}
then rule~\ref{itm:nsllpm-lam-types} or
\ref{itm:nsllpm-lam-bodies} yield $u \Greatersim_\llpm s$,
using transitivity of $\Greater_\ty$ and
the induction hypothesis.
If moreover $u \Greater_\llpm t$ or $t \Greater_\llpm s$,
we can similarly derive $u \Greater_\llpm s$ by
rule \ref{itm:llpm-lam-types} or \ref{itm:llpm-lam-bodies}.
If $t \Greatersim_\llpm s$ was derived by rule~\ref{itm:nsllpm-lam-other},
then rule~\ref{itm:llpm-lam-other} yields $u \Greater_\llpm s$,
using the induction hypothesis to discharge the $\cal{chkargs}$-condition.

If $u \Greatersim_\llpm t$ was derived by rule~\ref{itm:nsllpm-lam-other},
then $t \Greatersim_\llpm s$ must have been derived by
rule~\ref{itm:nsllpm-sym} (but not rule~\ref{itm:nsllpm-sym-other}).
If $t \Greatersim_\llpm s$ was derived by rule~\ref{itm:nsllpm-sym-sub},
then the $\cal{chkargs}$-condition and the induction hypothesis
yield $u \Greater_\llpm s$.
If $t \Greatersim_\llpm s$ was derived by rule~\ref{itm:nsllpm-sym-diff},
\ref{itm:nsllpm-sym-types}, or \ref{itm:nsllpm-sym-args},
then rule~\ref{itm:llpm-lam-other}
yields $u \Greater_\llpm s$,
using transitivity of the precedence $\Succ$ and the induction hypothesis.
\end{proof}

\begin{thm}\label{thm:llpm-variable-guarantee}
Let $t \Greater_\llpm s$ or $t \Greatersim_\llpm s$.
Let $\theta$ be a substitution such that all variables in
$t\theta$ and $s\theta$ are nonfunctional.
Let $s\theta$ contain a nonfunctional variable $x$
outside of parameters.
Then $t\theta$ must also contain $x$ outside of parameters.
\end{thm}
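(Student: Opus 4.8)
\emph{Proof plan.} This is the $\lambda$LPO counterpart of Theorem~\ref{thm:lkbm-variable-guarantee}. There the argument went through the weight polynomial: $t \Greater_\lkbm s$ forces $\cal W_\m(t) \ge \cal W_\m(s)$, and one reads off the indeterminate $\mathbf{w}_x$ on both sides. Since $\lambda$LPO has no weight function, the plan is to argue directly by induction on the combined derivation of $t \Greater_\llpm s$ or $t \Greatersim_\llpm s$, keeping $\theta$ and $x$ fixed throughout; the induction hypothesis is exactly the statement of the theorem for strictly smaller derivations.

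The backbone of the induction is the observation that substitution is transparent for symbol-, De~Bruijn-, and $\lambda$-headed preterms: $(\cst{g}\langle\bar\upsilon\rangle(\bar w)\>\bar t)\theta = \cst{g}\langle\bar\upsilon\rangle(\bar w\theta)\>(\bar t\theta)$, $(\DB{n}\>\bar t)\theta = \DB{n}\>(\bar t\theta)$, and $(\lambda\langle\upsilon\rangle\>t')\theta = \lambda\langle\upsilon\rangle\>(t'\theta)$, none of which triggers a $\beta$-reduction; and a term variable occurs outside parameters in such a preterm precisely when it occurs outside parameters in one of the non-parameter arguments $\bar t\theta$ (or in the body $t'\theta$). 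Hence for every rule of $\Greater_\llpm$ and $\Greatersim_\llpm$ other than the variable rule~\ref{itm:nsllpm-var}, the step is routine: if $x$ occurs outside parameters in $s\theta$, then it occurs outside parameters in some non-parameter argument $s_j\theta$ of $s$ (or in $s$'s body), and the rule supplies a premise relating $t$, or a direct sub-preterm $t_i$ of $t$, to that $s_j$ (resp.\ body)---namely $t \Greater_\llpm s_j$ from the relevant $\cal{chkargs}$ condition, $t' \Greater_\llpm s'$ or $t' \Greatersim_\llpm s'$ from the ``body'' sub-cases, or $t_i \Greatersim_\llpm s$ from the ``subterm'' sub-cases. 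Applying the induction hypothesis to that premise yields that $x$ occurs outside parameters in $t\theta$ (or in $t_i\theta$, hence in $t\theta$, since $t_i\theta$ sits at a non-parameter position of $t\theta$).

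The delicate case is rule~\ref{itm:nsllpm-var}, where $t = y\>\bar t$, $s = y\>\bar s$, and for each $i$ the argument $t_i$ is steady with $t_i \Greatersim_\llpm s_i$. Since $s_i$ has the same type as the steady $t_i$, it is steady too, so $s_i\theta$ and $t_i\theta$ are steady; consequently, forming $s\theta = (y\>\bar s)\theta$ plugs $\bar s\theta$ into the De~Bruijn slots of $y\theta$'s body (after stripping the leading $\lambda$s corresponding to the arguments) without any cascading $\beta$-reductions, and likewise for $t\theta$. Thus $s\theta$ and $t\theta$ share the same skeleton inherited from $y\theta$, differing only in that the slot for the $i$th argument is filled by $s_i\theta$ on one side and by $t_i\theta$ on the other. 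An occurrence of $x$ outside parameters in $s\theta$ is therefore either (i)~an occurrence within the shared skeleton, which then also witnesses $x$ outside parameters in $t\theta$; or (ii)~an occurrence inside a copy of some $s_i\theta$ plugged at a slot lying at a non-parameter position of the skeleton, with $x$ occurring outside parameters within $s_i\theta$---and then the induction hypothesis applied to $t_i \Greatersim_\llpm s_i$ (whose side conditions hold because $s_i\theta$ and $t_i\theta$ embed into $s\theta$ and $t\theta$) shows $x$ occurs outside parameters in $t_i\theta$, hence, via the same skeleton slot, in $t\theta$.

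I expect the main obstacle to lie in case~(ii): making precise that every occurrence of $x$ outside parameters in $s\theta$ is either a skeleton occurrence or a plugged-argument occurrence, that this dichotomy is transported to $t\theta$ through the common $y\theta$-skeleton, and that ``outside parameters'' is preserved under the plugging---all of which rests on steadiness of the arguments, which forbids argument deletion or duplication beyond what the shape of $y\theta$ already prescribes. The remaining cases reduce to checking that each rule's $\cal{chkargs}$ condition reaches every non-parameter position of $s$.
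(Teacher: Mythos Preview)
Your proposal is correct and follows essentially the same approach as the paper: induction on the derivation, with the variable-headed rule~\ref{itm:nsllpm-var} singled out as the only nontrivial case and handled by splitting on whether $x$ originates from the shared $y\theta$-skeleton or from some plugged argument $s_i\theta$. The paper's write-up is terser than yours but makes the same moves; your remark that the De~Bruijn slots in $y\theta$ necessarily lie at non-parameter positions (since parameters are closed with respect to De~Bruijn indices) is a detail both proofs gloss over but which is indeed what makes the argument go through.
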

\begin{proof}
The proof of the two claims is by induction on the shape of the derivation of $t
\Greater_\llpm s$ or $t \Greatersim_\llpm s$.
In most cases, the claims follow directly form the induction hypothesis.
For case~\ref{itm:nsllpm-var}, we have $t = y\> \bar t$ and $s = y\> \bar s$
and for every $i$, $t_i$ is steady and $t_i \Greatersim_\llpm s_i$.
Our assumption is that $s\theta$ contains a nonfunctional variable $x$
outside of parameters.
The $x$ could originate from $y\theta$ or from $s_i\theta$ for some $i$.
If it originates from $y\theta$, then $x$ must also occur outside of parameters
in $t\theta = y\theta\> \bar t\theta$
because $t_i$ is steady for all $i$.
If it originates from $s_i\theta$, then $x$ must also occur in $t_i\theta$ outside of parameters
by the induction hypothesis because $t_i \Greatersim_\llpm s_i$.
Since $t_i$ is steady, $x$ must also occur in $t\theta$ outside of parameters.
\end{proof}

\section{The Polymorphic Level}
\label{sec:the-polymorphic-level}

In a third and final step, we generalize the definition of $\lambda$KBO and
$\lambda$LPO to polymorphic nonground preterms. The resulting orders coincide with
the monomorphic nonground $\lambda$KBO and $\lambda$LPO on monomorphic preterms
while supporting type variables.

Type variables, in conjunction with the $\eta$-long $\beta$-normal form, lead to
substantial complications. Instantiating a type variable with a functional type
causes $\eta$-expansion to take place, transforming for instance
$y\langle\alpha\rangle$ into $\lambda\> y\langle\beta\to\gamma\rangle\>\DB{0}$
or even $\lambda\> \lambda\> y\langle\beta\to(\beta\to\beta)\to\gamma\rangle\>\DB{1}\>(\lambda\> \DB{1}\> \DB{0})$.
This affects the weight calculation of $\lambda$KBO, since each $\eta$-expansion
increases the weight by $\cal w_\lambda + \cal w_\db$. Our solution is to add a
term to the polynomial to account for possible $\eta$-expansion. This also
affects the shape comparison of $\lambda$KBO and $\lambda$LPO, since the shape
of any preterm whose type is a type variable~$\alpha$ can change radically as a
result of instantiating~$\alpha$.

If we used the $\eta$-short $\beta$-normal form instead, we would be out of the
frying pan into the fire. Applying a $\lambda$-abstraction to an argument makes
not only the $\lambda$ but also De Bruijn indices disappear. Applying an
$\eta$-reduced functional term $t$ to an argument, however, makes neither a
$\lambda$ nor De Bruijn indices disappear, resulting in a weight discrepancy of
at least $\cal w_\lambda$ compared with a $\lambda$-abstraction of the same type
and weight. Moreover, the $\eta$-short normal form makes it more difficult, if
not impossible, to achieve another of our goals, namely, the order requirement
for the $\cst{diff}$ symbol of the optimistic $\lambda$-superposition calculus.

\subsection{\texorpdfstring{$\bm{\lambda}$KBO}{Lambda KBO}}
\label{ssec:lambda-kbo-polymorphic}

\begin{defi}
\label{def:polynomials-polymorphic}
Let $(\Sigma_\ty,\Sigma)$ be a higher-order signature.
We denote by $\Poly$ the set of $\Ord$-valued polynomials of the indeterminates
$\mathbf{w}_{y\>\bar t}$, $\mathbf{k}_{y\>\bar t,i}$, and $\mathbf{h}_\alpha$. The
first two are the same as in Definition~\ref{def:polynomials-monomorphic},
except that preterms are now polymorphic. The last one is as follows,
where $\alpha \in X_\ty$:
\begin{itemize}
\item $\mathbf{h}_\alpha$, ranging over $\Nat$, represents the number of
  $\eta$-expansions incurred as a result of instantiating $\alpha$
  for one preterm of type $\alpha$ (excluding any subterms).
\end{itemize}
Auxiliary concepts are defined as in
Definition~\ref{def:polynomials-monomorphic}.
\end{defi}

For example, if $\cst{c} : \alpha$ and $\alpha\theta = (\kappa \to \kappa)
\to \kappa$, then $\cst{c}\theta = \lambda\> \cst{c}\> (\lambda\> \DB{1}\> \DB{0})$.
In this case, instantiation caused two $\eta$-expansions, including one to
a De Bruijn index.

\begin{defi}
\label{def:normalize-lk-polymorphic}
Let $\Sigma' = \Sigma \uplus \{k \mid k \in \Ordpos\}$ with
$k : \Pi\alpha.\>\alpha$.
Define the normalization function $\NORMonly :
\HTmp{\Sigma_\ty}{\Sigma}{X_\ty}{X} \to \HTmp{\Sigma_\ty}{\Sigma'}{X_\ty}{X}$
recursively by
\begin{align*}
\NORM{y\> \bar t}
  & = y\> \NORM{\bar t}
\\
\NORM{\cst{f}(\bar u)\> \bar t}
  & =
\begin{cases}
  k\langle\tau\rangle\> \NORM{\bar t} & \text{if $\cal k(\cst{f}, i) = 1$ for every $i$, with $\cal w(\cst{f}) = k$ and $\cst{f}(\bar u) : \tau$}
\\
  \cst{f}(\bar u)\> \NORM{\bar t} & \text{otherwise}
\end{cases}
\\
\NORM{\DB{m}\langle\tau\rangle\> \bar t}
  & = \cal w_\db\langle\tau\rangle\> \NORM{\bar t}
\\
\NORM{\lambda\> t}
  & = \lambda\> \NORM{t}
\end{align*}
\end{defi}

\begin{defi}
\label{def:eta-weight-lkb}
Define the $\eta$-expansion polynomial $\cal H : \Ty{\Sigma_\ty}{X_\ty} \to
\Poly$ by
\begin{align*}
\cal H(\alpha) & = (\cal w_\lambda + \cal w_\db)\mathbf{h}_\alpha
&
\cal H(\kappa(\bar\tau)) & = 0
\end{align*}
\end{defi}

\begin{defi}
\label{def:weight-lkb}
Let $\cal w : \Sigma \to \Ordpos$, $\cal w_\lambda, \cal w_\db \in \Ordpos$,
and $\cal k : \Sigma \times \Natpos \to \Ordpos$.
For every $\cst{f} \in \Sigma$ and $i > \arity{\cst{f}}$, we require
$(\mathrm{K})$~$\cal k(\cst{f}, i) = 1$.
Given a list of preterms $\bar t$, let $\bar t\RIGHT$ denote the longest suffix
consisting of steady preterms, and let $\bar t\LEFT$ denote the complementary
prefix.
Define the polymorphic weight function $\cal W :
\HTmp{\Sigma_\ty}{\Sigma}{X_\ty}{X} \to \Poly$ recursively by
\begin{align*}
  \cal W(y\>\bar{t})
& = 1 + \mathbf{w}_{y\>\NORM{\bar{t}\LEFT}}
  + \sum\nolimits_{i=1}^{|\bar{t}\RIGHT|}
    \mathbf{k}_{y\>\NORM{\bar{t}\LEFT},i} (\cal W(\bar{t}\RIGHT_i) - \cal w_\db)
  + \cal H(\tau)
\\[-\jot]
&\phantom{=}\;\quad\text{if $y\>\bar{t} : \tau$}
\\
  \cal W(\cst{f}(\bar u)\> \bar t_n)
& = \cal w(\cst{f}) + \sum\nolimits_{i=1}^n \cal k(\cst{f}, i)
    \cal W(t_i) + \cal H(\tau)
\\[-\jot]
&\phantom{=}\;\quad\text{if $\cst{f}(\bar u)\> \bar t_n : \tau$}
\\
  \cal W(\DB{m}\> \bar t_n)
& = \cal w_\db + \sum\nolimits_{i=1}^n \cal W(t_i) + \cal H(\tau)
\\[-\jot]
&\phantom{=}\;\quad\text{if $\DB{m}\> \bar t_n : \tau$}
\\
  \cal W(\lambda\> t)
& = \cal w_\lambda + \cal W(t)
\end{align*}
\end{defi}

Notice, in the definition above, the presence of $\cal H(\tau)$ monomials to
account for $\eta$-expansion caused by type variable instantiation.

\begin{defi}
\label{def:compare-type-poly}
Let $\tau, \upsilon \in \Ty{\Sigma_\ty}{X_\ty}$. The polymorphism comparison
$\upsilon \unrhd \tau$ holds if $\tau$ is not a type variable or if
$\upsilon = \tau$.
Moreover, let $s, t \in \HTmp{\Sigma}{\Sigma_\ty}{X}{X_\ty}$ such that
$s : \tau$, $t : \upsilon$. We write $t \unrhd s$ if $\upsilon \unrhd \tau$.
\end{defi}

\begin{defi}
\label{def:lkb}
Let $\cal w_\ty, \cal w, \cal w_\lambda, \cal w_\db, \cal k,\allowbreak \cal W$
be as in Definition~\ref{def:weight-lkb}.
Let $\Succ^\ty$ be a precedence on $\Sigma_\ty$.
Let $\Greater_\ty$ be the strict first-order KBO on $\Tm{\Sigma_\ty}{X_\ty}$
induced by $\cal w_\ty$ and $\Succ^\ty$.
Let $\Succ$ be a precedence on $\Sigma$.

The \emph{strict polymorphic $\lambda$KBO} $\Greater_\lkb$ induced by $\cal w_\ty, \cal
w, \cal w_\lambda, \cal w_\db, \cal k,\allowbreak \Succ^\ty, \Succ$ on
$\HTmp{\Sigma_\ty}{\Sigma}{X_\ty}{X}$ is defined inductively so that $t \Greater_\lkb s$ if
\begin{enumerate}[(1)]
\item the rule
  \ref{itm:lkbm-wt},
  \ref{itm:lkbm-lam-typ},
  \ref{itm:lkbm-lam-body},
  \ref{itm:lkbm-db-args}, or
  \ref{itm:lkbm-sym-args}
  of the definition of $\Greater_\lkbm$ applies mutatis mutandis, or
\item the rule
  \ref{itm:lkbm-lam-other},
  \ref{itm:lkbm-db-diff},
  \ref{itm:lkbm-db-other},
  \ref{itm:lkbm-sym-diff}, or
  \ref{itm:lkbm-sym-typ}
  of the definition of $\Greater_\lkbm$ applies mutatis mutandis
  and $t \unrhd s$ holds.
\end{enumerate}
The \emph{nonstrict polymorphic $\lambda$KBO} $\Greatersim_\lkb$ induced by $\cal
w_\ty, \cal w, \cal w_\lambda, \cal w_\db, \cal k,\allowbreak \Succ^\ty, \Succ$
on $\HTmp{\Sigma_\ty}{\Sigma}{X_\ty}{X}$ is defined inductively so that $t \Greatersim_\lkb s$
if
\begin{enumerate}[(1)]
\item the rule
  \ref{itm:nslkbm-wt},
  \ref{itm:nslkbm-var},
  \ref{itm:nslkbm-lam-typ},
  \ref{itm:nslkbm-lam-body},
  \ref{itm:nslkbm-db-args}, or
  \ref{itm:nslkbm-sym-args}
  of the definition of $\Greatersim_\lkbm$ applies mutatis mutandis, or
\item the rule
  \ref{itm:nslkbm-lam-other},
  \ref{itm:nslkbm-db-diff},
  \ref{itm:nslkbm-db-other},
  \ref{itm:nslkbm-sym-diff}, or
  \ref{itm:nslkbm-sym-typ}
  of the definition of $\Greatersim_\lkbm$ applies mutatis mutandis
  and $t \unrhd s$ holds.
\end{enumerate}
\end{defi}

For some of the rules, the condition $t \unrhd s$ is necessary to compare
$t$~and~$s$. For the other rules, the condition can be derived from the types of
$t$~and~$s$, either because both are of nonvariable type or because they are of
the same type.

Below we will connect the polymorphic $\lambda$KBO with its monomorphic
counterpart to lift its properties, which in turn were lifted from the ground
$\lambda$KBO.

\begin{defi}
\label{def:assignment-of-monomorphizing-subst}
The polynomial substitution $\POLY{\theta}$ associated with a monomorphizing
type substitution $\theta$ maps indeterminate $\mathbf{w}_{y\>\bar t}$ to
$\mathbf{w}_{(y\>\bar t)\theta}$, indeterminate $\mathbf{k}_{y\>\bar t,i}$ to
$\mathbf{k}_{(y\>\bar t)\theta,i}$, and indeterminate $\mathbf{h}_\alpha$ to the
number of $\eta$-expansions incurred as a result of instantiating $\alpha$ for
one preterm of type $\alpha$ (excluding any subterms).
\end{defi}

\begin{lem}
\label{lem:weight-monomorphizing-subst}
\begin{sloppypar}
Given a monomorphizing type substitution~$\theta$, we have
$\cal W(t){\big|}_{\POLY{\theta}} = \cal W_\m(t\theta)$.
\end{sloppypar}
\end{lem}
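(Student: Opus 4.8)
The plan is to prove the identity by induction on the recursive definition of $\cal W$ (Definition~\ref{def:weight-lkb})---equivalently, by structural induction on $t$. Write $A = \POLY\theta$. Since $\theta$ is monomorphizing, $t\theta$ is monomorphic, so $\cal W_\m(t\theta)$ is defined; moreover $\theta$ touches only type arguments, so the outer shape of $t$ (symbol head, De Bruijn head, variable head, or $\lambda$) is preserved in $t\theta$, except that a preterm whose type is a type variable $\alpha$ gets $\eta$-expanded when $\alpha\theta$ is functional. The $\lambda$-abstraction case is immediate: $(\lambda\>t')\theta = \lambda\>(t'\theta)$ and $\cal W(\lambda\>t')\big|_A = \cal w_\lambda + \cal W(t')\big|_A = \cal w_\lambda + \cal W_\m(t'\theta)$ by the induction hypothesis.

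The three head-application cases---$t = \cst f(\bar u)\>\bar t$, $t = \DB m\>\bar t$, and $t = y\>\bar t$---carry the work. Applying $\theta$ may turn an argument of type-variable type into a steady preterm (if its type becomes a concrete base type) or $\eta$-expand it into a $\lambda$-abstraction (if its type becomes functional); it may also $\eta$-expand $t$ itself. Changes inside an argument are absorbed by the induction hypothesis applied to that argument---its polymorphic weight already carries the matching $\cal H$ monomial---together with the observation that the recursive clauses of $\cal W$ and $\cal W_\m$ coincide on the heads $\cst f(\bar u)$, $\DB m$, $y$ up to the extra $\cal H(\tau)$ summand. In the variable-headed case one additionally needs that, after applying $A$, the indeterminates $\mathbf{w}_{y\>\NORM{\bar t\LEFT}}$ and $\mathbf{k}_{y\>\NORM{\bar t\LEFT},i}$ reproduce---as a polynomial identity, i.e.\ under every assignment---the $\mathbf{w}/\mathbf{k}$-contribution of $y$ to $\cal W_\m((y\>\bar t)\theta)$; this uses the definition of $\POLY\theta$ (Definition~\ref{def:assignment-of-monomorphizing-subst}) and the straightforward commutation of $\NORMonly$ with $\theta$, bearing in mind that $\theta$ may move arguments between $\bar t\LEFT$ and $\bar t\RIGHT$.

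The one contribution not already handled by the induction hypothesis is the $\eta$-expansion of $t$ at its root, which happens exactly when $t : \alpha$ for a type variable $\alpha$ with $\alpha\theta$ functional. The extra $\cal H(\tau)$ summand in $\cal W(t)$ must then account, under $A$, for precisely the weight of the leading $\lambda$s and the fresh De Bruijn-index arguments introduced when $t\theta$ is put into $\eta$-long normal form. I would discharge this by a short computation: the $\cal W_\m$-weight of the $\eta$-long form of a De Bruijn index of ground type $\rho$ depends only on $\rho$ and the constants $\cal w_\lambda$, $\cal w_\db$ (an easy induction on $\rho$), and summing these contributions over the new arguments, together with the weight of the new $\lambda$s, yields $\cal H(\alpha)\big|_A = (\cal w_\lambda + \cal w_\db)\,\mathbf{h}_\alpha\big|_A$---which is how $\cal H$ (Definition~\ref{def:eta-weight-lkb}) and $\mathbf{h}_\alpha$ (Definition~\ref{def:polynomials-polymorphic}) were designed to line up. Requirement~$(\mathrm K)$ is what makes the freshly introduced $\eta$-expansion arguments carry coefficient $1$, so that the flat sum in $\cal H$ is the correct comparison; for the $\DB m$- and variable-headed roots those coefficients are $1$ anyway.

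The main obstacle is precisely this $\eta$-expansion bookkeeping: one must be careful about the interaction of the $\cal H$ monomials, the $\bar t\LEFT/\bar t\RIGHT$ split, the $-\,\cal w_\db$ correction in the variable clause, and the argument coefficients, and about the fact that the identity is only an equality of polynomials, not a syntactic one---so that re-normalizing the $\mathbf{w}/\mathbf{k}$ subscripts after substitution is legitimate. Once the De Bruijn-index and $\lambda$ contributions are pinned down, the remaining cases are a routine walk through the definition using the induction hypothesis.
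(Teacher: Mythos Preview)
Your proposal is correct and follows essentially the same route as the paper: induction on the recursive definition of $\cal W$, with the $\lambda$ case immediate, the symbol and De Bruijn cases handled by the induction hypothesis plus the $\cal H(\tau)$ term accounting for root-level $\eta$-expansion (invoking hypothesis~$(\mathrm K)$ so that the fresh arguments carry coefficient~$1$), and the variable case discharged via the definition of $\POLY\theta$. You are, if anything, more explicit than the paper about the subtlety that $\theta$ can shift arguments from $\bar t\LEFT$ to $\bar t\RIGHT$; the paper's chain of equalities in the variable case absorbs this into the step labeled ``by definition of $\sigma$'' and ``by definition of $\cal W_\m$ and substitution'' without further comment.
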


\begin{proof}
Let $\sigma = \POLY{\theta}$. Let $t : \tau$, and let $k$ be the number of
$\eta$-expansions incurred as a result of applying $\theta$ on a term of type
$\tau$ (excluding any subterms). The proof is by induction on the definition
of~$\cal W$.

\medskip

\noindent
\textsc{Case $t = y\>\bar{t}$:}\enskip We have
\begin{align*}
      & \cal W(y\>\bar{t}){\big|}_{\sigma}
\\
{=}\; & 1 + {\mathbf{w}_{y\>\bar{t}\LEFT}{\big|}_{\sigma}
  + \sum\nolimits_{i=1}^{\smash{|\bar{t}\RIGHT|}}
    \mathbf{k}_{y\>\bar{t}\LEFT,i}{\big|}_{\sigma}
    (\cal W(\bar{t}\RIGHT_i){\big|}_{\sigma} - \cal w_\db)}
  + {\cal{H}(\tau){\big|}_{\sigma}}
\\[-\jot]
  & \quad\text{by definition of $\cal W$}
\\
{=}\; & 1 + {\mathbf{w}_{y\>\bar{t}\LEFT}{\big|}_{\sigma}
  + \sum\nolimits_{i=1}^{\smash{|\bar{t}\RIGHT|}}
    \mathbf{k}_{y\>\bar{t}\LEFT,i}{\big|}_{\sigma}
    (\cal W_\m(\bar{t}\RIGHT_i\theta) - \cal w_\db)}
  + {\cal{H}(\tau){\big|}_{\sigma}}
\\[-\jot]
  & \quad\text{by the induction hypothesis}
\\
{=}\; & 1 + {\mathbf{w}_{y\>\bar{t}\LEFT}{\big|}_{\sigma}
  + \sum\nolimits_{i=1}^{\smash{|\bar{t}\RIGHT|}}
    \mathbf{k}_{y\>\bar{t}\LEFT,i}{\big|}_{\sigma}
    (\cal W_\m(\bar{t}\RIGHT_i\theta) - \cal w_\db)}
  + (\cal w_\lambda + \cal w_\db)k
\\[-\jot]
  & \quad\text{by definition of $\cal H$ and the semantics of $\mathbf{h}$}
\\
{=}\; & 1 + \mathbf{w}_{(y\theta)\>(\bar{t}\theta)\LEFT} 
        + \sum\nolimits_{i=1}^{|(\bar{t}\theta)\RIGHT|}
        \mathbf{k}_{(y\theta)\>(\bar{t}\theta)\LEFT,i}
        (\cal W_\m(\bar{t}\RIGHT_i\theta) - \cal w_\db)
        + (\cal w_\lambda + \cal w_\db)k
\\[-\jot]
  & \quad\text{by definition of $\sigma$}
\\
{=}\; & \cal W_\m((y\>\bar{t})\theta)
\\[-\jot]
  & \quad\text{by definition of $\cal W_\m$ and substitution}
\end{align*}

\medskip

\noindent
\textsc{Case $t = \cst{f}(\bar u)\> \bar t_n$:}\enskip
We have
\begin{align*}
      & \cal W(\cst{f}(\bar u)\> \bar t_n){\big|}_{\sigma}
\\
{=}\; & \cal w(\cst{f}) + \sum\nolimits_{i=1}^n \cal k(\cst{f}, i)
    \cal W(t_i){\big|}_{\sigma} + \cal H(\tau){\big|}_{\sigma}
  && \text{by definition of $\cal W$}
\\
{=}\; & \cal w(\cst{f}) + \sum\nolimits_{i=1}^n \cal k(\cst{f}, i)
    \cal W_\m(t_i\theta) + \cal H(\tau){\big|}_{\sigma}
  && \text{by the induction hypothesis}
\\
{=}\; & \cal w(\cst{f}) + \sum\nolimits_{i=1}^n \cal k(\cst{f}, i)
    \cal W_\m(t_i\theta) + (\cal w_\lambda + \cal w_\db)k
  && \text{by definition of $\cal H$}
\\
{=}\; & \cal W_\m((\cst{f}(\bar u)\> \bar t_n)\theta)
  && \text{by definition of $\cal W_\m$, substitution, and}
\\[-\jot]
  &&& \text{hypothesis $(\mathrm{K})$}
\end{align*}

\medskip

\noindent
\textsc{Case $t = \DB{m}\> \bar t_n$:}\enskip
\noindent
Similar to the previous case.

\medskip

\noindent
\textsc{Case $t = \lambda\> t$:}\enskip
We have
\begin{align*}
      & \cal W(\lambda\> t){\big|}_{\sigma}
\\
{=}\; & \cal w_\lambda + \cal W(t){\big|}_{\sigma}
  && \text{by definition of $\cal W$}
\\
{=}\; & \cal w_\lambda + \cal W_\m(t\theta)
  && \text{by the induction hypothesis}
\\
{=}\; & \cal W_\m(\lambda\> (t\theta))
  && \text{by definition of $\cal W_\m$}
\\
{=}\; & \cal W_\m((\lambda\> t)\theta)
  && \text{by definition of substitution}
\qedhere
\end{align*}
\end{proof}

\begin{defi}
\label{def:truncating-substition}
Let $\sigma$ be a substitution. Given a preterm $t$, let $t\sigma?$ denote its
\emph{truncating substitution}, in which any outermost $\lambda$s introduced
due to $\eta$-expansion as a result of applying $\sigma$ to $t$ are omitted.
(In contrast, any introduced De Bruijn indices are kept.)
\end{defi}

For example, if $\cst{c} : \alpha$ and $\alpha\sigma = \kappa \to \kappa \to
\kappa$, then $\cst{c}\sigma? = \cst{c}\>\DB{1}\>\DB{0}$, whereas
$\cst{c}\sigma = \lambda\>\lambda\>\cst{c}\>\DB{1}\>\DB{0}$.

\begin{thm}
\label{thm:lkb-monomorphizing-subst-stable}
If $t \Greater_\lkb s$, then $t\theta \Greater_\lkbm s\theta$ for any monomorphizing
type substitution~$\theta$.
If $t \Greatersim_\lkb s$, then $t\theta \Greatersim_\lkbm s\theta$ for any
monomorphizing type substitution~$\theta$.

\end{thm}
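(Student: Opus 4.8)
The plan is to proceed by mutual induction on the shape of the derivations of $t \Greater_\lkb s$ and $t \Greatersim_\lkb s$, mirroring the proof of Theorem~\ref{thm:lkbm-grounding-subst-stable} one level up. Three recurring ingredients do most of the routine work. \emph{(i)}~Weight side conditions transfer: by Lemma~\ref{lem:weight-monomorphizing-subst}, $\cal W(t){\big|}_{\POLY{\theta}} = \cal W_\m(t\theta)$, and a polynomial inequality $\cal W(t) \ge \cal W(s)$ is preserved under the polynomial substitution $\POLY{\theta}$ (composing $\POLY{\theta}$ with any assignment yields an assignment); moreover every rule of $\Greater_\lkb$ entails $\cal W(t) \ge \cal W(s)$, as does every rule of $\Greatersim_\lkb$ once rule~\ref{itm:nslkbm-var} is covered by the obvious polymorphic analogue of Lemma~\ref{lem:nslkbm-weights}. \emph{(ii)}~Type comparisons $\upsilon \Greater_\ty \tau$ and $\bar\upsilon \gtgtlex_\ty \bar\tau$ transfer because the first-order KBO is stable under substitution (here the monomorphizing $\theta$ restricted to types) and its lexicographic extension preserves stability. \emph{(iii)}~Comparisons of curried-argument tuples and of $\lambda$-bodies transfer by the induction hypothesis, together with the observation---proved by an inner induction on tuple length exactly as in Theorem~\ref{thm:lkbm-grounding-subst-stable}---that $\bar t \gtgtsimlex_\lkb \bar s$ implies $\bar t\theta \gtgtlex_\lkbm \bar s\theta$ (with a genuine strict step, hence $\bar t\theta \ne \bar s\theta$) and that $\bar t \gtgtapproxlex_\lkb \bar s$ implies $\bar t\theta \gtgtapproxlex_\lkbm \bar s\theta$.

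The genuinely new phenomenon, and the crux, is that applying $\theta$ may $\eta$-expand a preterm whose type is a type variable $\alpha$ with $\alpha\theta$ functional, changing its top-level shape. I would isolate this in a structural lemma phrased in terms of the truncating substitution of Definition~\ref{def:truncating-substition}: for every preterm $u : \tau$ there is a number $k$---with $k = 0$ unless $\tau$ is a type variable and $\tau\theta$ is functional, in which case $k$ and the leading $\lambda$ types and trailing De~Bruijn arguments depend only on $\tau$ and $\theta$---such that $u\theta = \lambda\langle\rho_1\theta\rangle\cdots\lambda\langle\rho_k\theta\rangle\>(u\theta?)$, where $\tau\theta = \rho_1 \to \cdots \to \rho_k \to \kappa$ with $\kappa$ non-functional, and $u\theta?$ retains the head kind of $u$ (a variable, symbol, De~Bruijn index, or $\lambda$), has immediate arguments obtained from those of $u$ by applying $\theta$ and shifting by $k$, followed by the $\eta$-long forms of the freshly bound indices. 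In particular, $u\theta$ is a $\lambda$-abstraction exactly when $u$ is one or when $\tau$ is a type variable with $\tau\theta$ functional.

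Equipped with this, each rule of Definition~\ref{def:lkb} is replayed as follows. If neither $t\theta$ nor $s\theta$ is $\eta$-expanded at the top, the corresponding rule of $\Greater_\lkbm$ (resp.\ $\Greatersim_\lkbm$) applies directly via \emph{(i)}--\emph{(iii)}. In the rules where $t$ and $s$ have a common head and hence, when their type is a type variable, the same $\eta$-expansion---namely \ref{itm:lkbm-db-args}, \ref{itm:lkbm-sym-args}, \ref{itm:nslkbm-var}, and their strict-KBO counterparts, and also \ref{itm:lkbm-lam-body}/\ref{itm:nslkbm-lam-body} where the shared head is $\lambda$ and no expansion occurs---both sides acquire identical leading $\lambda$s and identical trailing arguments; I would peel the $\lambda$s off by $k$ applications of rule~\ref{itm:lkbm-lam-body} (resp.\ \ref{itm:nslkbm-lam-body}), whose weight conditions are supplied by \emph{(i)}, and then apply the rule matching the original one to the cores $t\theta?$ and $s\theta?$, noting that the appended identical arguments do not disturb the head comparison nor the already-strict lexicographic step in the argument tuples. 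Finally, in the group-(2) rules, where $t \unrhd s$ holds because $s$'s type is not a type variable: then $s\theta$ is not $\eta$-expanded and is not a $\lambda$-abstraction; if $t\theta$ does get $\eta$-expanded (because $t$'s type is a type variable with functional image), it becomes a $\lambda$-abstraction, and we simply switch to rule~\ref{itm:lkbm-lam-other} (resp.\ \ref{itm:nslkbm-lam-other}), whose sole requirement $\cal W_\m(t\theta) \ge \cal W_\m(s\theta)$ again follows from \emph{(i)}; otherwise $t\theta$ keeps the head kind of $t$ and the original rule applies.

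I expect the main obstacle to be the bookkeeping in the structural lemma about truncating substitutions---in particular the De~Bruijn index shifts---and in verifying that the appended trailing arguments are genuinely identical on the two sides and genuinely inert with respect to the rule being replayed; once that is pinned down, the remainder is a mechanical, if lengthy, case analysis over the rules of Definition~\ref{def:lkb}, exactly parallel for $\Greater_\lkb$ and $\Greatersim_\lkb$.
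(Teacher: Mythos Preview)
Your overall plan matches the paper's---induction on the derivation, with the $\eta$-expansion caused by type-variable instantiation as the crux---but your case analysis for the group-(2) rules has a genuine gap. You write that in these rules ``$t \unrhd s$ holds because $s$'s type is not a type variable,'' yet by Definition~\ref{def:compare-type-poly} $t \unrhd s$ also holds when $t$ and $s$ share the \emph{same} type variable~$\alpha$. This second possibility does arise for rules~\ref{itm:lkbm-db-diff}, \ref{itm:lkbm-db-other}, \ref{itm:lkbm-sym-diff}, and~\ref{itm:lkbm-sym-typ}: for instance $t = \DB{n}\>\bar t$ and $s = \DB{m}\>\bar s$ with $n > m$ may both have type~$\alpha$. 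If $\alpha\theta$ is functional, then \emph{both} $t\theta$ and $s\theta$ become $\lambda$-abstractions, and your ``switch to rule~\ref{itm:lkbm-lam-other}'' fails because that rule forbids a $\lambda$ on the right. Your peeling argument would handle this case, but you explicitly confined it to the common-head rules~\ref{itm:lkbm-db-args}, \ref{itm:lkbm-sym-args}, \ref{itm:nslkbm-var}, and~\ref{itm:lkbm-lam-body}. The fix is to extend the peeling treatment to all rules whenever $t$ and $s$ have the same variable type: strip the identical leading $\lambda$s via rule~\ref{itm:lkbm-lam-body}, then replay the original rule on $t\theta?$ versus $s\theta?$, where the appended De~Bruijn arguments coincide and are inert. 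This is precisely the paper's Subcase~3.

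Two smaller remarks. In ingredient~(iii) you claim $\bar t \gtgtsimlex_\lkb \bar s$ yields $\bar t\theta \gtgtlex_\lkbm \bar s\theta$; that is too strong, since the $\Greatersim$-prefix only gives $\Greatersim_\lkbm$ after the induction hypothesis, not equality. Fortunately the target rules of $\Greater_\lkbm$ use $\gtgtsimlex_\lkbm$, which is both what you actually obtain and what you need. On the other hand, your handling of the case where only $t$ $\eta$-expands (falling back on rule~\ref{itm:lkbm-lam-other}) is arguably cleaner than the paper's: the paper asserts a strict weight increase and invokes rule~\ref{itm:lkbm-wt}, but the strictness is not carefully justified there, whereas your route needs only the nonstrict weight inequality you already have from Lemma~\ref{lem:weight-monomorphizing-subst}.
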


\begin{proof}
The proof of the two claims is by induction on the shape of the derivation of $t \Greater_\lkb s$
and $t \Greatersim_\lkb s$.

\medskip
For the first claim, we proceed by case distinction on the rule deriving $t \Greater_\lkb s$:

\noindent
\textsc{Rule \ref{itm:lkbm-wt}:}\enskip
From $\cal W(t) > \cal W(s)$,
by Lemma~\ref{lem:weight-monomorphizing-subst}, we have
$\cal W_\m(t\theta_1) > \cal W_\m(s\theta_1)$.
Thus, rule \ref{itm:lkbm-wt} of $\Greater_\lkbm$ applies.

\medskip

\noindent
\textsc{Rules \ref{itm:lkbm-lam}, \ref{itm:lkbm-db}, \ref{itm:lkbm-sym}:}\enskip
We have $\cal W(t) \ge \cal W(s)$. We also have $t \unrhd s$, either because
the rule requires it or because it follows from the types of $t$~and~$s$. We
perform a case analysis on $t \unrhd s$.

\medskip

\noindent
\textsc{Subcase 1}, where $t$ and $s$ are of nonvariable types:\enskip
The corresponding rule \ref{itm:lkbm-lam}, \ref{itm:lkbm-db}, or
\ref{itm:lkbm-sym} for $\Greater_\lkbm$ applies. The only mismatch between the two
definitions is the use of $\Greater_\lkb$ versus $\Greater_\lkbm$, and it is repaired by the
induction hypothesis.

\medskip

\noindent
\textsc{Subcase 2}, where $t$ has some variable type $\alpha$ but not
$s$:\enskip If $\alpha\theta$ is a function type, then applying $\theta$ to $t$
results in some $\eta$-expansion, which leads to a heavier weight;
rule~\ref{itm:lkbm-wt} then applies. Otherwise, $\alpha\theta$ is not a
function type, and the reasoning is as for subcase~1.

\medskip

\noindent
\textsc{Subcase 3}, where $t$ and $s$ have some variable type $\alpha$:\enskip
Let $k$ be the number of curried arguments expected by values of type
$\alpha\theta$. This means that we have
\begin{align*}
  t\theta & = \underbrace{\lambda\ldots\lambda}_{k~\text{times}}\>
    (t\theta){\uparrow}^k\> \lnf{(\DB{k-1})}\>\ldots\>\lnf{\DB{0}}
& s\theta & = \underbrace{\lambda\ldots\lambda}_{k~\text{times}}\>
  (s\theta){\uparrow}^k\> \lnf{(\DB{k-1})}\>\ldots\>\lnf{\DB{0}}
\end{align*}
First, we apply rule~\ref{itm:lkbm-lam-body} $k$~times to remove the
leading $\lambda$s on both sides. It remains to show that $t\theta? \Greater_\lkbm
s\theta?$. By inspection of the rules
\ref{itm:lkbm-db-diff}~and~\ref{itm:lkbm-db-args} of $\Greater_\lkb$,
we find that $t{\uparrow}^k \Greater_\lkb s{\uparrow}^k$. For each of the $\Greater_\lkb$ rules that
could have been used to establish this, we can check that the corresponding
$\Greater_\lkbm$ rule is applicable. The additional De Bruijn indices
$\lnf{(\DB{k-1})}\>\ldots\>\lnf{\DB{0}}$ on both sides are harmless.

\medskip

The proof of the second claim is analogous.
\end{proof}

\begin{thm}
\label{thm:lkb-coincide-monomorphic}
The relation $\Greater_\lkb$ coincides with $\Greater_\lkbm$ on monomorphic preterms.
The relation $\Greatersim_\lkb$ coincides with $\Greatersim_\lkbm$ on monomorphic preterms.
\end{thm}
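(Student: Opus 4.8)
The plan is to follow the template of Theorem~\ref{thm:lkbm-coincide-ground}, proving the two inclusions separately. For the inclusion $\Greater_\lkb \subseteq \Greater_\lkbm$ and $\Greatersim_\lkb \subseteq \Greatersim_\lkbm$ on monomorphic preterms, I would simply invoke Theorem~\ref{thm:lkb-monomorphizing-subst-stable} with an arbitrary monomorphizing type substitution~$\theta$. Since monomorphic preterms contain no type variables, $t\theta = t$ and $s\theta = s$, so $t \Greater_\lkb s$ yields $t \Greater_\lkbm s$ directly, and likewise for $\Greatersim$. (If $\GTy{\Sigma_\ty}$ is empty, no monomorphizing substitution exists, but then there are no monomorphic preterms and the statement holds vacuously, so we may assume a ground type, hence such a~$\theta$, exists.)

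For the reverse inclusion I would prove $t \Greater_\lkbm s \Rightarrow t \Greater_\lkb s$ and $t \Greatersim_\lkbm s \Rightarrow t \Greatersim_\lkb s$ by mutual induction on the derivations, after two preparatory observations. First, $\cal W$ and $\cal W_\m$ agree on monomorphic preterms: every type occurring in such a preterm is monomorphic, hence not a type variable, so each $\cal H(\tau)$ monomial in Definition~\ref{def:weight-lkb} is $0$; moreover the normalization of Definition~\ref{def:normalize-lk-polymorphic} coincides with that of Definition~\ref{def:normalize-lkm} up to the harmless correspondence $k\langle\tau\rangle \leftrightarrow k_\tau$, so the two induce the same $\mathbf{w}$- and $\mathbf{k}$-indeterminates and hence the same weight polynomials. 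Second, the first-order KBO $\Greater_\ty$ on $\Tm{\Sigma_\ty}{X_\ty}$ restricted to monomorphic types coincides with $\Greater_\ty$ on $\Tm{\Sigma_\ty}{\emptyset}$. Consequently every weight comparison, type comparison, and argument-tuple comparison appearing in the rules means the same thing for the monomorphic and the polymorphic order.

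With these in hand, the induction is routine rule-matching against Definition~\ref{def:lkb}. If $t \Greater_\lkbm s$ was derived by rule~\ref{itm:lkbm-wt}, \ref{itm:lkbm-lam-typ}, \ref{itm:lkbm-lam-body}, \ref{itm:lkbm-db-args}, or~\ref{itm:lkbm-sym-args}, then the homonymous rule applies for $\Greater_\lkb$ via clause~(1) of Definition~\ref{def:lkb}, the recursive premises (which concern monomorphic subpreterms) being discharged by the induction hypothesis. If it was derived by rule~\ref{itm:lkbm-lam-other}, \ref{itm:lkbm-db-diff}, \ref{itm:lkbm-db-other}, \ref{itm:lkbm-sym-diff}, or~\ref{itm:lkbm-sym-typ}, then clause~(2) of Definition~\ref{def:lkb} applies, since the side condition $t \unrhd s$ holds trivially: the type of $s$ is monomorphic and therefore not a type variable. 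The argument for $\Greatersim_\lkbm \Rightarrow \Greatersim_\lkb$ is identical, with the extra rule~\ref{itm:nslkbm-var} also handled by clause~(1).

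The main obstacle I anticipate is not the rule-matching but pinning down the weight claim precisely, so that ``$\cal W_\m(t) > \cal W_\m(s)$'' and ``$\cal W(t) > \cal W(s)$'' are literally the same constraint: one must verify both that the $\cal H$-monomials genuinely vanish and that the two normalization functions produce matching indeterminate indices, and recall that polynomial comparison quantifies over all assignments (the extra $\mathbf{h}_\alpha$ indeterminates of the polymorphic setting contributing nothing on $\mathbf{h}$-free polynomials). Once that bookkeeping is settled, the remainder follows the now-familiar pattern.
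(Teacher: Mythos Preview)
Your proposal is correct and follows essentially the same approach as the paper: one direction via Theorem~\ref{thm:lkb-monomorphizing-subst-stable}, the other by induction on the derivation, using that $\cal W$ and $\cal W_\m$ coincide on monomorphic preterms. You are in fact more thorough than the paper, which leaves implicit your observations that $t \unrhd s$ holds trivially (since $s$'s type is not a variable), that the two normalization functions induce matching indeterminates, and that the type orders agree on ground types.
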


\begin{proof}
One direction of the equivalences follows by
Theorem~\ref{thm:lkb-monomorphizing-subst-stable}. It remains to show that $t
\Greater_\lkbm s$ implies $t \Greater_\lkb s$ and that $t \Greatersim_\lkbm s$
implies $t \Greatersim_\lkb s$ . The proof is by induction on the definition of
$\Greater_\lkbm$ and $\Greatersim_\lkbm$. It is easy to see that to every case
in the definition of $\Greater_\lkbm$ corresponds a case in the definition of
$\Greater_\lkb$ and every case in the definition of $\Greatersim_\lkbm$
corresponds a case in the definition of $\Greatersim_\lkb$. For the weights,
$\cal W_\m$ and $\cal W$ coincide. In particular, for a monomorphic preterm, the
polynomial returned by $\cal W$ contains no $\mathbf{h}_\alpha$ indeterminates.
\end{proof}

\begin{lem}
\label{lem:nlkb-implies-lkb}
If $t \Greater_\lkb s$, then $t \Greatersim_\lkb s$.
\end{lem}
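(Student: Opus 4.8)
The plan is to follow the proof of the monomorphic analogue, Lemma~\ref{lem:lkbm-implies-nslkbm}, essentially verbatim, proceeding by induction on the derivation of $t \Greater_\lkb s$. Since $\Greater_\lkb$ is specified by reusing a fixed list of the $\Greater_\lkbm$ rules \emph{mutatis mutandis} (with the side condition $t \unrhd s$ imposed on the second group), and since $\Greatersim_\lkb$ reuses exactly the corresponding $\Greatersim_\lkbm$ rules, partitioned into the same two groups, the argument reduces to exhibiting a rule-by-rule correspondence and checking that each premise of a $\Greater_\lkb$ rule entails the premise of the matching $\Greatersim_\lkb$ rule.

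First I would record the correspondence: rule~\ref{itm:lkbm-wt} matches rule~\ref{itm:nslkbm-wt}; rules~\ref{itm:lkbm-lam-typ}, \ref{itm:lkbm-lam-body}, \ref{itm:lkbm-db-args}, \ref{itm:lkbm-sym-args} match rules~\ref{itm:nslkbm-lam-typ}, \ref{itm:nslkbm-lam-body}, \ref{itm:nslkbm-db-args}, \ref{itm:nslkbm-sym-args}; and the second-group rules~\ref{itm:lkbm-lam-other}, \ref{itm:lkbm-db-diff}, \ref{itm:lkbm-db-other}, \ref{itm:lkbm-sym-diff}, \ref{itm:lkbm-sym-typ} match rules~\ref{itm:nslkbm-lam-other}, \ref{itm:nslkbm-db-diff}, \ref{itm:nslkbm-db-other}, \ref{itm:nslkbm-sym-diff}, \ref{itm:nslkbm-sym-typ}. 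In the second group the side condition $t \unrhd s$ is carried over literally, and the weight premise ($\cal W(t) > \cal W(s)$ or $\cal W(t) \ge \cal W(s)$) is identical in the strict and nonstrict versions of every reused rule, so it transfers without change. The rule~\ref{itm:nslkbm-var} of $\Greatersim_\lkb$ is never invoked, since a $\Greater_\lkb$ derivation whose left-hand side is variable-headed can only use the weight rule~\ref{itm:lkbm-wt}, which is handled by rule~\ref{itm:nslkbm-wt}.

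For the four first-group cases other than~\ref{itm:lkbm-wt}, the only discrepancies are that a recursive comparison $\Greater_\lkb$ is relaxed to $\Greatersim_\lkb$ and that $\gtgtsimlex_\lkb$ is relaxed to $\gtgtapproxlex_\lkb$. The first is supplied directly by the induction hypothesis (used for rule~\ref{itm:nslkbm-lam-body} and, componentwise, for rules~\ref{itm:nslkbm-db-args} and~\ref{itm:nslkbm-sym-args}). For the second, I would appeal to the same elementary fact used in Lemma~\ref{lem:lkbm-implies-nslkbm}, namely that $\gtgtsimlex$ implies $\gtgtapproxlex$ by definition: a one-line induction on the length of the tuples (the empty tuple is vacuous since $() \gtgtsimlex ()$ fails, and in the step the head comparison is unchanged while the tail is handled by the inner induction together with the outer induction hypothesis).

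I do not expect a genuine obstacle here; the statement is deliberately easy because the two polymorphic orders are defined so as to make the strict-implies-nonstrict inclusion a matter of bookkeeping. The one point to state carefully is that the weight premises of $\Greater_\lkb$ are expressed with the polymorphic weight $\cal W$ (including its $\cal H$ summands), not $\cal W_\m$, but since those premises are imposed identically in both the strict and the nonstrict variant of each reused rule, this distinction plays no role in the implication.
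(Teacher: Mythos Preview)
Your proposal is correct and matches the paper's approach exactly: the paper's own proof is simply ``Analogous to Lemma~\ref{lem:lkbm-implies-nslkbm},'' and you have spelled out precisely that analogy, including the rule-by-rule correspondence, the inheritance of the $t \unrhd s$ side condition, the use of the induction hypothesis for rule~\ref{itm:nslkbm-lam-body}, and the observation that $\gtgtsimlex$ implies $\gtgtapproxlex$ by definition.
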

\begin{proof}
Analogous to Lemma~\ref{lem:lkbm-implies-nslkbm}.
\end{proof}

\begin{thm}
\label{thm:lkb-transitive}
If $t \Greatersim_\lkb u$ and $u \Greatersim_\lkb s$, then $t \Greatersim_\lkb s$.
If moreover $t \Greater_\lkb u$ or $u \Greater_\lkb s$, then $t \Greater_\lkb s$.
\end{thm}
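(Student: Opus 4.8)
The plan is to follow the proof of Theorem~\ref{thm:lkbm-transitive} step for step, since by Definition~\ref{def:lkb} the rules defining $\Greater_\lkb$ and $\Greatersim_\lkb$ are the rules of $\Greater_\lkbm$ and $\Greatersim_\lkbm$ applied mutatis mutandis, the only new ingredient being the side condition $t \unrhd s$ attached to the rules of group~(2). Three preliminary observations keep the extra bookkeeping under control. First, $\unrhd$ is transitive: if $t \unrhd u$ and $u \unrhd s$, then either $s$'s type is not a type variable, so $t \unrhd s$ holds outright, or $s : \alpha$ for a type variable $\alpha$, in which case $u \unrhd s$ forces $u : \alpha$ and then $t \unrhd u$ forces $t : \alpha = s$'s type, so again $t \unrhd s$. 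Second, whenever $t \Greatersim_\lkb s$ (or $t \Greater_\lkb s$) is established by any rule other than the weight rule~\ref{itm:nslkbm-wt}, we have $t \unrhd s$: the rules of group~(2) require it explicitly, while for the remaining rules of group~(1) the comparands $t$~and~$s$ either have the same type (rules~\ref{itm:nslkbm-var}, \ref{itm:nslkbm-db-args}, \ref{itm:nslkbm-sym-args}, because the heads and all argument counts coincide) or $s$ is a $\lambda$-abstraction and hence of function type (rules~\ref{itm:nslkbm-lam-typ}, \ref{itm:nslkbm-lam-body}), so $t \unrhd s$ is automatic. Third, a polymorphic counterpart of Lemma~\ref{lem:nslkbm-weights} holds: $t \Greatersim_\lkb s$ implies $\cal W(t) \ge \cal W(s)$. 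For all rules but rule~\ref{itm:nslkbm-var} this is immediate from the rule's own weight precondition; for rule~\ref{itm:nslkbm-var} we repeat the argument from the monomorphic proof, noting in addition that since every $t_i$ is steady, $t_i$ and $s_i$ share a nonvariable type and $t, s$ share their own type $\tau$, so the $\cal H$ contributions cancel and $\cal W(t) - \cal W(s) = \sum_i \mathbf{k}_{y,i}\,(\cal W(t_i) - \cal W(s_i)) \ge 0$ by the structural induction hypothesis together with $\mathbf{k}_{y,i} \ge 0$.

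With these facts at hand, we prove both claims by well-founded induction on the multiset $\{|t|, |u|, |s|\}$, using exactly the case analysis of Theorem~\ref{thm:lkbm-transitive}. If $t \Greatersim_\lkb u$ or $u \Greatersim_\lkb s$ was derived by rule~\ref{itm:nslkbm-wt}, then using the third observation and the transitivity of $>$ and $\ge$ on $\Poly$ (which hold pointwise over assignments by Definition~\ref{def:polynomials-monomorphic}) we get $\cal W(t) > \cal W(s)$, whence $t \Greater_\lkb s$ by rule~\ref{itm:nslkbm-wt}. Otherwise neither premise was derived by rule~\ref{itm:nslkbm-wt}; by the second observation we then have $t \unrhd u$ and $u \unrhd s$, hence $t \unrhd s$ by the first observation. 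Moreover $\cal W(t) = \cal W(u) = \cal W(s)$, so a concluding weight step is ruled out, and we walk through the same cases as in the monomorphic proof, picking in each case the same concluding rule and discharging its recursive premises (on proper subterms or suffixes of argument tuples, hence covered by the induction hypothesis because the size multiset strictly decreases) exactly as there. The sole addition is that when the concluding rule belongs to group~(2), its $t \unrhd s$ side condition is supplied by the $\unrhd$ fact just derived. The strict strengthening is handled as in the monomorphic proof: if in addition $t \Greater_\lkb u$ or $u \Greater_\lkb s$, that step was obtained by a strict rule, and the analogous strict concluding rule applies, with its $\unrhd$ side condition (if any) again available because $t \Greater_\lkb u$ implies $t \Greatersim_\lkb u$ by Lemma~\ref{lem:nlkb-implies-lkb}.

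I expect the difficulty to be one of length and care rather than of mathematical substance: the case analysis of Theorem~\ref{thm:lkbm-transitive} is already long, and it must be rechecked case by case to confirm that replacing natural-number weights by $\Poly$-valued weights (with the extra $\cal H$ monomials) never breaks an argument and that the $t \unrhd s$ side conditions always propagate. The step most worth scrutinizing is the interplay between rule~\ref{itm:nslkbm-var} and the weight rule, where one must verify that the $\cal H(\tau)$ terms on the two sides agree; the steadiness requirement on the arguments of a variable head is precisely what guarantees this.
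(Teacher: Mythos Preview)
Your proposal is correct and takes essentially the same approach as the paper, which simply says the proof is ``analogous to Theorem~\ref{thm:lkbm-transitive}, using the fact that $\unrhd$ is transitive.'' Your three preliminary observations make explicit what the paper leaves implicit in that one-line appeal, and your case analysis faithfully tracks the monomorphic argument while discharging the extra $\unrhd$ side conditions via transitivity of $\unrhd$.
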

\begin{proof}
Analogous to Theorem~\ref{thm:lkbm-transitive},
using the fact that $\unrhd$ is transitive.
\end{proof}

\begin{thm}\label{thm:lkb-variable-guarantee}
Let $t \Greater_\lkb s$.
Let $\theta$ be a substitution such that all variables in
$t\theta$ and $s\theta$ are nonfunctional term variables.
Let $s\theta$ contain a nonfunctional variable $x$
outside of parameters.
Then $t\theta$ must also contain $x$ outside of parameters.
\end{thm}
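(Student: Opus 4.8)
The plan is to mirror the proof of Theorem~\ref{thm:lkbm-variable-guarantee} after first stripping off the type part of $\theta$ by means of Theorem~\ref{thm:lkb-monomorphizing-subst-stable}. I would write $\theta$ as a composition $\theta_\ty\rho$, where $\theta_\ty$ is the restriction of $\theta$ to type variables and $\rho$ is the term substitution that sends each term variable $y$ occurring in $t$ or $s$ to the term $y\theta$ (and is the identity elsewhere), so that $t\theta = (t\theta_\ty)\rho$ and $s\theta = (s\theta_\ty)\rho$. Since, by hypothesis, $t\theta$ and $s\theta$ contain no type variables, $\theta_\ty$ instantiates every type variable occurring in $t$ or $s$ with a ground type; extending $\theta_\ty$ arbitrarily on the remaining type variables makes it a monomorphizing type substitution. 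Hence $t\theta_\ty$ and $s\theta_\ty$ are monomorphic preterms, and $\rho$ is a substitution between monomorphic preterms whose image contains only nonfunctional term variables.

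First I would apply Theorem~\ref{thm:lkb-monomorphizing-subst-stable} to $t \Greater_\lkb s$ with the monomorphizing substitution $\theta_\ty$, obtaining $t\theta_\ty \Greater_\lkbm s\theta_\ty$. Then I would apply Theorem~\ref{thm:lkbm-variable-guarantee} to this comparison with the substitution $\rho$: its hypotheses hold, since all variables in $(t\theta_\ty)\rho = t\theta$ and $(s\theta_\ty)\rho = s\theta$ are nonfunctional term variables by assumption, and $s\theta$ contains the nonfunctional variable $x$ outside of parameters. The theorem then yields that $t\theta$ contains $x$ outside of parameters, which is exactly the claim.

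The one \emph{delicate} point is the factorization $\theta = \theta_\ty\rho$: substitution in polymorphic higher-order logic acts simultaneously on type and term variables, and instantiating a type variable with a functional type triggers $\eta$-expansion. This is harmless here because $\rho$ is defined to send $y$ directly to the already $\eta$-long term $y\theta$, so composing $\theta_\ty$ with $\rho$ reproduces exactly $t\theta$ and $s\theta$. An alternative, more self-contained route avoids Theorem~\ref{thm:lkb-monomorphizing-subst-stable}: from $t \Greater_\lkb s$ one reads off $\cal W(t) \ge \cal W(s)$ by inspection of the rules defining $\Greater_\lkb$; composing Lemma~\ref{lem:weight-monomorphizing-subst} with Lemma~\ref{lem:weight-subst} then gives $\cal W_\m(t\theta) \ge \cal W_\m(s\theta)$; since every variable in $s\theta$ is nonfunctional, $\cal W_\m(s\theta)$ contains $\mathbf{w}_x$ with a nonzero coefficient, hence so does $\cal W_\m(t\theta)$, which forces $x$ to occur in $t\theta$ outside of parameters. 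I would present the factorization argument, since it reuses the already-packaged results most economically.
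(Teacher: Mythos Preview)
Your proposal is correct and follows exactly the same approach as the paper, which proves the theorem in a single line: ``By Theorems~\ref{thm:lkbm-variable-guarantee} and~\ref{thm:lkb-monomorphizing-subst-stable}.'' You have simply unpacked how these two results combine, including the factorization $\theta = \theta_\ty\rho$ that the paper leaves implicit.
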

\begin{proof}
By Theorems \ref{thm:lkbm-variable-guarantee} and \ref{thm:lkb-monomorphizing-subst-stable}.
\end{proof}

\subsection{\texorpdfstring{$\bm{\lambda}$LPO}{Lambda LPO}}
\label{ssec:lambda-lpo-polymorphic}

\begin{defi}
\label{def:llp}
Let $\Succ^\ty$ be a precedence on $\Sigma_\ty$.
Let $\Greater_\ty$ be the strict first-order LPO on $\Tm{\Sigma_\ty}{X_\ty}$
induced by $\Succ^\ty$.
Let $\Succ$ be a precedence on $\Sigma$.
Let $\cst{ws} \in \Sigma$ be the watershed.

The \emph{strict polymorphic $\lambda$LPO} $\Greater_\llp$ and the \emph{nonstrict
polymorphic $\lambda$LPO} $\Greatersim_\llp$ induced by $\Succ^\ty, \Succ$ on
$\HTmp{\Sigma_\ty}{\Sigma}{X_\ty}{X}$ are defined by mutual induction. The strict
relation is defined so that $t \Greater_\llp s$ if
\begin{enumerate}[(1)]
\item the rule
  \ref{itm:llpm-sym-sub},
  \ref{itm:llpm-sym-args},
  \ref{itm:llpm-sym-other},
  \ref{itm:llpm-db-sub},
  \ref{itm:llpm-db-args},
  \ref{itm:llpm-lam-sub},
  \ref{itm:llpm-lam-types}, or
  \ref{itm:llpm-lam-bodies}
  of the definition of $\Greater_\llp$ applies mutatis mutandi, or
\item the rule
  \ref{itm:llpm-sym-diff} or
  \ref{itm:llpm-sym-types}
  of the definition of $\Greater_\llp$ applies mutatis mutandi
  and either $\cst{g} \Succ \cst{ws}$ or $t \unrhd s$ holds, or
\item the rule
  \ref{itm:llpm-db-diff},
  \ref{itm:llpm-db-other}, or
  \ref{itm:llpm-lam-other}
  of the definition of $\Greater_\llp$ applies mutatis mutandi
  and $t \unrhd s$ holds.
\end{enumerate}
The nonstrict relation is defined so that $t \Greatersim_\llp s$ if
\begin{enumerate}[(1)]
\item the rule
  \ref{itm:nsllpm-var},
  \ref{itm:nsllpm-sym-sub},
  \ref{itm:nsllpm-sym-args},
  \ref{itm:nsllpm-sym-other},
  \ref{itm:nsllpm-db-sub},
  \ref{itm:nsllpm-db-args},
  \ref{itm:nsllpm-lam-sub},
  \ref{itm:nsllpm-lam-types}, or
  \ref{itm:nsllpm-lam-bodies}
  of the definition of $\Greatersim_\llp$ applies mutatis mutandi, or
\item the rule
  \ref{itm:nsllpm-sym-diff} or
  \ref{itm:nsllpm-sym-types}
  of the definition of $\Greater_\llp$ applies mutatis mutandi
  and $\cst{g} \Succ \cst{ws}$ or $t \unrhd s$ holds, or
\item the rule
  \ref{itm:nsllpm-db-diff},
  \ref{itm:nsllpm-db-other}, or
  \ref{itm:nsllpm-lam-other}
  of the definition of $\Greatersim_\llp$ applies mutatis
  mutandi and $t \unrhd s$ holds.
\end{enumerate}
\end{defi}

Like for $\lambda$KBO, the conditions $t \unrhd s$ are sometimes necessary to
guard against $\eta$-expansion on the right-hand side of $\Greatersim_\llp$.
However, they are not necessary in most cases. Consider the
precedence $\cst{h} \Succ \cst{f} \Succ \cst{a}$, with $\cst{h} \Succ \cst{ws}$,
and suppose $\cst{a} : \alpha$, $\cst{h} : \kappa$.
We allow the polymorphic comparison $\cst{h} \Greater_\llp \cst{a}$
even though instantiating $\alpha$ may lead to $\eta$-expansion of~$\cst{a}$:
\[\cst{h} \Greater_\llp \underbrace{\lambda\ldots\lambda}_{k~\text{times}}\> \cst{a}\> (\DB{k-1})\>\ldots\>\DB{0}\]
The key for this to work is that symbols above the watershed---here, $\cst{h}$---are considered larger
than both $\lambda$s and De Bruijn indices.

\begin{thm}
\label{thm:llp-nsllp-monomorphizing-subst-stable}
If $t \Greater_\llp s$, then $t\theta \Greatereq_\llpm t\theta? \Greater_\llpm s\theta$ for any
monomorphizing type substitution~$\theta$.
If $t \Greatersim_\llp s$, then $t\theta \Greatersim_\llpm s\theta$ for any
monomorphizing type substitution~$\theta$.
\end{thm}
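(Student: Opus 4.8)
The plan is to prove both implications simultaneously by induction on the structure of the mutual derivation of $t \Greater_\llp s$ and $t \Greatersim_\llp s$, following the template of the proof of Theorem~\ref{thm:lkb-monomorphizing-subst-stable}. I would first isolate three auxiliary observations. (i)~For every preterm $u$ we have $u\theta = \underbrace{\lambda\cdots\lambda}_{k}\>u\theta?$, where $k$ is the number of $\lambda$s introduced by $\eta$-expansion at the head of $u\theta$; applying rule~\ref{itm:llpm-lam-sub} $k$ times with Lemma~\ref{lem:nsllpm-reflexive} (and transitivity, Theorem~\ref{thm:llpm-transitive}) then yields $u\theta \Greatereq_\llpm u\theta?$, which discharges the leftmost relation in the first claim once and for all, so that it only remains to prove $t\theta? \Greater_\llpm s\theta$ in the strict case and $t\theta \Greatersim_\llpm s\theta$ in the nonstrict case. (ii)~The first-order LPO $\Greater_\ty$ and its lexicographic extension are stable under type substitution, and $\Greater_\llpm$ and $\Greatersim_\llpm$ are stable under the shift ${\uparrow}^k$; both facts are routine inspections of the rules, as in the proof of Theorem~\ref{thm:llpg-compat-orange-contexts}. (iii)~From $\bar t \gtgtsimlex_\llp \bar s$ (resp.\ $\gtgtapproxlex_\llp \bar s$) together with the induction hypothesis and~(i), one obtains $\bar t\theta \gtgtsimlex_\llpm \bar s\theta$ (resp.\ $\gtgtapproxlex_\llpm \bar s\theta$), by induction on tuple length as in the proof of Theorem~\ref{thm:llpm-grounding-subst-stable}. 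Throughout, transitivity of $\Greater_\llpm$ and $\Greatersim_\llpm$ lets me promote the induction hypothesis $t_i\theta \Greatereq_\llpm t_i\theta? \Greater_\llpm s_i\theta$ to $t_i\theta \Greater_\llpm s_i\theta$ where convenient.

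The case analysis follows the three rule groups of Definition~\ref{def:llp}. When $t$ has a nonvariable type, no $\eta$-expansion occurs at the head, so $t\theta? = t\theta$, the parameters and curried arguments of $t\theta$ are exactly the $\theta$-instances of those of $t$, and the corresponding $\Greater_\llpm$- or $\Greatersim_\llpm$-rule applies directly to $t\theta$ and $s\theta$; the only discrepancies---$\Greater_\llp$ versus $\Greater_\llpm$, $\gtgtsimlex_\llp$ versus $\gtgtsimlex_\llpm$, and the $\cal{chkargs}$ side conditions---are repaired by the induction hypothesis and observation~(iii). Here I also use that in the rules comparing a symbol head with itself (\ref{itm:llpm-sym-types}, \ref{itm:llpm-sym-args}) the residual type of $s$ either equals that of $t$ or $\gtgtlex_\ty$-compares with it, so $s$ is of nonvariable type whenever $t$ is, and that the guards $\cst{g} \Succ \cst{ws}$ and $t \unrhd s$ transfer verbatim, since neither the precedence $\Succ$ nor the polymorphism comparison $\unrhd$ is affected by the type substitution $\theta$ (for $\unrhd$, recall that a nonvariable type stays nonvariable under $\theta$).

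The substantial case is when $t$ has a type variable $\alpha$ as its type and $\alpha\theta$ is a function type expecting $k$ curried arguments, so that $t\theta? = \cst{g}\langle\bar\upsilon\theta\rangle(\bar w\theta)\>(\bar t\theta){\uparrow}^k\>\lnf{(\DB{k-1})}\cdots\lnf{\DB{0}}$ (or the analogous De Bruijn-headed shape), acquiring $k$ fresh trailing De Bruijn-index arguments. If $s$ has a nonvariable type, then $t \unrhd s$ holds automatically and the rule deriving $t \Greater_\llp s$ must be a symbol- or De Bruijn-headed rule with a differing head, type argument, or De Bruijn index (rule~\ref{itm:llpm-sym-args} is excluded, since it would force $s$ to share $t$'s variable type); in each such case the matching $\Greater_\llpm$-rule still fires on $t\theta?$ versus $s\theta$, because the head comparison is untouched by the extra trailing arguments of $t\theta?$ (for De Bruijn heads, the shift even strictly increases the index), and the $\cal{chkargs}$ goals follow from the induction hypothesis---which conveniently supplies $t\theta? \Greater_\llpm s_i\theta$ as its own middle term---together with the subterm rule for the new $\lnf{(\DB{j})}$ arguments. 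If instead $s$ is also of type $\alpha$ (which is forced precisely when the guard $t \unrhd s$ is genuinely needed, in particular whenever $\cst{g} \Preceq \cst{ws}$, as for the $\cst{diff}$ symbol), then $s\theta = \lambda\cdots\lambda\>s\theta?$ with the same $k$, and I would peel the $k$ leading $\lambda$s off $s\theta$ one at a time---using rule~\ref{itm:llpm-sym-other} while $\cst{g} \Succ \cst{ws}$, and otherwise descending through the shared structure of $t\theta?$ and $s\theta?$ head by head, applying rules~\ref{itm:llpm-sym-args}, \ref{itm:llpm-db-args}, or \ref{itm:llpm-lam-bodies} where the heads agree and rule~\ref{itm:llpm-db-other} where they disagree (which, by the $\eta$-expansion bookkeeping, can only oppose a De Bruijn index to a subterm below the watershed)---until the comparison reduces to $t\theta?$ against $s\theta?$, where the identical trailing blocks $\lnf{(\DB{k-1})}\cdots\lnf{\DB{0}}$ cancel in the lexicographic comparison and the induction hypothesis (plus observation~(ii) for the shifted old arguments) closes the remaining subgoals. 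The second claim is parallel, with the extra base case, rule~\ref{itm:nsllpm-var}, handled exactly as rule~\ref{itm:nslkbm-var} is in the proof of Theorem~\ref{thm:lkbm-grounding-subst-stable}.

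I expect the main obstacle to be precisely this last situation---a head below the watershed, as with $\cst{diff}$---where the ``symbols above the watershed dominate $\lambda$s and De Bruijn indices'' shortcut is unavailable, forcing a genuine structural recursion paralleling the proof of Theorem~\ref{thm:llpg-diff} to show that stripping the $\eta$-expansion $\lambda$s and appending the trailing De Bruijn arguments preserves strictness. This is exactly what the seemingly roundabout formulation $t\theta \Greatereq_\llpm t\theta? \Greater_\llpm s\theta$, with its intermediate term $t\theta?$, is engineered to make inductively tractable.
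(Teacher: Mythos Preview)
Your plan is essentially the paper's proof: establish $t\theta \Greatereq_\llpm t\theta?$ once by iterating rule~\ref{itm:llpm-lam-sub}, then argue the sharper statement $t\theta? \Greater_\llpm s\theta$ (resp.\ $t\theta \Greatersim_\llpm s\theta$) by induction with a case split on the last rule, using precisely this sharper form as the induction hypothesis so that $\cal{chkargs}$ conditions can be discharged. Your observations (i)--(iii) and the subsequent case analysis on whether $t$'s type is a variable match the paper's organization; the paper inducts on $|t|+|s|$ rather than on the derivation, but that difference is inessential.

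The one place where your sketch goes beyond the paper and becomes muddled is the subcase where $t$ is headed by some $\cst{g} \Preceq \cst{ws}$ and both $t$ and $s$ have the same type variable~$\alpha$ as their type. Here $t\theta?$ is $\cst{g}$-headed while $s\theta$ is $\lambda$-headed, so none of the rules you invoke can fire at the top level to derive $t\theta? \Greater_\llpm s\theta$: rules~\ref{itm:llpm-sym-args}, \ref{itm:llpm-db-args}, \ref{itm:llpm-lam-bodies} need matching heads; rule~\ref{itm:llpm-db-other} needs a De~Bruijn head on the left; rule~\ref{itm:llpm-sym-other} needs $\cst{g} \Succ \cst{ws}$. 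Your appeal to the proof of Theorem~\ref{thm:llpg-diff} is misplaced---there the two sides genuinely share structure down to De~Bruijn leaves, whereas here the outermost constructors already disagree. The paper does not spell this subcase out either: for rules~\ref{itm:llpm-sym-diff} and~\ref{itm:llpm-sym-types} with $t \unrhd s$ it says only that the argument is ``similar to that of rule~\ref{itm:llpm-db-diff},'' whose Subcase~3 strips the $\lambda$s via rule~\ref{itm:llpm-db-other}, a rule with no analogue when the left head is a symbol below the watershed. So you are not worse off than the paper at this point, but be aware that your added detail does not actually close the gap---the ``descend through shared structure of $t\theta?$ and $s\theta?$'' idea proves $t\theta? \Greater_\llpm s\theta?$ at best, and that does not by itself yield $t\theta? \Greater_\llpm s\theta$.
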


\begin{proof}
As an induction hypothesis, the inequality $t\theta? \Greater_\llpm s\theta$ will be
useful to apply rules that have a $\cal{chkargs}$ condition.

First, we show $t\theta \Greatereq_\llpm t\theta?$. The only difference between the
two preterms is the presence of $k$ additional $\lambda$s on the left. If $k = 0$,
Lemma~\ref{lem:nsllpm-reflexive} can be used to establish $t\theta
\Greatereq_\llpm t\theta?$. Otherwise, the property can be established by
applying rule~\ref{itm:llpm-lam-sub} $k$~times.

The proof of the two remaining inequalities is by induction on $|t| + |s|$.

\medskip
\noindent
\textsc{Cases
\ref{itm:llpm-sym-sub}, \ref{itm:llpm-db-sub}, \ref{itm:llpm-lam-sub} of
$\Greater_\llp$:}
These cases all correspond to ``subterm'' rules.
If $t$ is of nonvariable type, we apply the corresponding rule of $\Greater_\llpm$,
relying on the induction hypothesis for the recursive comparison with
$\Greatersim_\llpm$.
Otherwise, suppose $t$ is of type $\alpha$. Let $k$ be the number of
curried arguments expected by values of type $\alpha\theta$. This means that
$t\theta?$ is of the form
\[t'{\uparrow}^k\> \lnf{(\DB{k-1})}\>\ldots\>\lnf{\DB{0}}\]
We apply the rule of $\Greater_\llpm$ corresponding to the rule that was used to
establish $t \Greater_\llp s$ in the first place.

For this to work, a recursive comparison must be possible. We show how it can be
done for the case of rule \ref{itm:llpm-sym-sub} of $\Greater_\llp$, where $t =
\cst{f}(\bar u)\> \bar t$; the other two cases are similar.
For rule \ref{itm:llpm-sym-sub} to have been applicable to establish
$\cst{f}(\bar u)\> \bar t \Greater_\llp s$, we must have $t_i \Greatersim_\llp s$ for some
$i$.
Now, to apply rule~\ref{itm:llpm-sym-sub} to derive
\[\cst{f}(\bar u\theta{\uparrow}^k)\> \bar t\theta{\uparrow}^k\>
  \lnf{(\DB{k-1})}\>\ldots\>\lnf{\DB{0}} \Greater_\llpm s\theta\]
we must show that $t_i\theta{\uparrow}^k
\Greatersim_\llpm s\theta$.
From $t_i \Greatersim_\llp s$, the induction hypothesis, and
by inspection of rules \ref{itm:nsllpm-db-diff}~and~\ref{itm:nsllpm-db-args} of
$\Greatersim_\llp$, we get $t_i\theta{\uparrow}^k
\Greatersim_\llpm t_i\theta \Greatersim_\llpm s\theta$, as desired.

\medskip
\noindent
\textsc{Rules \ref{itm:llpm-sym-diff}, \ref{itm:llpm-sym-types},
\ref{itm:llpm-sym-args}, \ref{itm:llpm-sym-other} of $\Greater_\llp$}:
These cases have a symbol as the head on the left-hand side.
We will focus on the case of rule \ref{itm:llpm-sym-diff}; the other three
cases are similar. For rule \ref{itm:llpm-sym-diff},
$t = \cst{g}(\bar w)\> \bar t$ and $s = \cst{f}(\bar u)\> \bar s$,
with $t \Greater_\llpm s_i$ for every $i$. We also have
either $\cst{g} \Succ \cst{ws}$ or $t \unrhd s$. We focus on the case
where $\cst{g} \Succ \cst{ws}$; the other case is similar to that of
rule \ref{itm:llpm-db-diff}, below.

Let $t : \upsilon$ and $s : \tau$. Let $l$ and $k$ be the number of curried
arguments expected by values of type $\upsilon\theta$ and $\tau\theta$,
respectively. This means that we have
\begin{align*}
  t\theta? & = \cst{g}(\bar w\theta {\uparrow}^l)\> \bar t\theta{\uparrow}^l\>
    \lnf{(\DB{l-1})}\>\ldots\>\lnf{\DB{0}}
& s\theta & = \underbrace{\lambda\ldots\lambda}_{k~\text{times}}\>
    \cst{f}(\bar u\theta {\uparrow}^k)\> \bar s\theta{\uparrow}^k\>
    \lnf{(\DB{k-1})}\>\ldots\>\lnf{\DB{0}}
\end{align*}
To show $t\theta? \Greater_\llp s\theta$, we apply rule~\ref{itm:llpm-sym-other} $k$
times to remove the $\lambda$s on the right. It then suffices to prove
$t\theta? \Greater_\llp s\theta?$. We apply rule~\ref{itm:llpm-sym-diff}. For the rule
to be applicable, due to the $\cal{chkargs}$ condition
we need $t\theta? \Greater_\llpm s_i\theta$ to hold for every $i$. This
follows from $t \Greater_\llp s_i$ and the induction hypothesis. In addition, we need
$t\theta? \Greater_\llpm \lnf{j}$ for $j \in \{0,\ldots,k-1\}$. This follows from
rule~\ref{itm:llpm-sym-other}.

\medskip
\noindent
\textsc{Rules \ref{itm:llpm-db-diff}, \ref{itm:llpm-db-args} of $\Greater_\llp$:}
These cases compare applied De Bruijn indices $t = \DB{n}\> \bar t$ and
$s = \DB{m}\> \bar s$, where $n \ge m$. We also know that $t \Greater_\llp s_i$ for
every $i$.
We perform a case analysis on $t \unrhd s$.

\medskip
\noindent
\textsc{Subcase 1}, where $t$ and $s$ are of nonvariable types:\enskip
Rule \ref{itm:llpm-db-diff} or \ref{itm:llpm-db-args} of
$\Greater_\llpm$ applies. For the rule to be applicable, due to the $\cal{chkargs}$
condition we need $t\theta = t\theta? \Greater_\llpm s_i\theta$ to hold for every $i$.
This follows from $t \Greater_\llp s_i$ and the induction hypothesis.

\medskip
\noindent
\textsc{Subcase 2}, where $t$ has some variable type $\alpha$ but not
$s$:\enskip
If $\alpha\theta$ is nonfunctional, the reasoning is as for subcase~1.
Otherwise, $\alpha\theta$ is a function type, and applying $\theta$ to $t$
results in $k > 0$ $\eta$-expansions. This means we have
\begin{align*}
  t\theta? & = (\DB{n + k})\> (\bar t\theta){\uparrow}^k\>
    \lnf{(\DB{k-1})}\>\ldots\>\lnf{\DB{0}}
& s\theta & = \DB{m}\> \bar s\theta
\end{align*}
Since $n + k > m$, we apply rule~\ref{itm:llpm-db-diff} to establish $t\theta?
\Greater_\llpm s\theta$. For the rule to be applicable,
due to the $\cal{chkargs}$ condition
$t\theta? \Greater_\llpm s_i\theta$ must hold for every $i$. This follows from
$t \Greater_\llp s_i$ and the induction hypothesis.

\medskip
\noindent
\textsc{Subcase 3}, where $t$ and $s$ has some variable type $\alpha$:\enskip
Let $k$ be the number of curried arguments expected by values of type
$\alpha\theta$. This means that we have
\begin{align*}
  t\theta? & = (\DB{n + k})\> (\bar t\theta){\uparrow}^k\>
    \lnf{(\DB{k-1})}\>\ldots\>\lnf{\DB{0}}
& s\theta & = \underbrace{\lambda\ldots\lambda}_{k~\text{times}}\>
  (\DB{m + k})\> (s\theta){\uparrow}^k\> \lnf{(\DB{k-1})}\>\ldots\>\lnf{\DB{0}}
\end{align*}
and must show $t\theta? \Greater_\llpm s\theta$. First,
we apply rule~\ref{itm:llpm-db-other} $k$~times to remove the
$\lambda$s on the right. For the rule to be applicable,
due to the $\cal{chkargs}$ condition $t\theta? \Greater_\llpm s\theta?$ must hold.
To prove it, we apply rule \ref{itm:llpm-db-diff} or \ref{itm:llpm-db-args},
depending on whether $n > m$ or $n = m$. For the tuple comparison in rule
\ref{itm:llpm-db-args}, it is easy to see that the additional De Bruijn
arguments are harmless. For either rule to be applicable,
due to the $\cal{chkargs}$ condition we also need
$t\theta? \Greater_\llpm s_i\theta$ for every $i$ and $t\theta? \Greater_\llpm \lnf{j}$
for every $j \in \{0,\ldots,k-1\}$.
The first inequality follows from $t \Greater_\llp s_i$ and the induction hypothesis.
The second inequality follows from rule~\ref{itm:llpm-db-sub}, since
one of the arguments in $t\theta?$ is $\lnf{j}$.

\medskip
\noindent
\textsc{Rule \ref{itm:llpm-db-other} of $\Greater_\llp$:}
This case compares an applied De Bruijn index $t = \DB{n}\>\bar t$ and either a
$\lambda$-abstraction $\lambda\> s'$ or an applied symbol
$\cst{f}(\bar u)\> \bar s$ below the watershed.
In the $\lambda$ subcase, we have $t \Greater_\llpm s'$. We apply
rule~\ref{itm:llpm-db-other} to derive $t\theta? \Greater_\llpm s\theta$. This
requires us to prove $t\theta? \Greater_\llpm s'\theta$, which follows from $t
\Greater_\llpm s'$ and the induction hypothesis.
In the other subcase, the proof is similar to as in cases
\ref{itm:llpm-db-diff}, \ref{itm:llpm-db-args} of $\Greater_\llpm$ above.

\medskip
\noindent
\textsc{Rule \ref{itm:llpm-lam-types} of $\Greater_\llp$:}
This case compares two $\lambda$-abstractions $t =
\lambda\langle\upsilon\rangle\> t'$ and $s = \lambda\langle\tau\rangle\> s'$.
We have $t \Greater_\llp s'$. To derive the desired inequality
$t\theta? = t\theta \Greater_\llpm s\theta$,
we apply rule~\ref{itm:llpm-lam-types}, which requires us to prove
$\upsilon\theta \Greater_\ty \tau\theta$ and $t\theta \Greater_\llpm s'\theta$.
The first inequality follows from $\upsilon \Greater_\ty \tau$ by stability under
substitution of the standard LPO.
The second inequality follows from $t \Greater_\llp s'$ and the induction
hypothesis.

\medskip
\noindent
\textsc{Rules \ref{itm:llpm-lam-bodies} of $\Greater_\llp$:}
These cases compare two $\lambda$-abstractions $t =
\lambda\langle\upsilon\rangle\> t'$ and $s = \lambda\langle\upsilon\rangle\>
s'$. We have $t' \Greater_\llp s'$. By the induction hypothesis,
$t'\theta \Greater_\llp s'\theta$. By rule \ref{itm:llpm-lam-bodies}, we get
$t\theta = \lambda\langle\upsilon\theta\rangle\> t'\theta
\Greater_\llpm \lambda\langle\upsilon\theta\rangle\> s'\theta = s\theta$,
as desired.

\medskip
\noindent
\textsc{Rule \ref{itm:llpm-lam-other} of $\Greater_\llp$:}
This case compares a $\lambda$-abstraction
$t = \lambda\langle\upsilon\rangle\> t'$ and an applied symbol
$\cst{f}(\bar u)\> \bar s$ below the watershed.
The proof is similar to as in cases
\ref{itm:llpm-db-diff}, \ref{itm:llpm-db-args} of $\Greater_\llpm$ above.

\medskip
\noindent
\textsc{Rule \ref{itm:nsllpm-var} of $\Greatersim_\llp$:}
This case compares two preterms $y\> \bar t$ and $y\> \bar s$ headed by the same
variable and of the same type $\tau$.
Let $k$ be the number of curried arguments expected by values of type
$\tau\theta$. This means that we have
\begin{align*}
  t\theta & = \underbrace{\vphantom{()}\lambda\ldots\lambda}_{k~\text{times}}\>
    \underbrace{\vphantom{()}\smash{(y\> \bar t)\theta{\uparrow}^k\> \lnf{(k-1)} \ldots \lnf{0}}\,}_{(y\> \bar t)\theta?}
& s\theta & = \underbrace{\vphantom{()}\lambda\ldots\lambda}_{k~\text{times}}\>
    \underbrace{\vphantom{()}\smash{(y\> \bar s)\theta{\uparrow}^k\> \lnf{(k-1)} \ldots \lnf{0}}\,}_{(y\> \bar s)\theta?}
\end{align*}
To show $t\theta \Greatersim_\llpm s\theta$, we apply rule~\ref{itm:nsllpm-lam-bodies}
$k$ times. The rule is applicable if $(y\> \bar t)\theta? \Greatersim_\llpm
(y\> \bar s)\theta?$. It is easy to see that this last inequality can be established
using rule~\ref{itm:nsllpm-var} given that $y\> \bar t \Greatersim_\llpm y\> \bar s$, using
the induction hypothesis to compare pairs $t_i, s_i$ and using
Lemma~\ref{lem:nsllpm-reflexive} for the pairs $\lnf{j}, \lnf{j}$ of (identical)
De Bruijn indices introduced by $\eta$-expansion.

\medskip
\noindent
\textsc{Cases
\ref{itm:nsllpm-sym-sub}, \ref{itm:nsllpm-sym-diff}, \ref{itm:nsllpm-sym-types},
\ref{itm:nsllpm-sym-other}, \ref{itm:nsllpm-db-sub}, \ref{itm:nsllpm-db-diff},
\ref{itm:nsllpm-db-other}, \ref{itm:nsllpm-lam-sub}, \ref{itm:nsllpm-lam-types},
\ref{itm:nsllpm-lam-other}
of $\Greatersim_\llp$}:
These cases are similar to cases
\ref{itm:llpm-sym-sub}, \ref{itm:llpm-sym-diff}, \ref{itm:llpm-sym-types},
\ref{itm:llpm-sym-other}, \ref{itm:llpm-db-sub}, \ref{itm:llpm-db-diff},
\ref{itm:llpm-db-other}, \ref{itm:llpm-lam-sub}, \ref{itm:llpm-lam-types},
\ref{itm:llpm-lam-other}
of $\Greater_\llp$.
These cases correspond to strict inequalities. We first establish $t\theta
\Greatersim_\llpm t\theta?$ by applying rule \ref{itm:nsllpm-lam-sub} repeatedly.
Then we show $t\theta? \Greatersim_\llpm s\theta$ in the same way as in the
corresponding case of $\Greater_\llp$.

\medskip
\noindent
\textsc{Rules \ref{itm:nsllpm-sym-args}, \ref{itm:nsllpm-db-args} of
$\Greatersim_\llp$}:
These cases may correspond to nonstrict comparisons---for example, if the
argument tuples are equal.
We apply rule~\ref{itm:nsllpm-lam-bodies} repeatedly to
eliminate any $\lambda$s on both sides.
If there are any $\lambda$s remaining on the left, proceed as in the previous
case (\ref{itm:nsllpm-sym-sub}, \ref{itm:nsllpm-sym-diff}, etc.). Otherwise,
the rest of the proof is similar to case
\ref{itm:llpm-sym-args} or \ref{itm:llpm-db-args} of
$\Greater_\llp$.

\medskip
\noindent
\textsc{Rules \ref{itm:nsllpm-lam-bodies} of $\Greatersim_\llp$:}
Analogous to case \ref{itm:llpm-lam-bodies} of $\Greater_\llp$.
\end{proof}

\begin{thm}
\label{thm:llp-coincide-monomorphic}
The relation $\Greater_\llp$ coincides with $\Greater_\llpm$ on monomorphic preterms.
The relation $\Greatersim_\llp$ coincides with $\Greatersim_\llpm$ on monomorphic preterms.
\end{thm}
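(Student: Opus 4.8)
The plan is to reuse the two-step strategy of Theorems~\ref{thm:llpm-coincide-ground} and~\ref{thm:lkb-coincide-monomorphic}. One inclusion is essentially free from Theorem~\ref{thm:llp-nsllp-monomorphizing-subst-stable}: pick any monomorphizing type substitution $\theta$ (one exists, since otherwise there are no monomorphic preterms and the statement is vacuous); because $t$ and $s$ are monomorphic they contain no type variables, so $t\theta = t\theta? = t$ and $s\theta = s$, and the two conclusions of that theorem specialize to ``$t \Greater_\llp s$ implies $t \Greater_\llpm s$'' and ``$t \Greatersim_\llp s$ implies $t \Greatersim_\llpm s$''.

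For the converse inclusions I would prove, by mutual induction on the derivations, that on monomorphic preterms $t \Greater_\llpm s$ implies $t \Greater_\llp s$ and $t \Greatersim_\llpm s$ implies $t \Greatersim_\llp s$. By inspection of Definition~\ref{def:llp}, every clause of $\Greater_\llpm$ (resp.\ $\Greatersim_\llpm$) is bundled into exactly one polymorphic rule; the only thing the polymorphic rules add is the side condition ``$t \unrhd s$'' and the disjunctions ``$\cst{g} \Succ \cst{ws}$ or $t \unrhd s$''. The key observation is that on monomorphic preterms all types are variable-free, so by Definition~\ref{def:compare-type-poly} the comparison $t \unrhd s$ always holds (its first alternative---``the type of $s$ is not a type variable''---is always met). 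Thus every side condition is discharged automatically, and each induction case reduces to re-applying the matching clause of Definition~\ref{def:llp}, feeding it the recursive $\Greater_\llpm$/$\Greatersim_\llpm$ comparisons supplied by the induction hypothesis, exactly as in the monomorphic rule.

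One bookkeeping point to spell out is that the type order $\Greater_\ty$ of Definition~\ref{def:llpm} is the first-order LPO on $\Tm{\Sigma_\ty}{\emptyset}$ while Definition~\ref{def:llp} uses the LPO on $\Tm{\Sigma_\ty}{X_\ty}$; since a monomorphic type lies in $\GTy{\Sigma_\ty}$ and the standard LPO on ground terms is just the restriction of the LPO on nonground terms, the comparisons $\upsilon \Greater_\ty \tau$ and the tuple extension $\gtgtlex_\ty$ agree at both levels, so the type subgoals in rules such as~\ref{itm:llpm-lam-types} and~\ref{itm:llpm-sym-types} transfer unchanged. Where the recursive comparison requires a nonstrict subgoal turned into $\Greater_\llpm$ (as in the polymorphic variants of the ``subterm'' clauses \ref{itm:llpm-sym-sub}, \ref{itm:llpm-db-sub}, \ref{itm:llpm-lam-sub}), Lemmas~\ref{lem:nsllpm-reflexive} and~\ref{lem:llpm-implies-nsllpm} patch the mismatch, just as in the proof of Theorem~\ref{thm:llpm-coincide-ground}.

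I expect no genuine obstacle here; the argument is essentially bookkeeping, as in Theorem~\ref{thm:lkb-coincide-monomorphic}. The only care needed is to keep the correspondence between the numbered polymorphic rules and the monomorphic rules they group straight, so that each induction case invokes the right clause of Definition~\ref{def:llp} and does not conflate the strict-to-strict and nonstrict-to-nonstrict cases.
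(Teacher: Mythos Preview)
Your proposal is correct and follows essentially the same approach as the paper: one direction via Theorem~\ref{thm:llp-nsllp-monomorphizing-subst-stable}, the other by induction on the monomorphic derivations, observing that on monomorphic preterms the extra side condition $t \unrhd s$ is always satisfied. Your write-up is in fact more detailed than the paper's, which simply says ``it is easy to see that every case in the definition of $\Greater_\llpm$ corresponds a case in the definition of $\Greater_\llp$''; your explicit justification via Definition~\ref{def:compare-type-poly} and your remark about the two instances of $\Greater_\ty$ agreeing on ground types are exactly the points that make this ``easy to see''. One minor over-caution: the appeal to Lemmas~\ref{lem:nsllpm-reflexive} and~\ref{lem:llpm-implies-nsllpm} is unnecessary here, since both $\Greater_\llpm$ and $\Greater_\llp$ already use the nonstrict relation in the subterm clauses (unlike the ground-to-monomorphic step, where $\GreaterEQ_\llpg$ had to be matched with $\Greatersim_\llpm$), so the mutual induction hypothesis suffices directly.
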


\begin{proof}
One direction of the equivalences follows by
Theorem~\ref{thm:llp-nsllp-monomorphizing-subst-stable}. It remains to show that
$t \Greater_\llpm s$ implies $t \Greater_\llp s$ and that $t \Greatersim_\llpm
s$ implies $t \Greatersim_\llp s$. The proof is by induction on the definition
of $\Greater_\llpm$ and $\Greatersim_\llpm$. It is easy to see that every case
in the definition of $\Greater_\llpm$ corresponds a case in the definition of
$\Greater_\llp$ and every case in the definition of $\Greatersim_\llpm$ corresponds
a case in the definition of $\Greatersim_\llp$.
\end{proof}

\begin{lem}
  \label{lem:nllp-implies-llp}
  If $t \Greater_\llp s$, then $t \Greatersim_\llp s$.
  \end{lem}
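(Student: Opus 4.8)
The plan is to mimic the proof of Lemma~\ref{lem:llpm-implies-nsllpm}: induct on the derivation of $t \Greater_\llp s$ and, in each case, apply the matching rule of $\Greatersim_\llp$. Recall that $\Greater_\llp$ is defined by reusing the rules of $\Greater_\llpm$ ``mutatis mutandis'' (the recursive occurrences of $\Greater_\llpm$ and $\Greatersim_\llpm$ becoming $\Greater_\llp$ and $\Greatersim_\llp$) together with the side conditions ``$\cst{g} \Succ \cst{ws}$ or $t \unrhd s$'', and that $\Greatersim_\llp$ is built analogously from the rules of $\Greatersim_\llpm$ (see Definition~\ref{def:llp}). So I would first record the rule-to-rule correspondence: each strict rule listed in clause~(1), (2), or~(3) of $\Greater_\llp$ has a unique nonstrict counterpart, obtained by swapping $\Greater_\llpm/\Greater_\llp$ for $\Greatersim_\llpm/\Greatersim_\llp$ in the recursive premises that are not guarded by $\cal{chkargs}$ (e.g.\ rule~\ref{itm:llpm-lam-bodies}'s premise $t' \Greater_\llp s'$ corresponds to rule~\ref{itm:nsllpm-lam-bodies}'s premise $t' \Greatersim_\llp s'$) and by replacing $\gtgtsimlex_\llpm$ with $\gtgtapproxlex_\llpm$ in the tuple comparisons.

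For the induction step, I would case-split on the last rule used to derive $t \Greater_\llp s$ and apply its nonstrict counterpart. Most premises transfer verbatim: the side conditions $t \unrhd s$, $\cst{g} \Succ \cst{ws}$, $\cst{g} \Succ \cst{f}$, $\cst{f} \Preceq \cst{ws}$, and $n > m$, the type comparisons $\upsilon \Greater_\ty \tau$ and $\bar\upsilon \gtgtlex_\ty \bar\tau$, and the recursive nonstrict ``subterm'' premises $t_i \Greatersim_\llp s$ occur identically in both versions. Every $\cal{chkargs}$ condition is, in both the strict and the nonstrict definition, phrased in terms of the strict relation ($t \Greater_\llp s_i$ for all $i$), so it is simply copied from the given derivation. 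The only genuine repairs are: (i)~a recursive strict premise ($t' \Greater_\llp s'$ in rule~\ref{itm:llpm-lam-bodies}) that becomes a nonstrict one in the counterpart rule, handled by the induction hypothesis; and (ii)~a lexicographic premise $(\bar w, \bar t) \gtgtsimlex_\llpm (\bar u, \bar s)$ that must become $(\bar w, \bar t) \gtgtapproxlex_\llpm (\bar u, \bar s)$, handled by propagating the induction hypothesis componentwise and invoking the observation that $\gtgtsimlex$ implies $\gtgtapproxlex$ (immediate from Definitions~\ref{def:strict-lex-extension} and~\ref{def:nonstrict-lex-extension}, the only difference being whether the empty tuple is related).

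There is no real obstacle here; the one point meriting a moment's care is that clauses~(2) and~(3) of $\Greater_\llp$ attach the extra disjunct ``$\cst{g} \Succ \cst{ws}$ or $t \unrhd s$'' in exactly the places where clauses~(2) and~(3) of $\Greatersim_\llp$ do, so the side condition witnessing the strict step is directly reusable to license the nonstrict step---there is never a situation where a strict rule fires but its nonstrict analogue is blocked by a missing side condition. Hence in every case the matching rule of $\Greatersim_\llp$ applies, which closes the induction. (In fact, this makes the lemma essentially ``analogous to Lemma~\ref{lem:llpm-implies-nsllpm}''.)
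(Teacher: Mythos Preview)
Your proposal is correct and matches the paper's approach exactly: the paper's proof is simply ``Analogous to Lemma~\ref{lem:llpm-implies-nsllpm}'', which is precisely the induction-on-derivations argument you spell out. One minor simplification: for the tuple comparisons you do not need to propagate the induction hypothesis componentwise, since $\gtgtsimlex_\llp$ and $\gtgtapproxlex_\llp$ are built from the \emph{same} pair $(\Greater_\llp,\Greatersim_\llp)$ and differ only in the empty-tuple base case, so the implication is immediate from Definitions~\ref{def:strict-lex-extension} and~\ref{def:nonstrict-lex-extension}.
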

  \begin{proof}
  Analogous to Lemma~\ref{lem:llpm-implies-nsllpm}.
    \end{proof}
  
  \begin{thm}
  \label{thm:llp-transitive}
  If $t \Greatersim_\llp u$ and $u \Greatersim_\llp s$, then $t \Greatersim_\llp s$.
  If moreover $t \Greater_\llp u$ or $u \Greater_\llp s$, then $t \Greater_\llp s$.
  \end{thm}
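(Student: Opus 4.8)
The plan is to replay the proof of Theorem~\ref{thm:llpm-transitive}, upgraded to cope with the extra side conditions of the polymorphic rules. I would prove both claims simultaneously by well-founded induction on the multiset $\{|t|, |u|, |s|\}$, doing a case split on the rules used to derive $t \Greatersim_\llp u$ and $u \Greatersim_\llp s$. Since every rule of $\Greater_\llp$ and $\Greatersim_\llp$ is obtained from a rule of $\Greater_\llpm$ or $\Greatersim_\llpm$ ``mutatis mutandis'' (possibly with an added side condition), each case proceeds exactly as the corresponding case in Theorem~\ref{thm:llpm-transitive}: precedence comparisons are combined by transitivity of $\Succ$, type comparisons by transitivity of the first-order LPO $\Greater_\ty$ and its lexicographic extension, argument-tuple comparisons ($\gtgtsimlex_\llpm$, $\gtgtapproxlex_\llpm$) and the $\cal{chkargs}$ conditions by the induction hypothesis, and the subterm rules by the induction hypothesis together with $\Greater_\llp \subseteq \Greatersim_\llp$ (Lemma~\ref{lem:nllp-implies-llp}).

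The only genuinely new ingredients are the side conditions. First I would record that $\unrhd$ is transitive: for $t : \upsilon$, $u : \rho$, $s : \tau$ with $t \unrhd u$ and $u \unrhd s$, if $\tau$ is not a type variable then $t \unrhd s$ is immediate; if $\tau$ is a type variable then $u \unrhd s$ forces $\rho = \tau$, so $\rho$ is a type variable, so $t \unrhd u$ forces $\upsilon = \rho = \tau$, hence again $t \unrhd s$. Then, for each output rule carrying a condition of the shape ``$\cst{g} \Succ \cst{ws}$ or $t \unrhd s$'' (the rules of $\Greater_\llp$ and $\Greatersim_\llp$ built from \ref{itm:llpm-sym-diff}~and~\ref{itm:llpm-sym-types}) or of the shape ``$t \unrhd s$'' (those built from \ref{itm:llpm-db-diff}, \ref{itm:llpm-db-other}, and \ref{itm:llpm-lam-other}), I would check that it follows from the conditions of the two input comparisons: if the head of the larger preterm is a symbol above the watershed, the first disjunct is inherited directly; otherwise both input comparisons were licensed by their $\unrhd$-alternative (either explicitly or because the types involved are nonvariable), and transitivity of $\unrhd$—together with the fact that $t$, $u$, $s$ retain their types throughout—yields $t \unrhd s$.

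The main obstacle will be the case explosion itself, and within it the subcases where the middle preterm $u$ has a type variable as its type while $t$ or $s$ does not, since there the $\unrhd$-conditions attached to the two halves are asymmetric; these are handled by the same kind of reasoning as the subcase analysis in the proof of Theorem~\ref{thm:lkb-monomorphizing-subst-stable}. Once $\unrhd$-transitivity and this watershed-versus-$\unrhd$ dichotomy are in place, every remaining case reduces to the matching case of Theorem~\ref{thm:llpm-transitive}, so no further work is needed.
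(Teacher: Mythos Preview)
Your proposal is correct and takes essentially the same approach as the paper, which simply states that the proof is analogous to Theorem~\ref{thm:llpm-transitive} using the transitivity of $\unrhd$. Your write-up is considerably more detailed than the paper's one-line justification---in particular, your explicit proof that $\unrhd$ is transitive and your analysis of the ``$\cst{g} \Succ \cst{ws}$ versus $t \unrhd s$'' dichotomy spell out precisely the checks the paper leaves implicit---but the underlying strategy is identical.
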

  \begin{proof}
  Analogous to Theorem~\ref{thm:llpm-transitive},
  using the fact that $\unrhd$ is transitive.
    \end{proof}

  \begin{thm}\label{thm:llp-variable-guarantee}
  Let $t \Greater_\llp s$.
  Let $\theta$ be a substitution such that all variables in
  $t\theta$ and $s\theta$ are nonfunctional term variables.
  Let $s\theta$ contain a nonfunctional variable $x$
  outside of parameters.
  Then $t\theta$ must also contain $x$ outside of parameters.
  \end{thm}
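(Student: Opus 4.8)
The plan is to reduce the polymorphic claim to the monomorphic one, exactly in the spirit of how Theorem~\ref{thm:lkb-variable-guarantee} was obtained from Theorems~\ref{thm:lkbm-variable-guarantee} and~\ref{thm:lkb-monomorphizing-subst-stable}, but paying attention to the truncating substitution that shows up in the $\lambda$LPO subst-stability result. Concretely, I would split $\theta$ as a composition $\theta = \theta_1\theta_2$, where $\theta_1$ collects the action of $\theta$ on type variables and $\theta_2$ its action on term variables. Since by hypothesis no type variables survive in $t\theta$ or $s\theta$, every type variable occurring in $t$ or $s$ is sent by $\theta_1$ to a monomorphic type, so (after restricting and, if needed, extending $\theta_1$ to a monomorphizing type substitution) $t\theta_1$ and $s\theta_1$ are monomorphic preterms, and $t\theta = (t\theta_1)\theta_2$, $s\theta = (s\theta_1)\theta_2$.

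Next I would feed $t \Greater_\llp s$ and $\theta_1$ into Theorem~\ref{thm:llp-nsllp-monomorphizing-subst-stable}, obtaining the chain $t\theta_1 \Greatersim_\llpm t\theta_1{?} \Greater_\llpm s\theta_1$. Applying Lemma~\ref{lem:llpm-implies-nsllpm} to the strict link and then Theorem~\ref{thm:llpm-transitive} (a chain of $\Greatersim_\llpm$ steps containing at least one strict step yields $\Greater_\llpm$) collapses this to $t\theta_1 \Greater_\llpm s\theta_1$ at the monomorphic level.

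Finally I would invoke Theorem~\ref{thm:llpm-variable-guarantee} on $t\theta_1 \Greater_\llpm s\theta_1$ with the term substitution $\theta_2$. Its hypotheses are exactly what we have in hand: every variable of $(t\theta_1)\theta_2 = t\theta$ and of $(s\theta_1)\theta_2 = s\theta$ is a nonfunctional term variable, and $s\theta = (s\theta_1)\theta_2$ contains the nonfunctional variable $x$ outside of parameters. Its conclusion then states that $t\theta = (t\theta_1)\theta_2$ contains $x$ outside of parameters, which is the goal.

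I do not expect a genuine obstacle here. The one point that needs care is the bookkeeping linking $t\theta_1{?}$ back to $t\theta_1$ through Theorem~\ref{thm:llpm-transitive} in its precise form, together with checking that the decomposition $\theta = \theta_1\theta_2$ is legitimate (type substitution first, term substitution second) and that $t\theta_1$ and $s\theta_1$ really are monomorphic, so that the monomorphic-level theorems apply. If one preferred to avoid the composition, an alternative would be to redo the induction behind Theorem~\ref{thm:llpm-variable-guarantee} directly on the derivation of $t \Greater_\llp s$, but routing everything through the two already-established theorems is shorter and less error-prone.
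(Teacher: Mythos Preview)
Your proposal is correct and follows exactly the route the paper takes: the paper's proof is the one-liner ``By Theorems~\ref{thm:llpm-variable-guarantee} and~\ref{thm:llp-nsllp-monomorphizing-subst-stable},'' and you have simply spelled out the decomposition $\theta=\theta_1\theta_2$ and the transitivity bookkeeping that this citation leaves implicit. One tiny slip: the stability theorem gives $t\theta_1 \Greatereq_\llpm t\theta_1{?}$ (reflexive closure), not $\Greatersim_\llpm$, but this only makes the collapse to $t\theta_1 \Greater_\llpm s\theta_1$ easier, so nothing changes.
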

  \begin{proof}
  By Theorems \ref{thm:llpm-variable-guarantee} and \ref{thm:llp-nsllp-monomorphizing-subst-stable}.
    \end{proof}

\section{Examples}
\label{sec:examples}

Let us see how $\lambda$KBO and $\lambda$LPO work on some realistic examples.
All the examples below are beyond the reach of the derived higher-order KBO and
LPO presented by Bentkamp et al.\ \cite{bentkamp-et-al-2021-hosup} and
implemented in Zipperposition.

\begin{exa}
Consider the following clause:
$\cst{p}\>(\lambda\>\cst{f}\>(y\>\DB{0}))
  \mathrel\lor \cst{p}\>(\lambda\>y\>\DB{0})$.
We would like to orient the two literals. The orientation
$\cst{p}\>(\lambda\>\cst{f}\>(y\>\DB{0})) \Greater \cst{p}\>(\lambda\>y\>\DB{0})$
appears more promising.
With $\lambda$KBO, assuming a weight of $1$ for $\cst{f}$, $\cst{p}$,
$\lambda$, and $\DB{0}$ and argument coefficients of $1$, via a mechanical
application of Definition~\ref{def:weight-lkb} we get the polynomial inequality
$1 + 1 + 1 + 1 + \mathbf{w}_y + \mathbf{k}_{y,1}(1 - 1)
  \vthinspace>\vthinspace
  1 + 1 + 1 + \mathbf{w}_y + \mathbf{k}_{y,1}(1 - 1)$.
In other words, $1 > 0$.
With $\lambda$LPO, the desired orientation is easy to derive since
$y\>\DB{0}$ is a subterm of $\cst{f}\>(y\>\DB{0})$.
\end{exa}

\begin{exa}
The following equation defines the transitivity of a relation $r$,
encoded as a binary predicate:
$\cst{trans}\> (\lambda\>\lambda\>r\>\DB{1}\>\DB{0})
  \vthinspace\approx\vthinspace
  \heavy{\forall}\> (\lambda\>
    \heavy{\forall}\> (\lambda\>
      \heavy{\forall}\> (\lambda\>
        r\> \DB{2}\> \DB{1} \mathbin{\heavy{\land}} r\> \DB{1}\> \DB{0}
        \mathbin{\heavy{\rightarrow}}
        r\> \DB{2}\> \DB{0)}))$.
We start with $\lambda$KBO.
Assume a weight of 1 for $\cst{trans}$, $\heavy{\forall}$, $\heavy{\land}$,
$\heavy{\to}$, $\lambda$, $\DB{0}$, \dots{} and argument coefficients of $1$.
After simplification, the polynomial inequality for a right-to-left orientation
becomes
$\mathbf{w}_r + 4 \vthinspace<\vthinspace 3 \mathbf{w}_r + 14$,
which clearly holds. Is there a way to orient the equation from left to right
instead? There is if we make $\cst{trans}$ heavier and set a
higher weight coefficient on its argument. Take $\cal w(\cst{trans}) = 5$ and
$\cal k(\cst{trans}, 1) = 3$. Then we get
$3 \mathbf{w}_r + 15 \vthinspace>\vthinspace 3 \mathbf{w}_r + 14$.
In contrast, $\lambda$LPO cannot orient the equation from left to right; among
the proof obligations that emerge are $r\> \DB{1}\> \DB{0} \Greatersim_\llp^? r\>
\DB{2}\> \DB{1}$ and $r\> \DB{1}\> \DB{0} \Greatersim_\llp^? r\> \DB{2}\> \DB{0}$,
and these cannot be discharged.
\end{exa}

\begin{exa}
In the clause
$y\>(\lambda\>\cst{a}\>\DB{0}) \mathrel\lor
\lnot\> y\> (\lambda\> \cst{f}\> (\cst{sk}(y)\> \DB{0})) \mathrel\lor
\lnot\> y\> (\lambda\> \DB{0})$,
we would like to make $y\>(\lambda\>\cst{a}\>\DB{0})$ the maximal literal.
The apparent difficulty is the presence of the variable $y$ deep inside the
second literal. Fortunately, since it occurs in a parameter, it has no impact on
the $\lambda$KBO weight. We are then free to make the symbol~$\cst{a}$ as heavy
as we want to ensure that $\lambda\>\cst{a}\>\DB{0}$ is heavier than $\lambda\>
\cst{f}\> (\cst{sk}(y)\> \DB{0})$ and $\lambda\> \DB{0}$, both of which have
constant weights. A similar approach can be taken with $\lambda$LPO, using the
precedence instead of weights.
\end{exa}

\begin{exa}
In functional programming, the map function on lists is defined recursively by
\begin{align*}
\cst{map}\> (\lambda\>f\>\DB{0})\> \cst{nil}
  & \approx \cst{nil}
&
\cst{map}\> (\lambda\>f\>\DB{0})\> (\cst{cons}\> x\> \mathit{xs})
  & \approx \cst{cons}\> (f\> x)\> (\cst{map}\> (\lambda\>f\>\DB{0})\> \mathit{xs})
\end{align*}
The first equation is easy to orient from left to right. Not so for the second
equation. With $\lambda$KBO, to compensate for the two occurrences of $f$ on the
right-hand side, we would need to set a coefficient of at least $2$ on
$\cst{map}$'s first argument; this would make the left-hand side heavier but
would also make the right-hand side even heavier. In general, KBO is rather
ineffective at orienting recursive equations from left to right.
With $\lambda$LPO, the issue is the undischargeable proof obligation $f\> \DB{0}
\Greatersim_\llp^? f\> x$.

With both orders, a right-to-left orientation is also problematic, because $f$
might ignore its argument, resulting in an $x$ on the left-hand side with no
matching $x$ on the right-hand side. On the positive side, superposition provers
rarely need to orient recursive equations in their full generality. Instead, the
calculus considers \relax{instances} of the equations where higher-order
variables are replaced by concrete functions. These equation instances are often
orientable from right to left.
\end{exa}

\section{Naive Algorithms}
\label{sec:naive-algorithms}

The definitions given in Sect.~\ref{sec:the-polymorphic-level} provide a sound
theoretical basis for an implementation, but they should not be followed
naively.
Our algorithms below perform the comparisons $t \Greater s$, $t \Greatersim s$,
$t = s$, $t \Lesssim s$, and $t \Less s$ simultaneously, reusing
subcomputations. Typically, given terms $s, t$, a superposition prover might
need to check both $t \Greatersim s$ and $t \Lesssim s$.

As our programming language, we use a functional programming notation inspired
by Standard ML, OCaml, and Haskell.
First, we need a type to represent the result of a comparison:

\begin{quote}
\textbf{datatype} \,$\mathsf{cmp}$ \,$=$\,
  $\mathsf{G}$ $\mid$
  $\mathsf{GE}$ $\mid$
  $\mathsf{E}$ $\mid$
  $\mathsf{LE}$ $\mid$
  $\mathsf{L}$ $\mid$
  $\mathsf{U}$
\end{quote}
The six values represent $\Greater$, $\Greatersim$, $=$, $\Lesssim$, $\Less$,
and ``unknown'' or ``incomparable,'' respectively.

\subsection{\texorpdfstring{$\bm{\lambda}$KBO}{Lambda KBO}}
\label{ssec:lambda-kbo-naive-algorithm}

Our first algorithm will perform $\lambda$KBO comparisons in both directions
simultaneously.
The following auxiliary functions are used to combine an imprecise comparison
result $\mathsf{GE}$ or $\mathsf{LE}$ with another result, using $\mathsf{U}$
on mismatch:

\begin{quote}
\textbf{function} \,$\mathsf{mergeWithGE}\; \mathit{cmp}$ \,$:=$ \\
\q \textbf{match} $\mathit{cmp}$ \textbf{with} \\
\q\q $\mathsf{L}$ $\mid$ $\mathsf{LE}$ $\Rightarrow$ $\mathsf{U}$ \\
\q\Q $\mathsf{E}$ $\Rightarrow$ $\mathsf{GE}$ \\
\q\Q $\_$ $\Rightarrow$ $\mathit{cmp}$ \\
\q \textbf{end}
\end{quote}

\begin{quote}
\textbf{function} \,$\mathsf{mergeWithLE}\; \mathit{cmp}$ \,$:=$ \\
\q \textbf{match} $\mathit{cmp}$ \textbf{with} \\
\q\q $\mathsf{G}$ $\mid$ $\mathsf{GE}$ $\Rightarrow$ $\mathsf{U}$ \\
\q\Q $\mathsf{E}$ $\Rightarrow$ $\mathsf{LE}$ \\
\q\Q $\_$ $\Rightarrow$ $\mathit{cmp}$ \\
\q \textbf{end}
\end{quote}

The lexicographic extension of a comparison operator working with our
comparison type is defined as follows:

\begin{quote}
\textbf{function} \,$\mathsf{lexExt}\; \mathit{op}\; \bar b\; \bar a$ \,$:=$ \\
\q \textbf{match} $\bar b$, $\bar a$ \textbf{with} \\
\q\q $[]$, $[]$ $\Rightarrow$ $\mathsf{E}$ \\
\q\Q $b :: \bar b'$, $a :: \bar a'$ $\Rightarrow$ \\
\q\q \textbf{match} $\mathit{op}\; b\; a$ \textbf{with} \\
\q\q\q $\mathsf{G}$ $\Rightarrow$ $\mathsf{G}$ \\
\q\q\Q $\mathsf{GE}$ $\Rightarrow$ $\mathsf{mergeWithGE}\; (\mathsf{lexExt}\; \mathit{op}\; \bar b'\; \bar a')$ \\
\q\q\Q $\mathsf{E}$ $\Rightarrow$ $\mathsf{lexExt}\; \mathit{op}\; \bar b'\; \bar a'$ \\
\q\q\Q $\mathsf{LE}$ $\Rightarrow$ $\mathsf{mergeWithLE}\; (\mathsf{lexExt}\; \mathit{op}\; \bar b'\; \bar a')$ \\
\q\q\Q $\mathsf{L}$ $\Rightarrow$ $\mathsf{L}$ \\
\q\q\Q $\mathsf{U}$ $\Rightarrow$ $\mathsf{U}$ \\
\q\q \textbf{end} \\
\q \textbf{end}
\end{quote}
We need to support only the case in which both lists have the same length.

Next to the lexicographic extension, we also define a form of componentwise
extension of a comparison operator:

\begin{quote}
\textbf{function} \,$\mathsf{smooth}\; \mathit{cmp}$ \,$:=$ \\
\q \textbf{match} $\mathit{cmp}$ \textbf{with} \\
\q\q $\mathsf{G}$ $\Rightarrow$ $\mathsf{GE}$ \\
\q\Q $\mathsf{L}$ $\Rightarrow$ $\mathsf{LE}$ \\
\q\Q $\_$ $\Rightarrow$ $\mathit{cmp}$ \\
\q \textbf{end}

\medskip

\textbf{function} \,$\mathsf{cwExt}\; \mathit{op}$ \,$:=$ \\
\q $\mathsf{lexExt}\; (\textbf{fun}\; b\; a \Rightarrow \mathsf{smooth}\; (\mathit{op}\; b\; a))$
\end{quote}
We need to support only the case in which both lists have the same length.

Next, we need a function that checks whether $\unrhd$ or its inverse $\unlhd$ holds
and that adjusts the comparison result accordingly:

\begin{quote}
\textbf{function} \,$\mathsf{considerPoly}\; t\; s\; \mathit{cmp}$ \,$:=$ \\
\q \textbf{match} $\mathit{cmp}$ \textbf{with} \\
\q\q $\mathsf{G}$ $\mid$ $\mathsf{GE}$ $\Rightarrow$
  \textbf{if} $t \unrhd s$ \textbf{then} $\mathit{cmp}$ \textbf{else} $\mathsf{U}$ \\
\q\q $\mathsf{L}$ $\mid$ $\mathsf{LE}$ $\Rightarrow$
  \textbf{if} $t \unlhd s$ \textbf{then} $\mathit{cmp}$ \textbf{else} $\mathsf{U}$ \\
\q\Q $\_$ $\Rightarrow$ $\mathit{cmp}$ \\
\q \textbf{end}
\end{quote}

The function for checking inequalities is very simple:

\begin{quote}
\textbf{function} \,$\mathsf{surelyNonneg}\; w$ \,$:=$ \\
\q all coefficients in the standard form of $w$ are $\ge 0$
\end{quote}
It returns $\mathsf{true}$ if all the polynomials in the list are certainly
nonnegative for any values of the indeterminates and $\mathsf{false}$ if this is
not known to be the case, either because there exists a counterexample or
because the approach is too imprecise to tell.

Polynomials in standard form have at most one \emph{constant monomial}:\ a
monomial consisting of only a coefficient with no indeterminates. If absent, it
is taken to be 0. The weight comparison can be refined by considering the sign
of the constant monomial in the difference $\cal W(t) - \cal W(s)$. If the sign
is positive, $\cal W(t) \ge \cal W(s)$ actually means $\cal W(t) > \cal W(s)$.
If the sign is negative, $\cal W(t) \le \cal W(s)$ actually means $W(t) < \cal
W(s)$.

\begin{quote}
\textbf{function} \,$\mathsf{analyzeWeightDiff}\; w$ \,$:=$ \\
\q \textbf{match} $\mathsf{surelyNonneg}\; w$, $\mathsf{surelyNonneg}\; (-w)$ \textbf{with} \\
\q\q $\mathsf{false}$, $\mathsf{false}$ $\Rightarrow$ $\mathsf{U}$ \\
\q\Q $\mathsf{true}$, $\mathsf{false}$ $\Rightarrow$ \textbf{if\/} the constant monomial of $w$ is $> 0$ \textbf{then} $\mathsf{G}$ \textbf{else} $\mathsf{GE}$ \\
\q\Q $\mathsf{false}$, $\mathsf{true}$ $\Rightarrow$ \textbf{if\/} the constant monomial of $w$ is $< 0$ \textbf{then} $\mathsf{L}$ \textbf{else} $\mathsf{LE}$ \\
\q\Q $\mathsf{true}$, $\mathsf{true}$ $\Rightarrow$ $\mathsf{E}$ \\
\q \textbf{end}
\end{quote}

For preterms with possibly equal weights, a lexicographic comparison
implemented by the $\mathsf{compareShapes}$ function below breaks the
tie. We assume the existence of a function $\mathsf{compareSyms}\; \cst{g}\;
\cst{f}$ based on $\Succ$ that returns $\mathsf{G}$, $\mathsf{E}$, or
$\mathsf{L}$ and of a function $\mathsf{compareTypes}\; \upsilon\; \tau$ based
on $\Greater_\ty$ that returns $\mathsf{G}$, $\mathsf{E}$, $\mathsf{L}$, or
$\mathsf{U}$. The $\mathsf{compareShapes}$ function is mutually recursive with
the main comparison function, $\mathsf{compareTerms}$.

\begin{quote}
\textbf{function} \,$\mathsf{compareShapes}\; t\; s$ \,$:=$ \\
\q \textbf{match} $t$, $s$ \textbf{with} \\
\q\q $y\> \bar t$, $y\> \bar s$ $\Rightarrow$ \textbf{if} $\bar t$ are steady \textbf{then} $\mathsf{cwExt}\; \mathsf{compareTerms}\; \bar t\; \bar s$ \textbf{else} $\mathsf{U}$ \\
\q\Q $y\> \_$, $\_$ $\mid$ $\_$, $x\> \_$ $\Rightarrow$ $\mathsf{U}$ \\
\q\Q $\lambda\langle\upsilon\rangle\> t'$, $\lambda\langle\tau\rangle\> s'$ $\Rightarrow$ \\
\q\q \textbf{match} $\mathsf{compareTypes}\; \upsilon\; \tau$ \textbf{with} \\
\q\q\q $\mathsf{E}$ $\Rightarrow$ $\mathsf{compareShapes}\; t'\; s'$ \\
\q\q\Q $\mathit{cmp}$ $\Rightarrow$ $\mathit{cmp}$ \\
\q\q \textbf{end} \\
\q\Q $\lambda\> \_$, $\_$ $\Rightarrow$ $\mathsf{considerPoly}\; t\; s\; \mathsf{G}$ \\
\q\Q $\DB{n}\> \_$, $\lambda\> \_$ $\Rightarrow$ $\mathsf{considerPoly}\; t\; s\; \mathsf{L}$ \\
\q\Q $\DB{n}\> \bar t$, $\DB{m}\>\bar s$ $\Rightarrow$ \\
\q\q \textbf{if\/} $n > m$ \textbf{then} $\mathsf{considerPoly}\; t\; s\; \mathsf{G}$ \\
\q\q \textbf{else if\/} $n < m$ \textbf{then} $\mathsf{considerPoly}\; t\; s\; \mathsf{L}$ \\
\q\q \textbf{else} $\mathsf{lexExt}\; \mathsf{compareTerms}\; \bar t\; \bar s$ \\
\q\Q $\DB{n}\> \_$, $\cst{f}\langle\_\rangle(\_)\>\_$ $\Rightarrow$ $\mathsf{considerPoly}\; t\; s\; \mathsf{G}$ \\
\q\Q $\cst{g}\langle\bar\upsilon\rangle(\bar w)\>\bar t$, $\cst{f}\langle\bar\tau\rangle(\bar u)\>\bar s$ $\Rightarrow$ \\
\q\q \textbf{match} $\mathsf{compareSyms}\; \cst{g}\; \cst{f}$ \textbf{with} \\
\q\q\q $\mathsf{E}$ $\Rightarrow$ \\
\q\q\q \textbf{match} $\mathsf{lexExt}\; \mathsf{compareTypes}\; \bar\upsilon\; \bar\tau$ \textbf{with} \\
\q\q\q\q $\mathsf{E}$ $\Rightarrow$ $\mathsf{lexExt}\; \mathsf{compareTerms}\; (\bar w \cdot \bar t)\; (\bar u \cdot \bar s)$ \\
\q\q\q\Q $\mathit{cmp}$ $\Rightarrow$ $\mathsf{considerPoly}\; t\; s\; \mathit{cmp}$ \\
\q\q\q \textbf{end} \\
\q\q\Q $\mathit{cmp}$ $\Rightarrow$ $\mathsf{considerPoly}\; t\; s\; \mathit{cmp}$ \\
\q\q \textbf{end} \\
\q\Q $\cst{g}\langle\_\rangle(\_)\>\_$, $\_$ $\Rightarrow$ $\mathsf{considerPoly}\; t\; s\; \mathsf{L}$ \\
\q \textbf{end}
\end{quote}
In the above, the operator $\cdot$ denotes list concatenation.

The main function implementing $\lambda$KBO invokes $\mathsf{analyzeWeightDiff}$,
falling back on $\mathsf{compareShapes}$ to break ties:

\begin{quote}
\textbf{function} \,$\mathsf{compareTerms}\; t\; s$ \,$:=$ \\
\q \textbf{match} $\mathsf{analyzeWeightDiff}\; (\cal W(t) - \cal W(s))$ \textbf{with} \\
\q\q $\mathsf{G}$ $\Rightarrow$ $\mathsf{G}$ \\
\q\Q $\mathsf{GE}$ $\Rightarrow$ $\mathsf{mergeWithGE}\; (\mathsf{compareShapes}\; t\; s)$ \\
\q\Q $\mathsf{E}$ $\Rightarrow$ $\mathsf{compareShapes}\; t\; s$ \\
\q\Q $\mathsf{LE}$ $\Rightarrow$ $\mathsf{mergeWithLE}\; (\mathsf{compareShapes}\; t\; s)$ \\
\q\Q $\mathsf{L}$ $\Rightarrow$ $\mathsf{L}$ \\
\q\Q $\mathsf{U}$ $\Rightarrow$ $\mathsf{U}$ \\
\q \textbf{end}
\end{quote}

\subsection{\texorpdfstring{$\bm{\lambda}$LPO}{Lambda LPO}}
\label{ssec:lambda-lpo-naive-algorithm}

The following algorithm performs $\lambda$LPO comparisons in both directions
simultaneously. The main comparison function, $\mathsf{compareTerms}$, is
accompanied by four mutually recursive auxiliary functions.
\begin{quote}
\textbf{function} \,$\mathsf{considerPolyBelowWS}\; \cst{g}\; t\; s\; \mathit{cmp}$ \,$:=$ \\
\q \textbf{if} $\cst{g} \Succ \cst{ws}$ \textbf{then} $\mathit{cmp}$ \textbf{else}
  $\mathsf{considerPoly}\; t\; s\; \mathit{cmp}$

\medskip

\textbf{function} $\mathsf{checkSubs}\; \bar t\; s$ $:=$ \\
\q $\exists i.\; \mathsf{compareTerms}\; t_i\; s \in \{\mathsf{G}, \mathsf{GE}, \mathsf{E}\}$

\medskip

\textbf{function} $\mathsf{checkArgs}\; t\; \bar s$ $:=$ \\
\q $\forall i.\; \mathsf{compareTerms}\; t\; s_i = \mathsf{G}$

\medskip

\textbf{function} $\mathsf{compareArgs}\; t\; \bar v\; \bar t\; s\; \bar u\; \bar s$ $:=$ \\
\q \textbf{match} $\mathsf{lexExt}\; \mathsf{compareTerms}\; (\bar v \cdot \bar t)\; (\bar u \cdot \bar s)$ \textbf{with} \\
\q\q $\mathsf{G}$ $\Rightarrow$ \textbf{if} $\mathsf{checkArgs}\; t\; \bar s$ \textbf{then} $\mathsf{G}$ \textbf{else} $\mathsf{U}$ \\
\q\Q $\mathsf{GE}$ $\Rightarrow$ \textbf{if} $\mathsf{checkArgs}\; t\; \bar s$ \textbf{then} $\mathsf{GE}$ \textbf{else} $\mathsf{U}$ \\
\q\Q $\mathsf{E}$ $\Rightarrow$ $\mathsf{E}$ \\
\q\Q $\mathsf{LE}$ $\Rightarrow$ \textbf{if} $\mathsf{checkArgs}\; s\; \bar t$ \textbf{then} $\mathsf{LE}$ \textbf{else} $\mathsf{U}$ \\
\q\Q $\mathsf{L}$ $\Rightarrow$ \textbf{if} $\mathsf{checkArgs}\; s\; \bar t$ \textbf{then} $\mathsf{L}$ \textbf{else} $\mathsf{U}$ \\
\q\Q $\mathsf{U}$ $\Rightarrow$ $\mathsf{U}$ \\
\q \textbf{end}

\medskip

\textbf{function} $\mathsf{compareTerms}\; t\; s$ $:=$ \\
\q \textbf{match} $t$ \textbf{with} \\
\q\q $y\> \bar t$ $\Rightarrow$ \\
\q\q \textbf{match} $s$ \textbf{with} \\
\q\q\q $x\> \bar s$ $\Rightarrow$ \textbf{if} $y = x \mathrel\land \bar t$ are steady \textbf{then} $\mathsf{cwExt}\; \mathsf{compareTerms}\; \bar t\; \bar s$ \textbf{else} $\mathsf{U}$ \\
\q\q\Q $\cst{f}\langle\_\rangle(\_)\> \bar s$ $\mid$ $\DB{m}\> \bar s$ $\Rightarrow$ \textbf{if} $\mathsf{checkSubs}\; \bar s\; t$ \textbf{then} $\mathsf{L}$ \textbf{else} $\mathsf{U}$ \\
\q\q\Q $\lambda\> s'$ $\Rightarrow$ \textbf{if} $\mathsf{checkSubs}\; [s']\; t$ \textbf{then} $\mathsf{L}$ \textbf{else} $\mathsf{U}$ \\
\q\q \textbf{end} \\
[\jot]
\q\Q $\cst{g}\langle\bar\upsilon\rangle(\bar w)\> \bar t$ $\Rightarrow$ \\
\q\q \textbf{if} $\mathsf{checkSubs}\; \bar t\; s$ \textbf{then} \\
\q\q\q $\mathsf{G}$ \\
\q\q \textbf{else match} $s$ \textbf{with} \\
\q\q\q $x\> \_$ $\Rightarrow$ $\mathsf{U}$ \\
\q\q\Q $\cst{f}\langle\bar\tau\rangle(\bar u)\> \bar s$ $\Rightarrow$ \\
\q\q\q \textbf{if} $\mathsf{checkSubs}\; \bar s\; t$ \textbf{then} \\
\q\q\q\q $\mathsf{L}$ \\
\q\q\q \textbf{else match} $\mathsf{compareSyms}\; \cst{g}\; \cst{f}$ \textbf{with} \\
\q\q\q\q $\mathsf{G}$ $\Rightarrow$ \textbf{if} $\mathsf{checkArgs}\; t\; \bar s$ \textbf{then} $\mathsf{considerPolyBelowWS}\; \cst{g}\; \mathsf{G}$ \textbf{else} $\mathsf{U}$ \\
\q\q\q\Q $\mathsf{E}$ $\Rightarrow$ \\
\q\q\q\q \textbf{match} $\mathsf{lexExt}\; \mathsf{compareTypes}\; \bar\upsilon\; \bar\tau$ \textbf{with} \\
\q\q\q\q\q $\mathsf{G}$ $\Rightarrow$ \textbf{if} $\mathsf{checkArgs}\; t\; \bar s$ \textbf{then} $\mathsf{considerPolyBelowWS}\; \cst{g}\; \mathsf{G}$ \textbf{else} $\mathsf{U}$ \\
\q\q\q\q\Q $\mathsf{E}$ $\Rightarrow$ $\mathsf{compareArgs}\; t\; \bar w\; \bar t\; s\; \bar u\; \bar s$ \\
\q\q\q\q\Q $\mathsf{L}$ $\Rightarrow$ \textbf{if} $\mathsf{checkArgs}\; s\; \bar t$ \textbf{then} $\mathsf{considerPolyBelowWS}\; \cst{g}\; \mathsf{L}$ \textbf{else} $\mathsf{U}$ \\
\q\q\q\q\Q $\mathsf{U}$ $\Rightarrow$ $\mathsf{U}$ \\
\q\q\q\q \textbf{end} \\
\q\q\q\Q $\mathsf{L}$ $\Rightarrow$ \textbf{if} $\mathsf{checkArgs}\; s\; \bar t$ \textbf{then} $\mathsf{considerPolyBelowWS}\; \cst{g}\; \mathsf{L}$ \textbf{else} $\mathsf{U}$ \\
\q\q\q\Q $\mathsf{U}$ $\Rightarrow$ $\mathsf{U}$ \\
\q\q\q \textbf{end} \\
\q\q\Q $\DB{m}\> \bar s$ $\Rightarrow$ \\
\q\q\q \textbf{if} $\mathsf{checkSubs}\; \bar s\; t$ \textbf{then} $\mathsf{L}$ \\
\q\q\q \textbf{else} \textbf{if} $\cst{g} \Succ \cst{ws} \mathrel\land \mathsf{checkArgs}\; t\; \bar s$ \textbf{then} $\mathsf{G}$ \\
\q\q\q \textbf{else} \textbf{if} $\cst{g} \Preceq \cst{ws} \mathrel\land \mathsf{checkArgs}\; s\; \bar t$ \textbf{then} $\mathsf{considerPoly}\; t\; s\; \mathsf{L}$ \\
\q\q\q \textbf{else} $\mathsf{U}$ \\
\q\q\Q $\lambda\> s'$ $\Rightarrow$ \\
\q\q\q \textbf{if} $\mathsf{checkSubs}\; [s']\; t$ \textbf{then} $\mathsf{L}$ \\
\q\q\q \textbf{else} \textbf{if} $\cst{g} \Succ \cst{ws} \mathrel\land \mathsf{checkArgs}\; t\; [s']$ \textbf{then} $\mathsf{G}$ \\
\q\q\q \textbf{else} \textbf{if} $\cst{g} \Preceq \cst{ws} \mathrel\land \mathsf{checkArgs}\; s\; \bar t$ \textbf{then} $\mathsf{considerPoly}\; t\; s\; \mathsf{L}$ \\
\q\q\q \textbf{else} $\mathsf{U}$ \\
\q\q \textbf{end} \\
[\jot]
\q\Q $\DB{n}\> \bar t$ $\Rightarrow$ \\
\q\q \textbf{if} $\mathsf{checkSubs}\; \bar t\; s$ \textbf{then} \\
\q\q\q $\mathsf{G}$ \\
\q\q \textbf{else match} $s$ \textbf{with} \\
\q\q\q $x\> \_$ $\Rightarrow$ $\mathsf{U}$ \\
\q\q\Q $\cst{f}\langle\_\rangle(\_)\> \bar s$ $\Rightarrow$ \\
\q\q\q \textbf{if} $\mathsf{checkSubs}\; \bar s\; t$ \textbf{then} $\mathsf{L}$ \\
\q\q\q \textbf{else if} $\cst{f} \Succ \cst{ws} \mathrel\land\mathsf{checkArgs}\; s\; \bar t$ \textbf{then} $\mathsf{L}$ \\
\q\q\q \textbf{else if} $\cst{f} \Preceq \cst{ws} \mathrel\land\mathsf{checkArgs}\; t\; \bar s$ \textbf{then} $\mathsf{considerPoly}\; t\; s\; \mathsf{G}$ \\
\q\q\q \textbf{else} $\mathsf{U}$ \\
\q\q\Q $\DB{m}\> \bar s$ $\Rightarrow$ \\
\q\q\q \textbf{if} $\mathsf{checkSubs}\; \bar s\; t$ \textbf{then} \\
\q\q\q\q $\mathsf{L}$ \\
\q\q\q \textbf{else if} $n > m$ \textbf{then} \\
\q\q\q\q \textbf{if} $\mathsf{checkArgs}\; t\; \bar s$ \textbf{then}
  $\mathsf{considerPoly}\;t \;s\; \mathsf{G}$
  \textbf{else} $\mathsf{U}$ \\
\q\q\q \textbf{else if} $n = m$ \textbf{then} \\
\q\q\q\q  $\mathsf{compareArgs}\; t\; []\; \bar t\; s\; []\; \bar s$ \\
\q\q\q \textbf{else} \\
\q\q\q\q \textbf{if} $\mathsf{checkArgs}\; s\; \bar t$ \textbf{then}
  $\mathsf{considerPoly}\;t \;s\; \mathsf{L}$
  \textbf{else} $\mathsf{U}$ \\
\q\q\Q $\lambda\> s'$ $\Rightarrow$ \\
\q\q\q \textbf{if} $\mathsf{checkSubs}\; [s']\; t$ \textbf{then} $\mathsf{L}$ \\
\q\q\q \textbf{else if} $\mathsf{checkArgs}\; t\; [s']$ \textbf{then} $\mathsf{G}$ \\
\q\q\q \textbf{else} $\mathsf{U}$ \\
\q\q \textbf{end} \\
[\jot]
\q\Q $\lambda\langle\upsilon\rangle\> t'$ $\Rightarrow$ \\
\q\q \textbf{if} $\mathsf{checkSubs}\; [t']\; s$ \textbf{then} \\
\q\q\q $\mathsf{G}$ \\
\q\q \textbf{else match} $s$ \textbf{with} \\
\q\q\q $x\> \_$ $\Rightarrow$ $\mathsf{U}$ \\
\q\q\Q $\cst{f}\langle\_\rangle(\_)\> \bar s$ $\Rightarrow$ \\
\q\q\q \textbf{if} $\mathsf{checkSubs}\; \bar s\; t$ \textbf{then} $\mathsf{L}$ \\
\q\q\q \textbf{else if} $\cst{f} \Succ \cst{ws} \mathrel\land\mathsf{checkArgs}\; s\; [t']$ \textbf{then} $\mathsf{L}$ \\
\q\q\q \textbf{else if} $\cst{f} \Preceq \cst{ws} \mathrel\land\mathsf{checkArgs}\; t\; \bar s$ \textbf{then} $\mathsf{considerPoly}\; t\; s\; \mathsf{G}$ \\
\q\q\q \textbf{else} $\mathsf{U}$ \\
\q\q\Q $\DB{m}\> \bar s$ $\Rightarrow$ \\
\q\q\q \textbf{if} $\mathsf{checkSubs}\; \bar s\; t$ \textbf{then} $\mathsf{L}$ \\
\q\q\q \textbf{else if} $\mathsf{checkArgs}\; s\; [t']$ \textbf{then} $\mathsf{L}$ \\
\q\q\q \textbf{else} $\mathsf{U}$ \\
\q\q\Q $\lambda\langle\tau\rangle\> s'$ $\Rightarrow$ \\
\q\q\q \textbf{if} $\mathsf{checkSubs}\; [s']\; t$ \textbf{then} \\
\q\q\q\q $\mathsf{L}$ \\
\q\q\q \textbf{else match} $\mathsf{compareTypes}\; \upsilon\; \tau$ \textbf{with} \\
\q\q\q\q $\mathsf{G}$ $\Rightarrow$ \textbf{if} $\mathsf{checkArgs}\; t\; [s']$ \textbf{then} $\mathsf{G}$ \textbf{else} $\mathsf{U}$ \\
\q\q\q\Q $\mathsf{E}$ $\Rightarrow$ $\mathsf{compareTerms}\; t'\; s'$ \\
\q\q\q\Q $\mathsf{L}$ $\Rightarrow$ \textbf{if} $\mathsf{checkArgs}\; s\; [t']$ \textbf{then} $\mathsf{L}$ \textbf{else} $\mathsf{U}$ \\
\q\q\q\Q $\mathsf{U}$ $\Rightarrow$ $\mathsf{U}$ \\
\q\q\q \textbf{end} \\
\q\q \textbf{end} \\
\q \textbf{end}
\end{quote}

\section{Optimized Algorithms}
\label{sec:optimized-algorithms}

Another improvement, embodied by a separate pair of algorithms, consists of
following L\"ochner's refinement approach
\cite{loechner-2006-kbo,loechner-2006-lpo}. For the standard KBO and LPO, his
comparison algorithms are respectively linear and quadratic in the size of the
input terms. The use of polynomials instead of integers in the $\lambda$KBO
makes the computation slightly more expensive, but we can nonetheless benefit
from tupling.

\subsection{\texorpdfstring{$\bm{\lambda}$KBO}{Lambda KBO}}
\label{ssec:lambda-kbo-optimized-algorithm}

The naive bidirectional algorithm for $\lambda$KBO is wasteful because it
recursively recomputes preterm weights. If $t = \cst{f}(\bar t)$, the subterm
$t_i$'s weight is computed first in the main function by the call to
$\mathsf{analyzeWeightDiff}$ and then possibly again in
$\mathsf{compareShapes}$, when $\mathsf{compareTerms}$ is called to break ties.
Although a factor of 2 might not sound particularly expensive, the factor is
higher for the subterms' subterms, their subsubterms, and so on. Thus, the
native algorithm is quadratic in the size of the input
preterms~\cite{loechner-2006-kbo}.

Our solution, inspired by L\"ochner \cite{loechner-2006-kbo}, consists of
interleaving the two passes:\ computing the weights and comparing the shapes.
The information for the passes is stored in a tuple. In this way, the subterms'
weights can be shared between the passes. At the end of the combined pass, we
can look at the tuple and determine what result to return.

First, we need to extend the lexicographic and componentwise extension functions
to thread through additional information---in our case, weights---returned by
the operator $\mathit{op}$ as the first component of a pair, the second
component being the comparison result.

\begin{quote}
\textbf{function} \,$\mathsf{lexExtData}\; \mathit{op}\; \bar b\; \bar a$ \,$:=$ \\
\q \textbf{match} $\bar b$, $\bar a$ \textbf{with} \\
\q\q $[]$, $[]$ $\Rightarrow$ $([], \mathsf{E})$ \\
\q\Q $b :: \bar b'$, $a :: \bar a'$ $\Rightarrow$ \\
\q\q \textbf{match} $\mathit{op}\; b\; a$ \textbf{with} \\
\q\q\q $(w{,}\; \mathsf{G})$ $\Rightarrow$ $([w], \mathsf{G})$ \\
\q\q\Q $(w{,}\; \mathsf{GE})$ $\Rightarrow$ \\
\q\q\q \textbf{let} $(\bar w{,}\; \mathit{cmp}) := \mathsf{lexExtData}\; \mathit{op}\; \bar b'\; \bar a'$ \textbf{in} \\
\q\q\q $(w :: \bar w{,}\; \mathsf{mergeWithGE}\; \mathit{cmp})$ \\
\q\q\Q $(w{,}\; \mathsf{E})$ $\Rightarrow$ \\
\q\q\q \textbf{let} $(\bar w{,}\; \mathit{cmp}) := \mathsf{lexExtData}\; \mathit{op}\; \bar b'\; \bar a'$ \textbf{in} \\
\q\q\q $(w :: \bar w{,}\; \mathit{cmp})$ \\
\q\q\Q $(w{,}\; \mathsf{LE})$ $\Rightarrow$ \\
\q\q\q \textbf{let} $(\bar w{,}\; \mathit{cmp}) := \mathsf{lexExtData}\; \mathit{op}\; \bar b'\; \bar a'$ \textbf{in} \\
\q\q\q $(w :: \bar w{,}\; \mathsf{mergeWithLE}\; \mathit{cmp})$ \\
\q\q\Q $(w{,}\; \mathsf{L})$ $\Rightarrow$ $([w]{,}\; \mathsf{L})$ \\
\q\q\Q $(w{,}\; \mathsf{U})$ $\Rightarrow$ $([w]{,}\; \mathsf{U})$ \\
\q\q \textbf{end} \\
\q \textbf{end}

\medskip

\textbf{function} \,$\mathsf{cwExtData}\; \mathit{op}$ \,$:=$ \\
\q $\mathsf{lexExtData}\; (\textbf{fun}\; b\; a \Rightarrow {}$ \\
\q\q \textbf{let} $(w{,}\; \mathit{cmp}) = \mathit{op}\; b\; a$ \textbf{in} \\
\q\q $(w{,}\; \mathsf{smooth}\; \mathit{cmp}))$
\end{quote}
In the above, the operator $::$ (``cons'') prepends an element to a list.

The auxiliary function $\mathsf{considerWeight}$ resembles the unoptimized
$\mathsf{compareTerms}$, but it uses its arguments $w$ and $\mathit{cmp}$
instead of recomputing them, where $\mathit{cmp}$ is the result of a shape
comparison.

\begin{quote}
\textbf{function} \,$\mathsf{considerWeight}\; w\; \mathit{cmp}$ \,$:=$ \\
\q $(w{,}$ \textbf{match} $\mathsf{analyzeWeightDiff}\; w$ \textbf{with} \\
\q\phantom{$(w{,}$ }\q $\mathsf{G}$ $\Rightarrow$ $\mathsf{G}$ \\
\q\phantom{$(w{,}$ }\Q $\mathsf{GE}$ $\Rightarrow$ $\mathsf{mergeWithGE}\; \mathit{cmp}$ \\
\q\phantom{$(w{,}$ }\Q $\mathsf{E}$ $\Rightarrow$ $\mathit{cmp}$ \\
\q\phantom{$(w{,}$ }\Q $\mathsf{LE}$ $\Rightarrow$ $\mathsf{mergeWithLE}\; \mathit{cmp}$ \\
\q\phantom{$(w{,}$ }\Q $\mathsf{L}$ $\Rightarrow$ $\mathsf{L}$ \\
\q\phantom{$(w{,}$ }\Q $\mathsf{U}$ $\Rightarrow$ $\mathsf{U}$ \\
\q\phantom{$(w{,}$ }\textbf{end}$)$
\end{quote}

The core of the code consists of two mutually recursive functions:\
$\mathsf{processArgs}$ and $\mathsf{processTerms}$. They compute weights and
compare shapes, returning pairs of the form $(w{,}\; \mathit{cmp})$, where
$\mathit{cmp}$ takes both the preterms' weights and their shapes into account. The
code for $\mathsf{processTerms}$ follows the structure of the unoptimized
$\mathsf{compareShape}$ but is instrumented to also compute weights. It calls
$\mathsf{processArgs}$ to compare argument lists. In $\mathsf{processArgs}$, the
weights computed as part of the lexicographic comparison are reused and extended
with any missing weights if the comparison ended before the end of the lists
(i.e., if $m < n$).

\begin{quote}
\textbf{function} \,$\mathsf{processArgs}\; \bar t_n\; \bar s_n$ \,$:=$ \\
\q \textbf{let} $(\bar w_m{,}\; \mathit{cmp}) := \mathsf{lexExtData}\; \mathsf{processTerms}\; \bar t_n\; \bar s_n$ \textbf{in} \\
\q $\mathsf{considerWeight}\; (\sum_{i=1}^m w_i + \sum_{i=m+1}^n (\cal W(t_i) - \cal W(s_i)))\; \mathit{cmp}$

\medskip

\textbf{function} \,$\mathsf{processTerms}\; t\; s$ \,$:=$ \\
\q \textbf{match} $t$, $s$ \textbf{with} \\
\q\q $y\> \bar t$, $x\> \bar s$ $\Rightarrow$ \\
\q\q \textbf{if} $y = x$ \textbf{then} \\
\q\q\q \textbf{if} $\bar t$ are steady \textbf{then} \\
\q\q\q\q \textbf{let} $(\bar w{,}\; \mathit{cmp}) := \mathsf{cwExtData}\; \mathsf{processTerms}\; \bar t\; \bar s$ \textbf{in} \\
\q\q\q\q $\mathsf{considerWeight}\; \bigl(\sum\nolimits_{i=1}^{|\bar t|} \mathbf{k}_{y,i}\, w_{i}\bigr)\; \mathit{cmp}$ \\
\q\q\q \textbf{else} \\
\q\q\q\q $\mathsf{considerWeight}\; (\cal W(t) - \cal W(s))\;\mathsf{U}$ \\
\q\q \textbf{else} \\
\q\q\q $\mathsf{considerWeight}\; (\cal W(t) - \cal W(s))\;\mathsf{U}$ \\
\q\Q $y\> \_$, $\_$ $\mid$ $\_$, $x\> \_$ $\Rightarrow$ $\mathsf{considerWeight}\; (\cal W(t) - \cal W(s))\;\mathsf{U}$ \\
\q\Q $\lambda\langle\upsilon\rangle\> t'$, $\lambda\langle\tau\rangle\> s'$ $\Rightarrow$ \\
\q\q \textbf{match} $\mathsf{compareTypes}\; \upsilon\; \tau$ \textbf{with} \\
\q\q\q $\mathsf{E}$ $\Rightarrow$ $\mathsf{processTerms}\; t'\; s'$ \\
\q\q\Q $\mathit{cmp}$ $\Rightarrow$ $\mathsf{considerWeight}\; (\cal W(t') - \cal W(s'))\; \mathit{cmp}$ \\
\q\q \textbf{end} \\
\q\Q $\lambda\> \_$, $\_$ $\Rightarrow$ $\mathsf{considerWeight}\; (\cal W(t) - \cal W(s))\; (\mathsf{considerPoly}\; t\; s\; \mathsf{G})$ \\
\q\Q $\DB{n}\> \_$, $\lambda\> \_$ $\Rightarrow$ $\mathsf{considerWeight}\; (\cal W(t) - \cal W(s))\; (\mathsf{considerPoly}\; t\; s\; \mathsf{L})$ \\
\q\Q $\DB{n}\> \bar t$, $\DB{m}\>\bar s$ $\Rightarrow$ \\
\q\q \textbf{if\/} $n > m$ \textbf{then} $\mathsf{considerWeight}\; (\cal W(t) - \cal W(s))\; (\mathsf{considerPoly}\; t\; s\; \mathsf{G})$ \\
\q\q \textbf{else if\/} $n < m$ \textbf{then} $\mathsf{considerWeight}\; (\cal W(t) - \cal W(s))\; (\mathsf{considerPoly}\; t\; s\; \mathsf{L})$ \\
\q\q \textbf{else} $\mathsf{processArgs}\; \bar t\; \bar s$ \\
\q\Q $\DB{n}\> \_$, $\cst{f}\langle\_\rangle(\_)\>\_$ $\Rightarrow$ $\mathsf{considerWeight}\; (\cal W(t) - \cal W(s))\; (\mathsf{considerPoly}\; t\; s\; \mathsf{G})$ \\
\q\Q $\cst{g}\langle\bar\upsilon\rangle(\bar w)\>\bar t$, $\cst{f}\langle\bar\tau\rangle(\bar u)\>\bar s$ $\Rightarrow$ \\
\q\q \textbf{match} $\mathsf{compareSyms}\; \cst{g}\; \cst{f}$ \textbf{with} \\
\q\q\q $\mathsf{E}$ $\Rightarrow$ \\
\q\q\q \textbf{match} $\mathsf{lexExt}\; \mathsf{compareTypes}\; \bar\upsilon\; \bar\tau$ \textbf{with} \\
\q\q\q\q $\mathsf{E}$ $\Rightarrow$ $\mathsf{processArgs}\; (\bar w \cdot \bar t)\; (\bar u \cdot \bar s)$ \\
\q\q\q\Q $\mathit{cmp}$ $\Rightarrow$ $\mathsf{considerWeight}\; (\cal W(t) - \cal W(s))\; (\mathsf{considerPoly}\; t\; s\; \mathit{cmp})$ \\
\q\q\q \textbf{end} \\
\q\q\Q $\mathit{cmp}$ $\Rightarrow$ $\mathsf{considerWeight}\; (\cal W(t) - \cal W(s))\; (\mathsf{considerPoly}\; t\; s\; \mathit{cmp})$ \\
\q\q \textbf{end} \\
\q\Q $\cst{g}\langle\_\rangle(\_)\>\_$, $\_$ $\Rightarrow$ $\mathsf{considerWeight}\; (\cal W(t) - \cal W(s))\; (\mathsf{considerPoly}\; t\; s\; \mathsf{L})$ \\
\q \textbf{end}
\end{quote}
When calling $\mathsf{processTerms}$ to compare two preterms, we would normally
ignore the $w$ component of the result and only consider $\mathit{cmp}$, which
should be equal to what the untupled $\mathsf{compareTerms}$ would return.

One last point to discuss is the representation of polynomials. In the standard
KBO, multisets of variables must be compared. These can be seen as polynomials
of degree~1. L\"ochner's approach for the KBO variable check is to use an array
indexed by a finite variable set $X$. Clearly, this technique does not scale to
polynomials of arbitrarily high degrees. Instead of arrays, we can use maps or
hash tables indexed by sorted lists of indeterminates. With a reasonable map
implementation, this would replace an $\mathrm{O}(n)$ complexity with
$\mathrm{O}(n \log n)$, where $n = \left|s\right| + \left|t\right|$, the size of
the input preterms.

One of L\"ochner's ideas that also applies in our setting is to maintain two
counters indicating how many monomials are nonnegative or nonpositive in the
current polynomial expressed in standard form. These counters must be updated
whenever the map or hash table is modified. The two calls to
$\mathsf{surelyNonneg}$ in $\mathsf{analyzeWeightDiff}$ can then be replaced by
two conditions that each check whether a counter is 0.

\subsection{\texorpdfstring{$\bm{\lambda}$LPO}{Lambda LPO}}
\label{ssec:lambda-lpo-optimized-algorithm}

The naive bidirectional algorithm has exponential complexity because of the
overlapping computations of $\mathsf{checkArgs}$ and $\mathsf{checkSubs}$. Our
solution, again inspired by L\"ochner \cite{loechner-2006-lpo}, consists of
postponing the checks and avoiding redundant comparisons. Specifically, our
algorithm below draws inspiration from L\"ochner's $\mathsf{clpo}_{6}$.

We start with a simple auxiliary function:

\begin{quote}
\textbf{function} $\mathsf{flip}\; \mathit{cmp}$ $:=$ \\
\q \textbf{match} $\mathit{cmp}$ \textbf{with} \\
\q\q $\textsf{G}$ $\Rightarrow$ $\textsf{L}$ \\
\q\Q $\textsf{GE}$ $\Rightarrow$ $\textsf{LE}$ \\
\q\Q $\textsf{E}$ $\Rightarrow$ $\textsf{E}$ \\
\q\Q $\textsf{LE}$ $\Rightarrow$ $\textsf{GE}$ \\
\q\Q $\textsf{L}$ $\Rightarrow$ $\textsf{G}$ \\
\q\Q $\textsf{U}$ $\Rightarrow$ $\textsf{U}$ \\
\q \textbf{end}
\end{quote}

\begin{sloppypar}
The following six functions are mutually recursive. The main function is
called $\mathsf{compareTerms}$, as in Section~\ref{sec:naive-algorithms}.
\end{sloppypar}

\begin{quote}

\textbf{function} $\mathsf{checkSubs}\; \bar t\; s$ $:=$ \\
\q $\exists i.\; \mathsf{compareTerms}\; t_i\; s \in \{\mathsf{G}, \mathsf{GE}, \mathsf{E}\}$

\medskip

\textbf{function} $\mathsf{compareSubsBothWays}\; t\; \bar t\; s\; \bar s$ $:=$ \\
\q \textbf{if} $\mathsf{checkSubs}\; \bar t\; s$ \textbf{then} $\mathsf{G}$ \\
\q \textbf{else if} $\mathsf{checkSubs}\; \bar s\; t$ \textbf{then} $\mathsf{L}$ \\
\q \textbf{else} $\mathsf{U}$

\medskip

\textbf{function} $\mathsf{compareRest}\; t\; \bar s$ $:=$ \\
\q \textbf{match} $\bar s$ \textbf{with} \\
\q\q $[]$ $\Rightarrow$ $\mathsf{G}$ \\
\q\Q $s :: \bar s'$ $\Rightarrow$ \\
\q\q \textbf{match} $\mathsf{compareTerms}\; t\; s$ \textbf{with} \\
\q\q\q $\mathsf{G}$ $\Rightarrow$ $\mathsf{compareRest}\; t\; \bar s'$ \\
\q\q\Q $\mathsf{E}$ $\mid$ $\mathsf{LE}$ $\mid$ $\mathsf{L}$ $\Rightarrow$ $\mathsf{L}$ \\
\q\q\Q $\mathsf{GE}$ $\mid$ $\mathsf{U}$ $\Rightarrow$ \textbf{if} $\mathsf{checkSubs}\; \bar s'\; t$ \textbf{then} $\mathsf{L}$ \textbf{else} $\mathsf{U}$ \\
\q\q \textbf{end} \\
\q \textbf{end}

\medskip

\textbf{function} $\mathsf{compareRegularArgs}\; t\; \bar t\; s\; \bar s$ $:=$ \\
\q \textbf{match} $\bar t$, $\bar s$ \textbf{with} \\
\q\q $[]$, $[]$ $\Rightarrow$ $\mathsf{E}$ \\
\q\Q $t_1 :: \bar t'$, $s_1 :: \bar s'$ $\Rightarrow$ \\
\q\q \textbf{match} $\mathsf{compareTerms}\; t_1\; s_1$ \textbf{with} \\
\q\q\q $\mathsf{G}$ $\Rightarrow$ $\mathsf{compareRest}\; t\; \bar s'$ \\
\q\q\Q $\mathsf{GE}$ $\Rightarrow$ $\mathsf{mergeWithGE}\; (\mathsf{compareRegularArgs}\; t\; \bar t'\; s\; \bar s')$ \\
\q\q\Q $\mathsf{E}$ $\Rightarrow$ $\mathsf{compareRegularArgs}\; t\; \bar t'\; s\; \bar s'$ \\
\q\q\Q $\mathsf{LE}$ $\Rightarrow$ $\mathsf{mergeWithLE}\; (\mathsf{compareRegularArgs}\; t\; \bar t'\; s\; \bar s')$ \\
\q\q\Q $\mathsf{L}$ $\Rightarrow$ $\mathsf{flip}\; (\mathsf{compareRest}\; s\; \bar t')$ \\
\q\q\Q $\mathsf{U}$ $\Rightarrow$ $\mathsf{compareSubsBothWays}\; t\; \bar t'\; s\; \bar s'$ \\
\q\q \textbf{end} \\
\q \textbf{end}

\medskip

\textbf{function} $\mathsf{compareArgs}\; t\; \bar v\; \bar t\; s\; \bar u\; \bar s$ $:=$ \\
\q \textbf{match} $\bar v$, $\bar u$ \textbf{with} \\
\q\q $[]$, $[]$ $\Rightarrow$ $\mathsf{compareRegularArgs}\; t\; \bar t\; s\; \bar s$ \\
\q\Q $v_1 :: \bar v'$, $u_1 :: \bar u'$ $\Rightarrow$ \\
\q\q \textbf{match} $\mathsf{compareTerms}\; v_1\; u_1$ \textbf{with} \\
\q\q\q $\mathsf{G}$ $\Rightarrow$ $\mathsf{compareRest}\; t\; \bar s$ \\
\q\q\Q $\mathsf{GE}$ $\Rightarrow$ $\mathsf{mergeWithGE}\; (\mathsf{compareArgs}\; t\; \bar v'\; \bar t\; s\; \bar u'\; \bar s)$ \\
\q\q\Q $\mathsf{E}$ $\Rightarrow$ $\mathsf{compareArgs}\; t\; \bar v'\; \bar t\; s\; \bar u'\; \bar s$ \\
\q\q\Q $\mathsf{LE}$ $\Rightarrow$ $\mathsf{mergeWithLE}\; (\mathsf{compareArgs}\; t\; \bar v'\; \bar t\; s\; \bar u'\; \bar s)$ \\
\q\q\Q $\mathsf{L}$ $\Rightarrow$ $\mathsf{flip}\; (\mathsf{compareRest}\; s\; \bar t)$ \\
\q\q\Q $\mathsf{U}$ $\Rightarrow$ $\mathsf{compareSubsBothWays}\; t\; \bar t\; s\; \bar s$ \\
\q\q \textbf{end} \\
\q \textbf{end}

\medskip

\textbf{function} $\mathsf{compareTerms}\; t\; s$ $:=$ \\
\q \textbf{match} $t$, $s$ \textbf{with} \\
\q\q $y\> \bar t$, $x\> \bar s$ $\Rightarrow$ \textbf{if} $y = x \mathrel\land \bar t$ are steady \textbf{then} $\mathsf{cwExt}\; \mathsf{compareTerms}\; \bar t\; \bar s$ \textbf{else} $\mathsf{U}$ \\
\q\Q $y\> \_$, $\cst{f}\langle\_\rangle(\_)\> \bar s$ $\mid$ $y\> \_$, $\DB{m}\> \bar s$ $\Rightarrow$ \textbf{if} $\mathsf{checkSubs}\; \bar s\; t$ \textbf{then} $\mathsf{L}$ \textbf{else} $\mathsf{U}$ \\
\q\Q $y\> \_$, $\lambda\> s'$ $\Rightarrow$ \textbf{if} $\mathsf{checkSubs}\; [s']\; t$ \textbf{then} $\mathsf{L}$ \textbf{else} $\mathsf{U}$ \\
\q\Q $\cst{g}\langle\_\rangle(\_)\> \bar t$, $x\> \_$ $\Rightarrow$ \textbf{if} $\mathsf{checkSubs}\; \bar t\; s$ \textbf{then} $\mathsf{G}$ \textbf{else} $\mathsf{U}$ \\
\q\Q $\cst{g}\langle\bar\upsilon\rangle(\bar w)\> \bar t$, $\cst{f}\langle\bar\tau\rangle(\bar u)\> \bar s$ $\Rightarrow$ \\
\q\q \textbf{match} $\mathsf{compareSyms}\; \cst{g}\; \cst{f}$ \textbf{with} \\
\q\q\q $\mathsf{G}$ $\Rightarrow$ $\mathsf{considerPolyBelowWS}\; \cst{g}\; t\; s\; (\mathsf{compareRest}\; t\; \bar s)$ \\
\q\q\Q $\mathsf{E}$ $\Rightarrow$ \\
\q\q\q \textbf{match} $\mathsf{lexExt}\; \mathsf{compareTypes}\; \bar\upsilon\; \bar\tau$ \textbf{with} \\
\q\q\q\q $\mathsf{G}$ $\Rightarrow$ $\mathsf{considerPolyBelowWS}\; \cst{g}\; t\; s\; (\mathsf{compareRest}\; t\; \bar s)$ \\
\q\q\q\Q $\mathsf{E}$ $\Rightarrow$ $\mathsf{compareArgs}\; t\; \bar w\; \bar t\; s\; \bar u\; \bar s$ \\
\q\q\q\Q $\mathsf{L}$ $\Rightarrow$ $\mathsf{considerPolyBelowWS}\; \cst{g}\; t\; s\; (\mathsf{flip}\; (\mathsf{compareRest}\; s\; \bar t))$ \\
\q\q\q\Q $\mathsf{U}$ $\Rightarrow$ $\mathsf{compareSubsBothWays}\; t\; \bar t\; s\; \bar s$ \\
\q\q\q \textbf{end} \\
\q\q\Q $\mathsf{L}$ $\Rightarrow$ $\mathsf{considerPolyBelowWS}\; \cst{g}\; t\; s\; (\mathsf{flip}\; (\mathsf{compareRest}\; t\; \bar s))$ \\
\q\q\Q $\mathsf{U}$ $\Rightarrow$ $\mathsf{compareSubsBothWays}\; t\; \bar t\; s\; \bar s$ \\
\q\q \textbf{end} \\
\q\Q $\cst{g}\langle\_\rangle(\_)\> \bar t$, $\DB{m}\> \bar s$ $\Rightarrow$ \\
\q\q \textbf{if} $\cst{f} \Succ \cst{ws}$ \textbf{then} $\mathsf{compareRest}\; t\; \bar s$ \\
\q\q \textbf{else} $\mathsf{considerPoly}\; t\; s\; (\mathsf{flip}\; (\mathsf{compareRest}\; s\; \bar t))$ \\
\q\Q $\cst{g}\langle\_\rangle(\_)\> \bar t$, $\lambda\> s'$ $\Rightarrow$ \\
\q\q \textbf{if} $\cst{f} \Succ \cst{ws}$ \textbf{then} $\mathsf{compareRest}\; t\; [s']$ \\
\q\q \textbf{else} $\mathsf{considerPoly}\; t\; s\; (\mathsf{flip}\; (\mathsf{compareRest}\; s\; \bar t))$ \\
\q\Q $\DB{n}\> \bar t$, $x\> \_$ $\Rightarrow$ \textbf{if} $\mathsf{checkSubs}\; \bar t\; s$ \textbf{then} $\mathsf{G}$ \textbf{else} $\mathsf{U}$ \\
\q\Q $\DB{n}\> \bar t$, $\cst{f}\langle\_\rangle(\_)\> \bar s$ $\Rightarrow$ \\
\q\q \textbf{if} $\cst{f} \Succ \cst{ws}$ \textbf{then} $\mathsf{flip}\; (\mathsf{compareRest}\; s\; \bar t)$ \\
\q\q \textbf{else} $\mathsf{considerPoly}\; t\; s\; (\mathsf{compareRest}\; t\; \bar s)$ \\
\q\Q $\DB{n}\> \bar t$, $\DB{m}\> \bar s$ $\Rightarrow$ \\
\q\q \textbf{if} $n > m$ \textbf{then} $\mathsf{considerPoly}\; t\; s\; (\mathsf{compareRest}\; t\; \bar s)$ \\
\q\q \textbf{else if} $n = m$ \textbf{then} $\mathsf{compareRegularArgs}\; t\; \bar t\; s\; \bar s$ \\
\q\q \textbf{else} $\mathsf{considerPoly}\; t\; s\; (\mathsf{flip}\; (\mathsf{compareRest}\; s\; \bar t))$ \\
\q\Q $\DB{n}\> \_$, $\lambda\> s'$ $\Rightarrow$ $\mathsf{compareRest}\; t\; [s']$ \\
\q\Q $\lambda\> t'$, $x\> \_$ $\Rightarrow$ \textbf{if} $\mathsf{checkSubs}\; [t']\; s$ \textbf{then} $\mathsf{G}$ \textbf{else} $\mathsf{U}$ \\
\q\Q $\lambda\> t'$, $\cst{f}\langle\_\rangle(\_)\> \bar s$ $\Rightarrow$ \\
\q\q \textbf{if} $\cst{f} \Succ \cst{ws}$ \textbf{then} $\mathsf{flip}\; (\mathsf{compareRest}\; s\; [t'])$ \\
\q\q \textbf{else} $\mathsf{considerPoly}\; t\; s\; (\mathsf{compareRest}\; t\; \bar s)$ \\
\q\Q $\lambda\> t'$, $\DB{m}\> \_$ $\Rightarrow$ $\mathsf{flip}\; (\mathsf{compareRest}\; s\; t')$ \\
\q\Q $\lambda\langle\upsilon\rangle\> t'$, $\lambda\langle\tau\rangle\> s'$ $\Rightarrow$ \\
\q\q \textbf{match} $\mathsf{compareTypes}\; \upsilon\; \tau$ \textbf{with} \\
\q\q\q $\mathsf{G}$ $\Rightarrow$ $\mathsf{compareRest}\; t\; [s']$ \\
\q\q\Q $\mathsf{E}$ $\Rightarrow$ $\mathsf{compareTerms}\; t'\; s'$ \\
\q\q\Q $\mathsf{L}$ $\Rightarrow$ $\mathsf{flip}\; (\mathsf{compareRest}\; s\; [t'])$ \\
\q\q\Q $\mathsf{U}$ $\Rightarrow$ $\mathsf{compareSubsBothWays}\; t\; [t']\; s\; [s']$ \\
\q\q \textbf{end} \\
\q \textbf{end}
\end{quote}

\section{Conclusion}
\label{sec:conclusion}

We defined two new term orders, $\lambda$KBO and $\lambda$LPO, for use with
$\lambda$-superposition. We expect these new order to improve Zipperposition's
performance, measured as both proving time and success rate.
Some of the ideas might also apply to $\lambda$-free superposition
\cite{bentkamp-et-al-2021-lfhosup} and combinatory superposition
\cite{bhayat-reger-2020-combsup}: Despite working on logics devoid of
$\lambda$-abstractions, these proof calculi contain axiom (\textsc{Ext}) and
could benefit from the implicit $\eta$-expansion that makes its positive literal
maximal.

\paragraph{Acknowledgment.}

Petar Vukmirovi\'c discussed ideas with us
and provided some of the examples in Sect.~\ref{sec:examples}.
Ahmed Bhayat suggested textual improvements.
We thank them for their help.

Bentkamp and Blanchette's research has received funding from the 
European Research Council
(ERC, Matryoshka, 713999 and Nekoka, 101083038)
Blanchette's research has received
funding from the Netherlands Organization for Scientific Research (NWO) under the Vidi
program (project No. 016.Vidi.189.037, Lean Forward).
Hetzenberger's research has received funding from the 
European Research Council (ERC, ARTIST, 101002685).

Views and opinions expressed are however those of the authors only and do not necessarily reflect those of the European Union or the European Research Council. Neither the European Union nor the granting authority can be held responsible for them.

We have used artificial intelligence tools for textual editing.

\bibliographystyle{splncs04}
\bibliography{ms}

\end{document}